\newcommand{\ii}{\mathrm{i}}
\theoremstyle{thmstyleone}%
\newtheorem{thm}{Theorem}
\newtheorem{prop}{Proposition}%
\newtheorem{corollary}{Corollary}
\newtheorem{lem}{Lemma}
\theoremstyle{thmstyletwo}%
\newtheorem{rem}{Remark}%
\theoremstyle{thmstylethree}%
\newtheorem{definition}{Definition}%
\begin{document}

\title{Nonlinear stability of vector multi-solitons in coupled NLS and modified KdV equations}

\author[1]{\fnm{Liming} \sur{Ling}}\email{linglm@scut.edu.cn}
\equalcont{These authors contributed equally to this work.}

\author*[1]{\fnm{Huajie} \sur{Su}}\email{mashj@mail.scut.edu.cn}
\equalcont{These authors contributed equally to this work.}

\affil*[1]{\orgdiv{School of Mathematics}, \orgname{South China University of Technology}, \orgaddress{\city{Guangzhou}, \postcode{510641}, \country{China}}}

\abstract{We prove that the $N$-solitons, including breathers and multi-hump solitons, of the coupled nonlinear Schr\"odinger (CNLS) equations are nonlinearly stable in the Sobolev space $H^{N}$. Moreover, $(N_{1},N_{2})$-solitons of the coupled modified Korteweg--de Vries (CmKdV) equations are shown to be nonlinearly stable in the Sobolev space $H^{2N_{1}+N_{2}}$. The number of negative eigenvalues of the second variation of the Lyapunov functional is $N$ for $N$-solitons of the CNLS equations, and $N_{1}+\lfloor (N_{2}+1)/2 \rfloor$ for $(N_{1},N_{2})$-solitons of the CmKdV equations, which is obtained by exploiting integrable properties. The stability of solitons for the classical NLS and mKdV equations also follows from the same method. In addition, we show that solutions to the linearized spectral problem of the mixed flow equation can be constructed from solutions of the stationary zero curvature equations in a large class of Lie algebras.}

\keywords{Integrable system, Nonlinear stability, NLS equation, mKdV equation. }
\date{\today}
    
\maketitle
\section{Introduction}
In this work, we investigate the nonlinear stability of $N$-soliton solutions, including breathers, 
multi-hump solitons for the coupled nonlinear Schr\"odinger (CNLS) equations 
\cite{berkhoe_self_1970,manako_theory_1974,roskes_nonlinear_1976} on the real line
\begin{equation}\label{CNLS}
    \begin{split}
 {\rm i}q_{1,t}+q_{1,xx}+2(|q_{1}|^{2}+|q_{2}|^{2})q_{1}=&0,\\
 {\rm i}q_{2,t}+q_{2,xx}+2(|q_{1}|^{2}+|q_{2}|^{2})q_{2}=&0,\\
    \end{split}
\end{equation}
where the potentials $q_{1}(x,t),q_{2}(x,t): \mathbb{R}^{2}\to \mathbb{C}$, and the nonlinear stability 
of $(N_{1},N_{2})$-soliton solutions for
the coupled modified Korteweg-de Vries (CmKdV) equations \cite{yajima1975class,athorne1987generalised}
\begin{equation}\label{CmKdV}
    \begin{split}
 q_{1,t}+q_{1,xxx}+6q_{1}^{2}q_{1,x}+3q_{2}(q_{1}q_{2})_{x}=&0, \\
 q_{2,t}+q_{2,xxx}+6q_{2}^{2}q_{2,x}+3q_{1}(q_{1}q_{2})_{x}=&0,
    \end{split}
\end{equation}
where $q_{1}(x,t),q_{2}(x,t): \mathbb{R}^{2}\to \mathbb{R}$. 
The CNLS equations \eqref{CNLS} have important applications in Bose-Einstein condensates \cite{qin_nondegenerate_2019} 
and birefringent fibers \cite{agrawal_nonlinear_2019}, and the CmKdV equations have numerous physical applications
across various fields, including fluid dynamics \cite{su1969korteweg}, plasma physics \cite{cheemaa2020study}, 
and traffic jam \cite{ge2004stabilization,nagatani1999jamming}. The Cauchy problem for 
the CNLS (CmKdV) equations is globally well-posed in the Sobolev space $H^{k}(\mathbb{R})$ for $k\in\mathbb{N}$,  
see \cite{cazenave_introduction_1989,montenegro1995sistemas,huo2006well,corcho2012global,gomes2021solitary,carvajal2019sharp}.

The CNLS equations and the CmKdV equations are integrable and admit Lax pair \cite{manako_theory_1974}, 
bi-Hamiltonian structure \cite{yang_nonlinear_2010}, 
and an infinite set of conservation laws \cite{faddeev1987hamiltonian,yang_nonlinear_2010,koch2018conserved}. 
The spatial part of the Lax pair for the CNLS equations and the CmKdV equations has the representation
\begin{equation}\label{lax-U}
    \mathbf{\Phi}_{x}(\lambda;x,t)=\mathbf{U}(\lambda,\mathbf{q})\mathbf{\Phi}(\lambda;x,t),
\end{equation}
where 
\begin{equation*}
    \begin{split}
        &\mathbf{U}(\lambda;x,t)={\rm i}\lambda\sigma_{3}+\mathbf{Q},\quad 
    \mathbf{Q}(x,t)=\begin{pmatrix}
        0& \mathbf{r}^{T}\\
        \mathbf{q}& 0
    \end{pmatrix},\,\,\,\, 
    \sigma_{3}=\mathrm{diag}(1,-1,-1)
 ,\\ & \mathbf{q}=(q_{1},q_{2})^{T},\quad
    \mathbf{r}=(r_{1},r_{2})^{T}
    \end{split}
\end{equation*}
with the symmetry $\mathbf{r}=-\mathbf{q}^{*}$ for the CNLS equations and the symmetry $\mathbf{r}=-\mathbf{q}$ for 
the CmKdV equations. 
The evolution part of the Lax pair has the representation
\begin{equation}\label{lax-V}
    \mathbf{\Phi}_{t}(\lambda;x,t)=\mathbf{V}(\lambda,\mathbf{q})\mathbf{\Phi}(\lambda;x,t)
\end{equation}
with a different $\mathbf{V}$ matrix for the CNLS equations and the CmKdV equations, and the zero-curvature condition for 
the Lax pair is given by
\begin{equation}\label{zer-cur}
    \mathbf{U}_{t}-\mathbf{V}_{x}+[\mathbf{U},\mathbf{V}]=0
\end{equation}
where the commutator is defined as $[\mathbf{A},\mathbf{B}]=\mathbf{A}\mathbf{B}-\mathbf{B}\mathbf{A}$.
For the CNLS equations, the $\mathbf{V}$ matrix is given by
\begin{equation*}
    \mathbf{V}_{CNLS}(\lambda;x,t)=2{\rm i}\lambda^{2}\sigma_{3}+2\lambda\mathbf{Q}
 +\ii\sigma_{3}(\mathbf{Q}^{2}-\mathbf{Q}_{x})
\end{equation*}
and for the CmKdV equations,
\begin{equation*}
    \mathbf{V}_{CmKdV}(\lambda;x,t)=4{\rm i}\lambda^{3}\sigma_{3}+4\lambda^{2}\mathbf{Q}
 +2{\rm i}\lambda\sigma_{3}(\mathbf{Q}^{2}-\mathbf{Q}_{x})-\mathbf{Q}\mathbf{Q}_{x}+
        \mathbf{Q}_{x}\mathbf{Q}-(\mathbf{Q}_{xx}-2\mathbf{Q}^{3}).
\end{equation*}
The CNLS equations follow from the zero-curvature condition \eqref{zer-cur} 
with $\mathbf{V}=\mathbf{V}_{CNLS}$, and the CmKdV equations follow from \eqref{zer-cur} with $\mathbf{V}=\mathbf{V}_{CmKdV}$.

As extensions of the nonlinear Schr\"odinger (NLS) equation 
\begin{equation}\label{NLS}
    \ii q_{t}+q_{xx}+2|q|^{2}q=0,
\end{equation}
and the modified Korteweg-de Vries (mKdV) equation
\begin{equation}\label{mKdV}
 q_{t}+q_{xxx}+6q^{2}q_{x}=0,
\end{equation} 
the CNLS equations and the CmKdV equations can be used to study the dynamics of vector solitons \cite{qin_nondegenerate_2019}. 
The two-component extension of the NLS equation \eqref{NLS} is of the form \eqref{CNLS} in many papers \cite{cazenave_introduction_1989,ling_darboux_2015}, 
and there are numerous extensions \cite{yajima1975class,athorne1987generalised,iwao1997soliton} of the 
mKdV equation \eqref{mKdV}. The reason why we consider the CmKdV equations of the form \eqref{CmKdV} is that 
the CmKdV equations share the same spatial part of the Lax pair \eqref{lax-U} as the CNLS equations \eqref{CNLS}.

Various solutions of the CNLS equations and the CmKdV equations have been derived by different methods. 
The $N$-soliton solutions were obtained by the inverse scattering method for the CNLS equations \cite{manako_theory_1974} and
the CmKdV equations \cite{wu2017inverse}. 
The Darboux transformation has been used to derive 
non-degenerate solitons \cite{qin_nondegenerate_2019} and breathers \cite{ling_darboux_2015,ling_darboux_2016} for the CNLS equations. 
The breathers of the CmKdV equations and 
non-degenerate solitons have also 
been obtained by the inverse scattering method \cite{wu2017inverse} and the Hirota bilinear method \cite{stalin2020nondegenerate},
respectively. 
$N$-dark-dark solitons have been derived by the KP-hierarchy reduction method \cite{ohta2011general}. 
Bright and dark solitons have also been obtained by the Hirota bilinear 
method \cite{radhakrishnan1995bright} for the CNLS equations.

In our previous work \cite{ling2024stability}, in collaboration with Pelinovsky,  
we established the spectral stability of non-degenerate solitons and the nonlinear stability of non-degenerate  
solitons and breathers.  
In this paper, we develop a novel strategy, fully derived  
from the integrability framework used in the proof of nonlinear stability in \cite{ling2024stability},  
to prove the nonlinear stability of multi-soliton solutions, including breathers and  
multi-hump solitons, for the CNLS and CmKdV equations.  
The stability of soliton solutions for the NLS and mKdV equations can also be obtained by  
the same method, see Remark~\ref{sta-nls-mkdv}.  
Our stability results are associated with the nonlinear stability aspects covered in prior studies, such as the 
$N$-soliton solutions for the NLS and mKdV equations in \cite{kapitula2007stability,le2021stability}, the breathers for the mKdV equation in \cite{alejo2013nonlinear}. 

\subsection{Review of stability results for integrable equations}
The stability question was initially put forward by Boussinesq in the 1870s.  
A pioneering result was obtained by Benjamin in 1972, in which the $H^{1}$ orbital stability of 
solitary waves for the Korteweg--de Vries (KdV) equation was established \cite{benjamin1972stability}.
Subsequently, the stability of ground states for the NLS equation was established by the concentration-compactness principle \cite{cazenave_orbital_1982},  
and the Lyapunov method was applied to certain dispersive equations \cite{cazenave1983stable,weinstein_lyapunov_1986}.  
Moreover, the Lyapunov method was further extended to a broad class of Hamiltonian equations in \cite{grillakis_stability_1987,grillakis_stability_1990}.  

For single-component integrable equations, such as the NLS equation, the KdV equation, and the mKdV equation, 
numerous works have addressed the stability of solitons and breathers by Lyapunov methods. 
The nonlinear stability of $N$-soliton solutions with distinct speeds has been established for various equations, including 
the KdV equation \cite{maddocks1993stability}, a broad class of integrable systems \cite{kapitula2007stability}, 
the derivative nonlinear Schr\"odinger equation \cite{le2018stability}, 
the mKdV equation \cite{le2021stability}, the Camassa--Holm equation \cite{wang2022stability}, and 
the Hirota equation \cite{xiao2023nonlinear}. 
The nonlinear stability of breathers in the mKdV equation was obtained in \cite{alejo2013nonlinear}, 
while the stability of peakons in the modified Camassa--Holm equation was proved in \cite{qu2013stability}. 
Moreover, stability of $N$-soliton solutions has also been achieved in low-regularity spaces 
\cite{killip_orbital_2022,koch2024multisolitons}. 
Since coupled integrable systems admit a richer variety of solutions \cite{qin_nondegenerate_2019}, 
it is natural to investigate the stability of vector solitons. 

For CNLS equations and their nonintegrable extensions, the stability of degenerate vector solitons with single-humped profiles 
was established in \cite{mesentsev1992stability,li_structural_1998,li_mechanism_2000}. 
The stability theory for more general vector solitons was further developed in \cite{pelinovsky_inertia_2005,pelinovsky_instabilities_2005,yagasaki_bifurcations_2023}. 
In contrast, relatively few results are available for the stability of CmKdV equations. 
The stability of multi-solitons with distinct speeds in the two-component Camassa--Holm system 
was investigated in \cite{wu2025stability}. 

The Lyapunov method is a powerful tool for proving nonlinear stability of solutions to differential equations 
\cite{grillakis_stability_1987,grillakis_stability_1990,maddocks1993stability,kapitula2007stability}. 
A key step of the Lyapunov method is to determine the number of negative eigenvalues of the second variation $\mathcal{L}$ of the Lyapunov functional 
and to identify sufficient nonlinear invariants to characterize the kernel of $\mathcal{L}$.
The spectral analysis of the operator $\mathcal{L}$ is crucial but technically challenging. 
In previous studies, for $N$-soliton solutions, even in single-component integrable equations, 
it is typically required that the soliton speeds are distinct, so that as $t\to\infty$, the $N$-soliton 
decomposes into $N$ individual solitons \cite{maddocks1993stability,kapitula2007stability,le2021stability}. 
When some soliton speeds coincide, the solutions become breathers or multi-hump solitons, 
which requires a more refined analysis of $\mathcal{L}$ 
\cite{pelinovsky_inertia_2005,pelinovsky_instabilities_2005,alejo2013nonlinear}.

Due to the integrability of the equations, the squared eigenfunctions are connected to the operator $\mathcal{L}$ for soliton solutions. 
The squared eigenfunctions lie in the kernel of the operator $\partial_{t}-2\mathcal{J}\mathcal{L}$, 
where $\mathcal{J}$ is an auxiliary skew-adjoint operator \cite{yang_nonlinear_2010,deconinck_orbital_2020}. 
For $N$-soliton solutions, the squared eigenfunctions are steady-state solutions that satisfy the spectral problem of the linearized operator $\mathcal{J}\mathcal{L}$ \cite{kapitula2007stability} 
(recall that $N$-soliton solutions are steady-state solutions to the first variation of the Lyapunov functional).  
The negative Krein signature of the operator $\mathcal{L}$ can be obtained from the squared eigenfunctions 
\cite{kapitula_counting_2004,kapitula2007stability,ling2024stability}, since the completeness of the squared eigenfunctions has been established 
\cite{gerdjikov1981generating,yang_nonlinear_2010}. 
The integrability of the equations therefore provides a natural framework for deriving stability results of soliton solutions.  

By combining the Lyapunov method with the integrability of the equations, the stability of $N$-soliton solutions with distinct
speeds has been established for complex potentials \cite{kapitula2007stability,le2021stability}. 
More recently, in collaboration with Pelinovsky, the nonlinear stability of non-degenerate vector solitons and breathers was proved \cite{ling2024stability}, 
where the argument is fully derived from the integrability of the equations. 
For mKdV-type equations, stability results for breathers have also been obtained \cite{su_stability_2025}. 
To the best of our knowledge, there are no systematic studies establishing the stability of $N$-soliton solutions when some speeds coincide, 
which includes the cases of breathers and multi-hump solitons. 
In this paper, the squared eigenfunction method is developed to address this problem for integrable systems, at least within the two-component AKNS framework.  

The method developed in this paper can be extended to establish the nonlinear stability of multi-solitons 
for other integrable equations in the AKNS hierarchy. 
There are, however, essential differences between NLS-type and mKdV-type equations. 
For instance, in the case of the CNLS equation, the associated linearized operator $\mathcal{L}$ involves the complex 
conjugate of the perturbation function, whereas for the CmKdV equation the corresponding 
operator $\tilde{\mathcal{L}}$ contains no conjugate terms. 
In the CmKdV case, the stability analysis reduces to computing the Krein symbol of $\tilde{\mathcal{L}}$, 
which requires more elaborate calculations.  
\subsection{Main results}
The solutions of the CNLS (resp. CmKdV) equations can be regarded as extensions of the solutions of the 
scalar NLS (resp. mKdV) equation, since the first component reduces to the scalar case when the second 
component is identically zero. 
Moreover, for any $\alpha\in[0,2\pi)$, denote $\mathbf{v}_{\alpha}=(\cos \alpha,\sin \alpha)^{T}$. 
Then $q(x,t)$ is a solution of the NLS or mKdV equation if and only if 
$\mathrm{diag}({\rm e}^{\ii \theta_{1}},{\rm e}^{\ii \theta_{2}})q(x,t)\mathbf{v}_{\alpha}$ 
is a solution of the CNLS equation, or $q(x,t)\mathbf{v}_{\alpha}$ is a solution of the CmKdV equation, 
respectively. 

Recall that the Galilean transformation for the CNLS equations \eqref{CNLS} is given by  
\begin{equation*}
 G(a)\mathbf{q}(x,t)={\rm e}^{-2\ii a(x+2at)}\mathbf{q}(x+4at,t)
\end{equation*}
for $a\in \mathbb{R}$
and that the CNLS equations admit the symmetry  
\begin{equation}\label{symmetric}
 T(x_{0},\alpha,\theta_{1},\theta_{2})\mathbf{q}(x,t)=
    \begin{pmatrix}
        {\rm e}^{\ii \theta_{1}} & 0 \\
        0 & {\rm e}^{\ii \theta_{2}}
    \end{pmatrix}
    \begin{pmatrix}
        \cos \alpha & -\sin \alpha \\
        \sin \alpha & \cos \alpha
    \end{pmatrix}
    \mathbf{q}(x+x_{0},t),
\end{equation}
where $\theta_{1},\theta_{2}\in[0,2\pi)$ and $x_{0}\in\mathbb{R}$.  
Under these symmetries, for $b_{1}>0$, the CNLS equations admit the 1-soliton solution  
\begin{equation}\label{1-soliton-CNLS-1}
    \mathbf{q}^{[1]}(x,t;x_{0},\alpha,\theta_{1},\theta_{2})
    =T(x_{0},\alpha,\theta_{1},\theta_{2})\,G(a_{1})\,
    2b_{1}\,\mathrm{sech}(2b_{1}x){\rm e}^{4\ii b_{1}^{2}t}
    \begin{pmatrix}
       1 \\ 0
    \end{pmatrix}.
\end{equation}
The two components of the 1-soliton solution \eqref{1-soliton-CNLS-1} are proportional, and the 1-soliton solution represents 
a traveling wave with speed $-4a_{1}$ and amplitude $2b_{1}$. The 1-soliton solution \eqref{1-soliton-CNLS-1} is a direct extension 
of the 1-soliton solution for the NLS equation \eqref{NLS}, given by 
\begin{equation}\label{NLS-1-soliton}
 q^{[1]}(x,t;x_{0},\theta)=G(a_{1})\,2b_{1}\ \mathrm{sech}(2b_{1}(x+x_{0}))
 {\rm e}^{4\ii b_{1}^{2}t}{\rm e}^{\ii \theta}.
\end{equation}
Similarly, the CmKdV equations admit the symmetry $T(x_{0},\alpha,0,0)$ and possess the 
$(0,1)$-soliton solutions
\begin{equation}\label{cmkdv-0-1-soliton-1}
    \mathbf{q}^{[0,1]}(x,t;x_{0},\alpha)=
    T(x_{0},\alpha,0,0)\,2b_{1}\ \mathrm{sech}\!\left(
        2b_{1}(x-4b_{1}^{2}t)
 \right)\begin{pmatrix}
        1 \\ 0
    \end{pmatrix},
\end{equation}
which can be regarded as an extension of the $(0,1)$-soliton solution for the mKdV equation,
\begin{equation*}
 q^{[0,1]}(x,t;x_{0})=2b_{1}\ \mathrm{sech}\!\left(
        2b_{1}(x-4b_{1}^{2}t+x_{0})
 \right).
\end{equation*}
The 1-soliton solutions and $(0,1)$-soliton solutions are characterized by the spectral parameters 
$a_{1},b_{1}$ or $b_{1}$, together with the scattering parameters 
$(x_{0},\alpha,\theta_{1},\theta_{2})$ or $(x_{0},\alpha)$, respectively. 
The scattering parameters determine the nonlinear orbit of the soliton family.

Denote $\cdot^{T}$ as the transpose of a matrix.
For CmKdV equations, the $(1,0)$-soliton solution is a breather associated with the spectral 
parameters $a_{1},b_{1}$ and has the form 
\begin{equation}\label{CmKdV-1-0-soliton}
    \mathbf{q}^{[1,0]}(x,t)=8b_{1}\mathrm{Re}\left(\frac{
        2\cosh(\eta_{1}){\rm e}^{-\ii \chi_{1}}+\frac{\ii b_{1}}{a_{1}-\ii b_{1}}(
 {\rm e}^{-\eta_{1}-\ii \chi_{1}}+  
 (\mathbf{v}_{\alpha}^{\tilde{\theta}})^{T}\mathbf{v}_{\alpha}^{\tilde{\theta}}{\rm e}^{\eta_{1}+\ii \chi_{1}}
 )
 }{
        2\cosh^{2}(\eta_{1})
 -\frac{b_{1}^{2}}{a_{1}^{2}+b_{1}^{2}}\left| {\rm e}^{-\eta_{1}-\ii \chi_{1}}+(\mathbf{v}_{\alpha}^{\tilde{\theta}})^{T}\mathbf{v}_{\alpha}^{\tilde{\theta}}{\rm e}^{\eta_{1}+\ii \chi_{1}}\right|^{2}
 } \mathbf{v}_{\alpha}^{-\tilde{\theta}}\right),
\end{equation}
where 
\begin{equation*}
    \eta_{1}(x,t)=2b_{1}(x+4(3a_{1}^{2}-b_{1}^{2})t)+x_{1},\qquad
    \chi_{1}(x,t)=2a_{1}(x+4(a_{1}^{2}-3b_{1}^{2})t)+\theta_{1},
\end{equation*}
and 
\begin{equation*}
    \mathbf{v}_{\alpha}^{\tilde{\theta}}=(\cos \alpha,{\rm e}^{-\ii\tilde{\theta}}\sin \alpha )^{T},\quad 
    \tilde{\theta}=\theta_{1}-\theta_{2}. 
\end{equation*}
The $(1,0)$-soliton solution is a nontrivial extension of soliton solutions of the mKdV equation (i.e., it cannot 
be written in the form $q\mathbf{v}_{\alpha}$ with $q$ being a solution of the mKdV equation) when 
$\theta_{1}\ne \theta_{2}$. If $\theta_{1}=\theta_{2}$, then the $(1,0)$-soliton solution reduces to 
$q^{[1,0]}(x,t)\mathbf{v}_{\alpha}$ with 
\begin{equation*}
 q^{[1,0]}(x,t;x_{1},t_{1})=2\partial_{x}\left(\arctan\left(
        \frac{b_{1}}{a_{1}}\frac{\sin (2a_{1}(x+4(a_{1}^{2}-3b_{1}^{2})t)+\theta_{1})}{\cosh(2b_{1}(x+4(3a_{1}^{2}-b_{1}^{2})t)+x_{1})}
 \right)\right),
\end{equation*}
which recovers the breather solution \cite{alejo2013nonlinear} of the mKdV equation and is nonlinearly 
stable under the orbit $\{q^{[1,0]};x_{1},\theta_{1}\in\mathbb{R}\}$, i.e., the invariance under space and time 
translations ($x\mapsto x-x_{0}$ and $t\mapsto t-t_{0}$ with $x_{0},t_{0}\in\mathbb{R}$).  

For fixed spectral parameters, the scattering parameters determine the nonlinear orbit of soliton solutions. 
The soliton solutions for CNLS equations (respectively, CmKdV equations) are specified by spectral 
and scattering parameters in the Darboux transformation, as formulated in the following definition.  

\begin{definition}
 {\rm (a) ($N$-solitons for CNLS equations).}  
 Let the spectral parameter vector be $\mathbf{\Lambda}=(\lambda_{1},\lambda_{2},\cdots,\lambda_{N})$ 
 and the scattering parameter matrix be $\mathbf{c}=(\mathbf{c}_{1},\mathbf{c}_{2},\cdots,\mathbf{c}_{N})$, 
 where $\lambda_{k}\in \mathbb{C}^{+}=\{z\in\mathbb{C}:\mathrm{Im}\,z>0\}$ for $k=1,2,\cdots,N$ are 
 pairwise distinct, and 
 $\mathbf{c}_{k}=(c_{1k},c_{2k})^{T}\in \mathbb{C}^{2}\backslash \{(0,0)\}$.  
 The $N$-soliton solution of the CNLS equation \eqref{CNLS} is given by  
    \begin{equation}\label{CNLS-Nsoliton}
        \mathbf{q}^{[N]}(x,t;\mathbf{\Lambda},\mathbf{c})=-4
        \frac{\det\begin{pmatrix}
        0 & \mathbf{Y}_{2}^{T} \\
        \mathbf{Y}_{1}^{*} & \mathbf{M}
        \end{pmatrix}}{\det(\mathbf{M})},
    \end{equation}
 where  
    \begin{equation*}
        \begin{split}
            \mathbf{Y}_{1}=&(\mathrm{Im}\lambda_{1}{\rm e}^{\ii \lambda_{1}(x+2\lambda_{1}t)}, \mathrm{Im}\lambda_{2}{\rm e}^{\ii \lambda_{2}(x+2\lambda_{2}t)},\cdots, \mathrm{Im}\lambda_{N}{\rm e}^{\ii \lambda_{N}(x+2\lambda_{N}t)})^{T},\\
            \mathbf{Y}_{2}=&({\rm e}^{-\ii \lambda_{1}(x+2\lambda_{1}t)}\mathbf{c}_{1}, {\rm e}^{-\ii \lambda_{2}(x+2\lambda_{2}t)}\mathbf{c}_{2},\cdots, {\rm e}^{-\ii \lambda_{N}(x+2\lambda_{N}t)}\mathbf{c}_{N})^{T},
        \end{split}
    \end{equation*}
 and  
    \begin{equation*}
        \mathbf{M}=\left(
            \frac{\lambda_{k}-\lambda_{k}^{*}}{\lambda_{l}-\lambda_{k}^{*}}\left(
 {\rm e}^{{\rm i}x(\lambda_{l}-\lambda_{k}^{*})
 +2{\rm i}t(\lambda_{l}^{2}-(\lambda_{k}^{*})^{2})}+
                \mathbf{c}_{k}^{\dagger}\mathbf{c}_{l}{\rm e}^{-{\rm i}x(\lambda_{l}-\lambda_{k}^{*})
 -2{\rm i}t(\lambda_{l}^{2}-(\lambda_{k}^{*})^{2})}
 \right)
 \right)_{1\leq k,l\leq N}.
    \end{equation*}

 {\rm (b) ($(N_{1},N_{2})$-solitons for CmKdV equations).}  
 Let $N=N_{1}+N_{2}$ with nonnegative integers $N_{1},N_{2}$ and $\tilde{N}=N+N_{1}$.  
 The spectral parameter vector is $\mathbf{\Lambda}=(\lambda_{1},\lambda_{2},\cdots,\lambda_{N})$, 
 where $\lambda_{k}\in \mathbb{C}^{++}=\{\lambda\in\mathbb{C}^{+}:\mathrm{Re}\lambda>0\}$ for $k=1,2,\cdots,N_{1}$, 
 and $\lambda_{k}\in \mathbb{C}^{+}\cap \ii\mathbb{R}$ for $k=N_{1}+1,\cdots, N$.  
 The scattering parameter matrix is $\mathbf{c}=(\mathbf{c}_{1},\mathbf{c}_{2},\cdots,\mathbf{c}_{N})$, 
 where $\mathbf{c}_{k}\in \mathbb{C}^{2}\backslash \{(0,0)\}$ for $k=1,2,\cdots,N_{1}$, 
 and $\mathbf{c}_{k}\in \mathbb{R}^{2}\backslash \{(0,0)\}$ for $k=N_{1}+1,\cdots, N$.  

 For $k=1,2,\cdots N_{1}$, set $\lambda_{k+N}=-\lambda_{k}^{*}$ and $\mathbf{c}_{k+N}=\mathbf{c}_{k}^{*}$.  
 The $(N_{1},N_{2})$-soliton solution of the CmKdV equation is given by
    \begin{equation}\label{CmKdV-Nsoliton}
        \mathbf{q}^{[N_{1},N_{2}]}(x,t;\mathbf{\Lambda},\mathbf{c})=-4
        \frac{\det\begin{pmatrix}
        0 & \mathbf{Y}_{2}^{T} \\
        \mathbf{Y}_{1}^{*} & \mathbf{M}
        \end{pmatrix}}{\det(\mathbf{M})},
    \end{equation}
 where  
    \begin{equation*}
        \begin{split}
            \mathbf{Y}_{1}=&(\mathrm{Im}\lambda_{1}{\rm e}^{\ii \lambda_{1}(x+4\lambda_{1}^{2}t)}, \mathrm{Im}\lambda_{2}{\rm e}^{\ii \lambda_{2}(x+4\lambda_{2}^{2}t)},\cdots, \mathrm{Im}\lambda_{\tilde{N}}{\rm e}^{\ii \lambda_{\tilde{N}}(x+4\lambda_{\tilde{N}}^{2}t)})^{T},\\
            \mathbf{Y}_{2}=&({\rm e}^{-\ii \lambda_{1}(x+4\lambda_{1}^{2}t)}\mathbf{c}_{1}, {\rm e}^{-\ii \lambda_{2}(x+4\lambda_{2}^{2}t)}\mathbf{c}_{2},\cdots, {\rm e}^{-\ii \lambda_{\tilde{N}}(x+4\lambda_{\tilde{N}}^{2}t)}\mathbf{c}_{\tilde{N}})^{T},
        \end{split}
    \end{equation*}
 and  
    \begin{equation*}
        \mathbf{M}=\left(
            \frac{\lambda_{k}-\lambda_{k}^{*}}{\lambda_{l}-\lambda_{k}^{*}}\left(
 {\rm e}^{{\rm i}x(\lambda_{l}-\lambda_{k}^{*})
 +4{\rm i}t(\lambda_{l}^{3}-(\lambda_{k}^{*})^{3})}+
                \mathbf{c}_{k}^{\dagger}\mathbf{c}_{l}{\rm e}^{-{\rm i}x(\lambda_{l}-\lambda_{k}^{*})
 -4{\rm i}t(\lambda_{l}^{3}-(\lambda_{k}^{*})^{3})}
 \right)
 \right)_{1\leq k,l\leq \tilde{N}}.
    \end{equation*}
\end{definition}

The complicated formulae \eqref{CNLS-Nsoliton} and \eqref{CmKdV-Nsoliton} represent multi-soliton solutions corresponding to \eqref{1-soliton-CNLS-1}, \eqref{cmkdv-0-1-soliton-1}, and \eqref{CmKdV-1-0-soliton}, respectively. 
If $c_{2k}=0$ for all $k$, then the second component of the $N$-soliton solutions vanishes, and the first component reduces to the $N$-soliton solutions of the NLS and mKdV equations. 
Denote the spectral parameters by $\lambda_{k}=a_{k}+\ii b_{k}$, we have $a_{k}\in\mathbb{R}$ and $b_{k}>0$.  

For the CNLS equations, the 1-soliton solution is given by \eqref{CNLS-Nsoliton}:
\begin{equation}\label{1-soliton-CNLS}
    \mathbf{q}^{[1]}(x,t;\lambda_{1},\mathbf{c}_{1})
 =2b_{1}\ \mathrm{sech}\!\left(2b_{1}(x+4a_{1}t)+\ln |\mathbf{c}_{1}|\right)
 {\rm e}^{-2\ii (a_{1}(x+4a_{1}t)-2(a_{1}^{2}+b_{1}^{2})t)}\tilde{\mathbf{c}}_{1},
\end{equation}
where $\tilde{\mathbf{c}}_{1}=\frac{\mathbf{c}_{1}}{|\mathbf{c}_{1}|}$, $a_{1}\in\mathbb{R}$ and $b_{1}>0$.  
By taking 
$c_{11}={\rm e}^{2b_{1}x_{0}} \cos(\alpha) {\rm e}^{\ii \theta_{1}}$ and 
$c_{21}={\rm e}^{2b_{1}x_{0}} \sin(\alpha) {\rm e}^{\ii \theta_{2}}$, 
the solution \eqref{1-soliton-CNLS} reduces to \eqref{1-soliton-CNLS-1}.  
For $2$-soliton solutions, there exist special cases known as non-degenerate vector soliton solutions, which are traveling waves obtained by setting $a_{1}=a_{2}$ together with $c_{12}=c_{21}=0$ or $c_{11}=c_{22}=0$. 
The profile of one component of such a soliton can be either single-humped or double-humped, while the other component is always double-humped \cite{qin_nondegenerate_2019,ling2024stability}. 
In general, the $N$-soliton solutions can be regarded as the nonlinear superposition of $N$ single-soliton solutions \cite{ling_darboux_2016}.  

For the CmKdV equations, the $(N_{1},N_{2})$-soliton solutions can be regarded as the nonlinear superposition of 
$N_{1}$ breathers and $N_{2}$ single-solitons.  
Analogous to the 1-soliton solution for the CNLS equations, the $(0,1)$-soliton solution for the CmKdV equations is given by
\begin{equation}\label{cmkdv-0-1-soliton}
    \mathbf{q}^{[0,1]}(x,t)=2b_{1}\,\mathrm{sech}\!\left(
        2b_{1}(x-4b_{1}^{2}t)+\ln |\mathbf{c}_{1}|
 \right)\tilde{\mathbf{c}}_{1},
\end{equation}
which corresponds to a traveling wave with velocity $4b_{1}^{2}$ and initial position 
$-\ln |\mathbf{c}_{1}|/(2b_{1})$.  
By taking $c_{11}={\rm e}^{2b_{1}x_{0}} \cos(\alpha)$ and $c_{21}={\rm e}^{2b_{1}x_{0}} \sin(\alpha)$,  
the solution \eqref{cmkdv-0-1-soliton} reduces to \eqref{cmkdv-0-1-soliton-1}.  

The $(0,2)$-soliton solution takes the form
\begin{equation*}
    \mathbf{q}^{[0,2]}(x,t)=4\frac{
 b_{1}\left(B{\rm e}^{-\xi_{2}}+{\rm e}^{\xi_{2}}-B_{2}\tilde{c}{\rm e}^{\xi_{1}}\right)\tilde{\mathbf{c}}_{1}+
 b_{2}\left(-B{\rm e}^{-\xi_{1}}+{\rm e}^{\xi_{1}}-B_{1}\tilde{c}{\rm e}^{\xi_{2}}\right)\tilde{\mathbf{c}}_{2}
 }{B^{2}{\rm e}^{-\xi_{1}-\xi_{2}}+\left(1-B_{1}B_{2}\tilde{c}^{2}\right){\rm e}^{\xi_{1}+\xi_{2}}
 +{\rm e}^{\xi_{1}-\xi_{2}}+{\rm e}^{\xi_{2}-\xi_{1}}-2B_{1}B_{2}\tilde{c}},
\end{equation*}
where $\tilde{\mathbf{c}}_{k}=\frac{\mathbf{c}_{k}}{|\mathbf{c}_{k}|},\tilde{c}=\tilde{\mathbf{c}}_{1}^{T}\tilde{\mathbf{c}}_{2}, B_{k}=\frac{2b_{k}}{b_{1}+b_{2}},B=\frac{b_{1}-b_{2}}{b_{1}+b_{2}}$
and $\xi_{k}(x,t)=2b_{k}(x-4b_{k}^{2}t)+\ln |\mathbf{c}_{k}|$ for $k=1,2$, with $b_{1}\ne b_{2}$ and $b_{1},b_{2}>0$. 
Taking
\begin{equation*}
    \mathbf{c}_{1}=B{\rm e}^{x_{1}}\mathbf{v}_{\alpha_{1}},\quad 
    \mathbf{c}_{2}=-B{\rm e}^{x_{2}}\mathbf{v}_{\alpha_{2}},
\end{equation*}
we obtain
\begin{equation*}
    \mathbf{q}^{[0,2]}(x,t)=4\frac{
 b_{1}\left({\rm e}^{-\xi_{2}}+B\left({\rm e}^{\xi_{2}}+B_{2}k_{1}{\rm e}^{\xi_{1}}\right)\right)\mathbf{v}_{\alpha_{1}}+
 b_{2}\left({\rm e}^{-\xi_{1}}-B\left({\rm e}^{\xi_{1}}+B_{1}k_{1}{\rm e}^{\xi_{2}}\right)\right)\mathbf{v}_{\alpha_{2}}
 }{{\rm e}^{-\xi_{1}-\xi_{2}}+B^{2}\left(1-B_{1}B_{2}k_{1}^{2}\right){\rm e}^{\xi_{1}+\xi_{2}}
 +{\rm e}^{\xi_{1}-\xi_{2}}+{\rm e}^{\xi_{2}-\xi_{1}}+2B_{1}B_{2}k_{1}},
\end{equation*}
where $\xi_{k}=2b_{k}(x-4b_{k}^{2}t)+x_{k}$ and $k_{1}=k_{1}(\alpha_{1},\alpha_{2})=\cos (\alpha_{1}-\alpha_{2})$.  
The $(0,2)$-soliton is nontrivial if $\alpha_{1}\ne \alpha_{2}$.  
If $\alpha_{1}=\alpha_{2}=\alpha$, then it degenerates into $\mathbf{q}^{[0,2]}(x,t)=q^{[0,2]}\mathbf{v}_{\alpha}$ with
\begin{equation*}
 q^{[0,2]}=2\partial_{x}\left(
        \mathrm{arctan}\left(
            \frac{{\rm e}^{\xi_{1}}+{\rm e}^{\xi_{2}}}{1-B^{2}{\rm e}^{\xi_{1}+\xi_{2}}}
 \right)
 \right).
\end{equation*}
The speeds of $(0,N_{2})$-solitons are all positive and mutually distinct, which is in contrast to the $N$-soliton solutions of the CNLS equations.

Denote $\cdot^{\dagger}$ as the conjugate transpose of a matrix (or the adjoint of an operator), and $\cdot^{*}$ as the complex conjugate.
The $(1,0)$-soliton \eqref{CmKdV-1-0-soliton} can be obtained from \eqref{CmKdV-Nsoliton} as  
\begin{equation*}
    \mathbf{q}^{[1,0]}(x,t)=8b_{1}\mathrm{Re}\left(\frac{
        2\cosh(\eta_{1}){\rm e}^{-\ii \chi_{1,0}}+\tfrac{\ii b_{1}}{a_{1}-\ii b_{1}}\big(
 {\rm e}^{-\eta_{1}-\ii\chi_{1,0}}+\tilde{\mathbf{c}}_{1}^{\dagger}\tilde{\mathbf{c}}_{1}^{*}{\rm e}^{\eta_{1}+\ii\chi_{1,0}}
 )
 }{
        2\cosh\!\big(2\eta_{1}\big)+2
 -\tfrac{b_{1}^{2}}{a_{1}^{2}+b_{1}^{2}}\left| {\rm e}^{-\eta_{1}-\ii\chi_{1,0}}+\tilde{\mathbf{c}}_{1}^{\dagger}\tilde{\mathbf{c}}_{1}^{*}{\rm e}^{\eta_{1}+\ii\chi_{1,0}}\right|^{2}
 }\tilde{\mathbf{c}}_{1}\right),
\end{equation*}
where 
\[
    \eta_{1}(x,t)=2b_{1}\big(x+4(3a_{1}^{2}-b_{1}^{2})t\big)+\ln |\mathbf{c}_{1}|,\quad 
    \chi_{1,0}(x,t)=2a_{1}\big(x+4(a_{1}^{2}-3b_{1}^{2})t\big),
\]
by taking
\begin{equation*}
    \mathbf{c}_{1}={\rm e}^{x_{1}}
    \begin{pmatrix}
        \cos (\alpha_{1}) {\rm e}^{-\ii \theta_{1}} \\
        \sin (\alpha_{1}) {\rm e}^{-\ii \theta_{2}}
    \end{pmatrix}. 
\end{equation*}
More generally, the $(N_{1},0)$-soliton can be viewed as the nonlinear superposition of $N_{1}$ breathers.  
The propagation speed of the $k$-th breather is $-4(3a_{k}^{2}-b_{k}^{2})$ for $k=1,2,\dots,N_{1}$,  
which may take negative values.  
Consequently, the $(N_{1},N_{2})$-soliton solution of the CmKdV equations can be regarded as the nonlinear superposition of  
$N_{1}$ breathers, propagating either to the left or to the right, and $N_{2}$ single-solitons with mutually distinct positive speeds,  
all propagating to the right.  
Examples of interactions between breathers and single-solitons can be found in \cite{wu2017inverse}.

The scattering parameters $\mathbf{c}_{k}$ can be regarded as an extension of the symmetry. 
For $N$-soliton solutions of the CNLS equations, since $\mathbf{c}_{k}\in \mathbb{C}^{2}\setminus \{(0,0)\}$, 
one can set 
\begin{equation*}
    \mathbf{c}_{k}={\rm e}^{2 b_{k} x_{k}}
    \begin{pmatrix}
        {\rm e}^{\ii \theta_{1k}} & 0 \\
        0 & {\rm e}^{\ii \theta_{2k}}
    \end{pmatrix}\mathbf{v}_{\alpha_{k}}, 
\end{equation*}
where the parameters $x_{k}, \theta_{1k}, \theta_{2k}, \alpha_{k}$ can be interpreted 
as extensions of the underlying symmetries, corresponding to the $k$-th soliton or breather. 
In particular, $x_{k}$ corresponds to spatial translation, 
$\theta_{1k}$ and $\theta_{2k}$ to the phase shifts of the first and second components, respectively, 
and $\alpha_{k}$ to rotational transformation.
For $(N_{1},N_{2})$-soliton solutions, by the definition \eqref{CmKdV-Nsoliton}, we also set
\begin{equation*}
    \mathbf{c}_{k}=\begin{cases}
        {\rm e}^{x_{k}}
        \begin{pmatrix}
            {\rm e}^{\ii \theta_{1i}} & 0 \\
            0 & {\rm e}^{\ii \theta_{2i}}
        \end{pmatrix}\mathbf{v}_{\alpha_{k}}, & k=1,2,\dots,N_{1}, \\[1ex]
        {\rm e}^{x_{k}}\mathbf{v}_{\alpha_{k}}, & k=N_{1}+1,N_{1}+2,\dots,N,
    \end{cases}
\end{equation*}
so that $(x_{k},\theta_{1k})$ correspond to the spatial and temporal translations of the breathers.

The main result of this paper is the nonlinear stability of soliton solutions:
\begin{thm}\label{thm-stability}
 The $N$-soliton solutions \eqref{CNLS-Nsoliton} for CNLS equations are nonlinearly stable
 in the Sobolev space $H^{N}$, and the $(N_{1},N_{2})$-soliton solutions \eqref{CmKdV-Nsoliton} 
 for CmKdV equations are nonlinearly stable in $H^{2N_{1}+N_{2}}$.
 Denote $\tilde{N}=N$ and $\mathbf{q}_{sol}=\mathbf{q}^{[N]}$ for CNLS equations, and 
 $\tilde{N}=2N_{1}+N_{2}$ and $\mathbf{q}_{sol}=\mathbf{q}^{[N_{1},N_{2}]}$ for CmKdV equations. 
 For any initial condition $\mathbf{u}_{0}(x)$ that evolves along the CNLS (or CmKdV) flow, 
 we denote the global solution by $\mathbf{u}(x,t)$. 
 For any positive constant $\epsilon$, there exists $\delta>0$ such that if
    \begin{equation*}
        \|\mathbf{u}_{0}(\cdot)-\mathbf{q}_{sol}(\cdot,0;\mathbf{\Lambda},\mathbf{c}(0))\|_{H^{\tilde{N}}}<\delta
    \end{equation*}
 for some soliton solution with spectral parameters $\mathbf{\Lambda}$ and
 scattering parameters $\mathbf{c}(0)$ such that every column of $\mathbf{c}(0)$ is nonzero, then there exists 
 a $C^{1}$ function $\mathbf{c}(t)$ such that 
    \begin{equation*}
        \|\mathbf{u}(\cdot,t)-\mathbf{q}_{sol}(\cdot,t;\mathbf{\Lambda},\mathbf{c}(t))\|_{H^{\tilde{N}}}<\epsilon
    \end{equation*}
 for all $t\in\mathbb{R}$. Moreover, the rate of change of the scattering parameters can be controlled by $\epsilon$:
    \begin{equation*}
        \sum_{i,j}|\partial_{t}c_{ij}(t)|\leq C\epsilon
    \end{equation*}
 for some constant $C$. 
\end{thm}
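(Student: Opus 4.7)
Following the integrable Lyapunov approach, my plan is to build the functional from the hierarchy of conserved quantities: for the CNLS case, I would set
\[
\mathcal{E}[\mathbf{q}]=\sum_{k=0}^{2N}\mu_{k}H_{k}[\mathbf{q}],
\]
where $\{H_{k}\}_{k\geq 0}$ are the AKNS conserved charges, and tune the real coefficients $\{\mu_{k}\}$ so that the $N$-soliton $\mathbf{q}^{[N]}$ is a critical point, $\mathcal{E}'[\mathbf{q}^{[N]}]=0$. This is possible because the Euler--Lagrange equation for $\mathcal{E}$ is equivalent to a stationary mixed-flow equation whose spectral polynomial has zeros exactly at $\pm\lambda_{k}$, so it has degree $2N$; an analogous construction with a real hierarchy of degree $4N_{1}+2N_{2}$ handles CmKdV. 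The resulting self-adjoint second variation $\mathcal{L}=\mathcal{E}''[\mathbf{q}_{sol}]$ is a differential operator of order $2\tilde{N}$, which dictates the Sobolev regularity $H^{\tilde{N}}$ appearing in the statement.

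\textbf{Spectral analysis via squared eigenfunctions.} The spectrum of $\mathcal{L}$ is analyzed through the integrable structure. Squared eigenfunctions of the Lax operator lie in the kernel of $\partial_{t}-2\mathcal{J}\mathcal{L}$ and form a complete system, and differentiating the critical-point identity $\mathcal{E}'[\mathbf{q}_{sol}(\cdot;\mathbf{c})]=0$ with respect to each scattering parameter $c_{ij}$ yields explicit elements of $\ker\mathcal{L}$ and its generalized kernel; these are precisely the tangent directions of the nonlinear orbit. Combining completeness of the squared eigenfunctions with Krein signatures read off from the spectral data, I expect to obtain $N$ negative eigenvalues for CNLS (one per pair $\pm\lambda_{k}$) and $N_{1}+\lfloor(N_{2}+1)/2\rfloor$ for CmKdV, where the floor reflects a parity pairing among modes on the imaginary spectral axis. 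The number of additional conserved charges $H_{k}$ included in $\mathcal{E}$ is matched to this Morse index in the spirit of Maddocks--Kapitula.

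\textbf{Modulation, coercivity, and main obstacle.} With $\ker\mathcal{L}$ and the negative subspace explicit, I would introduce time-dependent scattering parameters $\mathbf{c}(t)$ by the implicit function theorem, chosen so that the perturbation $\mathbf{w}(x,t):=\mathbf{u}(x,t)-\mathbf{q}_{sol}(x,t;\mathbf{\Lambda},\mathbf{c}(t))$ is $L^{2}$-orthogonal to $\ker\mathcal{L}$; orthogonality to the negative subspace is enforced by freezing the conserved charges $H_{k}$ used in $\mathcal{E}$. Standard Lyapunov analysis then gives coercivity
\[
\mathcal{E}[\mathbf{u}]-\mathcal{E}[\mathbf{q}_{sol}]\geq c\,\|\mathbf{w}\|_{H^{\tilde{N}}}^{2}
\]
on the admissible submanifold, conservation of $\mathcal{E}$ propagates smallness for all $t\in\mathbb{R}$, and nondegeneracy of the modulation ODE yields directly $\sum_{i,j}|\partial_{t}c_{ij}|\leq C\epsilon$. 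The hardest step throughout will be the Morse index count when some soliton speeds coincide: previous arguments exploit asymptotic separation of the $N$-soliton into isolated bumps as $|t|\to\infty$, but breathers and multi-hump solitons never separate, forcing the index to be read off from the algebraic structure of the coupled squared eigenfunctions at a single time. For CmKdV this is further complicated because $\tilde{\mathcal{L}}$ carries no complex conjugation, so the Krein signature must come from a genuine symmetric bilinear form; this is the source of the delicate $\lfloor(N_{2}+1)/2\rfloor$ contribution, and I expect it to be the main technical bottleneck in the entire argument.
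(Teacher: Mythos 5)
Your proposal follows essentially the same route as the paper: a Lyapunov functional $\sum_{n=0}^{2N}\mu_n\mathcal{H}_n$ (resp. the even sub-hierarchy for CmKdV) whose critical points are the multi-solitons, a Morse-index count of $N$ (resp. $N_1+\lfloor(N_2+1)/2\rfloor$) obtained from completeness and Krein signatures of the squared eigenfunctions, cancellation of the negative directions by freezing the auxiliary conserved quantities via the reduced Hamiltonian, and a modulation argument giving $H^{\tilde N}$-coercivity and the bound on $\partial_t c_{ij}$. One small correction: for CNLS the zeros of the spectral polynomial $\mathcal{P}\hat{\mathcal{P}}$ are the conjugate pairs $\lambda_k,\lambda_k^*$ (reality of the $\mu_n$ comes from closure under conjugation), not $\pm\lambda_k$ — the $\lambda\mapsto-\lambda^*$ symmetry enters only in the CmKdV reduction.
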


\begin{rem}\label{sta-nls-mkdv}
 The nonlinear stability of soliton solutions to the NLS equation and the mKdV equation can also be 
 obtained by the same method with the same Sobolev index as in Theorem \ref{thm-stability}. 
 Denote $q^{[N]}(x,t)$ and $q^{[N_{1},N_{2}]}(x,t)$ the soliton solutions for the NLS equation \eqref{NLS} and 
 the mKdV equation \eqref{mKdV}, respectively (the first component obtained by taking $c_{2k}=0$ in 
 \eqref{CNLS-Nsoliton} and \eqref{CmKdV-Nsoliton}). Then $q^{[N]}(x,t)$ is nonlinearly stable in $H^{N}$ 
 and $q^{[N_{1},N_{2}]}(x,t)$ is nonlinearly stable in $H^{2N_{1}+N_{2}}$. 
 These stability results are consistent with previous studies 
 \cite{kapitula2007stability,alejo2013nonlinear,le2021stability}. 
\end{rem}

As a corollary of Theorem \ref{thm-stability}, we obtain the orbital stability of single soliton solutions. 
\begin{corollary}
 The 1-soliton solutions and $(0,1)$-soliton solutions are orbitally stable in the Sobolev space $H^{1}$. 
 The orbit of 1-soliton solutions is generated by the symmetry 
 $T(x_{0},\alpha,\theta_{1},\theta_{2})$ in \eqref{symmetric}, 
 where $x_{0}$ corresponds to spatial translation, $\alpha$ to rotation, and
 $\theta_{1},\theta_{2}$ to phase translations. 
 The orbit of $(0,1)$-soliton solutions is generated by $T(x_{0},\alpha,0,0)$. 
\end{corollary}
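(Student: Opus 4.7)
The plan is to obtain this corollary as a direct specialization of Theorem~\ref{thm-stability}. Setting $N=1$ for the CNLS equations yields $\tilde N=1$, and setting $(N_{1},N_{2})=(0,1)$ for the CmKdV equations also yields $\tilde N=2(0)+1=1$. In both cases Theorem~\ref{thm-stability} produces, for every $\epsilon>0$, a threshold $\delta>0$ and a $C^{1}$ modulation $\mathbf{c}(t)$ of the scattering parameter such that the perturbed solution stays within $\epsilon$ in $H^{1}$ of $\mathbf{q}_{sol}(\cdot,t;\mathbf{\Lambda},\mathbf{c}(t))$ for all $t\in\mathbb{R}$, whenever the initial datum is $\delta$-close in $H^{1}$ to some soliton in the family. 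So the work reduces to recognizing the $H^{1}$-closeness-to-a-specific-soliton conclusion of Theorem~\ref{thm-stability} as closeness to the symmetry orbit.

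The second step is to identify the scattering-parameter family $\{\mathbf{q}^{[1]}(\cdot,t;\lambda_{1},\mathbf{c}_{1}):\mathbf{c}_{1}\in\mathbb{C}^{2}\setminus\{0\}\}$ with the $T$-orbit described in the corollary. I would insert the parametrization
$$\mathbf{c}_{1}={\rm e}^{2b_{1}x_{0}}\begin{pmatrix}{\rm e}^{\ii\theta_{1}}&0\\ 0&{\rm e}^{\ii\theta_{2}}\end{pmatrix}\mathbf{v}_{\alpha}$$
from the paragraphs preceding Theorem~\ref{thm-stability} into the explicit expression~\eqref{1-soliton-CNLS} and compare with the definition~\eqref{symmetric} of $T$. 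A direct computation then shows that $\mathbf{q}^{[1]}(x,t;\lambda_{1},\mathbf{c}_{1})$ coincides, up to a constant Galilean phase factor that is absorbed into the parameters $\theta_{1},\theta_{2}$, with the image of the reference $1$-soliton corresponding to $\mathbf{c}_{1}=(1,0)^{T}$ under $T(x_{0},\alpha,\theta_{1},\theta_{2})$. Hence varying $\mathbf{c}_{1}$ over $\mathbb{C}^{2}\setminus\{0\}$ sweeps out precisely the $T$-orbit of that reference soliton. The analogous real parametrization $\mathbf{c}_{1}={\rm e}^{2b_{1}x_{0}}\mathbf{v}_{\alpha}\in\mathbb{R}^{2}\setminus\{0\}$ identifies the CmKdV $(0,1)$-soliton family~\eqref{cmkdv-0-1-soliton} with the orbit under $T(x_{0},\alpha,0,0)$ by the same computation.

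Combining the two steps, since $\mathbf{q}_{sol}(\cdot,t;\mathbf{\Lambda},\mathbf{c}(t))$ lies on the symmetry orbit at each time $t$, the estimate supplied by Theorem~\ref{thm-stability} immediately yields that $\mathbf{u}(\cdot,t)$ stays within $\epsilon$ in $H^{1}$ of this orbit for all $t$, which is exactly the definition of $H^{1}$ orbital stability. There is essentially no substantive obstacle here; the only point requiring verification is the matching identity between the scattering-parameter representation and the group action of $T$, which reduces to an elementary calculation on the closed-form single-soliton expressions~\eqref{1-soliton-CNLS} and~\eqref{cmkdv-0-1-soliton}.
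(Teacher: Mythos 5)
Your proposal is correct and follows essentially the same route as the paper: the corollary is obtained by specializing Theorem~\ref{thm-stability} to $N=1$ (resp. $(N_{1},N_{2})=(0,1)$), where $\tilde N=1$ in both cases, and identifying the scattering-parameter family $\mathbf{c}_{1}\in\mathbb{C}^{2}\setminus\{0\}$ (resp. $\mathbb{R}^{2}\setminus\{0\}$) with the $T$-orbit via the parametrization $c_{11}={\rm e}^{2b_{1}x_{0}}\cos(\alpha){\rm e}^{\ii\theta_{1}}$, $c_{21}={\rm e}^{2b_{1}x_{0}}\sin(\alpha){\rm e}^{\ii\theta_{2}}$ already recorded in the paper. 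No gap.
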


\subsection{Main steps of the proof}
We outline the main steps in the proof of nonlinear stability. 
The integrability of the CNLS and CmKdV equations plays a central role in the argument. 
The nonlinear stability of soliton solutions is established by means of Lyapunov methods 
with tools from integrable systems.

From the spatial part of the Lax pair \eqref{lax-U}, the $n$-th flow equations 
\cite{sattinger1985hamiltonian,faddeev1987hamiltonian,terng1997soliton} can be obtained from the 
infinitely many conserved quantities $\mathcal{H}_{n}:H^{n}\to \mathbb{R}$ ($n\geq 0$) with the Hamiltonian 
operator $-\ii$. These conserved quantities $\mathcal{H}_{n}$ are mutually in involution. For a 
functional $\mathcal{K}(\mathbf{q})$, the first variation is given by
\begin{equation*}
            \left(\mathbf{v}, \frac{\delta \mathcal{K}}{\delta \mathbf{q}}(\mathbf{q})\right)=
            \lim_{\epsilon\to 0}\frac{\mathcal{K}(\mathbf{q}+\epsilon \mathbf{v})-\mathcal{K}(\mathbf{q})}{\epsilon},
\end{equation*}
and the second variation is given by 
\begin{equation*}
\frac{\delta^{2} \mathcal{K}}{\delta^{2} \mathbf{q}}(\mathbf{q})[\mathbf{v}]=
\lim_{\epsilon\to 0}\frac{\frac{\delta \mathcal{K}}{\delta \mathbf{q}}(\mathbf{q}+\epsilon \mathbf{v})-\frac{\delta \mathcal{K}}{\delta \mathbf{q}}(\mathbf{q})}{\epsilon}, 
\end{equation*}
where the inner product 
 \begin{equation*}
        (\mathbf{f},\mathbf{g})=\mathrm{Re}\int_{\mathbb{R}}\mathbf{f}^{\dagger}\mathbf{g}
        \mathrm{d}x. 
    \end{equation*}
With the reduction $\mathbf{r}=-\mathbf{q}^{*}$ in \eqref{lax-U}, the CNLS equations \eqref{CNLS} 
correspond to the second flow
\begin{equation*}
    \mathbf{q}_{t}=-\ii \frac{\delta \mathcal{H}_{2}}{\delta \mathbf{q}}(\mathbf{q}),
\end{equation*}
and the complex CmKdV equations correspond to the third flow
\begin{equation*}
    \mathbf{q}_{t}=-\ii \frac{\delta \mathcal{H}_{3}}{\delta \mathbf{q}}(\mathbf{q}),
\end{equation*}
which reduce to the real CmKdV equations \eqref{CmKdV} under the constraint that $\mathbf{q}$ is real.
The conserved quantities are introduced in detail in the next section 
(see \eqref{conserved-quantities}). 
The first four conserved quantities are
\begin{align}
    \mathcal{H}_{0}=&\frac{1}{2}\int_{\mathbb{R}}|\mathbf{q}|^{2}\,\mathrm{d}x, \label{H0}\\
    \mathcal{H}_{1}=&\frac{1}{2}\int_{\mathbb{R}}\ii \mathbf{q}^{\dagger}\mathbf{q}_{x}\,\mathrm{d}x,\label{H1}\\
    \mathcal{H}_{2}=&\frac{1}{2}\int_{\mathbb{R}}\Big(|\mathbf{q}_{x}|^{2}-|\mathbf{q}|^{4}\Big)\,\mathrm{d}x,\label{H2}\\
    \mathcal{H}_{3}=&\frac{1}{2}\int_{\mathbb{R}}\ii\Big(\mathbf{q}_{x}^{\dagger}\mathbf{q}_{xx}
        + 3|\mathbf{q}|^{2}\mathbf{q}^{\dagger}_{x}\mathbf{q}\Big)\,\mathrm{d}x. \label{H3}
\end{align}
Note that $\mathcal{H}_{2n+1}$ is real for $n\geq 0$ by integration by parts. 
For the CNLS equations, all conserved quantities $\mathcal{H}_{n}$ are nontrivial. 
In contrast, for the CmKdV equations, all momentum-type conserved quantities vanish due to the real potential condition:
\begin{equation*}
    \mathcal{H}_{2n+1}(\mathbf{q})\equiv 0,\quad \mathbf{q}\ \text{real},\quad n\geq 0.
\end{equation*}
Hence, the nontrivial conserved quantities for the CmKdV equations are 
$\mathcal{H}_{2n}(\mathbf{q})$, $n\geq 0$.

The soliton solutions are steady states of the CNLS and CmKdV equations. 
The Lyapunov functional associated with a soliton is derived from the ordinary differential equation (ODE) satisfied by the soliton itself. 
For the $N$-soliton solutions \eqref{CNLS-Nsoliton} of the CNLS equations, the Lyapunov functional is expressed as a special 
linear combination of higher-order conserved quantities: 
\begin{equation}\label{Lya}
    \mathcal{I}(\mathbf{q})=\sum_{n=0}^{2N}\mu_{n} \mathcal{H}_{n}(\mathbf{q}),
\end{equation}
where the $N$-soliton solutions correspond to critical points of $\mathcal{I}$, i.e. 
$\delta \mathcal{I}/\delta \mathbf{q}(\mathbf{q}^{[N]})=0$, which yields an ODE of order $2N$. 
The real coefficients $\mu_{n}$, given by symmetric polynomials of the spectral parameters, can be obtained 
from the trace formula (see Section \ref{var-Nsoliton}).  

Similarly, the $(N_{1},N_{2})$-soliton solutions of the CmKdV equations are critical points of the corresponding Lyapunov functional 
\begin{equation*}
    \tilde{\mathcal{I}}(\mathbf{q})=\sum_{n=0}^{2 \tilde{N}}\tilde{\mu}_{n}\mathcal{H}_{2n}(\mathbf{q}).
\end{equation*}
Since each $\mathcal{H}_{n}$ is time independent, the Lyapunov functional remains constant in time. 
Using the continuity of $\mathcal{H}_{2n-1}$ and $\mathcal{H}_{2n}$ in Sobolev space $H^{n}$, 
the perturbation of the Lyapunov functional can be controlled by the perturbation of the soliton solution: 
\begin{equation*}
    \mathcal{I}(\mathbf{u}(t))-\mathcal{I}(\mathbf{q}^{[N]}(t))
    =\mathcal{I}(\mathbf{u}(0))-\mathcal{I}(\mathbf{q}^{[N]}(0))
    \leq C \|\mathbf{u}(0)-\mathbf{q}^{[N]}(0)\|_{H^{N}}. 
\end{equation*}
Expanding the Lyapunov functional around a soliton solution, the leading term is characterized by the second variation operator $\mathcal{L}$:
\begin{equation*}
    \mathcal{I}(\mathbf{q}^{[N]}+\mathbf{v})
    =\mathcal{I}(\mathbf{q}^{[N]})
    +\frac{1}{2}(\mathcal{L}\mathbf{v},\mathbf{v})
    +\mathcal{O}(\|\mathbf{v}\|_{H^{N}}^{3}). 
\end{equation*}
It is therefore natural to analyze the spectrum of $\mathcal{L}$ in order to understand the quadratic form $(\mathcal{L}\cdot,\cdot)$.  
The main difficulty in establishing nonlinear stability by the Lyapunov method lies in analyzing the second variation $\mathcal{L}$.  
The spectral parameters determine the number of negative eigenvalues of $\mathcal{L}$, while the scattering parameters determine 
the dimension of its kernel.  
An analogous argument applies to $\tilde{\mathcal{L}}$, the second variation of $\tilde{\mathcal{I}}$.  

Let $\lfloor \cdot \rfloor$ denote the floor function, i.e., $\lfloor x \rfloor$ is the greatest 
integer less than or equal to $x$.
The nonlinear stability is established by the following theorem.
\begin{thm}\label{thm-spectrum-L-tildeL}
 (a) Let $\mathbf{q}^{[N]}$ be the $N$-soliton solution of the CNLS equations given in \eqref{CNLS-Nsoliton}. 
 Then the essential spectrum of the self-adjoint operator $\mathcal{L}(\mathbf{q}^{[N]})$ is 
    \begin{equation*}
        \sigma_{\mathrm{ess}}(\mathcal{L})=\left[2^{2N} \min_{\lambda \in \mathbb{R}}|\mathcal{P}(\lambda)|^{2},+\infty\right),
    \end{equation*}
 where $\mathcal{P}(\lambda)=\prod_{k=1}^{N}(\lambda-\lambda_{k}^{*})$. 
 The point spectrum consists of $N$ negative eigenvalues (counting multiplicities),
    \begin{equation*}
 |\sigma_{\mathrm{point}}(\mathcal{L})\cap \mathbb{R}^{-}|=N,
    \end{equation*}
 and the zero eigenvalue with multiplicity $4N$. 
 Moreover, the point spectrum of $\mathcal{L}(\mathbf{q}^{[N]})$ is finite. 
    
 (b) Let $\mathbf{q}^{[N_{1},N_{2}]}$ be the $(N_{1},N_{2})$-soliton solution of the CmKdV equations given in \eqref{CmKdV-Nsoliton}. 
 Then the essential spectrum of $\tilde{\mathcal{L}}(\mathbf{q}^{[N_{1},N_{2}]})$ is 
    \begin{equation*}
        \sigma_{\mathrm{ess}}(\tilde{\mathcal{L}})=\left[2^{2\tilde{N}} \min_{\lambda \in \mathbb{R}}|\mathcal{P}(\lambda)|^{2},+\infty\right).
    \end{equation*}
 The point spectrum of $\tilde{\mathcal{L}}(\mathbf{q}^{[N_{1},N_{2}]})$ is finite, and $\tilde{\mathcal{L}}$ satisfies
    \begin{equation*}
 |\sigma_{\mathrm{point}}(\tilde{\mathcal{L}})\cap \mathbb{R}^{-}|=N_{1}+\left\lfloor \tfrac{N_{2}+1}{2}\right\rfloor,
    \end{equation*}
 and admits the zero eigenvalue with multiplicity $2\tilde{N}$. 
\end{thm}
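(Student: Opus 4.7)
My plan is to decompose the spectral analysis into four tasks, treating $\mathcal{L}$ and $\tilde{\mathcal{L}}$ in parallel: (i) identify the essential spectrum, (ii) establish finiteness of the point spectrum, (iii) compute the kernel dimension, and (iv) count the negative eigenvalues.

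For (i) and (ii), I would exploit that $\mathbf{q}^{[N]}$ and all its derivatives decay exponentially as $|x|\to\infty$. Writing $\mathcal{L} = \mathcal{L}_\infty + K$, where $\mathcal{L}_\infty$ is a constant-coefficient self-adjoint differential operator obtained by dropping the soliton-dependent terms and $K$ is relatively compact with respect to $\mathcal{L}_\infty$, Weyl's theorem yields $\sigma_{\mathrm{ess}}(\mathcal{L}) = \sigma(\mathcal{L}_\infty)$. Diagonalizing $\mathcal{L}_\infty$ by the Fourier transform, the real coefficients $\mu_n$ prescribed by the trace formula in Section \ref{var-Nsoliton} are designed precisely so that the symbol of $\mathcal{L}_\infty$ equals $2^{2N}|\mathcal{P}(\xi)|^2$ with $\mathcal{P}(\lambda) = \prod_k(\lambda - \lambda_k^*)$. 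This yields the stated essential spectrum; the same computation with the CmKdV coefficients $\tilde{\mu}_n$ gives the analogous formula for $\tilde{\mathcal{L}}$. Finiteness of the point spectrum below $\inf \sigma_{\mathrm{ess}}$ then follows from a standard Birman--Schwinger/compactness argument based on the exponential decay of $K$.

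For (iii) the kernel is generated by the symmetries of the soliton family. Differentiating $\mathbf{q}^{[N]}(\cdot;\mathbf{\Lambda},\mathbf{c})$ with respect to the scattering parameters packaged inside each $\mathbf{c}_k$ produces elements of $\ker\mathcal{L}$: in the CNLS case each soliton carries four real parameters $(x_k,\theta_{1k},\theta_{2k},\alpha_k)$, giving $4N$ candidate vectors; in the CmKdV case each breather again carries four parameters but each single-soliton only two (phase rotations are lost under the real reduction $\mathbf{r}=-\mathbf{q}$), giving $4N_1+2N_2=2\tilde{N}$ vectors. Linear independence is checked by comparing their exponential tails and localization centres. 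That these candidates exhaust the kernel is then verified indirectly: the squared-eigenfunction construction in step (iv) provides a complete generalized eigenbasis for $\mathcal{J}\mathcal{L}$, and a dimension count then forces no additional stationary modes at the discrete spectrum of $\mathcal{J}\mathcal{L}$.

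The heart of the proof is (iv). The Lax pair \eqref{lax-U}--\eqref{lax-V} supplies Jost solutions $\mathbf{\Phi}(\lambda;x,t)$, from which suitably normalized bilinear combinations (the squared eigenfunctions) lie in $\ker(\partial_t - 2\mathcal{J}\mathcal{L})$. At each discrete spectral point $\lambda_k$ these yield genuine eigenvectors of $\mathcal{J}\mathcal{L}$ with explicit eigenvalues; together with the continuous-spectrum contributions they form a complete basis in the two-component AKNS setting. Invoking the Kapitula--Kevrekidis--Sandstede counting identity, the number of negative eigenvalues of $\mathcal{L}$ equals the total negative Krein signature of these discrete modes, up to vanishing contributions from the purely imaginary continuous spectrum of $\mathcal{J}\mathcal{L}$. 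For $\mathcal{L}$ in the CNLS case, a residue computation of $(\mathcal{L}\phi,\phi)$ on each squared eigenfunction at $\lambda_k$ gives exactly one negative contribution per spectral parameter, totalling $N$. For $\tilde{\mathcal{L}}$ in the CmKdV case, the restriction to even-order Hamiltonians $\mathcal{H}_{2n}$ combined with the reality symmetry $\lambda_k \mapsto -\lambda_k^*$ causes the squared eigenfunctions at purely imaginary spectral points to pair up, producing only $\lfloor(N_2+1)/2\rfloor$ negative modes from the $N_2$ single solitons, while the $N_1$ breathers each contribute one negative mode. The main obstacle is precisely this last sign bookkeeping in the CmKdV case: unlike CNLS, where the conjugation $\mathbf{r}=-\mathbf{q}^*$ diagonalizes the Krein form cleanly, the real reduction forces a subtler pairing of modes across conjugate spectral parameters, and the parity-dependent boundary case when $N_2$ is odd---where one unpaired pure-imaginary mode contributes alone---requires a separate residue analysis. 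I expect this calculation, rather than the essential-spectrum or kernel steps, to be the technical heart of the proof.
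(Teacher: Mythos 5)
Your overall strategy coincides with the paper's: Weyl's theorem for the essential spectrum with the symbol $2^{2N}|\mathcal{P}(\xi)|^{2}$ built into the coefficients $\mu_{n}$, the kernel identified with derivatives of the soliton family in the scattering parameters ($4N$ real directions for CNLS, $4N_{1}+2N_{2}=2\tilde N$ for CmKdV), and the negative count obtained from the negative Krein signature of the discrete squared-eigenfunction modes via completeness and the Kapitula--Kevrekidis--Sandstede identity, with the continuous-spectrum part contributing nonnegatively. One clarification: the genuine eigenvectors $\mathbf{S}_{\pm i}(\lambda_{k})$ lie in $\ker\mathcal{L}$ and contribute nothing; the signature lives on the generalized eigenvectors $\mathbf{S}_{i,\lambda}(\lambda_{k})$, evaluated through the orthogonality relations obtained from the boundary terms of $\partial_{x}\,\mathrm{Tr}(\mathbf{P}^{\dagger}\mathbf{P})$ --- this is the ``residue computation'' you allude to, so your step (iv) is consistent with the paper once the generalized kernel is made explicit.

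The one point where your proposed mechanism is actually wrong, and would mislead you if pursued literally, is the CmKdV count $\lfloor(N_{2}+1)/2\rfloor$. You attribute it to purely imaginary modes ``pairing up,'' with one unpaired mode left over when $N_{2}$ is odd. In fact each purely imaginary $\lambda_{k}$ is a fixed point of the reality symmetry $\lambda\mapsto-\lambda^{*}$, so every one of the $N_{2}$ single-soliton modes is unpaired and contributes a single real diagonal entry $-c_{1k}^{2}A_{k}$ to the Krein form; it is the breathers (where $\lambda_{k}$ and $-\lambda_{k}^{*}$ are distinct) whose modes pair into $2\times2$ blocks of signature $(1,1)$. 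The count $\lfloor(N_{2}+1)/2\rfloor$ arises because $A_{k}$ contains the factor $\prod_{n\ne k}(b_{n}^{2}-b_{k}^{2})^{3}$, whose sign alternates as the amplitudes $b_{k}$ are ordered --- the same alternating-inertia phenomenon as in the Maddocks--Sachs analysis of KdV multisolitons --- not from any cancellation between paired modes. With that substitution your outline reproduces the paper's argument.
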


\begin{rem}
 For the $N$-soliton solution of the NLS equation and the $(N_{1},N_{2})$-soliton solution of the mKdV equation, 
 the numbers of negative eigenvalues of the corresponding operators $\mathcal{L}$ and $\tilde{\mathcal{L}}$ are still 
 $N$ and $N_{1}+\lfloor (N_{2}+1)/2 \rfloor$, respectively, while the zero eigenvalues have multiplicity $2N$ and 
 $\tilde{N}$, respectively. These results are consistent with previous studies 
 \cite{kapitula2007stability,alejo2013nonlinear,le2021stability}. 
\end{rem}

Considering the spectral parameters of soliton solutions, the operator $\mathcal{L}$ (and $\tilde{\mathcal{L}}$) 
can be reduced to $\mathcal{L}\mathcal{P}$ (and $\tilde{\mathcal{L}}\tilde{\mathcal{P}}$), where 
$\mathcal{P}$ (and $\tilde{\mathcal{P}}$) is the projection onto the subspace determined by the spectral parameters. 
The reduced operators have no negative eigenvalues and are coercive on the orthogonal complement of their kernels. 
Since the kernels are described by the scattering parameters, the nonlinear stability of soliton solutions 
follows from Theorem \ref{thm-stability} via the modulation argument.

The proof of Theorem \ref{thm-spectrum-L-tildeL} relies on squared eigenfunctions and 
squared eigenfunction matrices derived from integrable systems. In Section \ref{sec-gen-lin-U}, 
we discuss squared eigenfunction matrices satisfying the stationary zero-curvature equations in a general 
Lie algebra $\mathcal{U}$ with subalgebra $\mathcal{T}$. We show that the squared eigenfunctions associated 
with the $n$-th flow equation satisfy the corresponding linearized spectral problem. 

Specializing to $\mathcal{U}=\mathrm{gl}(3,\mathbb{C})$, the general linear Lie algebra, and letting 
$\mathcal{T}$ be the fixed-point subalgebra of the conjugation map with respect to $\sigma_{3}$, 
we obtain the squared eigenfunctions required for the CNLS and CmKdV equations, with $(\mathcal{U},\mathcal{T})$ 
forming a symmetric pair. Since the squared eigenfunctions defined by soliton solutions admit separation of 
variables, all eigenfunctions of the auxiliary linearized operator $\mathcal{J}\mathcal{L}$ can be found. 
By completeness of the squared eigenfunctions, the quadratic form $(\mathcal{L}\cdot,\cdot)$ restricted to 
their span can be characterized by $(\mathcal{J}^{-1}\cdot,\cdot)$, obtained from orthogonality relations 
between squared eigenfunctions and adjoint squared eigenfunctions. Consequently, the kernel of 
$\mathcal{L}$ can be characterized in terms of squared eigenfunctions, and the number of negative 
eigenvalues is determined by the negative Krein signature of $\mathcal{L}$ on this set.

\subsection{Outline}
In Section \ref{sec-gen-lin-U}, we show that solutions of the stationary zero curvature equations solve the 
linearized spectral problem of the corresponding mixed flow equations (Theorem \ref{var-L}), 
and establish their relation with steady-state solutions (Theorem \ref{fin-res-G}). 
Section \ref{sec-$N$-soliton-CNLS} presents the Darboux transformation for constructing $N$-soliton solutions 
of the CNLS equations and the associated squared eigenfunctions. 
In Section \ref{sec-stability-CNLS}, we derive the orthogonality relations for squared eigenfunctions and 
squared eigenfunction matrices (Theorem \ref{thm-orthogonal}), which yield the spectral analysis of $\mathcal{L}$ in part (a) 
of Theorem \ref{thm-spectrum-L-tildeL} and establish the nonlinear stability of $N$-soliton solutions. 
Section \ref{sec-stability-CmKdV} constructs squared eigenfunctions for $(N_{1},N_{2})$-soliton solutions of 
the CmKdV equations, analyzes the spectrum of $\tilde{\mathcal{L}}$, and proves their nonlinear stability.

\section{Linearized operator and $\mathbf{L}$ matrix}\label{sec-gen-lin-U}
In this section, the linearized operator associated with evolution equations is considered in the setting of 
a general Lie algebra $\mathcal{U}$ and its subalgebra $\mathcal{T}$ \cite{beals1985inverse,sattinger1985hamiltonian,terng1997soliton}. 
The specific form relevant to this work will be presented in Section~\ref{example-CNLS}, 
where the choice $\mathcal{U} = \mathrm{gl}(3, \mathbb{C})$ is applied.
The starting point is the differential equation
\begin{equation}\label{equ-Lx-a}
    \mathbf{L}_{x}=[\ii \lambda \mathbf{a}+\mathbf{Q},\mathbf{L}]
\end{equation}
where $\mathbf{a} \in \mathcal{T}$ and $\mathbf{Q} \in \mathcal{S}(\mathcal{T}^{\perp})$, 
the space of Schwartz-class smooth functions 
from $\mathbb{R}$ to $\mathcal{T}^{\perp}$. The function $\mathbf{L}$ admits an expansion of the form
\begin{equation}\label{exp-L-b}
    \mathbf{L}=\mathbf{b}+\sum_{n=1}^{\infty}\mathbf{L}_{n}\lambda^{-n}
\end{equation}
as $\lambda\to\infty$ with $\mathbf{b} \in \mathcal{T}$,
and $\mathbf{L}_{n+1}$ corresponds to the $n$-th flow equation in the associated integrable hierarchy \cite{sattinger1985hamiltonian}. 
A key observation is that the projection of a solution $\mathbf{G}$ to the stationary zero curvature equation \eqref{sta-zero-G}, 
when acted upon by $\mathrm{ad}_{\mathbf{b}}$, satisfies the linearized equation associated with the $n$-th flow equation. 
This result is stated in Theorem \ref{var-L}.

Furthermore, if the potential $\mathbf{Q}$ is a steady-state solution, then solutions to the stationary 
zero curvature equations can be expressed as polynomials in $\lambda$, $\mathbf{Q}$, and the derivatives 
of $\mathbf{Q}$; see Theorem~\ref{fin-res-G}. An immediate corollary is that the kernel of the linearized 
operator can be explicitly identified.
\subsection{The variation of $\mathbf{L}$ matrix}
Let $\mathcal{U}$ be a Lie algebra equipped with a nondegenerate $\mathrm{ad}$-invariant bilinear form $(\cdot,\cdot)_{\mathcal{U}}$, and let $\mathcal{T} \subset \mathcal{U}$ be a subalgebra. Denote by $\mathcal{T}^{\perp}$ the orthogonal complement of $\mathcal{T}$ with respect to this bilinear form. Assume that the restriction of the bilinear form to $\mathcal{T}$ is also nondegenerate and that
\begin{equation}\label{sym-T-bot}
[\mathcal{T}^{\perp}, \mathcal{T}^{\perp}] \subset \mathcal{T}.
\end{equation}
Under this assumption, the decomposition $\mathcal{U} = \mathcal{T} \oplus \mathcal{T}^{\perp}$ holds, 
and every element $\mathbf{u} \in \mathcal{U}$ can be written uniquely as $\mathbf{u} = \mathbf{u}^{\pi_{0}} + \mathbf{u}^{\perp}$ 
with $\mathbf{u}^{\pi_{0}} \in \mathcal{T}$ and $\mathbf{u}^{\perp} \in \mathcal{T}^{\perp}$, where 
$\pi_{0}:\mathcal{U}\to\mathcal{T}$ be the projection. It is also noted that
\begin{equation*}
    [\mathcal{T},\mathcal{T}^{\perp}]\subset \mathcal{T}^{\perp}
\end{equation*}
which follows from the $\mathrm{ad}$-invariance of $(\cdot, \cdot)_{\mathcal{U}}$. 

Let $\mathbf{a}, \mathbf{b}$ be two elements in the centralizer of $\mathcal{T}$ such that $\mathbf{a} - \mathbf{b}$ belongs to the centralizer of $\mathcal{U}$, i.e.,
\begin{equation*}
    \begin{split}
        &\mathbf{a}, \mathbf{b} \in C(\mathcal{T}) := \{ \mathbf{u} \in \mathcal{T} : [\mathbf{u}, \mathbf{v}] = 0, \ \forall \mathbf{v} \in \mathcal{T} \}, \quad
\\ &\mathbf{a} - \mathbf{b} \in C(\mathcal{U}) := \{ \mathbf{u} \in \mathcal{U} : [\mathbf{u}, \mathbf{v}] = 0, \ \forall \mathbf{v} \in \mathcal{U} \}.
    \end{split}
\end{equation*}
Assume further that $\mathrm{ad}_{\mathbf{a}}$ is invertible on $\mathcal{T}^{\perp}$. Consider the differential 
equation \eqref{equ-Lx-a}, where $\mathbf{L}$ admits the expansion given in \eqref{exp-L-b}. 
More precisely, the coefficients $\mathbf{L}_{n}$ satisfy the recursive relation
\begin{equation}\label{rec-L-gen}
    \partial_{x}\mathbf{L}_{n}=\ii \mathrm{ad}_{\mathbf{a}}\mathbf{L}_{n+1}^{\perp}+[\mathbf{Q},\mathbf{L}]. 
\end{equation}
Using \eqref{sym-T-bot}, the recursion relations can be decomposed into components in $\mathcal{T}$ and $\mathcal{T}^{\perp}$ as follows:
\begin{align}
    \partial_{x}\mathbf{L}^{\pi_{0}}_{n+1}=&\mathrm{ad}_{\mathbf{Q}} \mathbf{L}^{\perp}_{n+1} ,\label{L-diag-n-gen} \\
    \mathbf{L}^{\perp}_{n+1}=&-\ii\mathrm{ad}_{\mathbf{a}}^{-1}\left(
        \partial_{x}\mathbf{L}^{\perp}_{n}-\mathrm{ad}_{\mathbf{Q}} \mathbf{L}^{\pi_{0}}_{n}
    \right)\label{L-off-n-gen}
\end{align}
since $\mathrm{ad}_{\mathbf{a}}^{-1}: \mathcal{T}^{\perp} \to \mathcal{T}^{\perp}$ exists.
The first few terms in the expansion of $\mathbf{L}$ are given by
\begin{align*}
    \mathbf{L}_{0}=&\mathbf{b}, \\
    \mathbf{L}_{1}=&-\ii\mathbf{Q}, \\
    \mathbf{L}_{2}=&-\partial_{x}^{-1}\mathrm{ad}_{\mathbf{Q}}\mathrm{ad}_{\mathbf{a}}^{-1}\mathbf{Q}_{x}-\mathrm{ad}_{\mathbf{a}}^{-1}\mathbf{Q}_{x},
\end{align*}
where the identity $\mathrm{ad}_{\mathbf{b}} = \mathrm{ad}_{\mathbf{a}}$ is used. We note 
that $\partial_{x}^{-1}=\int_{-\infty}^{x}$ and the integration constant is zero since $\mathbf{Q}\in \mathcal{S}(\mathcal{T}^{\perp})$. 
Denote by $(\sum_{n=-\infty}^{+\infty} A_{n}\lambda^{n})_{+}=\sum_{n=0}^{+\infty} A_{n}\lambda^{n}$ 
the nonnegative part of a formal Laurent series. It is known \cite{beals1985inverse,sattinger1985hamiltonian,terng1997soliton} that if $\mathbf{L}$ also satisfies
\begin{equation*}
    \mathbf{L}_{t}=[\mathbf{V}_{n},\mathbf{L}]
\end{equation*}
where 
\begin{equation}\label{Vn-def}
    \mathbf{V}_{n}=\ii(\lambda^{n} \mathbf{L})_{+},
\end{equation}
then the potential $\mathbf{Q}$ satisfies the $n$-th flow equation
\begin{align*}
    \mathbf{Q}_{t}=(\ii \lambda \mathbf{a}+\mathbf{Q})_{t}=&\mathbf{V}_{n,x}-[\ii \lambda \mathbf{a}+\mathbf{Q},\mathbf{V}_{n}]\\
    =&-\mathrm{ad}_{\mathbf{b}}\left((\lambda^{n+1}\mathbf{L})_{+}-\lambda (\lambda^{n}\mathbf{L})_{+}\right)\\
    =&-\mathrm{ad}_{\mathbf{b}}\mathbf{L}_{n+1}^{\perp}. 
\end{align*}
For the matrix
\begin{equation}\label{V-def}
    \mathbf{V}=\sum_{n=0}^{N}\beta_{n}\mathbf{V}_{n}, 
\end{equation}
then the potential $\mathbf{Q}$ satisfies the mixed flow equation
\begin{equation}\label{mixed-flow}
    \mathbf{Q}_{t}=-\mathrm{ad}_{\mathbf{b}}\sum_{n=0}^{N}\beta_{n}\mathbf{L}_{n+1}^{\perp}. 
\end{equation}
The variation of a function $\mathcal{K}'(\mathbf{Q})$ is an operator and given by 
    \begin{equation}\label{def-var-Q-2}
        \frac{\delta \mathcal{K}'}{\delta \mathbf{Q}}(\mathbf{Q})[\delta \mathbf{Q}]=\frac{d }{d \epsilon}\mathcal{K}'(\mathbf{Q}+\epsilon\delta \mathbf{Q})|_{\epsilon=0}. 
    \end{equation}
Then the variation of $\mathbf{L}_{n}$ with respect to $\mathbf{Q}$ is governed by the following 
recursive relations by \eqref{L-diag-n-gen} and \eqref{L-off-n-gen}:
\begin{align}
    \partial_{x}\frac{\delta\mathbf{L}^{\pi_{0}}_{n+1}}{\delta \mathbf{Q}}=&\mathrm{ad}_{(\cdot)} \mathbf{L}^{\perp}_{n+1}+
    \mathrm{ad}_{\mathbf{Q}} \frac{\delta\mathbf{L}^{\perp}_{n+1}}{\delta \mathbf{Q}} ,\label{delta-diag-n-gen} \\
    \frac{\delta\mathbf{L}^{\perp}_{n+1}}{\delta \mathbf{Q}}=&-\ii\mathrm{ad}_{\mathbf{a}}^{-1}\left(
        \partial_{x}\frac{\delta\mathbf{L}^{\perp}_{n}}{\delta \mathbf{Q}}-\mathrm{ad}_{(\cdot)} \mathbf{L}^{\pi_{0}}_{n}
        -\mathrm{ad}_{\mathbf{Q}} \frac{\delta\mathbf{L}^{\pi_{0}}_{n}}{\delta \mathbf{Q}}
    \right).\label{delta-off-n-gen}
\end{align}
The following theorem concerns the linearized problem associated with the mixed flow equation:

\begin{thm}\label{var-L}
    Let $\mathbf{G} = \mathbf{G}(\lambda; x,t)$ satisfy the stationary zero curvature equations
    \begin{equation}\label{sta-zero-G}
        \mathbf{G}_{x}=[\ii \lambda \mathbf{a}+\mathbf{Q},\mathbf{G}],\quad \mathbf{G}_{t}=[\mathbf{V},\mathbf{G}], 
    \end{equation}
    where $\mathbf{V}$ is given by \eqref{V-def}. Then the function $\mathbf{G}$ satisfies the linearized evolution equation
    \begin{equation}\label{G-t-linear-pro-gen}
        \mathbf{G}_{t}=-\sum_{n=0}^{N}\beta_{n}
        \frac{\delta \mathbf{L}_{n+1}}{\delta \mathbf{Q}}(\mathrm{ad}_{\mathbf{b}}\mathbf{G}). 
    \end{equation}
    In particular, the quantity $\mathrm{ad}_{\mathbf{b}} \mathbf{G}^{\perp}$ solves the linearized problem associated with the mixed flow equation \eqref{mixed-flow}, namely,
    \begin{equation}\label{G-t-linear-pro-gen-1}
        \mathrm{ad}_{\mathbf{b}}\mathbf{G}^{\perp}_{t}=-\mathrm{ad}_{\mathbf{b}}\sum_{n=0}^{N}\beta_{n}
        \frac{\delta \mathbf{L}_{n+1}^{\perp}}{\delta \mathbf{Q}}(\mathrm{ad}_{\mathbf{b}}\mathbf{G}^{\perp}). 
    \end{equation}
\end{thm}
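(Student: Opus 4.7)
The plan is to reduce Theorem \ref{var-L} to establishing, for each $n \in \{0,1,\dots,N\}$, the single-index identity
\begin{equation*}
[\mathbf{V}_n,\mathbf{G}] \;=\; -\frac{\delta \mathbf{L}_{n+1}}{\delta \mathbf{Q}}[\mathrm{ad}_{\mathbf{b}}\mathbf{G}],
\end{equation*}
after which (\ref{G-t-linear-pro-gen}) follows by multiplying by $\beta_n$, summing, and invoking $\mathbf{G}_t = [\mathbf{V},\mathbf{G}]$; the projected statement (\ref{G-t-linear-pro-gen-1}) then results from taking the $\mathcal{T}^{\perp}$ projection (which commutes with the linearization operator) and applying $\mathrm{ad}_\mathbf{b}$, using $\mathrm{ad}_\mathbf{b}|_{\mathcal{T}}=0$ (so that $\mathrm{ad}_\mathbf{b}\mathbf{G}=\mathrm{ad}_\mathbf{b}\mathbf{G}^{\perp}$) together with $[\mathcal{T},\mathcal{T}^{\perp}]\subset \mathcal{T}^{\perp}$ (so that $\mathrm{ad}_\mathbf{b}$ preserves $\mathcal{T}^{\perp}$).

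The single-index identity is proved by induction on $n$. The base case $n=0$ is an immediate check: $\mathbf{V}_0 = \ii\mathbf{L}_0 = \ii\mathbf{b}$ gives $[\mathbf{V}_0,\mathbf{G}] = \ii\,\mathrm{ad}_\mathbf{b}\mathbf{G}$, while $\mathbf{L}_1 = -\ii\mathbf{Q}$ gives $\frac{\delta\mathbf{L}_1}{\delta\mathbf{Q}}[\mathbf{W}] = -\ii\mathbf{W}$, so the two sides coincide. For the inductive step, I exploit the algebraic recursion $\mathbf{V}_{n+1} = \lambda\mathbf{V}_n + \ii\mathbf{L}_{n+1}$, a direct consequence of $\mathbf{V}_n = \ii(\lambda^n\mathbf{L})_+$. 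After substituting the inductive hypothesis in $\lambda[\mathbf{V}_n,\mathbf{G}]$, the task reduces to verifying
\begin{equation*}
-\lambda\,\frac{\delta\mathbf{L}_{n+1}}{\delta\mathbf{Q}}[\mathrm{ad}_\mathbf{b}\mathbf{G}] + \ii[\mathbf{L}_{n+1},\mathbf{G}] \;=\; -\frac{\delta\mathbf{L}_{n+2}}{\delta\mathbf{Q}}[\mathrm{ad}_\mathbf{b}\mathbf{G}],
\end{equation*}
whose right-hand side I expand using the linearized recursion (\ref{delta-diag-n-gen})--(\ref{delta-off-n-gen}). The explicit factor of $\lambda$ on the left is absorbed by converting $\ii\lambda[\mathbf{a},\mathbf{G}]$ into $\mathbf{G}_x - [\mathbf{Q},\mathbf{G}]$ via the stationary equation satisfied by $\mathbf{G}$; this substitution is the only place where the hypothesis $\mathbf{G}_x = [\ii\lambda\mathbf{a}+\mathbf{Q},\mathbf{G}]$ enters.

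The main obstacle is the careful bookkeeping of the $\mathcal{T}\oplus\mathcal{T}^{\perp}$ grading throughout the inductive step, since the operators $\partial_x^{-1}$, $\mathrm{ad}_\mathbf{a}^{-1}$, $\mathrm{ad}_\mathbf{Q}$, and $\mathrm{ad}_\mathbf{a}$ behave differently on the two components and the linearized recursion intertwines them. I will project the matching identity onto $\mathcal{T}$ and $\mathcal{T}^{\perp}$ separately and close the induction by invoking $[\mathcal{T}^{\perp},\mathcal{T}^{\perp}]\subset \mathcal{T}$, $[\mathcal{T},\mathcal{T}^{\perp}]\subset \mathcal{T}^{\perp}$, the invertibility of $\mathrm{ad}_\mathbf{a}$ on $\mathcal{T}^{\perp}$, and $\mathrm{ad}_\mathbf{a}|_{\mathcal{T}}=0$; the $\mathcal{T}^{\perp}$ component is expected to reproduce (\ref{delta-off-n-gen}) with $\delta\mathbf{Q}=\mathrm{ad}_\mathbf{b}\mathbf{G}$, and the $\mathcal{T}$ component to reproduce (\ref{delta-diag-n-gen}) after inverting $\partial_x$, where the Schwartz decay $\mathbf{Q}\in \mathcal{S}(\mathcal{T}^{\perp})$ is used to discard boundary contributions at $-\infty$.
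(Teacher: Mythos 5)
Your proposal is correct and follows essentially the same route as the paper: both reduce the theorem to the single-index identity $[\mathbf{V}_n,\mathbf{G}]=-\frac{\delta\mathbf{L}_{n+1}}{\delta\mathbf{Q}}(\mathrm{ad}_{\mathbf{b}}\mathbf{G})$ and prove it by induction, splitting into $\mathcal{T}$ and $\mathcal{T}^{\perp}$ components, feeding the inductive hypothesis into the linearized recursions \eqref{delta-diag-n-gen}--\eqref{delta-off-n-gen}, inserting the stationary equation for $\mathbf{G}_x$, and integrating the $\mathcal{T}$-component using the decay of $\mathbf{Q}$. Your organization of the step via $\mathbf{V}_{n+1}=\lambda\mathbf{V}_n+\ii\mathbf{L}_{n+1}$ is just the paper's computation of $\partial_x[\mathbf{V}_n,\mathbf{G}]$ read from the other end, so the two arguments coincide in substance.
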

The formula \eqref{G-t-linear-pro-gen-1} follows immediately by applying $\mathrm{ad}_{\mathbf{b}}$ to both sides of \eqref{G-t-linear-pro-gen}.
In view of the time component of the stationary zero curvature equations and \eqref{G-t-linear-pro-gen},
it suffices to verify that the right-hand side of \eqref{G-t-linear-pro-gen} coincides with $[\mathbf{V}, \mathbf{G}]$,
which is a relation determined solely by the spatial part of the stationary zero curvature equations, as shown in the following lemma.
\begin{lem}
    Let $\mathbf{G} = \mathbf{G}(\lambda; x)$ satisfy the first-order differential system
    \begin{equation}\label{G-x-a}
        \mathbf{G}_{x}=[\ii \lambda \mathbf{a}+\mathbf{Q},\mathbf{G}], 
    \end{equation}
    then for all $n \geq 0$, the following relation holds:
    \begin{align}
        [\mathbf{V}_{n},\mathbf{G}]=&-  
        \frac{\delta \mathbf{L}_{n+1}}{\delta \mathbf{Q}}(\mathrm{ad}_{\mathbf{b}}\mathbf{G}),\label{inter-G}
    \end{align}
    where $\mathbf{V}_{n}$ is defined in \eqref{Vn-def}.
\end{lem}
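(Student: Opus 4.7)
The plan is to prove the identity by a generating-function argument that bypasses the order-by-order computation. Introduce an auxiliary spectral parameter $\zeta$, distinct from the $\lambda$ carried by $\mathbf{G}$, and view \eqref{equ-Lx-a} as $\mathbf{L}_{x}=[\ii\zeta\mathbf{a}+\mathbf{Q},\mathbf{L}(\zeta)]$ with $\mathbf{L}(\zeta)=\mathbf{b}+\sum_{n\geq 1}\mathbf{L}_{n}\zeta^{-n}$. Linearizing this stationary equation with respect to $\mathbf{Q}$ yields
\[
(\delta\mathbf{L})_{x}=[\delta\mathbf{Q},\mathbf{L}(\zeta)]+[\ii\zeta\mathbf{a}+\mathbf{Q},\delta\mathbf{L}],
\]
and together with the normalization $\delta\mathbf{L}=\mathcal{O}(\zeta^{-1})$ this uniquely determines the formal Laurent series $\delta\mathbf{L}=\sum_{n\geq 1}\delta\mathbf{L}_{n}\zeta^{-n}$ via the recursions \eqref{delta-diag-n-gen}--\eqref{delta-off-n-gen}. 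It therefore suffices to exhibit, for the specific input $\delta\mathbf{Q}=\mathrm{ad}_{\mathbf{b}}\mathbf{G}(\lambda)$, a closed-form $\delta\mathbf{L}(\zeta)$ whose $\zeta^{-(n+1)}$-coefficient matches the right-hand side of \eqref{inter-G}.

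My central ansatz is the rational expression
\[
\delta\mathbf{L}(\zeta)\;=\;\frac{\ii}{\lambda-\zeta}\,\bigl[\mathbf{L}(\zeta),\mathbf{G}(\lambda)\bigr].
\]
To verify it, I would differentiate in $x$ using the two stationary equations for $\mathbf{L}(\zeta)$ and $\mathbf{G}(\lambda)$, and apply Jacobi to rewrite $[[\ii\zeta\mathbf{a}+\mathbf{Q},\mathbf{L}],\mathbf{G}]=[\ii\zeta\mathbf{a}+\mathbf{Q},[\mathbf{L},\mathbf{G}]]-[\mathbf{L},[\ii\zeta\mathbf{a}+\mathbf{Q},\mathbf{G}]]$. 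The two terms of the form $[\mathbf{L},[\ldots,\mathbf{G}]]$ combine into $\ii(\lambda-\zeta)[\mathbf{L},[\mathbf{a},\mathbf{G}]]$, whose prefactor cancels the denominator $(\lambda-\zeta)$. What remains is $[\ii\zeta\mathbf{a}+\mathbf{Q},\delta\mathbf{L}]+[\mathrm{ad}_{\mathbf{a}}\mathbf{G}(\lambda),\mathbf{L}(\zeta)]$, and since $\mathbf{a}-\mathbf{b}\in C(\mathcal{U})$ forces $\mathrm{ad}_{\mathbf{a}}=\mathrm{ad}_{\mathbf{b}}$ on all of $\mathcal{U}$, this equals $[\ii\zeta\mathbf{a}+\mathbf{Q},\delta\mathbf{L}]+[\delta\mathbf{Q},\mathbf{L}(\zeta)]$, confirming the ansatz.

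The lemma will then follow by expanding $1/(\lambda-\zeta)=-\sum_{k\geq 0}\lambda^{k}\zeta^{-k-1}$ and multiplying by $[\mathbf{L}(\zeta),\mathbf{G}(\lambda)]=\sum_{j\geq 0}[\mathbf{L}_{j},\mathbf{G}(\lambda)]\zeta^{-j}$: the $\zeta^{-(n+1)}$-coefficient reads
\[
\delta\mathbf{L}_{n+1}=-\ii\sum_{j=0}^{n}\lambda^{n-j}[\mathbf{L}_{j},\mathbf{G}(\lambda)]=-[\mathbf{V}_{n}(\lambda),\mathbf{G}(\lambda)],
\]
which is precisely \eqref{inter-G}. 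The main obstacle will be the Jacobi cancellation in the previous step: it depends essentially on the stronger hypothesis $\mathbf{a}-\mathbf{b}\in C(\mathcal{U})$ rather than merely $[\mathbf{a},\mathbf{b}]=0$, and is the only place where the full centrality assumption is invoked. A more laborious alternative would be direct induction on $n$ via $\mathbf{V}_{n}=\zeta\mathbf{V}_{n-1}+\ii\mathbf{L}_{n}$; the inductive step reduces, using \eqref{delta-diag-n-gen}--\eqref{delta-off-n-gen}, to the algebraic identity $[\mathrm{ad}_{\mathbf{a}}X,\mathrm{ad}_{\mathbf{a}}^{-1}Y]=-[X,Y]$ for $X,Y\in\mathcal{T}^{\perp}$, which itself follows from Jacobi and $\mathbf{a}\in C(\mathcal{T})$ since $[X,\mathrm{ad}_{\mathbf{a}}^{-1}Y]\in\mathcal{T}$ by $[\mathcal{T}^{\perp},\mathcal{T}^{\perp}]\subset\mathcal{T}$.
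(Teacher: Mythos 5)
Your generating-function argument is genuinely different from the paper's proof, which proceeds by induction on $n$ and verifies the $\mathcal{T}$- and $\mathcal{T}^{\perp}$-components of \eqref{inter-G} separately through the recursions \eqref{delta-diag-n-gen}--\eqref{delta-off-n-gen}. Your central computation is correct: differentiating the ansatz $\delta\mathbf{L}(\zeta)=\tfrac{\ii}{\lambda-\zeta}[\mathbf{L}(\zeta),\mathbf{G}(\lambda)]$, applying Jacobi, and using $\mathrm{ad}_{\mathbf{a}}=\mathrm{ad}_{\mathbf{b}}$ does produce $(\delta\mathbf{L})_{x}=[\delta\mathbf{Q},\mathbf{L}]+[\ii\zeta\mathbf{a}+\mathbf{Q},\delta\mathbf{L}]$ with $\delta\mathbf{Q}=\mathrm{ad}_{\mathbf{b}}\mathbf{G}$, and the $\zeta^{-(n+1)}$-coefficient is $-\ii\sum_{j=0}^{n}\lambda^{n-j}[\mathbf{L}_{j},\mathbf{G}]=-[\mathbf{V}_{n},\mathbf{G}]$. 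What your route buys is that the single rational identity replaces the paper's order-by-order bookkeeping, and it isolates cleanly where $\mathbf{a}-\mathbf{b}\in C(\mathcal{U})$ enters; notably, your main computation never invokes $[\mathcal{T}^{\perp},\mathcal{T}^{\perp}]\subset\mathcal{T}$, whereas the paper needs it at each inductive step for the diagonal part.

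The one genuine gap is the uniqueness claim. The linearized stationary equation together with $\delta\mathbf{L}=\mathcal{O}(\zeta^{-1})$ does \emph{not} uniquely determine the series: at each order the recursion \eqref{delta-off-n-gen} fixes $\delta\mathbf{L}_{n+1}^{\perp}$ algebraically, but \eqref{delta-diag-n-gen} only prescribes $\partial_{x}\delta\mathbf{L}_{n+1}^{\pi_{0}}$, so the $\mathcal{T}$-component is determined only up to an $x$-independent element of $\mathcal{T}$ at every order (e.g.\ any constant $\mathbf{E}_{1}^{\pi_{0}}\in\mathcal{T}$ solves the homogeneous problem at leading order). The operator $\tfrac{\delta\mathbf{L}_{n+1}}{\delta\mathbf{Q}}$ on the right of \eqref{inter-G} is pinned down by the paper's convention $\partial_{x}^{-1}=\int_{-\infty}^{x}$ with vanishing integration constant, so to close your argument you must additionally verify that the $\mathcal{T}$-part of your ansatz, namely $-\ii\sum_{j=1}^{n}\lambda^{n-j}[\mathbf{L}_{j},\mathbf{G}]^{\pi_{0}}$ (the $j=0$ term lies in $\mathcal{T}^{\perp}$), obeys the same normalization as $x\to-\infty$; this uses $\mathbf{L}_{j}\to 0$ for $j\geq 1$ together with control of $\mathbf{G}$ at $-\infty$. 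This is a fixable normalization check rather than a structural flaw --- the paper's own induction makes the analogous jump from $\partial_{x}(A)=0$ to $A=0$ --- but as written your "uniquely determines" step is where the proof is incomplete. Your closing algebraic identity $[\mathrm{ad}_{\mathbf{a}}X,\mathrm{ad}_{\mathbf{a}}^{-1}Y]=-[X,Y]$ for $X,Y\in\mathcal{T}^{\perp}$ is correct and is essentially the mechanism the paper uses in \eqref{var-L2-diag}.
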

\begin{proof}
    In what follows, we show that
\begin{align}
    [\mathbf{V}_{n},\mathbf{G}]^{\perp}=&-
    \frac{\delta \mathbf{L}_{n+1}^{\perp}}{\delta \mathbf{Q}}(\mathrm{ad}_{\mathbf{b}}\mathbf{G}^{\perp}),\label{inter-G-off} \\
    [\mathbf{V}_{n},\mathbf{G}]^{\pi_{0}}=&-  
    \frac{\delta \mathbf{L}_{n+1}^{\pi_{0}}}{\delta \mathbf{Q}}(\mathrm{ad}_{\mathbf{b}}\mathbf{G}^{\perp})\label{inter-G-diag},
\end{align}
which is equivalent to \eqref{inter-G}, since $\mathrm{ad}_{\mathbf{b}}\mathbf{G}^{\perp}\in \mathcal{T}^{\perp}$. 

    We prove it by induction. For $n=0$, we have 
    \begin{equation}\label{var-L1}
        -\frac{\delta \mathbf{L}_{1}}{\delta \mathbf{Q}}(\mathrm{ad}_{\mathbf{b}}\mathbf{G})=\ii\mathrm{ad}_{\mathbf{b}}\mathbf{G}=[\mathbf{V}_{0},\mathbf{G}].
    \end{equation} 
    If $n=1$, for \eqref{inter-G-off}, we have 
    \begin{align*}
        [\mathbf{V}_{1},\mathbf{G}]^{\perp}=\ii \lambda [\mathbf{a}, \mathbf{G}^{\perp}]+[\mathbf{Q},\mathbf{G}^{\pi_{0}}]
    \end{align*}
    and 
    \begin{align*}
        -\frac{\delta \mathbf{L}_{2}^{\perp}}{\delta \mathbf{Q}}(\mathrm{ad}_{\mathbf{a}}\mathbf{G}^{\perp})=&\partial_{x}\mathbf{G}^{\perp}
        =\ii \lambda [\mathbf{a}, \mathbf{G}^{\perp}]+[\mathbf{Q},\mathbf{G}^{\pi_{0}}].
    \end{align*}
    The diagonal part can be obtained by 
    \begin{equation}\label{var-L2-diag}
        \begin{split}
            \partial_{x}\left([\mathbf{V}_{1},\mathbf{G}]^{\pi_{0}}+\frac{\delta \mathbf{L}_{2}^{\pi_{0}}}{\delta \mathbf{Q}}(\mathrm{ad}_{\mathbf{a}}\mathbf{G}^{\perp})\right)
        =&\partial_{x}[\mathbf{Q},\mathbf{G}^{\perp}]-
        [\mathrm{ad}_{\mathbf{a}}\mathbf{G}^{\perp},\mathrm{ad}_{\mathbf{a}}^{-1}\partial_{x}\mathbf{Q}]-
        \mathrm{ad}_{\mathbf{Q}} \mathrm{ad}_{\mathbf{a}}^{-1}(\mathrm{ad}_{\mathbf{a}}\partial_{x}\mathbf{G}^{\perp})\\
        =&
        [[\mathbf{G}^{\perp},\mathrm{ad}_{\mathbf{a}}^{-1}\mathbf{Q}_{x}],\mathbf{a}] \\
        =&0
        \end{split}
    \end{equation}
    since $\mathbf{a}\in C(\mathcal{T})$ and $\mathbf{Q}\in \mathcal{S}(\mathcal{T}^{\perp})$. 
    Now, assuming that \eqref{inter-G-off} and \eqref{inter-G-diag} hold, differentiating both sides of \eqref{inter-G-off} with respect to $x$, we obtain
    \begin{align*}
        -\partial_{x}\frac{\delta \mathbf{L}_{n+1}^{\perp}}{\delta \mathbf{Q}}(\mathrm{ad}_{\mathbf{a}}\mathbf{G}^{\perp})=&
        [\partial_{x}\mathbf{V}_{n},\mathbf{G}]^{\perp}+[\mathbf{V}_{n},\partial_{x}\mathbf{G}]^{\perp} \\
        =&
        \left(\ii[[\mathbf{a},\mathbf{V}_{n+1}],\mathbf{G}]+\ii[\mathbf{V}_{n+1},[\mathbf{a},\mathbf{G}]]
        +[\mathbf{L}_{n+1},[\mathbf{a},\mathbf{G}]]
        \right.\\&\left.+[[\mathbf{Q},\mathbf{V}_{n}],\mathbf{G}]+
        [[\mathbf{G},\mathbf{Q}],\mathbf{V}_{n}]\right)^{\perp}\\
        =&\left(-\ii[[\mathbf{V}_{n+1},\mathbf{G}],\mathbf{a}]+[\mathbf{L}_{n+1},[\mathbf{a},\mathbf{G}]]-
        [[\mathbf{V}_{n},\mathbf{G}],\mathbf{Q}]\right)^{\perp}
    \end{align*}
    since
    \begin{align*}
        [\partial_{x}\mathbf{V}_{n},\mathbf{G}]=&\ii\left[\sum_{i=0}^{n}\lambda^{i}\partial_{x}\mathbf{L}_{n-i},\mathbf{G}\right] \\
        =&\ii\left[
            \sum_{i=0}^{n}\lambda^{i}\ii [\mathbf{a},\mathbf{L}_{n-i+1}]
            +\lambda^{i}[\mathbf{Q},\mathbf{L}_{n-i}],\mathbf{G}
        \right]\\
        =& \ii\left[
            [\mathbf{a},\mathbf{V}_{n+1}-\ii\lambda^{n+1}\mathbf{a}]
            -\ii[\mathbf{Q},\mathbf{V}_{n}],\mathbf{G}
        \right]\\
        =&\ii[[\mathbf{a},\mathbf{V}_{n+1}],\mathbf{G}]+
        [[\mathbf{Q},\mathbf{V}_{n}],\mathbf{G}]
    \end{align*} 
    and 
    \begin{align*}
        [\mathbf{V}_{n},\partial_{x}\mathbf{G}]=&[\mathbf{V}_{n},\ii[\lambda \mathbf{a},\mathbf{G}]+[\mathbf{Q},\mathbf{G}]] \\
        =&\ii [\lambda \mathbf{V}_{n},[\mathbf{a},\mathbf{G}]]+
        [\mathbf{V}_{n},[\mathbf{Q},\mathbf{G}]]\\
        =&\ii[\mathbf{V}_{n+1},[\mathbf{a},\mathbf{G}]]
        +[\mathbf{L}_{n+1},[\mathbf{a},\mathbf{G}]]+[[\mathbf{G},\mathbf{Q}],\mathbf{V}_{n}].
    \end{align*}
    Hence
    \begin{align*}
        - 
        \frac{\delta \mathbf{L}_{n+2}^{\perp}}{\delta \mathbf{Q}}(\mathrm{ad}_{\mathbf{a}}\mathbf{G}^{\perp})=&
        \ii\mathrm{ad}_{\mathbf{a}}^{-1}\left(\partial_{x}\frac{\delta\mathbf{L}^{\perp}_{n+1}}{\delta \mathbf{Q}}(\mathrm{ad}_{\mathbf{a}}\mathbf{G}^{\perp})-[\mathrm{ad}_{\mathbf{a}}\mathbf{G}^{\perp},\mathbf{L}^{\pi_{0}}_{n+1}]
        -\mathrm{ad}_{\mathbf{Q}} \frac{\delta\mathbf{L}^{\pi_{0}}_{n+1}}{\delta \mathbf{Q}}(\mathrm{ad}_{\mathbf{a}}\mathbf{G}^{\perp})\right)\\
        =&\ii\mathrm{ad}_{\mathbf{a}}^{-1}\left(\ii[[\mathbf{V}_{n+1},\mathbf{G}],\mathbf{a}]-[\mathbf{L}_{n+1},\mathrm{ad}_{\mathbf{a}}\mathbf{G}^{\perp}]+
        [[\mathbf{V}_{n},\mathbf{G}],\mathbf{Q}]\right)^{\perp}\\&-
        \ii\mathrm{ad}_{\mathbf{a}}^{-1}[\mathrm{ad}_{\mathbf{a}}\mathbf{G}^{\perp},\mathbf{L}^{\pi_{0}}_{n+1}]-\ii\mathrm{ad}_{\mathbf{a}}^{-1}\left[\mathbf{Q},\frac{\delta\mathbf{L}^{\pi_{0}}_{n+1}}{\delta \mathbf{Q}}(\mathrm{ad}_{\mathbf{a}}\mathbf{G}^{\perp})\right]\\
        =&[\mathbf{V}_{n+1},\mathbf{G}]^{\perp}-\ii\mathrm{ad}_{\mathbf{a}}^{-1}[\mathbf{Q},[\mathbf{V}_{n},\mathbf{G}]^{\pi_{0}}]-
        \ii\mathrm{ad}_{\mathbf{a}}^{-1}\left[\mathbf{Q},\frac{\delta\mathbf{L}^{\pi_{0}}_{n+1}}{\delta \mathbf{Q}}(\mathrm{ad}_{\mathbf{a}}\mathbf{G}^{\perp})\right]\\
        =&[\mathbf{V}_{n+1},\mathbf{G}]^{\perp}-\ii\mathrm{ad}_{\mathbf{a}}^{-1}\left[\mathbf{Q},[\mathbf{V}_{n},\mathbf{G}]^{\pi_{0}}+\frac{\delta\mathbf{L}^{\pi_{0}}_{n+1}}{\delta \mathbf{Q}}(\mathrm{ad}_{\mathbf{a}}\mathbf{G}^{\perp})\right]\\
        =&[\mathbf{V}_{n+1},\mathbf{G}]^{\perp}. 
    \end{align*}
    by \eqref{delta-off-n-gen} and \eqref{inter-G-diag}. 

    It remains to prove that \eqref{inter-G-diag} holds when $n$ is replaced by $n+1$. This follows from the identity
    \begin{align*}
        \partial_{x}\frac{\delta \mathbf{L}_{n+2}^{\pi_{0}}}{\delta \mathbf{Q}}(\mathrm{ad}_{\mathbf{a}}\mathbf{G}^{\perp})=&
        [\mathrm{ad}_{\mathbf{a}}\mathbf{G}^{\perp},\mathbf{L}_{n+2}^{\perp}]-[\mathbf{Q},[\mathbf{V}_{n+1},\mathbf{G}]^{\perp}]\\
        =&[[\mathbf{a},\mathbf{G}],\mathbf{L}_{n+2}]^{\pi_{0}}-[\mathbf{Q},[\mathbf{V}_{n+1},\mathbf{G}]]^{\pi_{0}}\\
        =&-\partial_{x}[\mathbf{V}_{n+1},\mathbf{G}]^{\pi_{0}}
    \end{align*}
    which is derived using \eqref{delta-diag-n-gen}, \eqref{delta-off-n-gen}, and \eqref{L-off-n-gen}.
\end{proof}
Formula \eqref{G-t-linear-pro-gen} in Theorem \eqref{var-L} results from applying $\sum \beta_n$ to both sides of \eqref{inter-G}. 
\begin{rem}
    The condition $\mathbf{a}-\mathbf{b}\in C(\mathcal{T})$ is required by \eqref{var-L1}, while \eqref{sym-T-bot} is necessary for 
    working within the space $\mathcal{T}^{\perp}$, as seen from \eqref{var-L2-diag}. Unlike the approach in \cite{terng1997soliton}, we 
    do not assume that $\mathbf{a}$ is a regular element, but only require that
    $\mathrm{ad}_{\mathbf{a}}$ is invertible on $\mathcal{T}^{\perp}$. The structural assumption 
    \eqref{sym-T-bot} also allows us to project $[\mathbf{Q},\mathbf{G}]$ onto the subspaces 
    $\mathcal{T}$ and $\mathcal{T}^{\bot}$, yielding $[\mathbf{Q},\mathbf{G}^{\perp}]$ and $[\mathbf{Q},\mathbf{G}^{\pi_{0}}]$, 
    respectively. 
\end{rem}

\subsection{Steady-state solutions to flow equations}
The steady-state solutions, which satisfy an ODE in the spatial variable, form a large class of solutions to the mixed flow equations \eqref{mixed-flow}. For instance, soliton solutions fall into this category \cite{kapitula2007stability}.
This raises the natural question of how to construct more general steady-state solutions for such flows. 
The following theorem addresses this problem by employing the stationary zero curvature equations, 
which also allow for the construction of other types of solutions \cite{novikov1984theory}, such as those expressed in terms of 
elliptic functions (see \cite{ling2023stability}). 
Here we assume that $\mathbf{L}_{n}\in \mathcal{A}$ where $\mathcal{A}=\cup_{n=0}^{\infty}\mathcal{A}_{n}$ with 
$\mathcal{A}_{n}$ denote the algebra of polynomials in $\mathbf{Q}$ and its derivatives up to order $n$.
Then $\mathbf{L}_{n+1}$ is a differential polynomial in $\mathbf{Q}$ and its derivatives with 
respect to $x$. Then one has \cite{sattinger1985hamiltonian}
\begin{equation*}
    \mathbf{L}_{n}^{\perp} \in \mathcal{A}_{n-1},\quad \mathbf{L}_{n}^{\pi_{0}}\in \mathcal{A}_{n-2}.
\end{equation*}
\begin{thm}\label{fin-res-G}
    For $\alpha_{m}\in \mathbb{C}$, let $\mathbf{G}=\mathbf{G}(\lambda;x,t)$ be a function of the form
    \begin{equation}\label{res-G-def}
        \mathbf{G}=\sum_{m=0}^{M}\alpha_{m}\mathbf{V}_{m}    
    \end{equation}
    satisfying the stationary zero curvature equations
    \begin{equation}\label{G-equation-gen}
        \mathbf{G}_{x}=[\ii \lambda \mathbf{a}+\mathbf{Q},\mathbf{G}],\quad \mathbf{G}_{t}=[\mathbf{V},\mathbf{G}], 
    \end{equation}
    where $\mathbf{V}$ is defined in \eqref{V-def}. 
    Then such a function $\mathbf{G}$ exists if and only if the mixed flow equation \eqref{mixed-flow} and 
    the differential equation about $\mathbf{Q}$
    \begin{equation}\label{equ-G-condition-gen}
        \sum_{m=0}^{M}\alpha_{m}\mathbf{L}_{m+1}^{\perp}(\mathbf{Q})=0
    \end{equation}
    hold. 
\end{thm}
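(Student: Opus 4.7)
The plan is to prove both directions of the equivalence from a single structural identity,
\[
\partial_x \mathbf{V}_m - [\ii\lambda \mathbf{a}+\mathbf{Q}, \mathbf{V}_m] = -\mathrm{ad}_{\mathbf{b}} \mathbf{L}_{m+1}^{\perp},
\]
which is exactly the derivation of the $n$-th flow equation presented in the paragraph just before the theorem, combined with the commutativity of the AKNS flows. Summing this identity with weights $\alpha_m$ gives $\partial_x \mathbf{G} - [\ii\lambda \mathbf{a}+\mathbf{Q}, \mathbf{G}] = -\mathrm{ad}_{\mathbf{b}} \sum_m \alpha_m \mathbf{L}_{m+1}^{\perp}$, so the spatial stationary zero curvature equation for $\mathbf{G}$ is equivalent to \eqref{equ-G-condition-gen} (using invertibility of $\mathrm{ad}_{\mathbf{b}}$ on $\mathcal{T}^{\perp}$).

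For the temporal equation in the sufficient direction, I would use the universal AKNS compatibility identity $\partial_{t_n}\mathbf{V}_m - \partial_{t_m}\mathbf{V}_n + [\mathbf{V}_n,\mathbf{V}_m] = 0$, where $\partial_{t_k}$ denotes the Gateaux derivative in the direction of the $k$-th flow vector field $-\mathrm{ad}_{\mathbf{b}} \mathbf{L}_{k+1}^{\perp}$. Given the mixed flow \eqref{mixed-flow}, the chain rule and this compatibility yield
\[
\mathbf{G}_t = \sum_{m,n} \alpha_m \beta_n\, \partial_{t_n}\mathbf{V}_m = \sum_{m,n} \alpha_m \beta_n \bigl(\partial_{t_m}\mathbf{V}_n + [\mathbf{V}_n,\mathbf{V}_m]\bigr) = \partial_s \mathbf{V} + [\mathbf{V}, \mathbf{G}],
\]
where $\partial_s \mathbf{Q} := -\mathrm{ad}_{\mathbf{b}} \sum_m \alpha_m \mathbf{L}_{m+1}^{\perp}$ vanishes by \eqref{equ-G-condition-gen}, forcing $\partial_s \mathbf{V} = 0$ and hence $\mathbf{G}_t = [\mathbf{V}, \mathbf{G}]$.

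For the necessary direction, assume both stationary zero curvature equations. The spatial one gives \eqref{equ-G-condition-gen} as above. The computation of the previous paragraph is in fact purely formal in $\mathbf{Q}$: granting only \eqref{equ-G-condition-gen}, it establishes the universal identity $[\mathbf{V}, \mathbf{G}] = \frac{\delta \mathbf{G}}{\delta \mathbf{Q}}[-\mathrm{ad}_{\mathbf{b}}\sum_n \beta_n \mathbf{L}_{n+1}^{\perp}]$, regardless of the actual time derivative of $\mathbf{Q}$. Since $\mathbf{G}_t = \frac{\delta \mathbf{G}}{\delta \mathbf{Q}}[\mathbf{Q}_t]$ by the chain rule, the temporal equation is equivalent to
\[
\frac{\delta \mathbf{G}}{\delta \mathbf{Q}}\Bigl[\mathbf{Q}_t + \mathrm{ad}_{\mathbf{b}}\sum_n \beta_n \mathbf{L}_{n+1}^{\perp}\Bigr] = 0.
\]
Letting $M$ be the largest index with $\alpha_M \neq 0$ and using $\mathbf{L}_0 = \mathbf{b}$ (constant) together with $\mathbf{L}_1 = -\ii\mathbf{Q}$, the $\lambda^{M-1}$ coefficient of $\frac{\delta \mathbf{G}}{\delta \mathbf{Q}}[h]$ equals $\alpha_M h$. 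Extracting this coefficient forces the bracket to vanish, yielding the mixed flow equation \eqref{mixed-flow}.

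The main obstacle is to justify rigorously, in the abstract Lie algebra framework, the AKNS compatibility identity $\partial_{t_n}\mathbf{V}_m - \partial_{t_m}\mathbf{V}_n + [\mathbf{V}_n, \mathbf{V}_m] = 0$ as a universal identity among differential polynomials in $\mathbf{Q}$. Although standard in AKNS theory, in the present generality it should be verified either by induction on $m+n$ using the recursions \eqref{L-diag-n-gen}--\eqref{L-off-n-gen}, or by computing $\partial_{t_n}(\partial_x \mathbf{V}_m - [\mathbf{U}, \mathbf{V}_m])$ in two different ways and invoking the spatial-ZCE-residue identity established above.
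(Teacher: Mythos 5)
Your spatial argument is essentially the paper's: summing $\partial_x\mathbf{V}_m-[\ii\lambda\mathbf{a}+\mathbf{Q},\mathbf{V}_m]=-\mathrm{ad}_{\mathbf{b}}\mathbf{L}_{m+1}$ over $m$ with weights $\alpha_m$ is exactly the computation \eqref{G-x-part} built on the identity \eqref{connect-U-Vm}, and it correctly identifies the spatial stationary equation with \eqref{equ-G-condition-gen}. Your coefficient-extraction argument for recovering the mixed flow equation from the temporal equation (reading off the $\lambda^{M-1}$ coefficient, where the variation is exactly $\alpha_M h$) is also sound, and is arguably cleaner than the paper's appeal to compatibility of the linear system.

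The gap is the temporal half. Everything there rests on the identity $\partial_{t_n}\mathbf{V}_m-\partial_{t_m}\mathbf{V}_n\pm[\mathbf{V}_n,\mathbf{V}_m]=0$, which you assert as ``standard in AKNS theory'' and defer to a closing remark. In the paper's setting --- a general Lie algebra $\mathcal{U}$ with subalgebra $\mathcal{T}$ satisfying $[\mathcal{T}^{\perp},\mathcal{T}^{\perp}]\subset\mathcal{T}$ and an element $\mathbf{a}$ that is \emph{not} assumed regular --- this identity is precisely the nontrivial content of the theorem, not something one can cite. The paper proves the needed statement directly: it establishes $\ii[\mathbf{V}_n,\lambda^{m}\mathbf{L}]_{+}-[\mathbf{V}_n,\mathbf{V}_m]=-\sum_{j=1}^{n}\lambda^{n-j}\mathbf{L}_{j,m}$ with $\mathbf{L}_{j,m}=\sum_{i=0}^{j-1}[\mathbf{L}_i,\mathbf{L}_{m-i+j}]$, derives the recursion relations \eqref{rec-Ljnm-diag}--\eqref{rec-Ljnm-off} for the $\mathcal{T}$- and $\mathcal{T}^{\perp}$-projections of $\mathbf{L}_{j,m}$, and then shows $\sum_{m}\alpha_m\mathbf{L}_{j,m}=0$ by induction on $j$ using \eqref{equ-G-condition-gen}. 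That computation is exactly what your ``main obstacle'' paragraph would have to supply; without it, the proposal proves only the spatial half of the equivalence. A secondary symptom of the identity not having been pinned down: the sign in your stated version ($+[\mathbf{V}_n,\mathbf{V}_m]$) is inconsistent with the way you use it one line later, where you substitute $\partial_{t_n}\mathbf{V}_m=\partial_{t_m}\mathbf{V}_n+[\mathbf{V}_n,\mathbf{V}_m]$.
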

Without loss of generality, we set $\alpha_{M}\ne 0$. 
Note that \eqref{equ-G-condition-gen} is a differential equation of order $M$, since $\mathbf{L}_{m+1}^{\perp}\in\mathcal{A}_{m}$. 
By choosing different values of $M$, the equation \eqref{equ-G-condition-gen} can be used to construct steady-state solutions of the mixed flow equation.
Once such a solution is obtained, the corresponding function $\mathbf{G}$ can be reconstructed using the representation \eqref{res-G-def}, which defines 
$\mathbf{G}$ as a polynomial in $\lambda$, $\mathbf{Q}$ and derivatives of $\mathbf{Q}$ in 
view of the definition of $\mathbf{V}_{n}$ in \eqref{Vn-def}. For particular systems, such as 
integrable equations, a Darboux transformation provides a method to construct a 
new solution $\mathbf{G}^{[1]}$ associated with a transformed potential $\mathbf{Q}^{[1]}$. 

The condition $\mathbf{L}_{n}\in \mathcal{A}$ can be obtained in exact integrable equation 
with regular element $\mathbf{a}$ \cite{beals1984scattering,beals1985inverse,sattinger1985hamiltonian}. 
But the theorem we need in this paper is not the case $\mathbf{a}$ is a regular element. Since the 
condition $\mathbf{L}_{n}\in \mathcal{A}$ can be obtained \cite{sattinger1985hamiltonian} by 
the transfer matrix $\mathbf{S}$ and 
the sequence \cite{gel1975asymptotic}
\begin{equation}\label{exa-A}
    \mathcal{A}\xrightarrow{D}\mathcal{A}\xrightarrow{\nabla}\mathcal{A},
\end{equation}
is exact, i.e. $\mathrm{Im}(D)=\mathrm{Ker}(\nabla)$, where $D$ is the operator on $\mathcal{A}$
\begin{equation*}
    D=\frac{\partial}{\partial x}+\sum_{j=0}^{\infty} \mathbf{Q}_{j+1} \frac{\partial}{\partial \mathbf{Q}_{j}},\quad \mathbf{Q}_{j}=\partial_{x}^{j}\mathbf{Q}
\end{equation*}
and $\nabla$ is the Euler-Lagrange derivative 
\begin{equation*}
    \nabla=\sum_{n}(-D)^{n} \frac{\partial}{\partial \mathbf{Q}_{j}},
\end{equation*}
we can also prove the condition $\mathbf{L}_{n}\in \mathcal{A}$ in our case applying theorem 
in \cite{gel1975asymptotic}, see Remark \ref{L-in-A}. 

To prove Theorem \ref{fin-res-G}, it is necessary to analyze the relations between 
$\mathbf{V}_{n}$ and $\mathbf{V}_{m}$, taking into account the specific structure 
of $\mathbf{G}$ given by \eqref{res-G-def}. These relations are described in the following lemma.
\begin{lem}
    The identity 
    \begin{equation}\label{connect-U-Vm}
        [\ii \lambda \mathbf{a}+\mathbf{Q},\lambda^{m}\mathbf{L}]_{+}+\ii[\ii \lambda \mathbf{a}+\mathbf{Q},\mathbf{V}_{m}]=\ii[\mathbf{a},\mathbf{L}_{m+1}]
    \end{equation} 
    holds for all $m\geq 0$. In addition, the following identity is satisfied:
    \begin{equation}\label{connect-Vn-Vm}
        \ii[\mathbf{V}_{n},{\lambda^{m}\mathbf{L}}]_{+}-[\mathbf{V}_{n},\mathbf{V}_{m}]=
        -\sum_{j=1}^{n}\lambda^{n-j}\mathbf{L}_{j,m}
    \end{equation}
    where 
    \begin{equation*}
        \mathbf{L}_{j,m}=\sum_{i=0}^{j-1}[\mathbf{L}_{i},\mathbf{L}_{m-i+j}]. 
    \end{equation*}
    These identities are valid for all $n, m\geq 0$ and $j\geq 1$. 
    The quantities $\mathbf{L}_{j,m}$ satisfy the following recursion relations for their projections onto 
    $\mathcal{T}$ and $\mathcal{T}^{\bot}$, respectively:
    \begin{align}
        \mathbf{L}_{j,m}^{\pi_{0}}=&\ii \partial_{x}^{-1}[\mathrm{ad}_{\mathbf{a}}\mathbf{L}_{j}^{\perp},\mathbf{L}_{m+1}^{\perp}]
        +\partial_{x}^{-1}[\mathbf{Q},\mathbf{L}_{j,m}^{\perp}], \label{rec-Ljnm-diag}\\
        \mathbf{L}_{j,m}^{\perp}=&\ii \mathrm{ad}_{\mathbf{a}}\partial_{x}^{-1}\left(
            \mathbf{L}_{j+1,m}^{\perp}-[\mathbf{L}_{j}^{\pi_{0}},\mathbf{L}_{m+1}^{\perp}]
        \right)+\partial_{x}^{-1}[\mathbf{Q},\mathbf{L}_{j,m}^{\pi_{0}}]\label{rec-Ljnm-off}. 
    \end{align}
\end{lem}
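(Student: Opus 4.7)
The plan is to prove the four identities in order, exploiting only the formal Laurent structure of $\mathbf{L}$, the defining equation $\mathbf{L}_{x}=[\ii\lambda\mathbf{a}+\mathbf{Q},\mathbf{L}]$ together with its projected recursions \eqref{L-diag-n-gen}--\eqref{L-off-n-gen}, the Jacobi identity, and the structural properties $\mathbf{a}\in C(\mathcal{T})$ and $[\mathcal{T}^\perp,\mathcal{T}^\perp]\subset\mathcal{T}$.

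For \eqref{connect-U-Vm}, I would start from $\mathbf{V}_{m}=\ii(\lambda^{m}\mathbf{L})_{+}$ and split $\lambda^{m}\mathbf{L}=(\lambda^{m}\mathbf{L})_{+}+(\lambda^{m}\mathbf{L})_{-}$. Since $(\lambda^{m}\mathbf{L})_{+}=\sum_{k=0}^{m}\mathbf{L}_{k}\lambda^{m-k}$ is a polynomial in $\lambda$ of minimum degree $0$, the commutator $[\ii\lambda\mathbf{a}+\mathbf{Q},(\lambda^{m}\mathbf{L})_{+}]$ contains only nonnegative powers of $\lambda$, so it coincides with its $(\cdot)_{+}$ part. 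Subtracting gives
\[
[\ii\lambda\mathbf{a}+\mathbf{Q},\lambda^{m}\mathbf{L}]_{+}+\ii[\ii\lambda\mathbf{a}+\mathbf{Q},\mathbf{V}_{m}]=[\ii\lambda\mathbf{a}+\mathbf{Q},(\lambda^{m}\mathbf{L})_{-}]_{+},
\]
and since $(\lambda^{m}\mathbf{L})_{-}$ has top degree $-1$, only the $\lambda^{-1}$-coefficient survives the projection, giving exactly $\ii[\mathbf{a},\mathbf{L}_{m+1}]$. The same telescoping idea handles \eqref{connect-Vn-Vm}: rewrite the left side as $-[(\lambda^{n}\mathbf{L})_{+},(\lambda^{m}\mathbf{L})_{-}]_{+}$, expand $(\lambda^{n}\mathbf{L})_{+}=\sum_{i=0}^{n}\mathbf{L}_{i}\lambda^{n-i}$ and $(\lambda^{m}\mathbf{L})_{-}=\sum_{k\ge m+1}\mathbf{L}_{k}\lambda^{m-k}$, and project onto nonnegative powers. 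Reindexing $j=n-i$, $l=k-m$ converts the double sum into $\sum_{j=1}^{n}\lambda^{n-j}\sum_{i=0}^{j-1}[\mathbf{L}_{i},\mathbf{L}_{m-i+j}]$, which is precisely $\sum_{j=1}^{n}\lambda^{n-j}\mathbf{L}_{j,m}$.

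For the recursions \eqref{rec-Ljnm-diag} and \eqref{rec-Ljnm-off}, I would compute $\partial_{x}\mathbf{L}_{j,m}^{\pi_{0}}$ and $\partial_{x}\mathbf{L}_{j,m}^{\perp}$ directly, using the decomposition $[\mathbf{L}_{i},\mathbf{L}_{m-i+j}]=[\mathbf{L}_{i}^{\pi_{0}},\mathbf{L}_{m-i+j}^{\pi_{0}}]+[\mathbf{L}_{i}^{\perp},\mathbf{L}_{m-i+j}^{\perp}]+[\mathbf{L}_{i}^{\pi_{0}},\mathbf{L}_{m-i+j}^{\perp}]+[\mathbf{L}_{i}^{\perp},\mathbf{L}_{m-i+j}^{\pi_{0}}]$ together with \eqref{sym-T-bot} to read off the two projections. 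Substituting $\partial_{x}\mathbf{L}_{k}^{\pi_{0}}=[\mathbf{Q},\mathbf{L}_{k}^{\perp}]$ and $\partial_{x}\mathbf{L}_{k}^{\perp}=\ii[\mathbf{a},\mathbf{L}_{k+1}^{\perp}]+[\mathbf{Q},\mathbf{L}_{k}^{\pi_{0}}]$ and collecting terms by source ($\mathbf{Q}$-terms versus $\mathbf{a}$-terms), the Jacobi identity compresses the four $\mathbf{Q}$-contributions into $[\mathbf{Q},\mathbf{L}_{j,m}^{\perp}]$ (respectively $[\mathbf{Q},\mathbf{L}_{j,m}^{\pi_{0}}]$). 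The $\mathbf{a}$-contributions split via Jacobi into a term of the form $[\mathbf{a},\cdot]$ plus a boundary commutator; for the $\pi_{0}$ recursion, the $[\mathbf{a},\cdot]$ term vanishes because its argument lies in $\mathcal{T}$ and $\mathbf{a}\in C(\mathcal{T})$, leaving only the boundary term $\ii[\mathrm{ad}_{\mathbf{a}}\mathbf{L}_{j}^{\perp},\mathbf{L}_{m+1}^{\perp}]$; for the $\perp$ recursion, the boundary terms assemble, after an index shift $i\mapsto i+1$ and using $\mathbf{L}_{0}^{\perp}=0$, into $\ii\mathrm{ad}_{\mathbf{a}}(\mathbf{L}_{j+1,m}^{\perp}-[\mathbf{L}_{j}^{\pi_{0}},\mathbf{L}_{m+1}^{\perp}])$. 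Applying $\partial_{x}^{-1}=\int_{-\infty}^{x}$ (justified by the Schwartz decay of $\mathbf{Q}$ and consequently of all $\mathbf{L}_{k}$ with $k\ge 1$) yields \eqref{rec-Ljnm-diag} and \eqref{rec-Ljnm-off}.

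I expect the main obstacle to be the algebraic bookkeeping in the $\mathbf{a}$-part of the $\perp$ recursion: one must correctly match the reindexed sum $\sum_{i=0}^{j-1}[\mathbf{L}_{i+1}^{\perp},\mathbf{L}_{m-i+j}^{\pi_{0}}]+\sum_{i=0}^{j-1}[\mathbf{L}_{i}^{\pi_{0}},\mathbf{L}_{m-i+j+1}^{\perp}]$ to $\mathbf{L}_{j+1,m}^{\perp}-[\mathbf{L}_{j}^{\pi_{0}},\mathbf{L}_{m+1}^{\perp}]$, accounting precisely for the missing endpoint at $i=j$ in the first sum and the vanishing endpoint $\mathbf{L}_{0}^{\perp}=0$ in the second. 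The Jacobi manipulation that turns $[[\mathbf{a},\mathbf{L}_{i+1}^{\perp}],\mathbf{L}_{m-i+j}^{\pi_{0}}]$ into $[\mathbf{a},[\mathbf{L}_{i+1}^{\perp},\mathbf{L}_{m-i+j}^{\pi_{0}}]]$ relies crucially on $[\mathbf{a},\mathbf{L}_{m-i+j}^{\pi_{0}}]=0$, and analogous care is needed for the companion Jacobi step. Once these bookkeeping items are organized, the proof reduces to routine identifications.
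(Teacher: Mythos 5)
Your proposal is correct and follows essentially the same route as the paper's proof: the same cancellation of $(\lambda^{m}\mathbf{L})_{+}$ against $\ii\mathbf{V}_{m}$ for \eqref{connect-U-Vm}, the same reindexed double sum $-\sum_{j=1}^{n}\lambda^{n-j}\sum_{i=0}^{j-1}[\mathbf{L}_{i},\mathbf{L}_{m-i+j}]$ for \eqref{connect-Vn-Vm}, and the same differentiate--substitute--telescope argument (using $\mathbf{a}\in C(\mathcal{T})$, $[\mathcal{T}^{\perp},\mathcal{T}^{\perp}]\subset\mathcal{T}$, and $\mathbf{L}_{0}^{\perp}=0$) for the recursions \eqref{rec-Ljnm-diag}--\eqref{rec-Ljnm-off}. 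The only cosmetic slips are the reindexing labels you quote for \eqref{connect-Vn-Vm} and the swapped attribution of which of the two sums needs $\mathbf{L}_{0}^{\perp}=0$ versus which is missing the $i=j$ endpoint; the stated end formulas are all correct.
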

\begin{proof}
    The identity \eqref{connect-U-Vm} follows from a straightforward calculation:
    \begin{align*}
        [\ii \lambda \mathbf{a}+\mathbf{Q},\lambda^{m}\mathbf{L}]_{+}+\ii[\ii \lambda \mathbf{a}+\mathbf{Q},\mathbf{V}_{m}]
        =&[\ii\lambda \mathbf{a}+\mathbf{Q},\lambda^{m}\mathbf{L}]_{+}
        -[\ii\lambda \mathbf{a}+\mathbf{Q},(\lambda^{m}\mathbf{L})_{+}]\\
        =&\ii[\mathbf{a},(\lambda^{m+1}\mathbf{L})_{+}-\lambda (\lambda^{m}\mathbf{L})_{+}]\\
        =&\ii[\mathbf{a},\mathbf{L}_{m+1}]. 
    \end{align*}

    To derive \eqref{connect-Vn-Vm}, we proceed as follows:
    \begin{align*}
        \ii[\mathbf{V}_{n},{\lambda^{m}\mathbf{L}}]_{+}-[\mathbf{V}_{n},\mathbf{V}_{m}]=&
        -\sum_{i=0}^{n}\left[\lambda^{n-i}\mathbf{L}_{i},{\lambda^{m}\mathbf{L}}\right]_{+}+
        \sum_{i=0}^{n}\left[\lambda^{n-i}\mathbf{L}_{i},{(\lambda^{m}\mathbf{L})}_{+}\right]\\
        =&-\sum_{i=0}^{n-1}\left[\mathbf{L}_{i},\sum_{j=m+1}^{n+m-i}\lambda^{n+m-i-j}\mathbf{L}_{j}\right]\\
        =&-\sum_{i=0}^{n-1}\sum_{j=1}^{n-i}\lambda^{n-i-j}\left[\mathbf{L}_{i},\mathbf{L}_{m+j}\right]\\
        =&-\sum_{j=1}^{n}\lambda^{n-j}\sum_{i=0}^{j-1}\left[\mathbf{L}_{i},\mathbf{L}_{m-i+j}\right]
    \end{align*}
    where in the last line we reindex the summation via the substitution $i+j\to j,i\to i$. 
    This proves identity \eqref{connect-Vn-Vm}. 
    We now proceed to prove the recursion relations \eqref{rec-Ljnm-diag} and \eqref{rec-Ljnm-off}. 
    Taking derivative to $\mathbf{L}_{j,m}$
    \begin{equation*}
        \begin{split}
            (\mathbf{L}_{j,m})_{x}=&\sum_{i=0}^{j-1}\left( [(\mathbf{L}_{i})_{x},\mathbf{L}_{m-i+j}]+[\mathbf{L}_{i},(\mathbf{L}_{m-i+j})_{x}] \right)\\
            =&\sum_{i=0}^{j-1}\left( [\ii \mathrm{ad}_{\mathbf{a}}\mathbf{L}_{i+1}^{\perp}+[\mathbf{Q},\mathbf{L}_{i}],\mathbf{L}_{m-i+j}]+[\mathbf{L}_{i},\ii \mathrm{ad}_{\mathbf{a}}\mathbf{L}_{m-i+j+1}^{\perp}+[\mathbf{Q},\mathbf{L}_{m-i+j}]] \right)\\
            =&\sum_{i=0}^{j-1}\left( \ii [\mathrm{ad}_{\mathbf{a}}\mathbf{L}_{i+1}^{\perp},\mathbf{L}_{m-i+j}]+\ii [\mathbf{L}_{i},\mathrm{ad}_{\mathbf{a}}\mathbf{L}_{m-i+j+1}^{\perp}]-[\mathbf{Q},[\mathbf{L}_{m-i+j},\mathbf{L}_{i}]] \right)\\
            =&\mathbf{K}+[\mathbf{Q},\mathbf{L}_{j,m}], \\
        \end{split}
    \end{equation*}
    where 
    \begin{equation*}
        \mathbf{K}=\ii\sum_{i=0}^{j-1}\left( [\mathrm{ad}_{\mathbf{a}}\mathbf{L}_{i+1}^{\perp},\mathbf{L}_{m-i+j}]+[\mathbf{L}_{i},\mathrm{ad}_{\mathbf{a}}\mathbf{L}_{m-i+j+1}^{\perp}]\right).
    \end{equation*}
    The projections of $\mathbf{K}$ onto $\mathcal{T}$ and $\mathcal{T}^{\perp}$ are given by
    \begin{equation*}
        \begin{split}
            \mathbf{K}^{\pi_{0}}=&
            \ii\sum_{i=0}^{j-1}\left([\mathrm{ad}_{\mathbf{a}}\mathbf{L}_{i+1}^{\perp},\mathbf{L}_{m-i+j}^{\perp}]+[\mathbf{L}_{i}^{\perp},\mathrm{ad}_{\mathbf{a}}\mathbf{L}_{m-i+j+1}^{\perp}]\right)\\
            =&\ii\sum_{i=0}^{j-1}\left( [\mathrm{ad}_{\mathbf{a}}\mathbf{L}_{i}^{\perp},\mathbf{L}_{m-i+j+1}^{\perp}]+[\mathbf{L}_{i}^{\perp},\mathrm{ad}_{\mathbf{a}}\mathbf{L}_{m-i+j+1}^{\perp}]\right)
            \\&+\ii[\mathrm{ad}_{\mathbf{a}}\mathbf{L}_{j}^{\perp},\mathbf{L}_{m+1}^{\perp}]-\ii[\mathrm{ad}_{\mathbf{a}}\mathbf{b},\mathbf{L}_{m+j+1}^{\perp}]\\
            =&\ii\sum_{i=0}^{j-1}\mathrm{ad}_{\mathbf{a}}[\mathbf{L}_{i}^{\perp},\mathbf{L}_{m-i+j+1}^{\perp}]+\ii[\mathrm{ad}_{\mathbf{a}}\mathbf{L}_{j}^{\perp},\mathbf{L}_{m+1}^{\perp}]\\
            =&\ii[\mathrm{ad}_{\mathbf{a}}\mathbf{L}_{j}^{\perp},\mathbf{L}_{m+1}^{\perp}]
        \end{split}
    \end{equation*}
    and 
    \begin{equation*}
        \begin{split}
            \mathbf{K}^{\perp}=&
            \ii\sum_{i=0}^{j-1}\left( [\mathrm{ad}_{\mathbf{a}}\mathbf{L}_{i+1}^{\perp},\mathbf{L}_{m-i+j}^{\pi_{0}}]+[\mathbf{L}_{i}^{\pi_{0}},\mathrm{ad}_{\mathbf{a}}\mathbf{L}_{m-i+j+1}^{\perp}]\right)\\
            =&\ii\mathrm{ad}_{\mathbf{a}}\sum_{i=0}^{j-1}\left( [\mathbf{L}_{i+1}^{\perp},\mathbf{L}_{m-i+j}^{\pi_{0}}]+[\mathbf{L}_{i}^{\pi_{0}},\mathbf{L}_{m-i+j+1}^{\perp}]\right)\\
            =&\ii\mathrm{ad}_{\mathbf{a}}\left(\mathbf{L}_{j+1,m}^{\perp}-[\mathbf{L}_{j}^{\pi_{0}},\mathbf{L}_{m+1}^{\perp}]\right). 
        \end{split}
    \end{equation*}
    We complete the proof. 
\end{proof}

Now we can prove Theorem \ref{fin-res-G}:
\begin{proof}[Proof of Theorem \ref{fin-res-G}]
    First, we prove the sufficiency. Since $\mathbf{G}$ satisfies \eqref{G-equation-gen}, the compatibility condition is given by
    \begin{equation*}
        (\ii \lambda \mathbf{a}+\mathbf{Q})_{t}-\mathbf{V}_{x}+[\ii \lambda \mathbf{a}+\mathbf{Q},\mathbf{V}]=0,
    \end{equation*}
    that is, the equation for the potential becomes
    \begin{equation*}
        \mathbf{Q}_{t}=-\sum_{n=0}^{N}\beta_{n} \mathrm{ad}_{\mathbf{a}}\mathbf{L}_{n+1}^{\perp},
    \end{equation*}
    which is equivalent to 
    \begin{equation*}
        \mathbf{L}_{t}=[\mathbf{V},\mathbf{L}]. 
    \end{equation*}
    Using \eqref{connect-U-Vm} and the equation \eqref{equ-Lx-a} satisfied by $\mathbf{L}$, we obtain
    \begin{equation}\label{G-x-part}
        \begin{split}
            \mathbf{G}_{x}-[\ii \lambda \mathbf{a}+\mathbf{Q},\mathbf{G}]=&
            \sum_{m=0}^{M}\alpha_{m} \left(\mathbf{V}_{m,x}-[\ii \lambda \mathbf{a}+\mathbf{Q},\mathbf{V}_{m}]\right) \\
            =\ii& \sum_{m=0}^{M}\alpha_{m} \left(((\lambda^{m}\mathbf{L})_{+})_{x}-[\ii \lambda \mathbf{a}+\mathbf{Q},(\lambda^{m}\mathbf{L})_{+}]\right)\\
            =& \ii\sum_{m=0}^{M}\alpha_{m} \left([\ii \lambda \mathbf{a}+\mathbf{Q},(\lambda^{m}\mathbf{L})]_{+}-[\ii \lambda \mathbf{a}+\mathbf{Q},(\lambda^{m}\mathbf{L})_{+}]\right)\\
            =& -\sum_{m=0}^{M}\alpha_{m}[\mathbf{a},\mathbf{L}_{m+1}]. 
        \end{split}
    \end{equation}

    Now we prove the necessity. If \eqref{equ-G-condition-gen} holds, then $\mathbf{G}_{x}=[\ii \lambda \mathbf{a}+\mathbf{Q},\mathbf{G}]$ by \eqref{G-x-part}. 
    As for the evolution with respect to time, the relation \eqref{connect-Vn-Vm} yields
    \begin{equation*}
        \begin{split}
            \mathbf{G}_{t}-[\mathbf{V},\mathbf{G}]=&
            \sum_{m=0}^{M}\alpha_{m} \left(\mathbf{V}_{m,t}-[\mathbf{V},\mathbf{V}_{m}]\right) \\
            =& \sum_{m=0}^{M}\alpha_{m} \left((\ii(\lambda^{m}\mathbf{L})_{+})_{t}-[\mathbf{V},\mathbf{V}_{m}]\right)\\
            =& \sum_{m=0}^{M}\alpha_{m} \left(\ii[\mathbf{V},\lambda^{m}\mathbf{L}]_{+}-[\mathbf{V},\mathbf{V}_{m}]\right)\\
            =& \sum_{m=0}^{M}\sum_{n=0}^{N}\alpha_{m}\beta_{n}\left(\ii[\mathbf{V}_{n},\lambda^{m}\mathbf{L}]_{+}-[\mathbf{V}_{n},\mathbf{V}_{m}]\right)\\
            =& -\sum_{m=0}^{M}\sum_{n=0}^{N}\alpha_{m}\beta_{n}\sum_{j=1}^{n}\lambda^{n-j}\mathbf{L}_{j,m}\\
            =& -\sum_{n=0}^{N}\beta_{n}\sum_{j=1}^{n}\lambda^{n-j}\sum_{m=0}^{M}\alpha_{m}\mathbf{L}_{j,m}. 
        \end{split}
    \end{equation*}  
    It suffices to prove that
    \begin{equation}\label{sum-Ljnm}
        \sum_{m=0}^{M}\alpha_{m}\mathbf{L}_{j,m}=0
    \end{equation}
    for all $j\geq 1$. 
    We first verify the case $j=1$. Using the definition of $\mathbf{L}_{j,m}$, we obtain
    \begin{equation*}
        \sum_{m=0}^{M}\alpha_{m}\mathbf{L}_{j,m}=\sum_{m=0}^{M}\alpha_{m}[\mathbf{L}_{0},\mathbf{L}_{m+1}]=
        \sum_{m=0}^{M}\alpha_{m}[\mathbf{a},\mathbf{L}_{m+1}]=0. 
    \end{equation*}
    Assuming \eqref{sum-Ljnm} holds for some $j\geq 1$, we next show that it also holds for $j+1$. 
    For the $\mathcal{T}^{\perp}$-component, by \eqref{rec-Ljnm-off},
    \begin{equation*}
        \sum_{m=0}^{M}\alpha_{m}\mathbf{L}_{j+1,m}^{\perp}=
        \sum_{m=0}^{M}\alpha_{m}[\mathbf{L}_{j}^{\pi_{0}},\mathbf{L}_{m+1}^{\perp}]=0.
    \end{equation*}
    For the $\mathcal{T}$-component, we have
    \begin{equation*}
        \sum_{m=0}^{M}\alpha_{m}\mathbf{L}_{j+1,m}^{\pi_{0}}=
        \ii \partial_{x}^{-1}\sum_{m=0}^{M}\alpha_{m}[\mathrm{ad}_{\mathbf{a}}\mathbf{L}_{j+1}^{\perp},\mathbf{L}_{m+1}^{\perp}]=0. 
    \end{equation*}
    Therefore, \eqref{sum-Ljnm} holds for $j+1$, and the proof is complete by induction.
\end{proof}

Combining Theorem~\ref{var-L} and Theorem~\ref{fin-res-G}, we can determine the kernel of the 
linearized operator associated with steady-state solutions.
If the coefficients of $\mathbf{G}$ and $\mathbf{V}$ in \eqref{V-def} and \eqref{res-G-def} coincide (i.e. $\mathbf{G}=\mathbf{V}$), then 
$\mathbf{Q}_{t}=0$ by \eqref{equ-G-condition-gen} and $\mathbf{G}_{t}=[\mathbf{V},\mathbf{G}]=0$. 
We thus obtain the following corollary.
\begin{corollary}\label{ker-lin-pro}
    Suppose that $\mathbf{Q}$ is a steady-state solution of \eqref{equ-G-condition-gen}, and let 
    $\mathbf{V}$ be given by \eqref{V-def}. Then the function $\mathrm{ad}_{\mathbf{b}}\mathbf{V}^{\perp}$
    spans the kernel of the linearized operator appearing on the right-hand side of \eqref{G-t-linear-pro-gen}, that is,
    \begin{equation*}
        \mathrm{Ker}\left(\sum_{n=0}^{N}\beta_{n}
        \frac{\delta \mathbf{L}_{n+1}^{\perp}}{\delta \mathbf{Q}}\right)=
            \left\{\mathrm{ad}_{\mathbf{b}}\mathbf{V}^{\perp}(\mathbf{Q}):
            \sum_{n=0}^{N}\beta_{n}\mathbf{L}_{n+1}^{\perp}(\mathbf{Q})=0\right\}.
    \end{equation*}
\end{corollary}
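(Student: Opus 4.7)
The plan is to deduce this characterization of the kernel as an immediate consequence of Theorems~\ref{var-L} and \ref{fin-res-G}, as foreshadowed by the paragraph preceding the statement. I will establish the two inclusions separately, and combine the analytic content (Theorem~\ref{var-L}) with the algebraic content (Theorem~\ref{fin-res-G}).

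For the inclusion $\supseteq$, I would first set $\mathbf{G}=\mathbf{V}$ (i.e., take $M=N$ and $\alpha_{m}=\beta_{m}$ in the ansatz \eqref{res-G-def}) and observe that the hypothesis $\sum_{n=0}^{N}\beta_{n}\mathbf{L}_{n+1}^{\perp}(\mathbf{Q})=0$ is exactly the condition \eqref{equ-G-condition-gen}. Theorem~\ref{fin-res-G} then guarantees that $\mathbf{V}$ itself solves the stationary zero-curvature equations \eqref{sta-zero-G}, with the time equation $\mathbf{V}_{t}=[\mathbf{V},\mathbf{V}]=0$ holding trivially. Feeding this choice $\mathbf{G}=\mathbf{V}$ into Theorem~\ref{var-L}, the identity \eqref{G-t-linear-pro-gen-1} reads
$$\mathrm{ad}_{\mathbf{b}}\mathbf{V}^{\perp}_{t}=-\mathrm{ad}_{\mathbf{b}}\sum_{n=0}^{N}\beta_{n}\frac{\delta\mathbf{L}_{n+1}^{\perp}}{\delta\mathbf{Q}}(\mathrm{ad}_{\mathbf{b}}\mathbf{V}^{\perp}).$$
Since $\mathbf{V}$ is a polynomial in $\lambda$ whose coefficients depend only on $\mathbf{Q}$ and its $x$-derivatives, and $\mathbf{Q}$ is time-independent, the left-hand side vanishes. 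Because $\mathrm{ad}_{\mathbf{b}}=\mathrm{ad}_{\mathbf{a}}$ is invertible on $\mathcal{T}^{\perp}$, this forces $\mathrm{ad}_{\mathbf{b}}\mathbf{V}^{\perp}$ to belong to the kernel.

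For the reverse inclusion $\subseteq$, I would start from a kernel element $\delta\mathbf{Q}\in\mathcal{T}^{\perp}$ and attempt to construct a function $\mathbf{G}$ of the prescribed form \eqref{res-G-def}, with $\mathrm{ad}_{\mathbf{b}}\mathbf{G}^{\perp}=\delta\mathbf{Q}$, that satisfies \eqref{equ-G-condition-gen}. The strategy is to set $\mathbf{G}^{\perp}=\mathrm{ad}_{\mathbf{b}}^{-1}\delta\mathbf{Q}$, recover $\mathbf{G}^{\pi_{0}}$ from the spatial stationary zero-curvature equation \eqref{G-x-a}, and then invoke the polynomial structure $\mathbf{L}_{n}\in\mathcal{A}$ together with the time-independence of $\mathbf{Q}$ to deduce that $\mathbf{G}$ is in fact a Laurent polynomial in $\lambda$ whose nonnegative part matches $\sum_{m}\beta_{m}\mathbf{V}_{m}$ modulo constants of $C(\mathcal{U})$; Theorem~\ref{fin-res-G} applied in this direction then closes the argument.

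The main obstacle will be precisely this reverse step: verifying that the $\mathbf{G}$ reconstructed from a general kernel element truncates as a polynomial in $\lambda$ with coefficients locked to the prescribed $\beta_{n}$. Making this rigorous amounts to using the uniqueness of asymptotic expansions of stationary zero-curvature solutions at $\lambda=\infty$, comparing Laurent tails on both sides of the linearized equation, and exploiting the invertibility of $\mathrm{ad}_{\mathbf{a}}$ on $\mathcal{T}^{\perp}$ to bootstrap the match. Once the purely algebraic machinery from Theorems~\ref{var-L} and \ref{fin-res-G} is in place, this expansion-matching argument is the essential technical point.
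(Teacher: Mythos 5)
Your forward inclusion is exactly the paper's argument: take $\mathbf{G}=\mathbf{V}$, note that \eqref{equ-G-condition-gen} with $\alpha_m=\beta_m$ forces $\mathbf{Q}_t=0$ so that Theorem~\ref{fin-res-G} applies and $\mathbf{G}_t=[\mathbf{V},\mathbf{V}]=0$, then use \eqref{G-t-linear-pro-gen-1} and the invertibility of $\mathrm{ad}_{\mathbf{b}}$ on $\mathcal{T}^{\perp}$. That half is fine.

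The reverse inclusion is where your proposal has a genuine gap, and the route you sketch would not close it. Given a kernel element $\delta\mathbf{Q}(x)$, which is a single $\lambda$-independent $\mathcal{T}^{\perp}$-valued function, you propose to set $\mathbf{G}^{\perp}=\mathrm{ad}_{\mathbf{b}}^{-1}\delta\mathbf{Q}$ and ``recover $\mathbf{G}^{\pi_0}$ from \eqref{G-x-a}.'' But the spatial stationary equation splits into two components: $\partial_x\mathbf{G}^{\pi_0}=[\mathbf{Q},\mathbf{G}^{\perp}]$, which does determine $\mathbf{G}^{\pi_0}$ up to a constant, and $\partial_x\mathbf{G}^{\perp}=\ii\lambda\,\mathrm{ad}_{\mathbf{a}}\mathbf{G}^{\perp}+[\mathbf{Q},\mathbf{G}^{\pi_0}]$, which is an additional \emph{constraint} on the prescribed $\mathbf{G}^{\perp}$ rather than a definition; an arbitrary kernel element will not satisfy it for any single $\lambda$, and there is no canonical way to promote it to a $\lambda$-polynomial of the form \eqref{res-G-def}. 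The subsequent ``Laurent-tail matching at $\lambda=\infty$'' therefore has nothing well-defined to match. The paper closes this direction by an entirely different and much shorter argument: since $\mathbf{L}_{n+1}^{\perp}\in\mathcal{A}_{n}$ and $\beta_N\neq 0$, the operator $\sum_{n}\beta_n\,\delta\mathbf{L}_{n+1}^{\perp}/\delta\mathbf{Q}$ is an $N$-th order linear ODE operator in $x$ acting on $\mathcal{T}^{\perp}$-valued functions, so its kernel has dimension exactly $N\cdot\dim(\mathcal{T}^{\perp})$; this coincides with the dimension of the family of solutions of the $N$-th order ODE \eqref{equ-G-condition-gen}, and equality of the two sets follows from the forward inclusion plus this dimension count. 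That counting step is the missing idea in your proposal.
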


\begin{proof}
    It suffices to show that every function solves the equation
    \begin{equation*}
        \sum_{n=0}^{N}\beta_{n}
        \frac{\delta \mathbf{L}_{n+1}^{\perp}}{\delta \mathbf{Q}}(\mathbf{f}) = 0
    \end{equation*}
    of the form $\mathrm{ad}_{\mathbf{b}}\mathbf{V}^{\perp}(\mathbf{Q})$. 
    This holds because the mixed flow equations and the corresponding linearized operator are of the same differential order.
    Without loss of generality, we assume $\beta_N \neq 0$.

    Since $\mathbf{L}_{n} \in \mathcal{A}_{n-1}$, the highest derivative appearing in the linearized operator is of order $N$, 
    and the mixed flow equation is also of order $N$. Therefore, the kernel of the linearized operator is of dimension 
    $N \cdot \dim(\mathcal{T}^{\perp})$, which coincides with the dimension of the family of fundamental matrix solutions (FMS) 
    to the mixed flow equations.
\end{proof}

\subsection{Examples for $\mathrm{gl}(3,\mathbb{C})$}\label{example-CNLS}
Now we apply Theorems \ref{var-L} and \ref{fin-res-G} to the case $\mathcal{U} = \mathrm{gl}(3,\mathbb{C})$.
The nondegenerate ad-invariant bilinear form is chosen as the Killing form, which in this setting is given by
\begin{equation*}
    (\mathbf{u}, \mathbf{v})_{\mathcal{U}} = \mathrm{Tr}(\mathbf{u} \mathbf{v}).
\end{equation*}
Define the adjoint action $\mathrm{Ad}_{\mathbf{a}}: \mathcal{U} \to \mathcal{U}$ by
\begin{equation*}
    \mathrm{Ad}_{\mathbf{a}}(\mathbf{u}) = \mathbf{a} \mathbf{u} \mathbf{a}^{-1}.
\end{equation*}
Then the subalgebra $\mathcal{T} \subset \mathcal{U}$ is given by the fixed point set of $\mathrm{Ad}_{\sigma_3}$:
\begin{equation*}
    \mathcal{T} = \mathcal{U}^{\mathrm{Ad}_{\sigma_3}} := \left\{ \mathbf{u} \in \mathcal{U} : \mathrm{Ad}_{\sigma_3}(\mathbf{u}) = \mathbf{u} \right\}.
\end{equation*}

Since $\mathrm{Ad}_{\sigma_3}: \mathcal{U} \to \mathcal{U}$ is an involutive automorphism, 
the pair $(\mathcal{U}, \mathrm{Ad}_{\sigma_3})$ forms a symmetric pair. The orthogonal complement 
of $\mathcal{T}$ with respect to the Killing form is given by 
$\mathcal{T}^{\perp}=\{\mathbf{u}\in\mathcal{U}:\mathrm{Ad}_{\sigma_{3}}(\mathbf{u})=-\mathbf{u}\}$. 
In fact, for any $\mathbf{u} \in \mathrm{gl}(3,\mathbb{C})$, we have the decomposition:
\begin{equation*}
    \mathbf{u}=\frac{\mathbf{u}+\mathrm{Ad}_{\sigma_{3}}(\mathbf{u})}{2}+\frac{\mathbf{u}-\mathrm{Ad}_{\sigma_{3}}(\mathbf{u})}{2},
\end{equation*}
where the first term belongs to $\mathcal{T}$ and the second to $\mathcal{T}^\perp$.
The subspaces $\mathcal{T}$ and $\mathcal{T}^\perp$ are explicitly given by 
\begin{equation*}
    \mathcal{T}=\left\{
        \begin{pmatrix}
            T_{11} & 0 & 0 \\
            0 & T_{22} & T_{23} \\
            0 & T_{32} & T_{33} \\
        \end{pmatrix}\in \mathrm{gl}(3,\mathbb{C})
    \right\},\quad 
    \mathcal{T}^{\perp}=\left\{
        \begin{pmatrix}
            0 & T_{12} & T_{13} \\
            T_{21} & 0 & 0 \\
            T_{31} & 0 & 0 \\
        \end{pmatrix}\in \mathrm{gl}(3,\mathbb{C})
    \right\}. 
\end{equation*}

Take $\mathbf{a} = \mathbf{b} = \sigma_3$ in \eqref{equ-Lx-a} and \eqref{exp-L-b}. Then the differential equation \eqref{equ-Lx-a} becomes
\begin{equation}\label{L-x}
    \mathbf{L}_{x}=[\mathbf{U},\mathbf{L}]. 
\end{equation}
The first few coefficients in the expansion of $\mathbf{L}$ in \eqref{exp-L-b} ($\mathbf{b}=\sigma_{3}$) are given by
\begin{align*}
    \mathbf{L}_{0}=&\sigma_{3}, \\
    \mathbf{L}_{1}=&-\ii\mathbf{Q}, \\
    \mathbf{L}_{2}=&\frac{1}{2}\sigma_{3}\mathbf{Q}^{2}-\frac{1}{2}\sigma_{3}\mathbf{Q}_{x}, \\
    \mathbf{L}_{3}=&\frac{\ii}{4}(\mathbf{Q}\mathbf{Q}_{x}-\mathbf{Q}_{x}\mathbf{Q})+\frac{\ii}{4}(\mathbf{Q}_{xx}-2\mathbf{Q}^{3}), \\
    \mathbf{L}_{4}=&-\frac{1}{8}\sigma_{3}
    (\mathbf{Q}_{xx}\mathbf{Q}+\mathbf{Q}\mathbf{Q}_{xx}-\mathbf{Q}_{x}^{2}-
    3\mathbf{Q}^{4})+\frac{1}{8}\sigma_{3}
    (\mathbf{Q}_{xxx}-3\mathbf{Q}_{x}\mathbf{Q}^{2}-
    3\mathbf{Q}^{2}\mathbf{Q}_{x}). 
\end{align*}
It is straightforward to verify that the adjoint map $\mathrm{ad}_{\sigma_3}$ restricted to $\mathcal{T}^\perp$ is given by 
\begin{equation*}
    \mathrm{ad}_{\sigma_{3}}=2\sigma_{3}. 
\end{equation*}
Hence, $\mathrm{ad}_{\sigma_3}$ is invertible on $\mathcal{T}^\perp$. 
By Theorem~\ref{var-L} and Theorem~\ref{fin-res-G}, we now state the following lemma, which will be used in this paper:
\begin{thm}\label{off-diagonal-variation}
    Let $\mathbf{G} = \mathbf{G}(\lambda; x,t)$ satisfy the stationary zero curvature equations
    \begin{equation}\label{G-equation-gen-U}
        \mathbf{G}_x = [\mathbf{U}, \mathbf{G}], \quad \mathbf{G}_t = [\mathbf{V}, \mathbf{G}],
    \end{equation}
    where $\mathbf{V}$ is given by \eqref{V-def} as a linear combination of $\mathbf{V}_n$ 
    in \eqref{Vn-def} with coefficients $\beta_n$. Then the projection of $\sigma_3 \mathbf{G}$ 
    onto $\mathcal{T}^{\perp}$ evolves according to
    \begin{equation}\label{G-t-linear-pro}
        (\sigma_3 \mathbf{G}^\perp)_t = -2\sigma_3 \sum_{n=0}^N \beta_n \frac{\delta \mathbf{L}_{n+1}^\perp}{\delta \mathbf{Q}}(\sigma_3 \mathbf{G}^\perp).
    \end{equation}
    
    Furthermore, if $\mathbf{G}$ is also a linear combination of $\mathbf{V}_n$ with coefficients $\alpha_m$, as defined in \eqref{res-G-def}, then the potential $\mathbf{Q}$ is a steady-state solution of the evolution equation
    \begin{equation}\label{time-evo-Q}
        \mathbf{Q}_t = -2\sigma_3 \sum_{n=0}^N \beta_n \mathbf{L}_{n+1}^\perp,
    \end{equation}
    subject to the constraint
    \begin{equation}\label{ODE-Q}
        \sum_{m=0}^M \alpha_m \mathbf{L}_{m+1}^\perp = 0.
    \end{equation}
\end{thm}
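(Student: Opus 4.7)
The plan is to treat Theorem \ref{off-diagonal-variation} as a direct specialization of the general results (Theorem \ref{var-L} and Theorem \ref{fin-res-G}) to the Lie algebra $\mathcal{U} = \mathrm{gl}(3,\mathbb{C})$, with the symmetric-pair decomposition $\mathcal{U} = \mathcal{T} \oplus \mathcal{T}^{\perp}$ described in Section \ref{example-CNLS} and the choice $\mathbf{a} = \mathbf{b} = \sigma_3$. Since the paper has already observed that $\mathrm{ad}_{\sigma_3}(\mathbf{u}) = 2\sigma_3 \mathbf{u}$ for every $\mathbf{u} \in \mathcal{T}^{\perp}$, the two conclusions will reduce to substituting this identity into the general formulas and rearranging.

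First I would verify the structural hypotheses of the general framework. The Killing form $(\mathbf{u},\mathbf{v})_{\mathcal{U}} = \mathrm{Tr}(\mathbf{u}\mathbf{v})$ is nondegenerate and $\mathrm{ad}$-invariant on $\mathcal{U}$, and its restriction to $\mathcal{T}$ is again nondegenerate. Since $\mathrm{Ad}_{\sigma_3}$ is an involution, $(\mathcal{U},\mathrm{Ad}_{\sigma_3})$ is a symmetric pair, which yields the bracket condition $[\mathcal{T}^{\perp},\mathcal{T}^{\perp}] \subset \mathcal{T}$. The element $\sigma_3 = \mathrm{diag}(1,-1,-1)$ is scalar on each of the two diagonal blocks that define $\mathcal{T}$, so $\sigma_3 \in C(\mathcal{T})$, while $\mathbf{a}-\mathbf{b} = 0 \in C(\mathcal{U})$ is immediate. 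Finally, the identity $\mathrm{ad}_{\sigma_3}(\mathbf{u}) = 2\sigma_3 \mathbf{u}$ shows that $\mathrm{ad}_{\sigma_3}$ is invertible on $\mathcal{T}^{\perp}$ with inverse $\mathbf{u} \mapsto \tfrac{1}{2}\sigma_3 \mathbf{u}$.

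With the hypotheses in hand I would deduce \eqref{G-t-linear-pro} from the conclusion \eqref{G-t-linear-pro-gen-1} of Theorem \ref{var-L}: substituting $\mathrm{ad}_{\sigma_3}(\mathbf{v}) = 2\sigma_3 \mathbf{v}$ both outside and inside the variation, using linearity of $\delta\mathbf{L}_{n+1}^{\perp}/\delta\mathbf{Q}$ to pull the factor of $2$ out, and cancelling the common overall factor yields \eqref{G-t-linear-pro} directly. For the second statement I would invoke Theorem \ref{fin-res-G}: the mixed flow equation \eqref{mixed-flow} becomes \eqref{time-evo-Q} after the same substitution, while the compatibility condition \eqref{equ-G-condition-gen} transcribes verbatim to \eqref{ODE-Q}.

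The main obstacle is that Theorem \ref{fin-res-G} requires the differential-polynomial condition $\mathbf{L}_n \in \mathcal{A}$ on the coefficients of the expansion \eqref{exp-L-b}. Because $\sigma_3$ is not a regular element of $\mathcal{U}$, the standard derivation found in \cite{beals1985inverse,sattinger1985hamiltonian} does not apply directly. I would secure this condition separately by combining the transfer-matrix argument of \cite{sattinger1985hamiltonian} with the exactness of the Gelfand--Dickii sequence \eqref{exa-A} from \cite{gel1975asymptotic}, as flagged in the paper's remarks preceding the theorem; after that, all remaining assertions reduce to routine substitutions in the identities already established for the general Lie-algebraic setting.
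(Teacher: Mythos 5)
Your proposal is correct and follows essentially the same route as the paper: the paper states Theorem~\ref{off-diagonal-variation} as a direct specialization of Theorems~\ref{var-L} and~\ref{fin-res-G} to $\mathcal{U}=\mathrm{gl}(3,\mathbb{C})$ with $\mathbf{a}=\mathbf{b}=\sigma_{3}$, after verifying exactly the structural hypotheses you list and using $\mathrm{ad}_{\sigma_{3}}=2\sigma_{3}$ on $\mathcal{T}^{\perp}$ together with linearity of $\delta\mathbf{L}_{n+1}^{\perp}/\delta\mathbf{Q}$. Your handling of the non-regularity of $\sigma_{3}$ via the transfer matrix and the exactness of the sequence \eqref{exa-A} is also precisely what the paper does in Remark~\ref{L-in-A}.
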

\begin{rem}\label{L-in-A}
    The condition $\mathbf{L}_{n}\in \mathcal{A}$ can be obtained by the following argument similar 
    to \cite{beals1985inverse,sattinger1985hamiltonian} although 
    $\sigma_{3}$ is not a regular element. Let $\mathbf{Q}\in C_{0}^{\infty}(\mathbb{R})$.
    Introduce the function $\mathbf{L}=l \sigma_{3} l^{-1}$ where 
    $l$ is the solution of equation 
    \begin{equation}\label{l-equ}
        \partial_{x}\phi-\ii \lambda [\sigma_{3},\phi]-\mathbf{Q}\phi=0
    \end{equation}
    satisfying $l=\mathbb{I}_{3}+\mathcal{O}(1/\lambda)$ as $\lambda\to \infty$ and $l\to \mathbb{I}_{3}$ 
    as $x\to -\infty$ and the elements
    $l_{12},l_{13},l_{21},l_{31}$ are bounded in $x$ if $\lambda\notin\mathbb{R}$ by 
    viewing $\sigma_{3}$ a block matrix in \cite{beals1984scattering}. 
    We can also take solution $r$ which is normalized by the condition $r\to \mathbb{I}_{3}$ as $x\to +\infty$. 
    The transfer matrix $\mathbf{S}$ is given by 
    \begin{equation}\label{tra-S-1}
        l(\lambda;x)=r(\lambda;x){\rm e}^{\ii \lambda \sigma_{3} x}\mathbf{S}(\lambda){\rm e}^{-\ii \lambda \sigma_{3} x}. 
    \end{equation}
    Hence the matrix $\mathbf{S}\in \mathcal{T}$ 
    by viewing that $\mathbf{S}$ is a block diagonal matrix \cite{sattinger1985hamiltonian}. Then the 
    relation \eqref{tra-S-1} can be reduced to 
    \begin{equation}\label{tra-S}
        l(\lambda;x)=r(\lambda;x)\mathbf{S}(\lambda). 
    \end{equation}
    Hence 
    $\mathbf{L}_{n} \to 0$ for $n\geq 1$ as $x\to \pm\infty$ since $\mathbf{L}=l\sigma_{3}l^{-1}\to \sigma_{3}$ as 
    $x\to -\infty$ and $\mathbf{L}=r\mathbf{S}\sigma_{3}\mathbf{S}^{-1}r^{-1}=r\sigma_{3}r^{-1}\to \sigma_{3}$ 
    as $x\to +\infty$. Then $\mathbf{L}_{n}\in \mathcal{A}$ can be obtained by induction. It is clear 
    $\mathbf{L}_{0},\mathbf{L}_{1}\in \mathcal{A}$. Now if $\mathbf{L}_{n}\in \mathcal{A}$, then $\mathbf{L}_{n+1}^{\perp}\in 
    \mathcal{A}$ by \eqref{L-off-n-gen}. By applying $\int_{\mathbb{R}}$ on the both sides on \eqref{L-diag-n-gen}, 
    we obtain
    \begin{equation*}
        \int_{\mathbb{R}}[\mathbf{Q},\mathbf{L}_{n+1}^{\perp}]\mathrm{d}x=0
    \end{equation*}
    for all $\mathbf{Q}\in C_{0}^{\infty}(\mathbb{R})$. Then $\nabla[\mathbf{Q},\mathbf{L}_{n+1}^{\perp}]=0$. By 
    the exactness of sequence \eqref{exa-A}, there exist $\mathbf{C}_{n+1}\in\mathcal{A}$ such that 
    \begin{equation*}
        \partial_{x} \mathbf{L}_{n+1}^{\pi_{0}}=[\mathbf{Q},\mathbf{L}_{n+1}^{\perp}]=\partial_{x}\mathbf{C}_{n+1}. 
    \end{equation*}
    Note that $D$ acts on $\mathcal{A}$ formally as $\partial_{x}$ acts on $C^{\infty}$ function. Then 
    $\mathbf{L}_{n+1}^{\pi_{0}}=\mathbf{C}_{n+1}+\mathrm{const}\in \mathcal{A}$. Hence $\mathbf{L}^{n+1}\in \mathcal{A}$. 
    We conclude that all $\mathbf{L}_{n}\in\mathcal{A}$. 
\end{rem}

In the following analysis, we explicitly construct squared eigenfunction matrices $\mathbf{G}$ satisfying \eqref{G-equation-gen-U} for the $N$-soliton solutions $\mathbf{Q}=\mathbf{Q}^{[N]}$.
We then show that the projection of $\sigma_{3}\mathbf{G}$ onto $\mathcal{T}^{\perp}$ yields eigenfunctions 
of the linearized CNLS equation, as described by \eqref{G-t-linear-pro} in 
Theorem \ref{off-diagonal-variation}.

\section{$N$-soliton solutions for CNLS equations}\label{sec-$N$-soliton-CNLS}
In this section, we introduce the Darboux transformation to construct $N$-soliton solutions for the CNLS 
equations. Starting from the fundamental matrix solution (FMS), we construct both the squared eigenfunction 
matrices and the corresponding squared eigenfunctions associated with CNLS equations.

The $N$-fold Darboux transformation maps a FMS $\mathbf{\Phi}^{[0]}$, which satisfies the Lax pair 
associated with the pair $(\mathbf{U}^{[0]},\mathbf{V}^{[0]})$ to a new matrix $\mathbf{\Phi}^{[N]}$ 
satisfying the Lax pair associated with $(\mathbf{U}^{[N]},\mathbf{V}^{[N]})$. The new potential 
$\mathbf{Q}^{[N]}$ can then be obtained from $\mathbf{\Phi}^{[N]}$ and the initial potential $\mathbf{Q}^{[0]}$, 
which is called B\"acklund transformation. 

Applying this transformation to the zero solution yields explicit $N$-soliton solutions of the CNLS equations. 
We present the construction of the N-fold Darboux transformation for CNLS equations in this section. The 
corresponding transformation for the CmKdV equation will be discussed in Section~\ref{DT-mkdv}.
Throughout this section, the matrix $\mathbf{V}$ refers specifically to $\mathbf{V}_{CNLS}$. 

\subsection{Darboux transformation for CNLS equations}
The Darboux transformation for the CNLS equations has the following form \cite{ling_darboux_2015,ling_darboux_2016}:
\begin{prop}\label{DT-N-CNLS}
 For the Lax pair \eqref{lax-U}--\eqref{lax-V} with $(\mathbf{U}^{[0]}(\lambda;x,t), \mathbf{V}^{[0]}(\lambda;x,t))$ 
 and the corresponding FMS $\mathbf{\Phi}^{[0]}(\lambda;x,t)$ associated with the potential 
    $\mathbf{Q}^{[0]}(x,t)$, we choose $N$ eigenfunctions $|\mathbf{y}_k\rangle$ satisfying the Lax pair 
 at distinct eigenvalues $\lambda_k \in \mathbb{C}^+$ for $k = 1, 2, \dots, N$.
 The $N$-fold Darboux matrix for the CNLS equations is given by
    \begin{equation*}
        \mathbf{D}_{r}^{[N]}(\lambda;x,t)=\mathbb{I}_{3}-\sum_{k=1}^{N}\frac{\lambda_{k}-\lambda_{k}^{*}}{\lambda-\lambda_{k}^{*}}
        |\mathbf{x}_{k}\rangle \langle \mathbf{y}_{k}|
    \end{equation*} 
 where the vectors $|\mathbf{x}_k\rangle$ and $|\mathbf{y}_k\rangle$ are related through
        \begin{equation}
 (|\mathbf{y}_1\rangle, |\mathbf{y}_2\rangle,\cdots, |\mathbf{y}_N\rangle)
 =(|\mathbf{x}_1\rangle, |\mathbf{x}_2\rangle,\cdots, |\mathbf{x}_N\rangle)\mathbf{M}, \quad 
        \mathbf{M}=\left(
        \frac{\lambda_{k}-\lambda_{k}^{*}}{\lambda_{l}-\lambda_{k}^{*}}\langle \mathbf{y}_k|\mathbf{y}_l\rangle
 \right)_{1\leq k,l\leq N}.
 \label{matrix-M}
    \end{equation}
 Here, $\langle \mathbf{x}_k|=(|\mathbf{x}_k\rangle)^{\dag}$ and $\langle \mathbf{y}_k|=(|\mathbf{y}_k\rangle)^{\dag}$.
 Applying the $N$-fold Darboux transformation to the FMS $\mathbf{\Phi}^{[0]}(\lambda;x,t)$ yields the new FMS
    \begin{equation*}
        \mathbf{\Phi}_{r}^{[N]}(\lambda;x,t)=\mathbf{D}_{r}^{[N]}(\lambda;x,t)\mathbf{\Phi}^{[0]}(\lambda;x,t)
    \end{equation*}
 which satisfies the Lax pair \eqref{lax-U}--\eqref{lax-V} associated with pair $(\mathbf{U}^{[N]},\mathbf{V}^{[N]})$. 
 The corresponding B\"acklund transformation is given by
    \begin{equation}\label{Ba}
        \mathbf{Q}^{[N]}=\mathbf{Q}^{[0]}+2\ii\sigma_{3}\sum_{k=1}^{N}
        (\lambda_{k}-\lambda_{k}^{*})
        (|\mathbf{x}_k\rangle \langle \mathbf{y}_k|)^{\perp}.
        \end{equation}
\end{prop}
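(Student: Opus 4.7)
The plan is to establish the proposition in three stages: first handle the one-fold Darboux transformation and then bootstrap to the $N$-fold case directly from the proposed ansatz; second, verify the pole-cancellation that pins down the relation between $|\mathbf{x}_k\rangle$ and $|\mathbf{y}_k\rangle$; third, extract the B\"acklund formula from the large-$\lambda$ asymptotics. For the one-fold case, I would check that
\begin{equation*}
\mathbf{D}_r^{[1]}(\lambda)=\mathbb{I}_3-\frac{\lambda_1-\lambda_1^*}{\lambda-\lambda_1^*}|\mathbf{x}_1\rangle\langle\mathbf{y}_1|
\end{equation*}
satisfies the covariance equation $\partial_x\mathbf{D}_r^{[1]}+\mathbf{D}_r^{[1]}\mathbf{U}^{[0]}=\mathbf{U}^{[1]}\mathbf{D}_r^{[1]}$ with $\mathbf{U}^{[1]}$ again of AKNS form if and only if the residue at $\lambda=\lambda_1^*$ vanishes; using the spatial Lax equation satisfied by $|\mathbf{y}_1\rangle$ together with the Hermitian symmetry $\mathbf{U}^{[0]}(\lambda)^\dagger=-\mathbf{U}^{[0]}(\lambda^*)$ induced by the CNLS reduction $\mathbf{r}=-\mathbf{q}^*$, this forces $|\mathbf{x}_1\rangle=|\mathbf{y}_1\rangle/\langle\mathbf{y}_1|\mathbf{y}_1\rangle$, which is the $N=1$ case of \eqref{matrix-M}.

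For general $N$, rather than iterate the one-fold step, I would impose directly on the ansatz the condition $\mathbf{D}_r^{[N]}(\lambda_l)|\mathbf{y}_l\rangle=0$ for every $l=1,\dots,N$. Written out, this reads
\begin{equation*}
|\mathbf{y}_l\rangle=\sum_{k=1}^N \frac{\lambda_k-\lambda_k^*}{\lambda_l-\lambda_k^*}|\mathbf{x}_k\rangle\langle\mathbf{y}_k|\mathbf{y}_l\rangle,
\end{equation*}
which in matrix form is $(|\mathbf{y}_1\rangle,\ldots,|\mathbf{y}_N\rangle)=(|\mathbf{x}_1\rangle,\ldots,|\mathbf{x}_N\rangle)\mathbf{M}$ with $\mathbf{M}$ as in \eqref{matrix-M}. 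Invertibility of $\mathbf{M}$ follows from a Cauchy-type positivity argument: for pairwise distinct $\lambda_k\in\mathbb{C}^+$ the matrix $\bigl(\tfrac{1}{\lambda_l-\lambda_k^*}\bigr)_{k,l}$ is Hermitian positive definite, and $\mathbf{M}$ is its Hadamard product with the positive semidefinite Gram matrix $(\langle\mathbf{y}_k|\mathbf{y}_l\rangle)_{k,l}$ weighted by the positive factors $\lambda_k-\lambda_k^*$; by Schur's theorem $\mathbf{M}$ is positive definite whenever the $|\mathbf{y}_k\rangle$ are linearly independent, so $|\mathbf{x}_k\rangle$ is well-defined via $(|\mathbf{x}_1\rangle,\ldots)=(|\mathbf{y}_1\rangle,\ldots)\mathbf{M}^{-1}$.

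The main obstacle is verifying that the conjugated matrix
\begin{equation*}
\mathbf{U}^{[N]}:=\bigl(\partial_x\mathbf{D}_r^{[N]}\bigr)\bigl(\mathbf{D}_r^{[N]}\bigr)^{-1}+\mathbf{D}_r^{[N]}\mathbf{U}^{[0]}\bigl(\mathbf{D}_r^{[N]}\bigr)^{-1}
\end{equation*}
is entire in $\lambda$, hence polynomial of degree one, and preserves the CNLS reduction $\mathbf{r}^{[N]}=-(\mathbf{q}^{[N]})^*$. Apparent poles sit at $\lambda=\lambda_k^*$ (from $\mathbf{D}_r^{[N]}$) and at $\lambda=\lambda_k$ (from $(\mathbf{D}_r^{[N]})^{-1}$). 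The residues at $\lambda_k^*$ are killed by combining $\mathbf{D}_r^{[N]}(\lambda_l)|\mathbf{y}_l\rangle=0$ with $\partial_x|\mathbf{y}_l\rangle=\mathbf{U}^{[0]}(\lambda_l)|\mathbf{y}_l\rangle$, so that each rank-one residue reduces to an expression that vanishes on the right. The residues at $\lambda_k$ are handled by the symmetry $\mathbf{U}^{[0]}(\lambda)^\dagger=-\mathbf{U}^{[0]}(\lambda^*)$, which gives the companion identity $(\mathbf{D}_r^{[N]})^{-1}(\lambda)^\dagger=\mathbf{D}_r^{[N]}(\lambda^*)$ up to normalization; thus cancellation at $\lambda_k$ is the Hermitian conjugate of cancellation at $\lambda_k^*$ and at the same time forces $\mathbf{r}^{[N]}=-(\mathbf{q}^{[N]})^*$. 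The analogous covariance for $\mathbf{V}$ uses $\partial_t|\mathbf{y}_l\rangle=\mathbf{V}^{[0]}(\lambda_l)|\mathbf{y}_l\rangle$ in place of the spatial equation and requires no additional structure. The B\"acklund formula \eqref{Ba} then falls out as the $\mathcal{O}(1)$ coefficient of the covariance equation, read from the expansion $\mathbf{D}_r^{[N]}(\lambda)=\mathbb{I}_3-\lambda^{-1}\sum_k(\lambda_k-\lambda_k^*)|\mathbf{x}_k\rangle\langle\mathbf{y}_k|+\mathcal{O}(\lambda^{-2})$, with the projection onto $\mathcal{T}^\perp$ isolating $\mathbf{Q}^{[N]}-\mathbf{Q}^{[0]}=2\ii\sigma_3\sum_k(\lambda_k-\lambda_k^*)(|\mathbf{x}_k\rangle\langle\mathbf{y}_k|)^\perp$.
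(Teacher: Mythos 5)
The paper does not actually prove Proposition~\ref{DT-N-CNLS}; it is quoted from the Darboux-transformation literature (\cite{ling_darboux_2015,ling_darboux_2016}), so there is no in-text argument to compare against. Your proof is the standard dressing argument and its overall structure is sound: the kernel condition $\mathbf{D}_{r}^{[N]}(\lambda_{l})|\mathbf{y}_{l}\rangle=0$ is indeed equivalent to \eqref{matrix-M}; combined with $\partial_{x}\langle\mathbf{y}_{k}|=-\langle\mathbf{y}_{k}|\mathbf{U}^{[0]}(\lambda_{k}^{*})$ (from the reduction symmetry) it kills the residues of $(\partial_{x}\mathbf{D}+\mathbf{D}\mathbf{U}^{[0]})\mathbf{D}^{-1}$, and the identity $\mathbf{D}_{r}^{[N]}(\lambda)^{-1}=\mathbf{D}_{r}^{[N]}(\lambda^{*})^{\dagger}$ (which holds exactly, not merely ``up to normalization'') transfers the cancellation to the poles at $\lambda_{k}$ and forces anti-Hermiticity of the $\lambda^{-1}$ coefficient, hence preservation of $\mathbf{r}=-\mathbf{q}^{*}$. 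The extraction of \eqref{Ba} from the $\mathcal{O}(1)$ term, using $\mathrm{ad}_{\sigma_{3}}=2\sigma_{3}$ on $\mathcal{T}^{\perp}$, is correct.

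Two points need repair. First, your invertibility argument for $\mathbf{M}$ misstates the positivity: the Cauchy-type matrix $\bigl(\tfrac{1}{\lambda_{l}-\lambda_{k}^{*}}\bigr)_{k,l}$ is \emph{not} Hermitian positive definite; writing $\tfrac{1}{\lambda_{l}-\lambda_{k}^{*}}=-\ii\int_{0}^{\infty}f_{l}(t)\overline{f_{k}(t)}\,\mathrm{d}t$ with $f_{l}(t)={\rm e}^{-b_{l}t+\ii a_{l}t}$ shows it equals $-\ii$ times a Gram matrix $G\succ 0$. Consequently $\mathbf{M}=\mathrm{diag}(2b_{k})\,\bigl(G\circ(\langle\mathbf{y}_{k}|\mathbf{y}_{l}\rangle)\bigr)$ is a positive diagonal matrix times a Hermitian positive definite one (Schur product theorem, using $\langle\mathbf{y}_{k}|\mathbf{y}_{k}\rangle>0$); this is not itself Hermitian, but it is similar to a positive definite matrix and hence invertible, which is all you need. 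Second, for the temporal part it is not enough that $(\partial_{t}\mathbf{D}+\mathbf{D}\mathbf{V}^{[0]})\mathbf{D}^{-1}$ is polynomial of degree two in $\lambda$: you must still identify its coefficients with $\mathbf{V}_{CNLS}(\lambda,\mathbf{Q}^{[N]})$, i.e.\ check that the $\mathcal{O}(\lambda)$ and $\mathcal{O}(1)$ coefficients reproduce $2\lambda\mathbf{Q}^{[N]}$ and $\ii\sigma_{3}((\mathbf{Q}^{[N]})^{2}-\mathbf{Q}^{[N]}_{x})$. This follows by matching the next orders of the large-$\lambda$ expansion (or by invoking the zero-curvature compatibility), but as written the step ``requires no additional structure'' glosses over the only place where the specific form of $\mathbf{V}_{CNLS}$ enters.
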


If we take the unbounded vectors $|\mathbf{y}_k\rangle$ as in Proposition \ref{DT-N-CNLS}, then the spectrum of the new Lax pair consists of that of the original Lax pair together with $N$ additional, distinct eigenvalues. In the case of the zero potential $\mathbf{Q}^{[0]}=\mathbf{0}$, the FMS corresponding to the Lax pair is given by
\begin{equation*}
    \mathbf{\Phi}^{[0]}={\rm e}^{{\rm i}\lambda(x+2\lambda t)\sigma_{3}}. 
\end{equation*}
Each vector $|\mathbf{y}_k\rangle$ is a linear combination of the columns of the FMS:
\begin{equation*}
    |\mathbf{y}_k\rangle=\mathbf{\Phi}^{[0]}(\lambda_{k};x,t)c^{[k]}
    ={\rm e}^{{\rm i}\lambda_{k}(x+2\lambda_{k} t)\sigma_{3}}\begin{pmatrix}
        1 \\ \mathbf{c}_{k}
    \end{pmatrix},\quad k=1,2,\cdots, N
\end{equation*}
where 
\begin{equation*}
    \mathbf{c}_{k}=(c_{1k},c_{2k})^{T}\in\mathbb{C}^{2}\backslash \{(0,0)\}.
\end{equation*}
To eliminate the singularities at the point spectrum in the spectral parameter $\lambda$, we consider the Darboux transformation of the form
\begin{equation*}
    \mathbf{D}^{[N]}(\lambda;x,t)=\mathcal{P}(\lambda)\mathbf{D}_{r}^{[N]}(\lambda;x,t)
\end{equation*}
i.e.
\begin{equation}\label{DT-Nsoliton}
    \begin{split}
        &\mathbf{D}^{[N]}(\lambda;x,t)
    \\=&\mathcal{P}(\lambda)-\sum_{s,r=1}^{N}\frac{\mathcal{P}(\lambda)}{\lambda-\lambda_{r}^{*}}
 (\lambda_{r}-\lambda_{r}^{*})
 m_{sr}{\rm e}^{{\rm i}\lambda_{s}(x+2\lambda_{s}t)\sigma_{3}}\begin{pmatrix}
        1 \\ \mathbf{c}_{s}
    \end{pmatrix}
    \begin{pmatrix}
        1 & \mathbf{c}_{r}^{\dagger}
    \end{pmatrix}
 {\rm e}^{-{\rm i}\lambda_{r}^{*}(x+2\lambda_{r}^{*}t)\sigma_{3}}
    \end{split}
\end{equation}
where 
\begin{equation}\label{P-def}
    \mathcal{P}(\lambda)=\prod_{k=1}^{N}(\lambda-\lambda_{k}^{*})
\end{equation}
and the matrix $\mathbf{m}=(m_{sr})$ is the inverse of matrix $\mathbf{M}=(M_{kl})$ defined in \eqref{matrix-M}.

Then the FMS associated with the $N$-soliton solution is given by
\begin{equation}\label{FMS-Nsoliton}
    \mathbf{\Phi}^{[N]}(\lambda;x,t)=\mathbf{D}^{[N]}(\lambda;x,t){\rm e}^{{\rm i}\lambda(x+2\lambda t)\sigma_{3}}
\end{equation}
which is analytic at $\lambda=\lambda_{k}^{*}$. The $N$-soliton solution is then obtained 
by applying the B\"acklund transformation \eqref{Ba} together with the explicit form of the 
FMS \eqref{DT-Nsoliton}:
\begin{equation*}
    \mathbf{q}^{[N]}(x,t;\mathbf{\Lambda},\mathbf{c})=4\sum_{s,r=1}^{N}\mathrm{Im}(\lambda_{r})
 m_{sr}{\rm e}^{-{\rm i}\lambda_{s}(x+2\lambda_{s}t)}
 {\rm e}^{-{\rm i}\lambda_{r}^{*}(x+2\lambda_{r}^{*}t)}\mathbf{c}_{s},
\end{equation*}
which can be rewritten in the matrix form as in \eqref{CNLS-Nsoliton}. 
Recall that we write the spectral parameter as $\lambda_k = a_k + \mathrm{i}b_k$.
Let $\mathbf{\Phi}_i^{[N]}$ denote the $i$-th column of the FMS $\mathbf{\Phi}^{[N]}$ defined in \eqref{FMS-Nsoliton}.
Consider the Lax operator
\begin{equation*}
    \mathcal{L}_{s}=-{\rm i}\sigma_{3}(\partial_{x}-\mathbf{Q}^{[N]}), 
\end{equation*}
for which the FMS $\mathbf{\Phi}^{[N]}$ satisfies the ODE
\begin{equation*}
    \mathcal{L}_{s}\mathbf{\Phi}^{[N]}(\lambda;x,t)=\lambda\mathbf{\Phi}^{[N]}(\lambda;x,t). 
\end{equation*}
Since $\mathcal{L}_s$ is a first-order differential operator, all solutions of the spectral problem associated with $\mathcal{L}_s$ can be obtained from the fundamental matrix solution $\mathbf{\Phi}^{[N]}$.
Denote by $\sigma(\mathcal{A})$ the spectrum of an operator $\mathcal{A}$, and let $\sigma_{\mathrm{point}}(\mathcal{A})$ and $\sigma_{\mathrm{ess}}(\mathcal{A})$ denote its point and essential spectra, respectively.
Then the following lemma concerning the Lax spectrum $\sigma(\mathcal{L}_s)$ holds:
\begin{lem}[Lax spectrum for $N$-solitons]\label{lem-Lax-spectrum}
 Consider the spectral problem
    \begin{equation*}
        \mathcal{L}_{s}\mathbf{\Phi}(\lambda;x,t)=\lambda\mathbf{\Phi}(\lambda;x,t)
    \end{equation*}
 in the space $L^2(\mathbb{R}; \mathbb{C}^3)$, where the spectral parameters $\lambda_k = a_k + \mathrm{i} b_k \in \mathbb{C}^+$ are distinct. Then 
 the essential spectrum of the Lax operator is
    \begin{equation*}
        \sigma_{ess}(\mathcal{L}_{s})=\mathbb{R}
    \end{equation*}
 and the point spectrum is given by
    \begin{equation*}
        \sigma_{point}(\mathcal{L}_{s})=\{\lambda_{k},\lambda_{k}^{*}:k=1,2,\cdots, N\}. 
    \end{equation*}
 Moreover, for each $\lambda \in \mathbb{R}$, the three columns of $\mathbf{\Phi}^{[N]}(\lambda)$ 
satisfy the spectral problem of the Lax operator associated with the essential spectrum 
and form a fundamental system of $L^{\infty}$ solutions.
 For the point spectrum, the eigenspaces at $\lambda = \lambda_k$ and $\lambda = \lambda_k^*$ are one-dimensional, given by
    \begin{align*}
        \mathrm{Ker}(\lambda_k \mathbb{I} - \mathcal{L}_s) &= \mathrm{span} \left\{ \mathbf{\Phi}^{[N]}_1(\lambda_k) \right\}
        \subset \mathcal{S}(\mathbb{R}; \mathbb{C}^3), \\
        \mathrm{Ker}(\lambda_k^* \mathbb{I} - \mathcal{L}_s) &= \mathrm{span} \left\{ \mathbf{\Phi}^{[N]}_1(\lambda_k^*) \right\}
        \subset \mathcal{S}(\mathbb{R}; \mathbb{C}^3).
    \end{align*}
\end{lem}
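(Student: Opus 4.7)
The proof has three pieces: the essential spectrum via a relatively compact perturbation argument, the identification of the discrete eigenvalues via an asymptotic analysis of the Darboux matrix, and the one-dimensionality of the eigenspaces. For the essential spectrum, split $\mathcal{L}_{s}=-\ii\sigma_{3}\partial_{x}+\ii\sigma_{3}\mathbf{Q}^{[N]}$; the free operator is unitarily equivalent under the Fourier transform to multiplication by $\mathrm{diag}(\xi,-\xi,-\xi)$ on $L^{2}(\mathbb{R};\mathbb{C}^{3})$, which has spectrum $\mathbb{R}$. Since $\mathbf{Q}^{[N]}$ is Schwartz class (soliton tails decay exponentially), $\ii\sigma_{3}\mathbf{Q}^{[N]}$ is a relatively compact perturbation of $-\ii\sigma_{3}\partial_{x}$, so $\sigma_{\mathrm{ess}}(\mathcal{L}_{s})=\mathbb{R}$ by the stability of the essential spectrum under such perturbations. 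For each $\lambda\in\mathbb{R}$ the three columns of $\mathbf{\Phi}^{[N]}(\lambda)$ are uniformly bounded in $x$ (the Darboux matrix tends to bounded limits at $\pm\infty$ and $e^{\pm\ii\lambda x}$ is oscillatory), and the Wronskian $\det \mathbf{\Phi}^{[N]}(\lambda;x,t)=c(\lambda)e^{-\ii\lambda x}$ does not vanish for $\lambda\notin\{\lambda_{k},\lambda_{k}^{*}\}$, yielding the claimed $L^{\infty}$ fundamental system.

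For the point spectrum at $\lambda_{k}$, the crucial ingredient is the identity
\begin{equation*}
    \mathbf{D}^{[N]}(\lambda_{k};x,t)|\mathbf{y}_{k}(x,t)\rangle = 0,
\end{equation*}
obtained directly from $|\mathbf{y}_{k}\rangle=\sum_{j}M_{jk}|\mathbf{x}_{j}\rangle$ encoded in \eqref{matrix-M}, since then $\mathbf{D}_{r}^{[N]}(\lambda_{k})|\mathbf{y}_{k}\rangle=|\mathbf{y}_{k}\rangle-\sum_{j}M_{jk}|\mathbf{x}_{j}\rangle=0$. Writing $|\mathbf{y}_{k}\rangle=e^{\ii\lambda_{k}(x+2\lambda_{k}t)}\mathbf{e}_{1}+e^{-\ii\lambda_{k}(x+2\lambda_{k}t)}(c_{1k}\mathbf{e}_{2}+c_{2k}\mathbf{e}_{3})$ with $\mathbf{e}_{j}$ the standard basis and using $\mathbf{\Phi}_{j}^{[N]}(\lambda_{k})=\mathbf{D}^{[N]}(\lambda_{k};x,t)\, e^{\ii\lambda_{k}(x+2\lambda_{k}t)\sigma_{3}}\mathbf{e}_{j}$, the identity becomes the linear dependence
\begin{equation*}
    \mathbf{\Phi}_{1}^{[N]}(\lambda_{k};x,t)=-c_{1k}\mathbf{\Phi}_{2}^{[N]}(\lambda_{k};x,t)-c_{2k}\mathbf{\Phi}_{3}^{[N]}(\lambda_{k};x,t).
\end{equation*}
The left-hand side has exponential decay at $+\infty$ coming from $|e^{\ii\lambda_{k}(x+2\lambda_{k}t)}|=e^{-b_{k}x}$ multiplied by the bounded limit $\mathbf{D}_{+}(\lambda_{k};t)\mathbf{e}_{1}$, while the right-hand side has exponential decay at $-\infty$ coming from $|e^{-\ii\lambda_{k}(x+2\lambda_{k}t)}|=e^{b_{k}x}$ multiplied by bounded limits in the $\mathbf{e}_{2},\mathbf{e}_{3}$ directions. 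Hence this common function lies in $\mathcal{S}(\mathbb{R};\mathbb{C}^{3})$, so $\lambda_{k}\in\sigma_{\mathrm{point}}(\mathcal{L}_{s})$; the analogous eigenfunction at $\lambda_{k}^{*}$ follows either by repeating the argument with the adjoint Darboux matrix or by invoking the reduction symmetry $\mathbf{r}=-\mathbf{q}^{*}$.

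For one-dimensionality and the absence of other eigenvalues, any $L^{2}$ solution at $\lambda_{k}$ reduces, modulo the linear dependence above, to a combination $\tilde{c}_{2}\mathbf{\Phi}_{2}^{[N]}+\tilde{c}_{3}\mathbf{\Phi}_{3}^{[N]}$; its asymptotic form at $+\infty$ is $\mathbf{D}_{+}(\lambda_{k};t)(\tilde{c}_{2}\mathbf{e}_{2}+\tilde{c}_{3}\mathbf{e}_{3})e^{-\ii\lambda_{k}(x+2\lambda_{k}t)}$, whose exponential factor grows like $e^{b_{k}x}$, so decay forces $(\tilde{c}_{2},\tilde{c}_{3})$ to lie in the kernel of $\mathbf{D}_{+}(\lambda_{k};t)|_{\mathrm{span}(\mathbf{e}_{2},\mathbf{e}_{3})}$, which by the annihilation identity evaluated as $x\to+\infty$ is one-dimensional and spanned by $(c_{1k},c_{2k})$, producing only a multiple of $\mathbf{\Phi}_{1}^{[N]}(\lambda_{k})$. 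For $\lambda\in\mathbb{C}^{+}\setminus\{\lambda_{k}\}_{k=1}^{N}$ all denominators $\lambda-\lambda_{j}^{*}$ are bounded away from zero, the asymptotic Darboux matrices $\mathbf{D}_{\pm}(\lambda;t)$ remain invertible, and no nontrivial $L^{2}$ combination of the three columns of $\mathbf{\Phi}^{[N]}(\lambda)$ exists. The principal technical obstacle is computing $\mathbf{D}_{\pm}(\lambda_{k};t)$ explicitly and verifying the rank structure of the restricted matrix above, which is carried out by tracking the dominant exponential terms in $\mathbf{M}$ and in the rank-one tensor products $|\mathbf{x}_{r}\rangle\langle\mathbf{y}_{r}|$ as $x\to\pm\infty$, using that the signs of all relevant exponents are fixed by $\mathrm{Im}\,\lambda_{k}>0$.
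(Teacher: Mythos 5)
Your proposal is correct and follows essentially the same route as the paper: Weyl's theorem (equivalently, relative compactness of the Schwartz-class potential) for the essential spectrum, the annihilation identity $\mathbf{D}^{[N]}(\lambda_{k})|\mathbf{y}_{k}\rangle=0$ yielding the linear dependence $\mathbf{\Phi}^{[N]}_{1}(\lambda_{k})+c_{1k}\mathbf{\Phi}^{[N]}_{2}(\lambda_{k})+c_{2k}\mathbf{\Phi}^{[N]}_{3}(\lambda_{k})=0$, and the two-sided exponential decay argument identifying the Schwartz-class eigenfunction at $\lambda_{k}$ and (via the orthogonal-complement relation) at $\lambda_{k}^{*}$. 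Your discussion of one-dimensionality is in fact more explicit than the paper's (which simply asserts geometric multiplicity one); the only slight leap is that at $\lambda=\lambda_{k}$ the three columns span only a two-dimensional subspace of the three-dimensional solution space, so to rule out eigenfunctions outside the column span one should note that the subspace of solutions decaying as $x\to+\infty$ is one-dimensional (matching $\mathrm{e}^{\ii\lambda_{k}x}\mathbf{e}_{1}$ asymptotically) and already contains $\mathbf{\Phi}^{[N]}_{1}(\lambda_{k})$.
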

\begin{proof}
 The essential spectrum of $\mathcal{L}_{s}$ can be determined using Weyl’s essential spectrum theorem: 
    \begin{equation*}
        \sigma_{ess}(\mathcal{L}_{s})=\sigma_{ess}(-{\rm i}\sigma_{3}\partial_{x})=\mathbb{R}. 
    \end{equation*}
 Since the geometric multiplicity of each eigenvalue in the point spectrum $\sigma_{point}(\mathcal{L}_{s})$ is one, it suffices to prove 
 that $\mathbf{\Phi}^{[N]}_{1}(\lambda_{k})$ and $\mathbf{\Phi}^{[N]}_{1}(\lambda_{k}^{*})$ are nonzero in $L^{2}$. 
    The regular FMS matrix satisfies the relation
    \begin{equation*}
        \mathbf{\Phi}^{[N]}(\lambda_{k})(1, c_{1k},c_{2k})^{T}=0
    \end{equation*}
    that is,
    \begin{equation}\label{Rel-Phi-lambda}
        c_{1k}\mathbf{\Phi}^{[N]}_{2}(\lambda_{k})+c_{2k}\mathbf{\Phi}^{[N]}_{3}(\lambda_{k})+\mathbf{\Phi}^{[N]}_{1}(\lambda_{k})=0. 
    \end{equation} 
    Since $\mathbf{\Phi}^{[N]}_{i+1}(\lambda_{k}),i=1,2$ decays exponentially as $x \to -\infty$, and the 
    scattering parameters $\mathbf{c}_{k}=(c_{1k},c_{2k})\ne(0,0)$, it follows from the relation 
    \eqref{Rel-Phi-lambda} that $\mathbf{\Phi}^{[N]}_{1}(\lambda_{k})$ also decays exponentially as $x \to -\infty$.
    As $x\to +\infty$, the function $\mathbf{\Phi}^{[N]}_{1}(\lambda_{k})$ also exhibits exponential decay 
    due to the formula \eqref{FMS-Nsoliton}, since the Darboux matrix $\mathbf{D}^{[N]}(\lambda; x, t)$ remains bounded in $x$.
    Therefore, $\mathbf{\Phi}^{[N]}_{1}(\lambda_{k})$ belongs to the Schwartz class.
    For the eigenfunction corresponding to $\lambda=\lambda_{k}^{*}$, note that
    \begin{equation*}
        \mathrm{Ker}(\mathbf{\Phi}^{[N]}(\lambda_{k}^{*}))=\mathrm{Ker}(\mathbf{\Phi}^{[N]}(\lambda_{k}))^{\perp}
        =\{(1,\mathbf{c}_{k}^{T})^{T}\}^{\perp}. 
    \end{equation*}
 This implies the identity
    \begin{equation}\label{Rel-Phi-lambdac}
        -c_{ik}^{*}\mathbf{\Phi}^{[N]}_{1}(\lambda_{k}^{*})+\mathbf{\Phi}^{[N]}_{i+1}(\lambda_{k}^{*})=0,
        \quad i=1,2.
    \end{equation}
    By an argument analogous to that for $\mathbf{\Phi}^{[N]}_{1}(\lambda_{k})$, we conclude that 
    $\mathbf{\Phi}^{[N]}_{1}(\lambda_{k}^{*})\in \mathcal{S}(\mathbb{R}; \mathbb{C}^3)$. This completes the proof.
\end{proof}
To study the nonlinear stability of $N$-soliton solutions, we construct the squared eigenfunction matrices and the associated squared eigenfunctions in this section. These functions will play a key role in the nonlinear stability analysis, particularly for spectral parameters $\lambda$ on the Lax spectrum.

When we consider the squared eigenfunctions associated with the point spectrum of the Lax operator, linear dependencies arise among them due to Lemma \ref{lem-Lax-spectrum}. The identities \eqref{Rel-Phi-lambda} and \eqref{Rel-Phi-lambdac} are employed to determine a maximal linearly independent subset.

The Lyapunov functional is central to the nonlinear stability analysis of $N$-soliton solutions. The variational characterization, derived from the differential of the Lyapunov functional via the trace formula, is introduced before the squared eigenfunction matrices.
\subsection{The variational characterization for $N$-soliton solutions}\label{var-Nsoliton}
The $N$-soliton solution satisfies a semilinear 
ODE of order $2N$, which arises as the differential of 
the Lyapunov functional via the trace formula. The variation of the Lyapunov functional under perturbations 
of the $N$-soliton can be controlled by the second-order term in its expansion at the $N$-soliton profile, 
as the first-order term vanishes.

The trace formula \cite{faddeev1987hamiltonian} is fundamental in constructing the Lyapunov functional. Since the $N$-soliton is 
parameterized by the spectral parameters $\Lambda$ and the scattering parameters $\mathbf{c}$, 
the conserved quantities can be expressed as polynomials in the spectral parameters and are independent 
of the scattering parameters. Consequently, the variation of the conserved quantities depends polynomially 
on the variation of the spectral parameters. This observation implies that a certain linear combination of 
the variations of the conserved quantities must vanish. For further details, see \cite{ling2024stability}.

The polynomial $\mathcal{P}(\lambda)$ is defined in \eqref{P-def}, and we introduce
\begin{equation}\label{hat-P}
    \hat{\mathcal{{P}}}(\lambda)=\mathcal{P}^{*}(\lambda^{*})=\prod_{k=1}^{N}(\lambda-\lambda_{k}).
\end{equation}
The Lyapunov functional $\mathcal{I}(\mathbf{q})$ for the $N$-soliton solution is given by \eqref{Lya}, where the coefficients 
$\mu_{n}$ are determined by the identity
\begin{equation}\label{P-hatP}
    \mathcal{P}(\lambda)\hat{\mathcal{P}}(\lambda)=\sum_{n=0}^{2N}2^{n-2N}\mu_{n}\lambda^{n}. 
\end{equation}
We note that $\mu_{n}$ are real since $(\mathcal{P}(\lambda)\hat{\mathcal{P}}(\lambda))^{*}=
\mathcal{P}(\lambda^{*})\hat{\mathcal{P}}(\lambda^{*})$. 
The generating function for the conserved quantities is given by
\begin{equation*}
    \ln a(\lambda)=\int_{\mathbb{R}}\mathbf{q}^{\dagger}(x)\omega(\lambda;x,t)\mathrm{d}x
\end{equation*}
where $\omega(\lambda;x,t)$ satisfies the Riccati equation
\begin{equation*}
    \omega_{x}=\mathbf{q}-2\ii \lambda \omega +\omega \mathbf{q}^{\dagger} \omega
\end{equation*}
with the expansion
\begin{equation*}
    \omega(\lambda;x,t)=\sum_{n=1}^{+\infty} \frac{\omega_{n}(x,t)}{(2\ii \lambda)^{n}}. 
\end{equation*}
The conserved quantities are encoded in the generating function
\begin{equation*}
    \ln a(\lambda)=-2\ii\sum_{n=0}^{+\infty}\frac{ \mathcal{H}_{n}}{(2\lambda)^{n+1}},
\end{equation*}
which yields the explicit formula
\begin{equation}\label{conserved-quantities}
    \mathcal{H}_{n}=\frac{(-\ii)^{n}}{2}\int_{\mathbb{R}}\mathbf{q}^{\dagger}\omega_{n+1} \mathrm{d}x. 
\end{equation}
The first few terms in the expansion of $\omega(\lambda;x,t)$ are given by
\begin{align*}
    \omega_{1}&=\mathbf{q}, \\
    \omega_{2}&=-\mathbf{q}_{x}, \\
    \omega_{3}&=\mathbf{q}_{xx}+|\mathbf{q}|^{2}\mathbf{q},\\
    \omega_{4}&=-\mathbf{q}_{xxx}-\mathbf{q}\mathbf{q}^{\dagger}_{x}\mathbf{q}-|\mathbf{q}|^{2}\mathbf{q}_{x}
 -\mathbf{q}\mathbf{q}^{\dagger}\mathbf{q}_{x}. 
\end{align*}
The conserved quantities in \eqref{H0}–\eqref{H3} can be directly derived from the general formula \eqref{conserved-quantities}.
\begin{rem}
 The conserved quantities \eqref{conserved-quantities} can also be derived from the coefficients of 
 the matrix $\mathbf{L}$ defined in \eqref{L-x} in \cite{terng1997soliton}, through the identity
    \begin{equation*}
        \mathcal{H}_{n}=-2^{n-1}\ii \int_{\mathbb{R}}\mathrm{Tr}\left(
            \int_{0}^{1}\mathbf{L}_{n+1}^{\bot}(t\mathbf{Q})\mathbf{Q}\mathrm{d}t\,
 \right)\mathrm{d}x=-
        \frac{2^{n-1}}{n+1}\int_{\mathbb{R}}\mathrm{Tr}(\mathbf{L}_{n+2}\sigma_{3})\mathrm{d}x. 
    \end{equation*}
 Moreover, the variational derivative of $\mathcal{H}_{n}$ with respect to $\mathbf{Q}$ is given by
    \begin{equation*}
       \frac{\delta \mathcal{H}_{n}}{\delta \mathbf{Q}} =-2^{n-1}\ii \mathbf{L}_{n+1}^{\bot}.
    \end{equation*}
 The variation is given by 
    \begin{equation}\label{def-var-Q}
        \frac{d }{d \epsilon}\mathcal{H}_{n}(\mathbf{Q}+\epsilon\delta \mathbf{Q})|_{\epsilon=0} =\int_{\mathbb{R}}\mathrm{Tr}\left(\frac{\delta \mathcal{H}_{n}}{\delta \mathbf{Q}} \delta \mathbf{Q}\right)\mathrm{d}x. 
    \end{equation}
 Further details can be found in \cite{sattinger1985hamiltonian}, but with slightly 
 different calculation since $\sigma_{3}$ is not a regular element. The function $\ln \mathbf{S}$ is 
 given by $\mathbf{P}\mathrm{diag}(\ln \lambda_{1},\ln \lambda_{2}, \ln \lambda_{3})\mathbf{P}^{-1}$ for 
    $\mathbf{S}=\mathbf{P}\mathrm{diag}(\lambda_{1},\lambda_{2},\lambda_{3})\mathbf{P}^{-1}$. Since 
    $\mathbf{S}\in \mathcal{T}$, we can take $\mathbf{P}\in\mathcal{T}$. The 
 generating function is given by
    \begin{equation}\label{gen-Tr-lns}
        \mathcal{H}^{g}(\lambda)=\mathrm{Tr}(\sigma_{3}\ln \mathbf{S}). 
    \end{equation}
 By \eqref{tra-S}, we obtain 
    \begin{equation*}
        \ln \mathbf{S}(\lambda)=\lim_{x\to +\infty}\ln r \mathbf{S}=\lim_{x\to +\infty}\ln l=\int_{\mathbb{R}}
        \partial_{x} (\ln l) \mathrm{d}x. 
    \end{equation*}
 Since $\delta l=r\delta \mathbf{S}+\delta r\mathbf{S}\to \delta \mathbf{S}$ as $x\to +\infty$, we obtain
    \begin{equation*}
        \delta \ln \mathbf{S}=\mathbf{S}^{-1}\delta \mathbf{S}=\lim_{x\to +\infty} l^{-1}\delta l=
        \int_{\mathbb{R}}\partial_{x}(l^{-1}\delta l) \mathrm{d}x. 
    \end{equation*}
 Since $l$ satisfies \eqref{l-equ}, we obtain 
    \begin{equation*}
        \partial_{x}(l^{-1}\delta l)=\ii\lambda [\sigma_{3},l^{-1}\delta l]+l^{-1} \delta \mathbf{Q}l. 
    \end{equation*}
 Then 
    \begin{equation*}
        \delta \mathcal{H}^{g}=\int_{\mathbb{R}}
        \mathrm{Tr}(\sigma_{3}(\ii\lambda [\sigma_{3},l^{-1}\delta l]+l^{-1} \delta \mathbf{Q}l))
        \mathrm{d}x=\int_{\mathbb{R}}\mathrm{Tr}(l\sigma_{3}l^{-1} \delta \mathbf{Q})\mathrm{d}x=
        \int_{\mathbb{R}}\mathrm{Tr}(\mathbf{L} \delta \mathbf{Q})\mathrm{d}x,
    \end{equation*}
 hence 
    \begin{equation*}
        \frac{\delta \mathcal{H}^{g}}{\delta \mathbf{Q}}= \mathbf{L}^{\perp}=\sum_{n=0}^{\infty}\frac{\mathbf{L}_{n+1}^{\perp}}{\lambda^{n+1}}. 
    \end{equation*}
 The connection between generating function $\mathcal{H}^{g}$ and $\ln a(\lambda)$ is 
    \begin{equation*}
        \mathcal{H}^{g}=2\ln a(\lambda)
    \end{equation*}
 since $a(\lambda)$ be the $(1,1)$ element of the transfer matrix and 
    \begin{equation*}
       \mathrm{Tr}\ln \mathbf{S}=\ln \det \mathbf{S}=\ln 1 =0. 
    \end{equation*}
\end{rem}
By the trace formula, 
for the $N$-soliton solutions which can be characterized by the spectral parameters and scattering parameters,
the conserved quantities are given by the spectral parameters
\begin{equation}\label{H-spectral}
    \mathcal{H}_{n}=\frac{2^{n+1}}{n+1}\sum_{k=1}^{N}\mathrm{Im} \lambda_{k}^{n+1}.
\end{equation}
The variation of $H_{n}$ with respect to $\mathbf{q}$ can be translated to the variation of $\lambda_{k},\lambda_{k}^{*}$
\begin{equation}\label{var-Hn-1-spectral}
    \frac{\delta \mathcal{H}_{n}}{\delta \mathbf{q}}=-2^{n}\ii \sum_{k=1}^{N}\left(
        \lambda_{k}^{n}\frac{\delta \lambda_{k}}{\delta \mathbf{q}}-
        (\lambda_{k}^{*})^{n}\frac{\delta \lambda_{k}^{*}}{\delta \mathbf{q}}
    \right). 
\end{equation}
An immediate consequence is that the $N$-soliton solutions satisfy the ODE
\begin{equation*}
    \frac{\delta\mathcal{I}}{\delta \mathbf{q}}(\mathbf{q}^{[N]})=\sum_{n=0}^{2N} \mu_{n} \frac{\delta\mathcal{H}_{n}}{\delta \mathbf{q}}(\mathbf{q}^{[N]})=0. 
\end{equation*}
Now we come back to 
the conserved quantities of the form \eqref{conserved-quantities}. 
The variation of the conserved quantities can be divided into the linear term and the nonlinear term $R_{n}$
\begin{equation*}\label{var-Hn-1}
    \frac{\delta \mathcal{H}_{n}}{\delta\mathbf{q}}(\mathbf{q})=({\ii}\partial_{x})^{n}\mathbf{q}+R_{n}(\mathbf{q},\mathbf{q}^{*},\cdots)
\end{equation*}
where $R_{n}$ is the polynomial with respect to $\mathbf{q},\mathbf{q}^{*}$ and their derivatives and 
the lowest-order term of $R_{n}$ is of degree $3$. 
The following lemma holds:
\begin{lem}\label{lem-critical-point}
 The Lyapunov functional $\mathcal{I}(\mathbf{q})$, which is given in \eqref{Lya}, is independent of time. 
 The $N$-soliton solutions solve the equations
    \begin{equation}\label{ODE-Nsoliton}
        \frac{\delta \mathcal{I}}{\delta\mathbf{q}}(\mathbf{q})=0
    \end{equation}
 which is a semi-linear $2N$-order ODE for vector function $\mathbf{q}=(q_{1},q_{2})^{T}$. 
 Moreover, all solutions are $N$-soliton solutions to the ODE \eqref{ODE-Nsoliton} with boundary condition 
    $\mathbf{q}\to 0$ if $|x|\to \infty$. 
\end{lem}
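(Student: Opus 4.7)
The plan is to treat the three assertions of the lemma in turn, with the structural characterization of solutions being the substantive step.

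\emph{Time-independence and the critical point identity.} The first claim is immediate: each $\mathcal{H}_{n}$ is a conserved quantity of the CNLS flow (they are mutually in involution, as noted after \eqref{zer-cur} and in the generating-function construction), so the finite, real linear combination $\mathcal{I}=\sum_{n=0}^{2N}\mu_{n}\mathcal{H}_{n}$ is also conserved. For the critical point property, I would insert the spectral-data formula \eqref{var-Hn-1-spectral} into $\sum_{n=0}^{2N}\mu_{n}\,\delta\mathcal{H}_{n}/\delta \mathbf{q}$ evaluated at $\mathbf{q}=\mathbf{q}^{[N]}$ and collect terms at each $\lambda_{k}$, obtaining
\begin{equation*}
\frac{\delta \mathcal{I}}{\delta \mathbf{q}}(\mathbf{q}^{[N]})
= -\ii\sum_{k=1}^{N}\Bigl(\sum_{n=0}^{2N}2^{n}\mu_{n}\lambda_{k}^{n}\Bigr)\frac{\delta \lambda_{k}}{\delta \mathbf{q}}
+ \ii\sum_{k=1}^{N}\Bigl(\sum_{n=0}^{2N}2^{n}\mu_{n}(\lambda_{k}^{*})^{n}\Bigr)\frac{\delta \lambda_{k}^{*}}{\delta \mathbf{q}}.
\end{equation*}
By the defining relation \eqref{P-hatP}, the bracketed polynomial equals $2^{2N}\mathcal{P}(\lambda)\hat{\mathcal{P}}(\lambda)$, which vanishes at every $\lambda_{k}$ (a root of $\hat{\mathcal{P}}$) and at every $\lambda_{k}^{*}$ (a root of $\mathcal{P}$), so both sums vanish.

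\emph{Order of the ODE.} For the ODE order I would use the expansion noted just before Lemma~\ref{lem-critical-point}: $\delta \mathcal{H}_{n}/\delta \mathbf{q}=(\ii\partial_{x})^{n}\mathbf{q}+R_{n}$, where $R_{n}$ is a polynomial nonlinearity of total differential order at most $n-1$ and degree $\geq 3$ in $\mathbf{q},\mathbf{q}^{*}$. Matching the coefficient of $\lambda^{2N}$ in \eqref{P-hatP} gives $\mu_{2N}=1$, so the top-order term of $\delta \mathcal{I}/\delta \mathbf{q}$ is $(\ii\partial_{x})^{2N}\mathbf{q}$. Thus \eqref{ODE-Nsoliton} is a semilinear ODE of order exactly $2N$ in each component.

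\emph{Classification of decaying solutions — the main obstacle.} To show that every solution with $\mathbf{q}\to 0$ at $\pm\infty$ is an $N$-soliton, I would translate \eqref{ODE-Nsoliton} into integrable-systems language via the formula $\delta \mathcal{H}_{n}/\delta \mathbf{Q} = -2^{n-1}\ii\,\mathbf{L}_{n+1}^{\bot}$ recorded in the remark following \eqref{conserved-quantities}, so that the critical-point equation becomes the stationary constraint $\sum_{n=0}^{2N}\mu_{n}\mathbf{L}_{n+1}^{\bot}(\mathbf{Q})=0$. Theorem \ref{fin-res-G}, applied with $M=2N$ and $\alpha_{n}=\mu_{n}$, then produces a polynomial matrix $\mathbf{G}(\lambda;x)=\sum_{n=0}^{2N}\mu_{n}\mathbf{V}_{n}$ of degree $2N$ satisfying $\mathbf{G}_{x}=[\mathbf{U},\mathbf{G}]$, built explicitly from $\lambda$, $\mathbf{Q}$ and its derivatives by Remark~\ref{L-in-A}. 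Because $\mathrm{ad}_{\mathbf{U}}$-invariants of $\mathbf{G}$ are constants in $x$, evaluating them as $|x|\to\infty$ using $\mathbf{Q}\to 0$ identifies the spectral invariants of $\mathbf{G}$ with the roots of $\mathcal{P}\hat{\mathcal{P}}$, i.e.\ the set $\{\lambda_{k},\lambda_{k}^{*}\}_{k=1}^{N}$. This pins down the discrete spectrum of the Lax operator $\mathcal{L}_{s}$ as exactly these eigenvalues, and the Schwartz decay of $\mathbf{Q}$ rules out any continuous-spectrum contribution to the scattering data; inverse scattering for a reflectionless potential with prescribed eigenvalues then reconstructs $\mathbf{Q}$ as the $N$-soliton \eqref{CNLS-Nsoliton} with some norming constants $\mathbf{c}$. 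The delicate point I expect is precisely the identification of the polynomial invariants of $\mathbf{G}$ with $\mathcal{P}\hat{\mathcal{P}}$, since $\sigma_{3}$ is not a regular element and the spectral structure of $\mathbf{G}$ must be handled through the transfer matrix decomposition of Remark~\ref{L-in-A} rather than by a generic diagonalization argument.
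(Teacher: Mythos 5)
Your handling of the first two assertions coincides with the paper's: the conservation of $\mathcal{I}$, the cancellation of $\delta\mathcal{I}/\delta\mathbf{q}$ at $\mathbf{q}^{[N]}$ via \eqref{var-Hn-1-spectral} and the vanishing of $2^{2N}\mathcal{P}\hat{\mathcal{P}}$ at $\lambda_{k}$ and $\lambda_{k}^{*}$, and the identification of the leading term $(\ii\partial_{x})^{2N}\mathbf{q}$ from $\mu_{2N}=1$ are exactly the steps in the text (the paper writes the equation as $2^{2N}\mathcal{P}(\ii\partial_{x}/2)\hat{\mathcal{P}}(\ii\partial_{x}/2)\mathbf{q}+R(\mathbf{q})=0$). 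For the classification of decaying solutions, however, you take a genuinely different route. The paper linearizes: the homogeneous equation $\mathcal{P}(\ii\partial_{x}/2)\hat{\mathcal{P}}(\ii\partial_{x}/2)\mathbf{q}=0$ has $4N$ exponential solutions of which only $2N$ decay at each end, so the set of solutions decaying at both infinities has dimension at most $2N$, and since the $N$-soliton family already supplies a $2N$-parameter family it exhausts this set. Your route instead invokes Theorem~\ref{fin-res-G} to produce the polynomial matrix $\mathbf{G}=\sum_{n}\mu_{n}\mathbf{V}_{n}$ commuting with the Lax operator and then appeals to inverse scattering. This is a classical and in principle more informative approach (it directly exhibits the spectral data), but it is heavier machinery than the paper uses and it requires the full reconstruction theory rather than a soft dimension count.

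There is one genuine gap in your version as written: the claim that ``the Schwartz decay of $\mathbf{Q}$ rules out any continuous-spectrum contribution to the scattering data'' is false --- generic Schwartz potentials have nonvanishing reflection coefficients. Reflectionlessness must instead be extracted from the stationary constraint itself: because $\mathbf{G}(\lambda;x)$ is polynomial in $\lambda$ and intertwines the Jost solutions at $x=\pm\infty$, conjugating by the transfer matrix $\mathbf{S}(\lambda)$ of Remark~\ref{L-in-A} shows that the off-block-diagonal entries of $\mathbf{S}$ are annihilated by a factor proportional to $\mathcal{P}(\lambda)\hat{\mathcal{P}}(\lambda)$ on the real line; since this polynomial has no real zeros (all $\lambda_{k}\in\mathbb{C}^{+}$), the reflection coefficient vanishes identically. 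Without this step your argument does not close. A secondary point you should address is that the spectral invariants of $\mathbf{G}$ only show the discrete spectrum is \emph{contained} in $\{\lambda_{k},\lambda_{k}^{*}\}$; degenerate cases in which some eigenvalue is absent (down to the zero solution) must be absorbed into the statement in the same way the paper implicitly does. Neither issue is fatal, but both need to be written out for the proof to be complete.
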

\begin{proof}
 The equation \eqref{ODE-Nsoliton} can be rewritten as a semilinear ODE:
    \begin{equation}\label{ODE-Nsoliton-1}
        2^{2N}\mathcal{P}\left(\frac{\ii\partial_{x}}{2}\right)
        \hat{\mathcal{P}}\left(\frac{\ii\partial_{x}}{2}\right)(\mathbf{q}) + R(\mathbf{q}) = 0,
    \end{equation}
 where the remainder term $ R(\mathbf{q}) $ consists of nonlinear terms that are at least cubic 
 in $ \mathbf{q} $ and its derivatives.
 Consider first the linearized homogeneous equation:
    \begin{equation*}
        \mathcal{P}\left(\frac{\ii\partial_{x}}{2}\right)
        \hat{\mathcal{P}}\left(\frac{\ii\partial_{x}}{2}\right)(\mathbf{q}) = 0,
    \end{equation*}
 whose fundamental solutions are exponentials of the form
    \begin{equation}
        \mathrm{e}^{-2\ii \lambda_{k} x}\mathbf{e}_i, \quad 
        \mathrm{e}^{-2\ii \lambda_{k}^{*} x}\mathbf{e}_i, \quad k=1,\dots,N,\ i=1,2,
    \end{equation}
 giving a total of $ 4N $ linearly independent solutions. Among these, at spatial infinity 
 ($ x\to\pm\infty $), only $ 2N $ of them decay.
 By standard ODE theory, the space of solutions of the full nonlinear equation \eqref{ODE-Nsoliton-1} 
 decaying at both spatial infinities has dimension at most $ 2N $. The nonlinear term 
    $ R(\mathbf{q}) $ does not affect the asymptotic decay rate at leading order, due to its 
 higher nonlinearity.

 On the other hand, the family of $ N $-soliton solutions forms a smooth manifold of
 dimension $ 2N $, parameterized by 
 scattering parameters $c_{1k},c_{2k},k=1,2,\cdots N$.
 Therefore, all sufficiently smooth, spatially localized solutions of the equation must lie in 
 the $ N $-soliton manifold.
 This completes the proof.
\end{proof}
\begin{rem}
 By Lemma~\ref{lem-critical-point}, the 1-soliton solution satisfies the second-order ODE
    \begin{equation*}
        \mathbf{q}_{xx} + 4a_{1} \ii \mathbf{q}_{x} - 4(a_{1}^{2} + b_{1}^{2}) \mathbf{q} + 2|\mathbf{q}|^{2} \mathbf{q} = 0,
    \end{equation*}
 which can also be obtained directly from the CNLS equation by separating variables $ x + 4a_1 t $ and $ t $.

 In contrast, the ODE satisfied by the 2-soliton solution is significantly more involved:
    \begin{equation*}
        \begin{split}
            &\mathbf{q}_{xxxx} + 4|\mathbf{q}|^{2} \mathbf{q}_{xx} +
            2 \mathbf{q} \mathbf{q}_{xx}^{\dagger} \mathbf{q} +
            4 \mathbf{q} \mathbf{q}^{\dagger} \mathbf{q}_{xx} +
            2 \mathbf{q}_{x} \mathbf{q}_{x}^{\dagger} \mathbf{q} +
            6 \mathbf{q}_{x} \mathbf{q}^{\dagger} \mathbf{q}_{x} +
            2|\mathbf{q}_{x}|^{2} \mathbf{q}+ 6|\mathbf{q}|^{4} \mathbf{q}  \\
& + 4\ii(a_{1} + a_{2}) \left( \mathbf{q}_{xxx} + 3|\mathbf{q}|^{2} \mathbf{q}_{x} + 3 \mathbf{q} \mathbf{q}^{\dagger} \mathbf{q}_{x} \right) 
 - 4(a_{1}^{2} + b_{1}^{2} + a_{2}^{2} + b_{2}^{2} + 4a_{1}a_{2}) \left( \mathbf{q}_{xx} + 2|\mathbf{q}|^{2} \mathbf{q} \right)  \\
 &- 16\left( a_{1}(a_{2}^{2} + b_{2}^{2}) + a_{2}(a_{1}^{2} + b_{1}^{2}) \right) \ii \mathbf{q}_{x}
 + 16(a_{1}^{2} + b_{1}^{2})(a_{2}^{2} + b_{2}^{2}) \mathbf{q} 
 = 0.
        \end{split}
    \end{equation*}
 If $a_{1}=a_{2}=a$, then the above ODE is consistent with \cite{ling2024stability}. 
\end{rem}
By Lemma~\ref{lem-critical-point}, for any $N$-soliton solution $\mathbf{q}(x,t; \mathbf{\Lambda}, \mathbf{c})$, 
there exists a renormalized parameter $\tilde{\mathbf{c}} \in \mathbb{C}^{2 \times N}$ such that
\begin{equation*}
    \mathbf{q}(x,t; \mathbf{\Lambda}, \mathbf{c}) = \mathbf{q}(x,0; \mathbf{\Lambda}, \tilde{\mathbf{c}}).
\end{equation*}
In fact, the renormalized parameters are given explicitly by 
$\tilde{\mathbf{c}}_k = \mathrm{e}^{-4\ii \lambda_k^2 t} \mathbf{c}_k$.
Therefore, it suffices to consider the $N$-soliton solutions at $ t = 0 $ when analyzing the spectrum 
of the second variation of the Lyapunov functional. 

We now proceed to introduce the squared eigenfunction matrices, which form the foundation for the upcoming spectral analysis.

\subsection{The squared eigenfunction matrices for CNLS equations}
In this subsection, we construct the squared eigenfunction matrices and the associated squared eigenfunctions 
for the CNLS equations. The construction for the CmKdV equations will be given separately in 
Section~\ref{squ-eig-cmkdv}. The squared eigenfunction matrices can be obtained using solutions to the Lax pair and its adjoint. 

In the absence of any symmetry (i.e., no relation between $\mathbf{r}$ and $\mathbf{q}$), the squared 
eigenfunction matrices can be constructed directly from the fundamental matrix solution
$\mathbf{\Phi}$ and its inverse. When a symmetry between $\mathbf{r}$ and $\mathbf{q}$ exists, 
the inverse $\mathbf{\Phi}^{-1}$ can be expressed in terms of $\mathbf{\Phi}$ via the symmetry relation\cite{yang_nonlinear_2010,ling2023stability}. 

In this paper, the potential matrix satisfies the symmetry
\begin{equation*}
    \mathbf{Q}^{\dagger} = -\mathbf{Q},
\end{equation*}
which leads to the following symmetry relations for the Lax pair matrices $(\mathbf{U}, \mathbf{V})$:
\begin{equation}\label{sym-U-V}
    \mathbf{U}^{\dagger}(\lambda^{*}) = -\mathbf{U}(\lambda), \quad 
    \mathbf{V}^{\dagger}(\lambda^{*}) = -\mathbf{V}(\lambda).
\end{equation}
Let $\mathbf{\Phi}(\lambda)$ be a fundamental matrix solution of the CNLS Lax pair \eqref{lax-U}--\eqref{lax-V}. Then both $\mathbf{\Phi}^{-1}(\lambda)$ and $\mathbf{\Phi}^{\dagger}(\lambda^{*})$ satisfy the adjoint Lax pair:
\begin{align*}
    \partial_x \mathbf{\Psi}(\lambda;x,t) &= -\mathbf{U}(\lambda;x,t)\, \mathbf{\Psi}(\lambda;x,t), \\
    \partial_t \mathbf{\Psi}(\lambda;x,t) &= -\mathbf{V}(\lambda;x,t)\, \mathbf{\Psi}(\lambda;x,t).
\end{align*}
By the uniqueness of solutions to the ODE system, it follows that
\begin{equation*}
    \mathbf{\Phi}^{-1}(\lambda;x,t)
 = \mathbf{\Phi}^{\dagger}(\lambda^{*};x,t)\,
    \mathbf{\Phi}^{\dagger}(\lambda^{*};0,0)^{-1}\,
    \mathbf{\Phi}^{-1}(\lambda;0,0).
\end{equation*}
Therefore, the adjoint solution $\mathbf{\Phi}^{\dagger}(\lambda^{*})$ can be used to construct the squared eigenfunction matrices.

The squared eigenfunction matrix for the CNLS equations is defined by
\begin{equation}\label{squared-eigenfunction-matrix}
    \begin{split}
 &p_{i}(\mathbf{\Phi})(\lambda) = \big(\mathbf{\Phi}(\lambda)\big)_{1} \cdot \big(\mathbf{\Phi}^{\dagger}(\lambda^{*})\big)^{i+1}, \\
 &p_{-i}(\mathbf{\Phi})(\lambda) = \big(\mathbf{\Phi}(\lambda)\big)_{i+1} \cdot \big(\mathbf{\Phi}^{\dagger}(\lambda^{*})\big)^{1},
    \end{split}
\end{equation}
for $ i = 1, 2 $, where $(\cdot)_{j}$ denotes the $j$-th column and $(\cdot)^{k}$ denotes the $k$-th row of a matrix.  
The associated squared eigenfunctions are given by the off-diagonal entries of these squared eigenfunction matrices
    \begin{equation}\label{squared-eigenfunctions}
        \begin{split}
            &s_{i}(\mathbf{\Phi})=
            \begin{pmatrix}
                \phi_{21}\hat{\phi}_{i+1,1},
                \phi_{31}\hat{\phi}_{i+1,1}, 
                -\phi_{11}\hat{\phi}_{i+1,2},
                -\phi_{11}\hat{\phi}_{i+1,3}
            \end{pmatrix}^{T},\quad \\
            &s_{-i}(\mathbf{\Phi})=\begin{pmatrix}
                \phi_{2,i+1}\hat{\phi}_{11} ,
                \phi_{3,i+1}\hat{\phi}_{11} ,
                -\phi_{1,i+1}\hat{\phi}_{12},
                -\phi_{1,i+1}\hat{\phi}_{13}
            \end{pmatrix}^{T},
        \end{split}
\end{equation}
where $\mathbf{\Phi}=(\phi_{ij})_{1\leq i,j \leq 3}$ and $\mathbf{\Phi}^{\dagger}(\lambda^{*})=(\hat{\phi}_{ij})_{1\leq i,j \leq 3}$. 
The squared eigenfunction matrices satisfy the symmetry
\begin{equation}\label{symmetric-squared-eigenfunction-matrix}
 p_{i}(\mathbf{\Phi})(\lambda) = p_{-i}(\mathbf{\Phi})^{\dagger}(\lambda^{*}),
\end{equation}
which implies that the corresponding squared eigenfunctions obey the symmetry
\begin{equation}\label{symmetric-squared-eigenfunctions}
 s_{i}(\mathbf{\Phi})(\lambda) = -\mathbf{\Sigma} \left(s_{-i}(\mathbf{\Phi})(\lambda^{*})\right)^{*},
\end{equation}
where
\begin{equation*}
    \mathbf{\Sigma} = 
    \begin{pmatrix}
        \mathbf{0}_{2\times 2} & \mathbb{I}_{2} \\
        \mathbb{I}_{2} & \mathbf{0}_{2\times 2}
    \end{pmatrix}.
\end{equation*}

Since the squared eigenfunction matrices are constructed from the column of $\mathbf{\Phi}(\lambda)$, which satisfies the Lax pair, and the row of $\mathbf{\Phi}^{\dagger}(\lambda^{*})$, which satisfies the adjoint Lax pair, they obey the following differential equations:
\begin{align}
    \mathbf{F}_{x}(\lambda) &= [\mathbf{U}(\lambda), \mathbf{F}(\lambda)], \\
    \mathbf{G}_{x}(\lambda) &= -[\mathbf{G}(\lambda), \mathbf{U}(\lambda)],
\end{align}
where the second equation follows from taking the Hermitian transpose and replacing $\lambda$ with $\lambda^{*}$ in the first equation, using the symmetries \eqref{sym-U-V} and \eqref{symmetric-squared-eigenfunction-matrix}.

Differentiating the product $\mathbf{F}(\eta)\mathbf{G}(\lambda)$ yields
\begin{equation*}
 \left(\mathbf{F}(\eta)\mathbf{G}(\lambda)\right)_{x} = \mathbf{U}(\eta)\mathbf{F}(\eta)\mathbf{G}(\lambda)
 - \mathbf{F}(\eta)\mathbf{G}(\lambda)\mathbf{U}(\lambda) + \ii(\lambda - \eta)\mathbf{F}(\eta)\sigma_{3}\mathbf{G}(\lambda).
\end{equation*}
Taking the trace of both sides and applying this identity to the squared eigenfunction matrices gives, for $i,j \in \{\pm 1, \pm 2\}$,
\begin{equation}\label{squ-eig-connect-squ-eig-mat}
    2(\lambda - \eta)\, s_{j}(\mathbf{\Phi})(\eta^{*})^{\dagger} \mathcal{J} \, s_{i}(\mathbf{\Phi})(\lambda) = 
    \partial_{x} \, \mathrm{Tr}\left(p_{j}^{\dagger}(\mathbf{\Phi})(\eta^{*}) p_{i}(\mathbf{\Phi})(\lambda)\right),
\end{equation}
which provides a useful identity for computing the inner product between squared eigenfunctions and 
their adjoint counterparts via the asymptotic behavior of the squared eigenfunction matrices.

In particular, only the squared eigenfunction matrices corresponding to the Lax spectrum described in 
Lemma~\ref{lem-Lax-spectrum} needs to be considered. The squared eigenfunction matrices associated 
with the essential spectrum do not contribute in the negative direction for the second variation of 
the Lyapunov functional (see Theorem~\ref{thm-orthogonal}). 
Therefore, the analysis can be restricted to the squared eigenfunction matrices defined on the point spectrum.
According to Lemma~\ref{lem-Lax-spectrum}, certain squared eigenfunction matrices on the point spectrum 
are linearly independent, since the columns of $\mathbf{\Phi}$ are not linearly dependent.

Now we consider the squared eigenfunction matrices for the $N$-soliton solutions given by the 
fundamental matrix solution $\mathbf{\Phi}^{[N]}$.  
For the squared eigenfunctions evaluated at the point spectrum, by \eqref{Rel-Phi-lambdac}, one has
\begin{align*}
    p_{i}(\mathbf{\Phi}^{[N]})(\lambda_{k})=\mathbf{\Phi}^{[N]}_{1}(\lambda_{k})(\mathbf{\Phi}^{[N]}_{i+1}(\lambda_{k}^{*}))^{\dagger}=c_{ik}\mathbf{\Phi}^{[N]}_{1}(\lambda_{k})(\mathbf{\Phi}^{[N]}_{1}(\lambda_{k}^{*}))^{\dagger},\\
    p_{i}(\mathbf{\Phi}^{[N]})(\lambda_{k}^{*})=\mathbf{\Phi}^{[N]}_{1}(\lambda_{k}^{*})(\mathbf{\Phi}^{[N]}_{i+1}(\lambda_{k}))^{\dagger}=\frac{1}{c_{ik}^{*}}\mathbf{\Phi}^{[N]}_{i+1}(\lambda_{k}^{*})(\mathbf{\Phi}^{[N]}_{i+1}(\lambda_{k}))^{\dagger},
\end{align*}
and similarly,
\begin{align*}
    p_{-i}(\mathbf{\Phi}^{[N]})(\lambda_{k})=\mathbf{\Phi}^{[N]}_{i+1}(\lambda_{k})(\mathbf{\Phi}^{[N]}_{1}(\lambda_{k}^{*}))^{\dagger}=\frac{1}{c_{ik}}\mathbf{\Phi}^{[N]}_{i+1}(\lambda_{k})(\mathbf{\Phi}^{[N]}_{i+1}(\lambda_{k}^{*}))^{\dagger},\\
    p_{-i}(\mathbf{\Phi}^{[N]})(\lambda_{k}^{*})=\mathbf{\Phi}^{[N]}_{i+1}(\lambda_{k}^{*})(\mathbf{\Phi}^{[N]}_{1}(\lambda_{k}))^{\dagger}=c_{ik}^{*}\mathbf{\Phi}^{[N]}_{1}(\lambda_{k}^{*})(\mathbf{\Phi}^{[N]}_{1}(\lambda_{k}))^{\dagger},
\end{align*}
for $i=1,2$ and $k=1,2,\dots, N$. 
From these relations, it follows that for fixed $k$ and $i,j=1,2$,
\begin{equation*}
    p_{i}(\mathbf{\Phi}^{[N]})(\lambda_{k})=\frac{c_{ik}}{c_{jk}}p_{j}(\mathbf{\Phi}^{[N]})(\lambda_{k}). 
\end{equation*}
Moreover, by \eqref{Rel-Phi-lambda}, the following linear relation holds:
\begin{equation*}
    \frac{1}{c_{ik}}p_{i}(\mathbf{\Phi}^{[N]})(\lambda_{k})+\sum_{j=1}^{2}c_{jk}p_{-j}(\mathbf{\Phi}^{[N]})(\lambda_{k})=0.
\end{equation*}
Hence, the span of the squared eigenfunction matrices at $\lambda_{k}$ can be characterized as
\begin{align}\label{inde-lambdai}
    \begin{split}
        &\mathrm{span}\left\{p_{i}(\mathbf{\Phi}^{[N]})(\lambda_{k}),p_{-i}(\mathbf{\Phi}^{[N]})(\lambda_{k}):i=1,2\right\}
    \\=&\mathrm{span}
    \left\{\mathbf{\Phi}^{[N]}_{1}(\lambda_{k})(\mathbf{\Phi}^{[N]}_{1}(\lambda_{k}^{*}))^{\dagger},
    \mathbf{\Phi}^{[N]}_{2}(\lambda_{k})(\mathbf{\Phi}^{[N]}_{1}(\lambda_{k}^{*}))^{\dagger}\right\}
    \end{split}
\end{align}
if $c_{2k}\ne 0$. 
Similarly, at $\lambda_{k}^{*}$,
\begin{align}\label{inde-lambdai-con}
    \begin{split}
        &\mathrm{span}\left\{p_{i}(\mathbf{\Phi}^{[N]})(\lambda_{k}^{*}),p_{-i}(\mathbf{\Phi}^{[N]})(\lambda_{k}^{*}):i=1,2\right\}
    \\=&\mathrm{span}
    \left\{\mathbf{\Phi}^{[N]}_{1}(\lambda_{k}^{*})(\mathbf{\Phi}^{[N]}_{1}(\lambda_{k}))^{\dagger},
    \mathbf{\Phi}^{[N]}_{1}(\lambda_{k}^{*})(\mathbf{\Phi}^{[N]}_{2}(\lambda_{k}))^{\dagger}\right\}
    \end{split}
\end{align}
for $k=1,2,\cdots,N$ if $c_{2k}\ne 0$. 

Now we define the squared eigenfunction matrices and squared eigenfunctions for CNLS equations. 
By Lemma~\ref{lem-critical-point}, it suffices to consider the squared eigenfunction matrices at $t=0$. 
\begin{definition}
 Let $\mathbf{\Phi}^{[N]}(\lambda;x,t)$ be the FMS \eqref{FMS-Nsoliton} associated with the $N$-soliton 
 solutions \eqref{CNLS-Nsoliton} with spectral parameters $\lambda_{k},1\leq k \leq N$.
 We define the squared eigenfunction matrices and squared eigenfunctions as
    \begin{equation}\label{eigenfunction-t0}
        \mathbf{P}_{\pm i}(\lambda;x)=p_{\pm i}(\mathbf{\Phi}^{[N]}|_{t=0}),\quad 
        \mathbf{S}_{\pm i}(\lambda;x)=s_{\pm i}(\mathbf{\Phi}^{[N]}|_{t=0}). 
    \end{equation}
 To distinguish contributions from the essential and point spectra, we introduce the following notation. Let
    \begin{equation}\label{Gamma-set}
        \Gamma=\{k:c_{1k}=0\}, 
    \end{equation}
 we define the set of squared eigenfunctions associated with the essential spectrum by
\begin{align}
    \mathsf{E}_{ess}=&\{ \mathbf{S}_{\pm i}(\lambda;x):i=1,2, \ \lambda\in\sigma_{ess}(\mathcal{L}_{s}) \},\label{E-ess}
\end{align}
and those associated with the point spectrum by
\begin{align}
    \mathsf{E}_{point}=&\{\mathbf{S}_{1}(\lambda_{k};x),\mathbf{S}_{-2}(\lambda_{k};x),\mathbf{S}_{2}(\lambda_{k}^{*};x),
    \mathbf{S}_{-1}(\lambda_{k}^{*};x):k\notin\Gamma\}\cup \label{E-point}
    \\
    &\{\mathbf{S}_{2}(\lambda_{k};x),\mathbf{S}_{-1}(\lambda_{k};x),\mathbf{S}_{1}(\lambda_{k}^{*};x),
    \mathbf{S}_{-2}(\lambda_{k}^{*};x):k\in\Gamma\}, \\
    \hat{\mathsf{E}}_{point}=&\{\mathbf{S}_{1,\lambda}(\lambda_{k};x),\mathbf{S}_{-1,\lambda}(\lambda_{k}^{*};x):k\notin\Gamma\} \cup \{\mathbf{S}_{2,\lambda}(\lambda_{k};x),\mathbf{S}_{-2,\lambda}(\lambda_{k}^{*};x):k\in\Gamma\}. \label{hat-E-point}
\end{align}
\end{definition}
\begin{rem}
 For the coupled NLS equations, the definition of squared eigenfunctions and squared eigenfunction matrices requires more care than in the scalar NLS case.

 In the scalar NLS equation, one can apply an $N$-fold Darboux transformation with spectral parameters $\lambda_k$ and scattering parameters $c_k$ for $k=1,2,\dots,N$. If $c_N = 0$, the $N$-soliton solution degenerates to an $(N-1)$-soliton solution. However, for the coupled NLS equations, differences arise due to the fact that each spectral parameter $\lambda_k$ is associated with two scattering parameters $c_{1k}$ and $c_{2k}$.
    For example, a non-degenerate vector 2-soliton solution \cite{qin_nondegenerate_2019} can arise even when $c_{12} = c_{21} = 0$, provided $c_{11}, c_{22} \in \mathbb{C} \setminus \{0\}$.

 Regarding the squared eigenfunction matrices, if $k \in \Gamma$, then $\mathbf{S}_{1}(\lambda_k) = 0$, and 
 we must instead use $\mathbf{S}_{2}(\lambda_k)$ in the definitions \eqref{E-point} and \eqref{hat-E-point}. 
 Moreover, the matrix $\mathbf{\Phi}_{3}^{[N]}(\lambda_{k})$ is linearly dependent on $\mathbf{\Phi}_{1}^{[N]}(\lambda_{k})$, so it is preferable to use $\mathbf{S}_{-1}(\lambda_k)$ rather than $\mathbf{S}_{-2}(\lambda_k)$. A similar argument applies to $\lambda=\lambda_{k}^{*}$.
 An alternative approach to handle the degenerate case has been proposed in~\cite{ling2024stability}.
\end{rem}
In the next section, we will introduce the asymptotic behavior of \eqref{eigenfunction-t0} in order to evaluate the integrals between the squared eigenfunctions and the adjoint squared eigenfunctions via \eqref{squ-eig-connect-squ-eig-mat}. Furthermore, using \eqref{FMS-Nsoliton}, the squared eigenfunctions in \eqref{eigenfunction-t0} can be expressed in terms of the Darboux matrix as
\begin{equation}\label{squ-eig-t-0}
    \mathbf{P}_{\pm i}={\rm e}^{\pm 2{\rm i}\lambda x}p_{\pm i}(\mathbf{D}^{[N]}|_{t=0}), \quad
    \mathbf{S}_{\pm i}={\rm e}^{\pm 2{\rm i}\lambda x}s_{\pm i}(\mathbf{D}^{[N]}|_{t=0}). 
\end{equation}
Therefore, it suffices to analyze the asymptotic behavior of the Darboux matrix.

\section{Spectral analysis and nonlinear stability of CNLS solitons}\label{sec-stability-CNLS}
In this section, we aim to characterize the number of negative eigenvalues and describe the kernel of the second variation of the Lyapunov functional \eqref{Lya}, given by
\begin{equation}\label{op-L}
    \mathcal{L}=\sum_{n=0}^{2N}\mu_{n}\frac{\delta^{2}\mathcal{H}_{n}}{\delta \mathbf{q}^{2}}
\end{equation}
where $\mathcal{L}$ is a self-adjoint differential operator of order $2N$ and prove the nonlinear stability 
of solitons for the CNLS equations.
The essential spectrum of $\mathcal{L}$ can be determined directly via Weyl’s essential spectrum theorem, so it remains to analyze the point spectrum.
Although determining the full spectrum of $\mathcal{L}$ is difficult due to the complexity of the expression 
\eqref{op-L}, the number of negative eigenvalues and the structure of the kernel can nevertheless 
be characterized in this section.

To overcome this difficulty, we introduce an auxiliary operator $\mathcal{J} = -\ii\mathrm{diag}(\mathbb{I}_2,-\mathbb{I}_2)$ and analyze the spectrum of the operator $\mathcal{J}\mathcal{L}$ instead of $\mathcal{L}$ directly. This approach enables us to determine the number of negative eigenvalues and the dimension of the kernel of $\mathcal{L}$, which is sufficient to establish the nonlinear stability of the $N$-soliton solutions.

The kernel of the operator $\mathcal{L}$ coincides with that of $\mathcal{J}\mathcal{L}$ since the auxiliary 
operator $\mathcal{J}$ is invertible.
The main objective of this section is to determine the number of negative eigenvalues of $\mathcal{L}$, 
which is more involved than computing its kernel. We introduce the negative cone
\begin{equation*}
    \mathcal{N}=\{\mathbf{z}:(\mathcal{L}\mathbf{z},\mathbf{z})<0\},
\end{equation*}
and define $\mathrm{n}(\mathcal{L})$ as the number of negative eigenvalues of $\mathcal{L}$. According 
to \cite{kapitula_counting_2004,kapitula_spectral_2013}, $\mathrm{n}(\mathcal{L})$ equals 
the dimension $\dim(\mathcal{N})$ of the maximal subspace contained in $\mathcal{N}$.
Since the squared eigenfunctions form a complete basis of $L^{2}$ \cite{yang_nonlinear_2010}, 
the value of $\dim(\mathcal{N})$ can be computed using the quadratic form $(\mathcal{L} \cdot, \cdot)$ 
restricted to the set of squared eigenfunctions.

Moreover, since the squared eigenfunctions satisfy the spectral problem associated with $\mathcal{J}\mathcal{L}$, it is sufficient to evaluate the quadratic form $(\mathcal{J}^{-1}\cdot, \cdot) = -(\mathcal{J} \cdot, \cdot)$. Define
\begin{equation}\label{omega}
    \omega(\mathbf{f},\mathbf{g})=\int_{\mathbb{R}}\mathbf{f}^{\dagger}\mathcal{J}\mathbf{g}\mathrm{d}x, 
\end{equation}
so that
\begin{equation*}
    (\mathbf{f},\mathcal{J}\mathbf{g})=\mathrm{Re} \ \omega(\mathbf{f},\mathbf{g}). 
\end{equation*}
The quantity $\omega(\cdot, \cdot)$ evaluated on the span of the squared eigenfunctions can be computed using the asymptotic behavior of the squared eigenfunction matrices via \eqref{squ-eig-connect-squ-eig-mat}. This calculation is equivalent to evaluating the integral between the squared eigenfunctions and their adjoint counterparts, which will be carried out at the beginning of this section.

We consider the operator $\mathcal{L}$ in the real Hilbert space $\mathrm{X}$ defined by
\begin{equation*}
    \mathrm{X}=\{(u_{1},u_{2},u_{1}^{*},u_{2}^{*}):u_{1},u_{2}\in L^{2}(\mathbb{R},\mathbb{C})\}
\end{equation*}
which can be identified with the Hilbert space $L^{2}(\mathbb{R}, \mathbb{C}^{2})$ 
under the inner product
\begin{equation*}
    (\mathbf{f},\mathbf{g})=\mathrm{Re}\int_{\mathbb{R}}\mathbf{f}^{\dagger}\mathbf{g}\mathrm{d}x. 
\end{equation*}
Any operator $\mathcal{A}$ on $L^{2}(\mathbb{R}, \mathbb{C}^{2})$ can be extended to an operator $\mathcal{A}'$ on $\mathrm{X}$ via
\begin{equation*}
    \mathcal{A}'\begin{pmatrix}
        \mathbf{u} \\
        \mathbf{u}^{*}
    \end{pmatrix}= \begin{pmatrix}
        \mathcal{A}\mathbf{u} \\
        (\mathcal{A}\mathbf{u})^{*}
    \end{pmatrix}. 
\end{equation*}
We decompose the operator $\mathcal{L}$ acting on a function $\mathbf{u}$ into 
its $\mathbf{u}$ and $\mathbf{u}^{*}$ components as
\begin{equation}
    \mathcal{L}\mathbf{u}= \mathcal{L}_{1}\mathbf{u}+\mathcal{L}_{2}\mathbf{u}^{*},
\end{equation}
so that the operator $\mathcal{L}$ on $\mathrm{X}$ admits the matrix representation
\begin{equation}\label{matrix-L}
    \mathcal{L}=\begin{pmatrix}
        \mathcal{L}_{1} & \mathcal{L}_{2} \\
        \mathcal{L}_{2}^{*} & \mathcal{L}_{1}^{*}
    \end{pmatrix},
\end{equation}
where we have used the same notation $\mathcal{L}$ by abuse of notation. 

\subsection{The integral between squared eigenfunctions and adjoint squared eigenfunctions}
In this subsection, the integral between the squared eigenfunctions \eqref{squ-eig-t-0} and 
their adjoint eigenfunctions (obtained by multiplying the left side by $\mathcal{J}$) on the Lax spectrum 
is examined.
Define
\begin{equation}\label{JL-eig}
    \hat{\mathsf{E}}=\mathsf{E}_{ess}\cup\mathsf{E}_{point}\cup \hat{\mathsf{E}}_{point},\quad 
    \mathsf{E}=\mathsf{E}_{ess}\cup\mathsf{E}_{point}.
\end{equation}
The main result of this subsection is the evaluation of the integral
\begin{equation}\label{integral-fg}
    \int_{\mathbb{R}}\mathbf{f}^{\dagger}(\lambda;x) \mathbf{g}(\lambda',x)\mathrm{d}x
\end{equation}
with $\mathbf{f}\in \hat{\mathsf{E}}$ and $\mathbf{g}\in\mathcal{J}\mathsf{E}$, and 
$\lambda,\lambda'\in \sigma(\mathcal{L}_{s})$ as in Lemma~\ref{lem-Lax-spectrum}. 
Note that the space $\mathcal{J}\mathsf{E}=\{\mathcal{J}\mathbf{f}:\mathbf{f}\in \mathsf{E}\}$ is used, 
and the case $\mathbf{f}\in \hat{\mathsf{E}}$ and $\mathbf{g}\in\mathcal{J}\hat{\mathsf{E}}$ in 
\eqref{integral-fg} can also be obtained. However, it suffices to consider the case 
$\mathbf{g}\in\mathcal{J}\mathsf{E}$. 

The following theorem holds:
\begin{thm}\label{thm-orthogonal}
    The Hermitian inner product between the squared eigenfunctions and the adjoint squared eigenfunctions 
    can be expressed as the derivative of the trace of the product of the corresponding squared 
    eigenfunction matrices:
    \begin{equation}\label{squ-eig-connect-squ-eig-mat-2}
        \mathbf{S}_{i}^{\dagger}(\eta^{*};x)\mathcal{J}\mathbf{S}_{j}(\lambda;x)=\frac{1}{2(\lambda-\eta)}
        \mathrm{Tr}\left( \mathbf{P}_{i}^{\dagger}(\eta^{*};x)\mathbf{P}_{j}(\lambda;x) \right)_{x}
    \end{equation}
    for $\lambda,\eta \in \mathbb{C}$ and $i,j = \pm 1, \pm 2$. As a consequence, the orthogonality 
    conditions on the set $\mathsf{E}$ in \eqref{JL-eig} follow. 

    For spectral parameters in the essential spectrum of the Lax operator, the squared eigenfunctions 
    and their adjoint counterparts belong to $L^{\infty}$ and satisfy
    \begin{equation}\label{ort-ess}
        \int_{\mathbb{R}}\mathbf{S}_{i}^{\dagger}(\lambda;x)\mathcal{J}\mathbf{S}_{j}(\lambda';x)\mathrm{d}x=-
        \int_{\mathbb{R}}\mathbf{S}_{-j}^{\dagger}(\lambda';x)\mathcal{J}\mathbf{S}_{-i}(\lambda;x)\mathrm{d}x=
        \ii \pi |\mathcal{P}(\lambda)|^{4}\delta(\lambda-\lambda')\delta_{ij}
    \end{equation}
    and 
    \begin{equation*}
        \int_{\mathbb{R}}\mathbf{S}_{i}^{\dagger}(\lambda;x)\mathcal{J}\mathbf{S}_{-j}(\lambda';x)\mathrm{d}x=0
    \end{equation*}
    for $\lambda,\lambda'\in \sigma_{ess}(\mathcal{L}_{s})$ and $i,j\in\{ 1, 2\}$.

    The integral between the squared eigenfunctions on the essential spectrum and the adjoint 
    squared eigenfunctions on the point spectrum vanish:
    \begin{equation}\label{ort-dis-1}
        \int_{\mathbb{R}}\mathbf{S}_{i}^{\dagger}(\lambda';x)\mathcal{J}\mathbf{S}_{j}(\lambda;x)\mathrm{d}x=0
    \end{equation}
    for $\lambda'\in\sigma_{ess}(\mathcal{L}_{s})$ and $\lambda\in\sigma_{point}(\mathcal{L}_{s})$. 

    For the squared eigenfunctions on the point spectrum of the Lax operator, the only nontrivial terms are
    \begin{equation}\label{ort-dis-3}
        \int_{\mathbb{R}}\mathbf{S}_{-i,\lambda}^{\dagger}(\lambda_{k}^{*})\mathcal{J}\mathbf{S}_{j}(\lambda_{k})
        \mathrm{d}x=
        \left(\int_{\mathbb{R}}\mathbf{S}_{j,\lambda}^{\dagger}(\lambda_{k})\mathcal{J}\mathbf{S}_{-i}(\lambda_{k}^{*})
        \mathrm{d}x\right)^{*}=-\frac{1}{2}c_{ik}c_{jk}\hat{\mathcal{P}}_{\lambda}(\lambda_{k})^{2}
        \mathcal{P}(\lambda_{k})^{2}
    \end{equation}
    for $k=1,2,\cdots,N$ and $i,j=1,2$. 
    All other terms vanish, i.e.,
    \begin{equation}\label{ort-dis-4}
        \int_{\mathbb{R}}\mathbf{f}^{\dagger}(\lambda;x) \mathcal{J}\mathbf{g}(\lambda';x)\mathrm{d}x=0
    \end{equation}
    for 
    \begin{align*}
        \mathbf{f}\in&\{\mathbf{S}_{i,\lambda}(\lambda_{k}),\mathbf{S}_{-i,\lambda}(\lambda_{k}^{*}):
        i=1,2,\ k=1,2,\cdots,N\}\\
        \mathbf{g}\in&\{\mathbf{S}_{\pm i}(\lambda):i=1,2, \ \lambda \in \sigma_{point}(\mathcal{L}_{s})\}
    \end{align*}
    except for the pairs $(\mathbf{f},\mathbf{g})\ne (\mathbf{S}_{-i,\lambda}(\lambda_{k}^{*}),\mathbf{S}_{j}(\lambda_{k})),
    (\mathbf{S}_{j,\lambda}(\lambda_{k}),\mathbf{S}_{-i}(\lambda_{k}^{*}))$ with $k=1,2,\cdots, N$ and 
    $i,j=1,2$. 
\end{thm}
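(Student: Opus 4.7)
The plan is to derive every formula in the theorem from the single identity \eqref{squ-eig-connect-squ-eig-mat-2} by integrating over $\mathbb{R}$ and then reading off the boundary contributions from the asymptotics of the Darboux matrix. Identity \eqref{squ-eig-connect-squ-eig-mat-2} itself is just the specialization $t=0$ of the already established relation \eqref{squ-eig-connect-squ-eig-mat}, so only the Hermitian pairing on the left needs to be identified against the definitions \eqref{squared-eigenfunctions}, \eqref{eigenfunction-t0} of $\mathbf{S}_{\pm i}$ and the explicit form of $\mathcal{J}$; this is a direct componentwise check. Integration then yields
$$2(\lambda-\eta)\int_{\mathbb{R}}\mathbf{S}_{i}^{\dagger}(\eta^{*};x)\mathcal{J}\mathbf{S}_{j}(\lambda;x)\,\mathrm{d}x = \Bigl[\mathrm{Tr}\bigl(\mathbf{P}_{i}^{\dagger}(\eta^{*};x)\mathbf{P}_{j}(\lambda;x)\bigr)\Bigr]_{x\to-\infty}^{x\to+\infty},$$
so everything reduces to the large-$|x|$ behavior of $\mathbf{D}^{[N]}(\lambda;x,0)$, which by \eqref{DT-Nsoliton} tends to a diagonal matrix as $x\to\pm\infty$ with normalization given by $\mathcal{P}(\lambda)$ and phase factors $\mathrm{e}^{\pm 2\ii\lambda x}$ read off from \eqref{squ-eig-t-0}.

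For essential-spectrum parameters $\lambda,\lambda'\in\mathbb{R}$, both $\mathbf{P}_j(\lambda)$ and $\mathbf{P}_i(\lambda')$ remain oscillatory at infinity. The trace at $\pm\infty$ reduces to constants times $\mathrm{e}^{\pm 2\ii(\lambda-\lambda')x}$, and dividing the difference by $\lambda-\lambda'$ produces the delta-function pairing via the Sokhotski--Plemelj identity; counting the two factors $\mathcal{P}(\lambda)$ and $\mathcal{P}(\lambda')^{*}$ in the Darboux asymptotics (each squared because we take products of four columns/rows) gives the coefficient $|\mathcal{P}(\lambda)|^{4}$ in \eqref{ort-ess}. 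The second equality in \eqref{ort-ess} follows from the symmetry \eqref{symmetric-squared-eigenfunctions} together with the elementary identity $\mathbf{\Sigma}\mathcal{J}\mathbf{\Sigma}=-\mathcal{J}$. The vanishing of the $\mathbf{S}_i$--$\mathbf{S}_{-j}$ integrals on the essential spectrum holds because the boundary exponents now combine as $\mathrm{e}^{\pm 2\ii(\lambda+\lambda')x}$, which have no real resonance.

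For the mixed case \eqref{ort-dis-1}, the point-spectrum column decays exponentially at both infinities (Lemma~\ref{lem-Lax-spectrum} places $\mathbf{\Phi}_1^{[N]}(\lambda_k)$ and $\mathbf{\Phi}_1^{[N]}(\lambda_k^{*})$ in the Schwartz class), so the entire boundary term is zero. For the purely point-spectrum pairings, whenever $\lambda\neq\eta$ on the discrete set the same decay kills the right side, which proves the generic vanishing in \eqref{ort-dis-4}. The only nontrivial case is when the singular denominator $\lambda-\eta$ and the numerator vanish simultaneously, which forces a differentiation in $\lambda$ of one of the squared eigenfunctions; this is precisely the origin of the $\mathbf{S}_{\pm i,\lambda}$ terms in $\hat{\mathsf{E}}_{point}$. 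I would expand \eqref{squ-eig-connect-squ-eig-mat-2} with $\eta = \lambda_k + \epsilon$ to first order in $\epsilon$, and use the rank-one structure described in \eqref{inde-lambdai}--\eqref{inde-lambdai-con} together with the scattering constraints \eqref{Rel-Phi-lambda}--\eqref{Rel-Phi-lambdac} to extract the constants $c_{ik}c_{jk}\hat{\mathcal{P}}_{\lambda}(\lambda_k)^{2}\mathcal{P}(\lambda_k)^{2}$ in \eqref{ort-dis-3}; the complex-conjugate half of \eqref{ort-dis-3} then comes from \eqref{symmetric-squared-eigenfunction-matrix}.

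The hard part will be the residue bookkeeping at $\lambda=\lambda_k^{*}$: the Darboux prefactor $\mathcal{P}(\lambda)$ in \eqref{DT-Nsoliton} has a simple zero there that exactly cancels the simple pole of the summand $1/(\lambda-\lambda_k^{*})$, so the boundary values of $\mathbf{P}_j(\lambda_k)$ at $x=+\infty$ are governed by $\hat{\mathcal{P}}_{\lambda}(\lambda_k)$ rather than by $\mathcal{P}(\lambda_k)=0$. Tracking these residues together with which scattering coefficient $c_{ik}$ is carried by each column of $\mathbf{\Phi}^{[N]}$ at a given point eigenvalue (and whether $k\in\Gamma$ or not) is exactly what produces the specific prefactors in \eqref{ort-dis-3} and the vanishing of every other derivative-versus-eigenfunction pairing in \eqref{ort-dis-4}; this is the only place where the proof is not a direct application of the master identity.
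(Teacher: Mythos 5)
Your overall strategy coincides with the paper's: integrate the master identity \eqref{squ-eig-connect-squ-eig-mat-2} over $\mathbb{R}$, reduce everything to boundary values of $\mathrm{Tr}\bigl(\mathbf{P}_{i}^{\dagger}\mathbf{P}_{j}\bigr)$ at $x\to\pm\infty$, obtain the delta function on the essential spectrum from a sine-kernel limit, kill the mixed and generic point-spectrum pairings by exponential decay, and extract the nontrivial constants in \eqref{ort-dis-3} by differentiating in the spectral parameter at coincident eigenvalues. However, there is a genuine gap in your treatment of the essential spectrum. You assert that $\mathbf{D}^{[N]}(\lambda;x,0)$ tends to a diagonal matrix at spatial infinity; this is false. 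From the asymptotic columns $\mathbf{D}^{[N]}_{i+1,+}$ one sees that the lower-right $2\times 2$ block of the limit at $x\to+\infty$ is $\mathcal{P}(\lambda)\delta_{ij}-\sum_{s,r}c_{ir}^{*}G_{+}^{rs}(\lambda)c_{js}$, a generically non-diagonal constant matrix. Consequently $\mathrm{Tr}\bigl(\mathbf{P}_{i}^{\dagger}(\lambda)\mathbf{P}_{j}(\lambda')\bigr)$ at $+\infty$ is \emph{not} simply $|\mathcal{P}|^{4}{\rm e}^{2\ii(\lambda'-\lambda)x}\delta_{ij}$: there is a correction $\mathcal{G}_{ij}(\lambda,\lambda')$, and one must prove that it vanishes on the diagonal $\lambda=\lambda'$ (so that it is divisible by $\lambda-\lambda'$ and its contribution, after division by $2(\lambda'-\lambda)$, is a bounded oscillation that dies in the distributional limit). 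This is the content of the paper's Lemma~\ref{essential-part-1}, whose proof is a nontrivial residue computation resting on the rank-one factorization $\mathbf{G}_{k}=\mathbf{c}_{k}^{*}\mathbf{d}_{k}^{T}$ and the identity \eqref{DT+}. Without it, neither the Kronecker $\delta_{ij}$ nor the coefficient $\ii\pi|\mathcal{P}(\lambda)|^{4}$ in \eqref{ort-ess} is justified, and spurious delta contributions for $i\ne j$ are not excluded.

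A secondary shortfall: for \eqref{ort-dis-3} you correctly locate where the $\lambda$-derivatives enter, but ``rank-one structure plus the scattering constraints'' is not enough of a roadmap to reach the stated value $-\tfrac12 c_{ik}c_{jk}\hat{\mathcal{P}}_{\lambda}(\lambda_{k})^{2}\mathcal{P}(\lambda_{k})^{2}$. The paper's computation hinges on several Cauchy-determinant identities that your plan does not anticipate, notably $\sum_{s}\mathcal{K}^{s}(\lambda)=\mathcal{P}(\lambda)-\hat{\mathcal{P}}(\lambda)$ (whence $\mathcal{P}_{\lambda}(\lambda_{k})-\sum_{s}\mathcal{K}^{s}_{\lambda}(\lambda_{k})=\hat{\mathcal{P}}_{\lambda}(\lambda_{k})$), the identity $\sum_{s}G_{-}^{ks}(\lambda_{k}^{*})=-\mathcal{P}(\lambda_{k})^{*}$, and the determinant relation \eqref{Fr-F}. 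You should either supply these or an equivalent closed-form evaluation; the residue cancellation you flag as ``the hard part'' is real, but it is these identities, not the cancellation of $\mathcal{P}(\lambda_{k}^{*})$ against the pole at $\lambda_{k}^{*}$, that produce the final constant.
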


The proof begins with formula~\eqref{squ-eig-connect-squ-eig-mat-2}, which 
is derived from~\eqref{squ-eig-connect-squ-eig-mat}.
To establish \eqref{ort-ess}–\eqref{ort-dis-4}, both sides of~\eqref{squ-eig-connect-squ-eig-mat-2} 
are integrated over $\mathbb{R}$. 
Thus, it suffices to analyze the asymptotic behavior of the squared eigenfunction matrices 
as $x \to \pm\infty$ for spectral parameters in the Lax spectrum to determine the integrals on 
the left-hand side of~\eqref{squ-eig-connect-squ-eig-mat-2}.
Because the squared eigenfunctions and their associated matrices have complicated expressions, 
a sequence of matrix functions is introduced to simplify them in the limit $x \to \pm\infty$ 
before proving Theorem~\ref{thm-orthogonal}.

Since the squared eigenfunction matrices are constructed via the FMS, it is sufficient 
to consider the asymptotic behavior of the Darboux matrix~\eqref{DT-Nsoliton} at $t=0$.
For convenience, and with a slight abuse of notation, the Darboux matrix at $t=0$ is denoted by
\begin{equation}\label{DT-Nsoliton-t0}
\mathbf{D}^{[N]}(\lambda;x) = \mathbf{D}^{[N]}(\lambda;x,0).
\end{equation}
For the matrix $\mathbf{M}$ in~\eqref{matrix-M} and its inverse $\mathbf{m}$ at $t=0$, define
\begin{equation*}
    \begin{split}
        F_{+}&=\det\left(
        \frac{\lambda_{i}-\lambda_{i}^{*}}{\lambda_{j}-\lambda_{i}^{*}}\mathbf{c}_{i}^{\dagger}\mathbf{c}_{j}
    \right)_{1\leq i,j \leq N},\quad  F_{+}^{rs}=\det\left(
        \frac{\lambda_{i}-\lambda_{i}^{*}}{\lambda_{j}-\lambda_{i}^{*}}\mathbf{c}_{i}^{\dagger}\mathbf{c}_{j}
    \right)_{1\leq i,j \leq N,i\ne r,j\ne s}, \\
    F_{-}&=\det\left(
        \frac{\lambda_{i}-\lambda_{i}^{*}}{\lambda_{j}-\lambda_{i}^{*}}
    \right)_{1\leq i,j \leq N},\quad  F_{-}^{rs}=\det\left(
        \frac{\lambda_{i}-\lambda_{i}^{*}}{\lambda_{j}-\lambda_{i}^{*}}
    \right)_{1\leq i,j \leq N,i\ne r,j\ne s}
    \end{split}
\end{equation*}
and 
\begin{equation*}
    \begin{split}
        &\mathbf{M}_{+}= \left(
            \frac{\lambda_{s}-\lambda_{s}^{*}}{\lambda_{r}-\lambda_{s}^{*}}
                \mathbf{c}_{s}^{\dagger}\mathbf{c}_{r}{\rm e}^{-{\rm i}x(a_{r}-a_{s})+x(b_{s}+b_{r})}
        \right)_{1\leq s,r\leq N}, \\&
        \mathbf{M}_{-}= \left(
            \frac{\lambda_{s}-\lambda_{s}^{*}}{\lambda_{r}-\lambda_{s}^{*}}
                {\rm e}^{{\rm i}x(a_{r}-a_{s})-x(b_{s}+b_{r})}
        \right)_{1\leq s,r\leq N},\\
        &\mathbf{m}_{\pm}=\frac{1}{F_{\pm}}\left(
            (-1)^{s+r}F_{\pm}^{rs}{\rm e}^{\pm{\rm i}x(a_{s}-a_{r})}{\rm e}^{\mp x(b_{s}+b_{r})}
        \right)_{1\leq s,r \leq N}.
    \end{split}
\end{equation*}
The matrix $\mathbf{M}$ and 
its inverse at $t=0$ are given by
\begin{equation*}
    \mathbf{M}|_{t=0} = \mathbf{M}_{\pm} + \left(\mathcal{O}({\rm e}^{\mp (b_{i}+b_{j}) x})\right)_{1\leq i,j\leq N},\quad 
    \mathbf{m}|_{t=0} =\mathbf{m}_{\pm}+\left(o({\rm e}^{\mp (b_{i}+b_{j})x})\right)_{1\leq i,j\leq N}
\end{equation*}
as $x\to\pm\infty$. 
We note that the determinants $F_{-}$ and $F_{-}^{rs}$ are Cauchy determinants and can be expressed as
\begin{align*}
    F_{-}&=\prod_{i=1}^{N}(2{\rm i}b_{i})
    \frac{\prod_{i=2}^{N}\prod_{j=1}^{i-1}(\lambda_{i}-\lambda_{j})(\lambda_{j}^{*}-\lambda_{i}^{*})}
    {\prod_{i,j=1}^{N}(\lambda_{i}-\lambda_{j}^{*})},\\
    F_{-}^{rs}&= \prod_{i\ne r}^{N}(2{\rm i}b_{i})
    \frac{\prod_{i=2,i\ne s}^{N}\prod_{j=1,j\ne s}^{i-1}(\lambda_{i}-\lambda_{j})
    \prod_{i=2,i\ne r}^{N}\prod_{j=1,j\ne r}^{i-1}(\lambda_{j}^{*}-\lambda_{i}^{*})}
    {\prod_{i,j=1,i\ne s,j\ne r}^{N}(\lambda_{i}-\lambda_{j}^{*})}.
\end{align*}
The asymptotic Darboux matrix is defined by
\begin{equation*}
    \mathbf{D}_{\pm}^{[N]}(\lambda;x)=\mathcal{P}(\lambda)-\sum_{s,r=1}^{N}\frac{\mathcal{P}(\lambda)}{\lambda-\lambda_{r}^{*}}
    (\lambda_{r}-\lambda_{r}^{*})m_{\pm,sr}{\rm e}^{{\rm i}\lambda_{s}x\sigma_{3}}\begin{pmatrix}
        1 \\ \mathbf{c}_{s}
    \end{pmatrix}
    \begin{pmatrix}
        1 & \mathbf{c}_{r}^{\dagger}
    \end{pmatrix}
    {\rm e}^{-{\rm i}\lambda_{r}^{*}x\sigma_{3}}. 
\end{equation*}
The Darboux matrix \eqref{DT-Nsoliton} has the asymptotic expression
\begin{equation*}
    \mathbf{D}^{[N]}(\lambda;x)= \mathbf{D}_{\pm}^{[N]}(\lambda;x)+
    \sum_{s,r=1}^{N}\frac{\mathcal{P}(\lambda)}{\lambda-\lambda_{r}^{*}}
    {\rm e}^{-b_{s}x\sigma_{3}}o({\rm e}^{\mp x(b_{s}+b_{r})})
    {\rm e}^{-b_{r}x\sigma_{3}},\quad x\to \pm\infty. 
\end{equation*}
To simplify the notation, we denote 
\begin{equation*}
    G_{\pm}^{rs}(\lambda)=\frac{\mathcal{P}(\lambda)}{\lambda-\lambda_{r}^{*}}
    \frac{2{\rm i}b_{r}}{F_{\pm}}(-1)^{r+s}F_{\pm}^{rs}, 
\end{equation*}
then the asymptotic Darboux matrix has the representation
\begin{equation}\label{asy-DT-matrix}
    \mathbf{D}_{\pm}^{[N]}(\lambda;x)=(\mathbf{D}_{1,\pm}^{[N]}(\lambda;x),\mathbf{D}_{2,\pm}^{[N]}(\lambda;x),\mathbf{D}_{3,\pm}^{[N]}(\lambda;x))
\end{equation}
with
\begin{align*}
    \mathbf{D}^{[N]}_{1,+}(\lambda;x)&=\mathcal{P}(\lambda)\mathbf{e}_{1}-\sum_{s,r=1}^{N}G_{+}^{rs}(\lambda)\begin{pmatrix}
        {\rm e}^{2{\rm i}(a_{s}-a_{r})x-2(b_{s}+b_{r})x} \\
        \mathbf{c}_{s}{\rm e}^{-2{\rm i}a_{r}x-2b_{r}x}
    \end{pmatrix},\\
    \mathbf{D}^{[N]}_{1,-}(\lambda;x)&=\mathcal{P}(\lambda)\mathbf{e}_{1}-\sum_{s,r=1}^{N}G_{-}^{rs}(\lambda)\begin{pmatrix}
        1 \\
        \mathbf{c}_{s}{\rm e}^{-2{\rm i}a_{s}x+2b_{s}x}
    \end{pmatrix},\\
    \mathbf{D}^{[N]}_{i+1,+}(\lambda;x)&=\mathcal{P}(\lambda)\mathbf{e}_{i+1}-\sum_{s,r=1}^{N}c_{ir}^{*}G_{+}^{rs}(\lambda)\begin{pmatrix}
        {\rm e}^{2{\rm i}a_{s}x-2b_{s}x} \\ \mathbf{c}_{s}
    \end{pmatrix},\\
    \mathbf{D}^{[N]}_{i+1,-}(\lambda;x)&=\mathcal{P}(\lambda)\mathbf{e}_{i+1}-\sum_{s,r=1}^{N}c_{ir}^{*}G_{-}^{rs}(\lambda)\begin{pmatrix}
        {\rm e}^{2{\rm i}a_{r}x+2b_{r}x} \\
        \mathbf{c}_{s}{\rm e}^{-2{\rm i}(a_{s}-a_{r})x+2(b_{s}+b_{r})x}
    \end{pmatrix}
\end{align*}
where $\mathbf{e}_{i}$ denotes the $i$-th standard basis column vector (with the $i$-th component equal to $1$). 
The asymptotic squared eigenfunction matrices can now be considered.
Denote 
\begin{equation}\label{asy-P-hat-1}
    \hat{\mathbf{P}}_{i}^{\pm}(\lambda;x)=\mathbf{D}^{[N]}_{1,\pm}(\lambda;x)
    \mathbf{D}^{[N]}_{i+1,\pm}(\lambda^{*};x)^{\dagger}
\end{equation}
for $i=1,2$ and
\begin{equation}\label{asy-P-hat-2}
    \hat{\mathbf{P}}_{-i}^{\pm}(\lambda;x)=\left(\hat{\mathbf{P}}_{i}^{\pm}(\lambda^{*};x)\right)^{\dagger}.
\end{equation}
Note that we have
\begin{equation*}
    \mathcal{K}^{s}(\lambda)=\sum_{r=1}^{N}G_{-}^{rs}(\lambda)=
    \frac{\prod_{i=1}^{N}(\lambda_{s}-\lambda_{i}^{*})}{\prod_{i\ne s}^{N}(\lambda_{s}-\lambda_{i})}
    \prod_{i\ne s}^{N}(\lambda-\lambda_{i})
\end{equation*}
since  $\mathcal{K}^{s}(\lambda)$ is a polynomial of degree $N-1$ and 
\begin{equation}\label{root-Ks}
    \mathcal{K}^{s}(\lambda_{j})=\mathcal{P}(\lambda_{j})\delta_{sj}
\end{equation}
in view of 
\begin{equation*}
    \sum_{r=1}^{N}(-1)^{r+s}\frac{F_{-}^{rs}}{F_{-}}
    \frac{\lambda_{r}-\lambda_{r}^{*}}{\lambda_{j}-\lambda_{r}^{*}}
    =\delta_{sj}, 
\end{equation*}
which is given by $\mathbf{m}_{\pm}\mathbf{M}_{\pm}=\mathbb{I}_{N}$. 
Then the squared eigenfunction matrices admit the following asymptotic expansion
\begin{equation}\label{asy-square-eig-1}
    \mathbf{P}_{i}(\lambda;x)\sim{\rm e}^{2{\rm i}\lambda x}\hat{\mathbf{P}}_{i}^{\pm}(\lambda;x),\quad 
    \mathbf{P}_{-i}(\lambda;x)\sim{\rm e}^{-2{\rm i}\lambda x}\hat{\mathbf{P}}_{-i}^{\pm}(\lambda;x)
\end{equation}
for $i=1,2$ as $x\to\pm\infty$. 

Before proving Theorem~\ref{thm-orthogonal}, we first establish a lemma needed for the 
proof of \eqref{ort-ess}.
\begin{lem}\label{essential-part-1}
    For $i,j \in \{1,2\}$ and $\lambda,\lambda' \in \mathbb{R}$, the following identities hold:
    \begin{equation*}
        \mathrm{Tr}\left(
            (\hat{\mathbf{P}}_{i}^{+}(\lambda;x))^{\dagger}\hat{\mathbf{P}}_{j}^{+}(\lambda';x)
        \right)=\mathcal{P}(\lambda)\hat{\mathcal{P}}(\lambda)\mathcal{P}(\lambda')\hat{\mathcal{P}}(\lambda')(\delta_{ij}-
        \mathcal{G}_{ij}(\lambda,\lambda'))
    \end{equation*}
    where $\mathcal{G}_{ij}(\lambda,\lambda')$ is given by the ratio of two polynomials in 
    $\lambda$ and $\lambda'$ and satisfies 
    $\mathcal{G}_{ij}(\lambda,\lambda) = 0$.
\end{lem}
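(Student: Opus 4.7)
The plan is to reduce the trace to a product of matrix elements of the single matrix $\mathbf{D}_{+}^{[N]}(\lambda;x)^{\dagger}\mathbf{D}_{+}^{[N]}(\lambda';x)$ and then to exploit a Hermitian unitarity of the Darboux transformation at real spectral parameters to extract the claimed polynomial prefactor together with the vanishing of $\mathcal{G}_{ij}$ on the diagonal.

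First, starting from the outer product $\hat{\mathbf{P}}_i^{+}(\lambda;x)=\mathbf{D}_{1,+}^{[N]}(\lambda;x)\mathbf{D}_{i+1,+}^{[N]}(\lambda^{*};x)^{\dagger}$ and using the cyclic property of the trace, the column/row contractions of the inner factors produce scalars and I would obtain
\begin{equation*}
\mathrm{Tr}\bigl((\hat{\mathbf{P}}_i^{+}(\lambda))^{\dagger}\hat{\mathbf{P}}_j^{+}(\lambda')\bigr)
=\bigl[\mathbf{D}_{+}^{[N]}(\lambda)^{\dagger}\mathbf{D}_{+}^{[N]}(\lambda')\bigr]_{1,1}\cdot
\bigl[\mathbf{D}_{+}^{[N]}(\lambda')^{\dagger}\mathbf{D}_{+}^{[N]}(\lambda)\bigr]_{j+1,i+1}.
\end{equation*}
This reduces the problem to understanding the single matrix $\mathbf{D}_{+}^{[N]}(\lambda)^{\dagger}\mathbf{D}_{+}^{[N]}(\lambda')$ for real $\lambda,\lambda'$.

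Second, I would establish Hermitian unitarity of the Darboux dressing. The symmetry \eqref{sym-U-V} makes $\mathbf{U}(\lambda)$ anti-Hermitian for real $\lambda$, so $\mathbf{\Phi}^{[N]}(\lambda;x)^{\dagger}\mathbf{\Phi}^{[N]}(\lambda;x)$ is independent of $x$. Combined with the factorization $\mathbf{D}^{[N]}=\mathcal{P}\mathbf{D}_r^{[N]}$ from \eqref{DT-Nsoliton} and the loop-group identity $\mathbf{D}_r^{[N]}(\lambda^{*})^{\dagger}\mathbf{D}_r^{[N]}(\lambda)=\mathbb{I}_3$ (which follows by checking that the rational function on the left has no residues at $\lambda_k$ or $\lambda_k^{*}$ thanks to \eqref{matrix-M} and reduces to $\mathbb{I}_3$ at $\lambda=\infty$), this yields
\begin{equation*}
\mathbf{D}^{[N]}(\lambda;x)^{\dagger}\mathbf{D}^{[N]}(\lambda;x)=\mathcal{P}(\lambda)\hat{\mathcal{P}}(\lambda)\mathbb{I}_3\qquad(\lambda\in\mathbb{R}).
\end{equation*}
Passing to the $x\to+\infty$ limit, the exponentially decaying terms in \eqref{asy-DT-matrix} drop out and the same identity holds for $\mathbf{D}_{+}^{[N]}$, so that $[\mathbf{D}_{+}^{[N]}(\lambda)^{\dagger}\mathbf{D}_{+}^{[N]}(\lambda)]_{k,l}=\mathcal{P}(\lambda)\hat{\mathcal{P}}(\lambda)\delta_{kl}$.

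Third, for $\lambda\ne\lambda'$ I would substitute the explicit columns from \eqref{asy-DT-matrix} into the two matrix elements appearing in the first step and collect the non-decaying contributions. The $(1,1)$ entry asymptotes to $\hat{\mathcal{P}}(\lambda)\mathcal{P}(\lambda')$, and the $(j+1,i+1)$ entry asymptotes to $\hat{\mathcal{P}}(\lambda')\mathcal{P}(\lambda)(\delta_{ij}+\textrm{correction})$, where the correction assembles from inner products $\mathbf{c}_s^{\dagger}\mathbf{c}_{s'}$ and the rational factors $G_{+}^{rs}(\lambda)=\mathcal{P}(\lambda)(\lambda-\lambda_r^{*})^{-1}\cdot 2\mathrm{i}b_r F_{+}^{rs}/F_{+}$. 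Factoring out $\mathcal{P}(\lambda)\hat{\mathcal{P}}(\lambda)\mathcal{P}(\lambda')\hat{\mathcal{P}}(\lambda')$, the remaining expression defines $\delta_{ij}-\mathcal{G}_{ij}(\lambda,\lambda')$ with $\mathcal{G}_{ij}$ a ratio of polynomials in $\lambda,\lambda'$. The identity $\mathcal{G}_{ij}(\lambda,\lambda)=0$ is then forced by the diagonal unitarity from the second step, which shows the trace equals $\mathcal{P}(\lambda)^{2}\hat{\mathcal{P}}(\lambda)^{2}\delta_{ij}$ at $\lambda=\lambda'$.

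The main obstacle is in the third step: cleanly isolating the non-decaying part of the product $\mathbf{D}_{+}^{[N]}(\lambda)^{\dagger}\mathbf{D}_{+}^{[N]}(\lambda')$ and verifying that the assembled correction is indeed a rational function of $\lambda,\lambda'$ with the correct polynomial normalization. The Cauchy-determinant structure of $F_{\pm}$, together with the identity $\mathbf{m}_{\pm}\mathbf{M}_{\pm}=\mathbb{I}_N$ used to derive \eqref{root-Ks}, will be the essential algebraic input; the vanishing on the diagonal is the coherence check that ties the computation to the unitarity identity of the second step.
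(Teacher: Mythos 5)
Your proposal is correct in substance but takes a genuinely different route from the paper's. The paper works directly with the explicit expansion of the trace: it writes down $\mathcal{G}_{ij}(\lambda,\lambda)$ as a meromorphic function of $\lambda$ with simple poles at $\lambda_{k},\lambda_{k}^{*}$ (vanishing at infinity), reduces the claim to the vanishing of all residues, i.e.\ the matrix identity $\bigl(\mathbb{I}-\sum_{r}\tfrac{2\ii b_{r}}{\lambda_{k}-\lambda_{r}^{*}}\mathbf{G}_{r}\bigr)\mathbf{G}_{k}^{\dagger}=0$, and verifies this via the rank-one factorization $\mathbf{G}_{k}=\mathbf{c}_{k}^{*}\mathbf{d}_{k}^{T}$ with $\mathbf{d}_{k}$ defined through $\hat{\mathbf{M}}_{+}$. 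You instead factor the trace into the product of two scalar matrix elements of $\mathbf{D}_{+}^{[N]}(\lambda)^{\dagger}\mathbf{D}_{+}^{[N]}(\lambda')$ (this factorization is correct, since each $\hat{\mathbf{P}}_{i}^{+}$ is a rank-one outer product) and then import the diagonal vanishing from the loop-group unitarity $\mathbf{D}_{r}^{[N]}(\lambda^{*})^{\dagger}\mathbf{D}_{r}^{[N]}(\lambda)=\mathbb{I}_{3}$ by passing to the $x\to+\infty$ limit, where both you and the paper implicitly read the lemma as a statement about the non-decaying part of the trace. Your route is conceptually cleaner and explains \emph{why} $\mathcal{G}_{ij}$ vanishes on the diagonal (it is the unitarity of the dressing factor on the real axis), whereas the paper's is self-contained at the level of the asymptotic $2\times2$ block. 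The caveat is that you have not eliminated the hard algebra, only relocated it: the residue check for $\mathbf{D}_{r}^{[N]}(\lambda^{*})^{\dagger}\mathbf{D}_{r}^{[N]}(\lambda)=\mathbb{I}_{3}$ requires showing $\langle\mathbf{x}_{l}|\mathbf{D}_{r}^{[N]}(\lambda_{l})=0$, which amounts to expressing $\langle\mathbf{x}_{k}|\mathbf{x}_{l}\rangle$ through $\mathbf{m}=\mathbf{M}^{-1}$ via \eqref{matrix-M}; this is essentially the same Cauchy-type identity as the paper's \eqref{DT+}, so you should either carry it out or cite it from the Darboux-transformation construction in \cite{ling_darboux_2015,ling_darboux_2016}, where it is the property guaranteeing preservation of the reduction $\mathbf{r}=-\mathbf{q}^{*}$. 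With that step supplied, your argument is complete and yields the same conclusion.
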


\begin{proof}
    It suffices to show that $\mathcal{G}_{ij}(\lambda,\lambda)=0$. 
    From~\eqref{asy-DT-matrix}, the functions
    \begin{equation*}
        \begin{split}
            &\mathcal{G}_{ij}(\lambda,\lambda)\\=&
        \sum_{r=1}^{N}\left(\frac{2\ii b_{r}}{\lambda-\lambda_{r}^{*}}G_{ij}^{[r]}+
        \frac{-2\ii b_{r}}{\lambda-\lambda_{r}}(G_{ji}^{[r]})^{*}\right)-
        \sum_{r=1}^{N}\frac{-2\ii b_{r}}{\lambda-\lambda_{r}}
        \begin{pmatrix}
            (G_{j1}^{[r]})^{*} & (G_{j2}^{[r]})^{*}
        \end{pmatrix}
        \sum_{r=1}^{N}\frac{2\ii b_{r}}{\lambda-\lambda_{r}^{*}}
        \begin{pmatrix}
            G_{i1}^{[r]} \\
            G_{i2}^{[r]}
        \end{pmatrix}
        \end{split}
    \end{equation*}
    where 
    \begin{equation*}
        G_{ij}^{[r]}=\sum_{s=1}^{N}c_{ir}^{*} \frac{1}{F_{+}}(-1)^{r+s}F_{+}^{rs}c_{js}. 
    \end{equation*}
    Since $\mathcal{G}_{ij}(\lambda,\lambda)$ is a meromorphic function with simple poles at 
    $\lambda_{1},\lambda_{2},\cdots,\lambda_{N}$
    and $\lambda_{1}^{*},\lambda_{2}^{*},\cdots,\lambda_{N}^{*}$, the condition $\mathcal{G}_{ij}(\lambda,\lambda)=0$
    is equivalent to requiring that all residues vanish, i.e.,
    \begin{equation*}
        G_{ij}^{[k]}=-\sum_{r=1}^{N}\frac{2\ii b_{r}}{\lambda_{k}^{*}-\lambda_{r}}
        \begin{pmatrix}
            (G_{j1}^{[r]})^{*} & (G_{j2}^{[r]})^{*}
        \end{pmatrix}
        \begin{pmatrix}
            G_{i1}^{[k]} \\
            G_{i2}^{[k]}
        \end{pmatrix}
    \end{equation*}
    for $k=1,2,\cdots N$. 
    Denote 
    \begin{equation*}
        \mathbf{G}_{k}=\begin{pmatrix}
            G_{11}^{[k]} & G_{12}^{[k]} \\
            G_{21}^{[k]} & G_{22}^{[k]}
        \end{pmatrix},
    \end{equation*}
    the condition $\mathcal{G}_{ij}(\lambda,\lambda)=0$
    is equivalent to
    \begin{equation}\label{DT+}
        \left(\mathbb{I}-\sum_{r=1}^{N}\frac{2\ii b_{r}}{\lambda_{k}-\lambda_{r}^{*}}\mathbf{G}_{r}\right)\mathbf{G}_{k}^{\dagger}
        =0. 
    \end{equation}
    By the definition of $G_{ij}^{[k]}$, the matrix $\mathbf{G}_{k}$ can be written as
    \begin{equation*}
        \mathbf{G}_{k}=\mathbf{c}_{k}^{*} \mathbf{d}_{k}^{T}, 
    \end{equation*}
    where $\mathbf{d}_{k}$ is determined from
    \begin{equation*}
        (\mathbf{c}_{1}, \mathbf{c}_{2}, \cdots \mathbf{c}_{N})=
        (\mathbf{d}_{1}, \mathbf{d}_{2}, \cdots \mathbf{d}_{N})\hat{\mathbf{M}}_{+}
    \end{equation*}
    with
    \begin{equation*}
        \hat{\mathbf{M}}_{+}=\left(
            \frac{\lambda_{k}-\lambda_{k}^{*}}{\lambda_{l}-\lambda_{k}^{*}}\mathbf{c}_{k}^{\dagger}\mathbf{c}_{l}
        \right)_{1\leq k,l \leq N}. 
    \end{equation*}
    Then \eqref{DT+} follows, since it is equivalent to
    \begin{equation*}
        \mathbf{d}_{k}^{\dagger}\left(\mathbb{I}-\sum_{r=1}^{N}\frac{2\ii b_{r}}{\lambda_{k}-\lambda_{r}^{*}}\mathbf{d}_{r} \mathbf{c}_{r}^{\dagger}\right)
        =0.
    \end{equation*} 
    This completes the proof.
\end{proof}

We now proceed to the proof of Theorem \ref{thm-orthogonal}.
\begin{proof}[Proof of Theorem \ref{thm-orthogonal}]
    We now prove \eqref{ort-ess}, the orthogonality condition for the squared eigenfunctions in the essential 
    spectrum. Let $\lambda, \lambda' \in \mathbb{R}$ in \eqref{squ-eig-connect-squ-eig-mat-2}, we obtain
    \begin{equation*}
        \int_{-x}^{x}\mathbf{S}_{i}^{\dagger}(\lambda;y)\mathcal{J}\mathbf{S}_{j}(\lambda';y)\mathrm{d}y=\frac{1}{2(\lambda'-\lambda)}
            \mathrm{Tr}\left( \mathbf{P}_{i}^{\dagger}(\lambda;x)\mathbf{P}_{j}(\lambda';x)-
            \mathbf{P}_{i}^{\dagger}(\lambda;-x)\mathbf{P}_{j}(\lambda';-x) \right). 
    \end{equation*}
    Since 
    \begin{equation*}
        \mathrm{Tr}\left( \mathbf{P}_{i}^{\dagger}(\lambda;-x)\mathbf{P}_{j}(\lambda';-x) \right)
        \sim \mathcal{P}(\lambda)^{2}\hat{\mathcal{P}}(\lambda')^{2}{\rm e}^{-2{\rm i}(\lambda'-\lambda) x}\delta_{ij}
    \end{equation*}
    and 
    \begin{equation*}
        \mathrm{Tr}\left( \mathbf{P}_{i}^{\dagger}(\lambda;x)\mathbf{P}_{j}(\lambda';x) \right)\sim 
        \mathcal{P}(\lambda)\hat{\mathcal{P}}(\lambda)\mathcal{P}(\lambda')\hat{\mathcal{P}}(\lambda'){\rm e}^{2{\rm i}(\lambda'-\lambda) x}\delta_{ij}+
        2(\lambda'-\lambda)\mathcal{P}_{ij}(\lambda,\lambda'){\rm e}^{2{\rm i}(\lambda'-\lambda) x}
    \end{equation*}
    for some polynomial $\mathcal{P}_{ij}(\lambda,\lambda')$ in $\lambda$ and $\lambda'$ 
    by Lemma \ref{essential-part-1}, 
    we define, for a Schwartz function $\mathbf{f}(\lambda)$,
    \begin{equation*}
        A=\lim_{x\to \infty}\left(
            \int_{-x}^{x}\mathbf{S}_{i}^{\dagger}(\lambda;y)\mathcal{J}\mathbf{S}_{j}(\lambda';y)\mathrm{d}y,
            \mathbf{f}(\lambda)
        \right). 
    \end{equation*} 
    Then we obtain
    \begin{align*}
        A=&\lim_{x\to \infty}\left(
            |\mathcal{P}(\lambda)\hat{\mathcal{P}}(\lambda')|^{2}{\rm e}^{-\ii\theta_{1}}\frac{
                {\rm e}^{2{\rm i}(\lambda'-\lambda) x+\ii \theta_{1}}-{\rm e}^{-2{\rm i}(\lambda'-\lambda) x-\ii \theta_{1}}
            }{2(\lambda'-\lambda)},\mathbf{f}(\lambda)
            \right)\delta_{ij}
            \\&+\lim_{x\to \infty}\mathcal{F}_{\lambda}(\mathcal{P}_{1}(\lambda,\lambda'){\rm e}^{2\ii\lambda'}{\mathbf{f}}(\lambda))(2x)\\
            =&\lim_{x\to \infty}\left(\ii|\mathcal{P}(\lambda)\hat{\mathcal{P}}(\lambda')|^{2}{\rm e}^{-\ii\theta_{1}}
                \frac{\mathrm{sin}(2(\lambda'-\lambda)x+\theta_{1})}{\lambda'-\lambda},\mathbf{f}(\lambda)
            \right)\delta_{ij}\\
            =&\left(
                \ii\pi|\mathcal{P}(\lambda)\hat{\mathcal{P}}(\lambda')|^{2}{\rm e}^{-\ii\theta_{1}}\delta(\lambda'-\lambda)\cos(\theta_{1}),\mathbf{f}(\lambda)
            \right)\delta_{ij}
    \end{align*}
    where 
    \begin{equation*}
        \theta_{1}=\mathrm{arg}(\hat{\mathcal{P}}(\lambda)\mathcal{P}(\lambda')). 
    \end{equation*}
    Hence 
    \begin{equation*}
        \int_{\mathbb{R}}\mathbf{S}_{i}^{\dagger}(\lambda;y)\mathcal{J}\mathbf{S}_{j}(\lambda';y)\mathrm{d}y=\ii \pi 
        |\mathcal{P}(\lambda)|^{4}\delta(\lambda'-\lambda)\delta_{ij}. 
    \end{equation*}
    Other terms in \eqref{ort-ess} can be obtained in a similar way.

    Next, we consider \eqref{ort-dis-3} and \eqref{ort-dis-4}, taking $i=j=1$ without loss of generality. 
Since the squared eigenfunctions in $\mathsf{E}$ and their derivatives with respect 
to $\lambda$ in $\hat{\mathsf{E}}$ are of Schwartz class, the integral in \eqref{squ-eig-connect-squ-eig-mat-2} 
over $\mathbb{R}$ exists.
Differentiating \eqref{squ-eig-connect-squ-eig-mat-2} with respect to $\eta$ and integrating both sides, we obtain:
\begin{equation}\label{cal-1}
    \begin{split}
        \omega(\mathbf{S}_{-1,\eta}(\eta),\mathbf{S}_{-1}(\lambda))=\frac{1}{2(\lambda - \eta^{*})} 
    \mathrm{Tr}\left(\mathbf{P}_{1,\eta}(\eta^{*}) \mathbf{P}_{-1}(\lambda) +
     \frac{1}{\lambda - \eta^{*}} \mathbf{P}_{1}(\eta^{*}) \mathbf{P}_{-1}(\lambda)\right) \bigg|_{-\infty}^{+\infty}
    \end{split}
\end{equation}
for $\lambda,\eta\in\{\lambda_{k}^{*}:k=1,2,\cdots,N\}$,
\begin{equation}\label{cal-2}
    \omega(\mathbf{S}_{1,\eta}(\eta),\mathbf{S}_{1}(\lambda))=\frac{1}{2(\lambda - \eta^{*})} 
     \mathrm{Tr}\left(\mathbf{P}_{-1,\eta}(\eta^{*}) \mathbf{P}_{1}(\lambda) +
      \frac{1}{\lambda - \eta^{*}} \mathbf{P}_{-1}(\eta^{*}) \mathbf{P}_{1}(\lambda)\right) \bigg|_{-\infty}^{+\infty}
\end{equation}
for $\lambda,\eta\in\{\lambda_{k}:k=1,2,\cdots,N\}$ and 
\begin{equation}\label{cal-3}
    \begin{split}
        \omega(\mathbf{S}_{-1,\eta}(\eta),\mathbf{S}_{1}(\lambda))=\frac{1}{2(\lambda - \eta^{*})} 
    \mathrm{Tr}\left(\mathbf{P}_{1,\eta}(\eta^{*}) \mathbf{P}_{1}(\lambda) +
     \frac{1}{\lambda - \eta^{*}} \mathbf{P}_{1}(\eta^{*}) \mathbf{P}_{1}(\lambda)\right) \bigg|_{-\infty}^{+\infty}
    \end{split}
\end{equation}
for $\lambda\ne\eta^{*},\lambda\in\{\lambda_{k}:k=1,2,\cdots,N\},\eta\in\{\lambda_{k}^{*}:k=1,2,\cdots,N\}$. 
By the symmetry property \eqref{symmetric-squared-eigenfunction-matrix},
    formula \eqref{squ-eig-connect-squ-eig-mat-2} can be rewritten as
\begin{equation}\label{discrete-2}
    \mathbf{S}_{i}^{\dagger}(\eta^{*};x)\mathcal{J}\mathbf{S}_{j}(\lambda;x)=\frac{1}{2(\lambda-\eta)}
        \partial_{x}\mathrm{Tr}\left( \mathbf{P}_{-i}(\eta;x)\mathbf{P}_{j}(\lambda;x) \right)
\end{equation}
for $i,j\in\{\pm 1, \pm 2\}$. For the value of the left-hand side of \eqref{discrete-2} when $\lambda = \eta$, we 
rewrite \eqref{discrete-2} as
\begin{equation*}
    2(\lambda-\eta)\mathbf{S}_{i}^{\dagger}(\eta^{*};x)\mathcal{J}\mathbf{S}_{j}(\lambda;x)=
        \mathrm{Tr}\left( \mathbf{P}_{-i}(\eta;x)\mathbf{P}_{j}(\lambda;x) \right)_{x},
\end{equation*}
then 
\begin{equation*}
    -4\mathbf{S}_{i,\eta}^{\dagger}(\eta^{*};x)\mathcal{J}\mathbf{S}_{j}(\lambda;x)
    +2(\lambda-\eta)\mathbf{S}_{i,\eta\eta}^{\dagger}(\eta^{*};x)\mathcal{J}\mathbf{S}_{j}(\lambda;x)=
    \mathrm{Tr}\left( \mathbf{P}_{-i,\eta\eta}(\eta;x)\mathbf{P}_{j}(\lambda;x) \right)_{x}. 
\end{equation*}
Taking $\lambda = \eta = \lambda_{k}$ and $i = -1,\ j = 1$, we obtain
\begin{equation}\label{cal-nontrivial}
    \begin{split}
        \omega(\mathbf{S}_{-1,\eta}(\lambda_{k}^{*}),\mathbf{S}_{1}(\lambda_{k}))=-\frac{1}{4} 
    \mathrm{Tr}\left(\mathbf{P}_{1,\eta\eta}(\lambda_{k}) \mathbf{P}_{1}(\lambda_{k})\right) \bigg|_{-\infty}^{+\infty}. 
    \end{split}
\end{equation}
To obtain \eqref{ort-dis-3} and \eqref{ort-dis-4} for the case $i = j = 1$, it remains to 
verify that the right-hand sides of \eqref{cal-1}–\eqref{cal-3} are zero, and that the 
right-hand side of \eqref{cal-nontrivial} is
\begin{equation*}
    -\frac{1}{4} 
    \mathrm{Tr}\left(\mathbf{P}_{1,\eta\eta}(\lambda_{k}) \mathbf{P}_{1}(\lambda_{k})\right) \bigg|_{-\infty}^{+\infty}=-\frac{1}{2}c_{1k}^{2}\hat{\mathcal{P}}_{\lambda}(\lambda_{k})^{2}
    \mathcal{P}(\lambda_{k})^{2}.
\end{equation*}
First, we consider the nontrivial term \eqref{cal-nontrivial}. By \eqref{asy-square-eig-1}, we have
\begin{equation}\label{asy-P1-der0}
    \mathbf{P}_{1}(\lambda_{k})\sim {\rm e}^{2{\rm i}a_{k}x}{\rm e}^{-2b_{k}x}\hat{\mathbf{P}}_{1}^{\pm}(\lambda_{k})
\end{equation}
and 
\begin{equation}\label{asy-P1-der2}
    \mathbf{P}_{1,\lambda \lambda}(\lambda_{k})\sim {\rm e}^{2{\rm i}a_{k}x}{\rm e}^{-2b_{k}x}
    \left(\hat{\mathbf{P}}_{1,\lambda \lambda}^{\pm}(\lambda_{k})+4{\rm i}x\hat{\mathbf{P}}_{1,\lambda}^{\pm}(\lambda_{k})
    -4x^{2}\hat{\mathbf{P}}_{1}^{\pm}(\lambda_{k})\right)
\end{equation}
as $x\to \pm \infty$. 
As $x\to + \infty$, we obtain
\begin{equation*}
    \mathrm{Tr}\left(\mathbf{P}_{1,\eta\eta}(\lambda_{k}) \mathbf{P}_{1}(\lambda_{k})\right)\to 0
\end{equation*}
since each term contains the factor ${\rm e}^{-2b_{k}x}$ which decays exponentially. For $x\to -\infty$, 
we obtain 
\begin{align}
    \hat{\mathbf{P}}_{1}^{-}(\lambda_{k})=&c_{1k}{\rm e}^{-2{\rm i}a_{k}x+2b_{k}x}
    \left[
        \mathcal{P}(\lambda_{k})\mathbf{e}_{1}-\sum_{s,r=1}^{N}G_{-}^{rs}(\lambda_{k})\begin{pmatrix}
            1 \\
            \mathbf{c}_{s}{\rm e}^{-2{\rm i}a_{s}x+2b_{s}x}
        \end{pmatrix}
    \right]\\&\left[
        -\sum_{s=1}^{N}(G_{-}^{ks}(\lambda_{k}^{*}))^{*}\begin{pmatrix}
            1 \\
            \mathbf{c}_{s}^{*}{\rm e}^{2{\rm i}a_{s}x+2b_{s}x}
        \end{pmatrix}^{T}
    \right]\\ 
    =&c_{1k}{\rm e}^{-2{\rm i}a_{k}x+2b_{k}x}
    \left[
        \mathcal{P}(\lambda_{k})\mathbf{e}_{1}-\sum_{s=1}^{N}\mathcal{P}(\lambda_{k})\delta_{sk} \begin{pmatrix}
            1 \\
            \mathbf{c}_{s}{\rm e}^{-2{\rm i}a_{s}x+2b_{s}x}
        \end{pmatrix}
    \right]\\&\left[
        -\sum_{s=1}^{N}(G_{-}^{ks}(\lambda_{k}^{*}))^{*}\begin{pmatrix}
            1 \\
            \mathbf{c}_{s}^{*}{\rm e}^{2{\rm i}a_{s}x+2b_{s}x}
        \end{pmatrix}^{T}
    \right]\\
    =&c_{1k}\mathcal{P}(\lambda_{k}){\rm e}^{-4{\rm i}a_{k}x+4b_{k}x}
    \left[
        \begin{pmatrix}
            0 \\
            \mathbf{c}_{k}
        \end{pmatrix}
    \right]\left[
        \sum_{s=1}^{N}(G_{-}^{ks}(\lambda_{k}^{*}))^{*}\begin{pmatrix}
            1 \\
            \mathbf{c}_{s}^{*}{\rm e}^{2{\rm i}a_{s}x+2b_{s}x}
        \end{pmatrix}^{T}
    \right]\label{cal-P-der0}
\end{align}
since
\begin{equation*}
    \sum_{r=1}^{N}G_{-}^{rs}(\lambda_{k})=\mathcal{K}^{s}(\lambda_{k})=\mathcal{P}(\lambda_{k})\delta_{sk}. 
\end{equation*}
For the derivative of $\hat{\mathbf{P}}_{1}^{-}$, we obtain
\begin{align}
    \hat{\mathbf{P}}_{1,\lambda}^{-}(\lambda_{k})
    =&\mathcal{P}(\lambda_{k}){\rm e}^{-2{\rm i}a_{k}x+2b_{k}x}
    \left[
        -\begin{pmatrix}
            0 \\
            \mathbf{c}_{k}
        \end{pmatrix}
    \right]\\&\left[
        \mathcal{P}^{*}_{\lambda}(\lambda_{k}^{*})\mathbf{e}_{2}^{T}
        -\sum_{s,r=1}^{N}(G_{-,\lambda}^{rs}(\lambda_{k}^{*}))^{*}\begin{pmatrix}
            c_{1r}{\rm e}^{-2{\rm i}a_{r}x+2b_{r}x} \\
            c_{1r}\mathbf{c}_{s}^{*}{\rm e}^{2{\rm i}(a_{s}-a_{r})x+2(b_{s}+b_{r})x}
        \end{pmatrix}^{T}
    \right]\\
    &+c_{1k}{\rm e}^{-2{\rm i}a_{k}x+2b_{k}x}\left[
        \mathcal{P}_{\lambda}(\lambda_{k})\mathbf{e}_{1}-\sum_{s=1}^{N}\mathcal{K}_{\lambda}^{s}(\lambda_{k})\begin{pmatrix}
            1 \\
            \mathbf{c}_{s}{\rm e}^{-2{\rm i}a_{s}x+2b_{s}x}
        \end{pmatrix}
    \right]\\&\left[
        -\sum_{s=1}^{N}(G_{-}^{ks}(\lambda_{k}^{*}))^{*}\begin{pmatrix}
            1 \\
            \mathbf{c}_{s}^{*}{\rm e}^{2{\rm i}a_{s}x+2b_{s}x}
        \end{pmatrix}^{T}
    \right]\label{cal-P-der1}
\end{align}
and 
\begin{align*}
    \hat{\mathbf{P}}_{1,\lambda \lambda}^{-}(\lambda_{k})
    =&\mathcal{P}(\lambda_{k}){\rm e}^{-2{\rm i}a_{k}x+2b_{k}x}
    \left[
        -\begin{pmatrix}
            0 \\
            \mathbf{c}_{k}
        \end{pmatrix}
    \right]\\&\left[
        \mathcal{P}^{*}_{\lambda \lambda}(\lambda_{k}^{*})\mathbf{e}_{2}^{T}
        -\sum_{s,r=1}^{N}(G_{-,\lambda \lambda}^{rs}(\lambda_{k}^{*}))^{*}\begin{pmatrix}
            c_{1r}{\rm e}^{-2{\rm i}a_{r}x+2b_{r}x} \\
            c_{1r}\mathbf{c}_{s}^{*}{\rm e}^{2{\rm i}(a_{s}-a_{r})x+2(b_{s}+b_{r})x}
        \end{pmatrix}^{T}
    \right]\\
    &+c_{1k}{\rm e}^{-2{\rm i}a_{k}x+2b_{k}x}\left[
        \mathcal{P}_{\lambda \lambda}(\lambda_{k})\mathbf{e}_{1}-\sum_{s=1}^{N}\mathcal{K}_{\lambda \lambda}^{s}(\lambda_{k})\begin{pmatrix}
            1 \\
            \mathbf{c}_{s}{\rm e}^{-2{\rm i}a_{s}x+2b_{s}x}
        \end{pmatrix}
    \right]\\&\left[
        -\sum_{s=1}^{N}(G_{-}^{ks}(\lambda_{k}^{*}))^{*}\begin{pmatrix}
            1 \\
            \mathbf{c}_{s}^{*}{\rm e}^{2{\rm i}a_{s}x+2b_{s}x}
        \end{pmatrix}^{T}
    \right]\\
    &+2\left[
        \mathcal{P}_{\lambda}(\lambda_{k})\mathbf{e}_{1}-\sum_{s=1}^{N}\mathcal{K}_{\lambda}^{s}(\lambda_{k})\begin{pmatrix}
            1 \\
            \mathbf{c}_{s}{\rm e}^{-2{\rm i}a_{s}x+2b_{s}x}
        \end{pmatrix}
    \right] \\&\left[
        \mathcal{P}^{*}_{\lambda}(\lambda_{k}^{*})\mathbf{e}_{2}^{T}
        -\sum_{s,r=1}^{N}(G_{-,\lambda}^{rs}(\lambda_{k}^{*}))^{*}\begin{pmatrix}
            c_{1r}{\rm e}^{-2{\rm i}a_{r}x+2b_{r}x} \\
            c_{1r}\mathbf{c}_{s}^{*}{\rm e}^{2{\rm i}(a_{s}-a_{r})x+2(b_{s}+b_{r})x}
        \end{pmatrix}^{T}
    \right].
\end{align*}
As $x\to -\infty$, we collect the constant terms and obtain
\begin{align}
    \mathrm{Tr}\left(\mathbf{P}_{1,\eta\eta}(\lambda_{k}) \mathbf{P}_{1}(\lambda_{k})\right)=&
    \mathrm{Tr}\left(
        2\left(\mathcal{P}_{\lambda}(\lambda_{k})-\sum_{s=1}^{N}\mathcal{K}_{\lambda}^{s}(\lambda_{k})\right)\mathbf{e}_{1}
        \mathcal{P}^{*}_{\lambda}(\lambda_{k}^{*})\mathbf{e}_{2}^{T}
    \right.\\&\left.
        c_{1k}\mathcal{P}(\lambda_{k})\begin{pmatrix}
            0 \\
            \mathbf{c}_{k}
        \end{pmatrix}
        \sum_{s=1}^{N}(G_{-}^{ks}(\lambda_{k}^{*}))^{*}\mathbf{e}_{1}^{T}
    \right)+o(1)\\
    =& 
    2c_{1k}^{2}\left(\mathcal{P}_{\lambda}(\lambda_{k})-\sum_{s=1}^{N}\mathcal{K}_{\lambda}^{s}(\lambda_{i})\right)
        \mathcal{P}^{*}_{\lambda}(\lambda_{k}^{*})
        \mathcal{P}(\lambda_{k})
        \sum_{s=1}^{N}(G_{-}^{ks}(\lambda_{k}^{*}))^{*}+o(1). \label{P1P1-1}
\end{align}
Now we analyze the right-hand side of \eqref{P1P1-1} to simplify the expression. Using \eqref{root-Ks}, we have
\begin{equation}\label{root-sum-Ks}
    \sum_{s=1}^{N}\mathcal{K}^{s}(\lambda_{k})=\mathcal{P}(\lambda_{k})
\end{equation}
for $k=1,2,\cdots,N$. 
Since both the polynomial $\mathcal{P}(\lambda)-\hat{\mathcal{P}}(\lambda)$ 
and $\sum_{s=1}^{N}\mathcal{K}^{s}(\lambda)$ are of degree $N-1$, and 
\begin{equation}\label{root-P-Phat}
    \mathcal{P}(\lambda_{k})-\hat{\mathcal{P}}(\lambda_{k})=\mathcal{P}(\lambda_{k}),
\end{equation}
it follows from \eqref{root-sum-Ks} and \eqref{root-P-Phat} that
\begin{equation*}
    \sum_{s=1}^{N}\mathcal{K}^{s}(\lambda)=\mathcal{P}(\lambda)-\hat{\mathcal{P}}(\lambda). 
\end{equation*}
Hence 
\begin{equation*}
    \mathcal{P}_{\lambda}(\lambda_{k})-\sum_{s=1}^{N}\mathcal{K}_{\lambda}^{s}(\lambda_{k})=\hat{\mathcal{P}}_{\lambda}(\lambda_{k}). 
\end{equation*}
It remains to consider the term
\begin{align}
    \sum_{s=1}^{N}G_{-}^{ks}(\lambda_{k}^{*})=&2\ii b_{k}\mathcal{P}_{\lambda}(\lambda_{k}^{*} )\sum_{s=1}^{N}
    \frac{(-1)^{k+s}F_{-}^{ks}}{F_{-}}\\
    =&2\ii b_{k}\mathcal{P}_{\lambda}(\lambda_{k}^{*})\frac{F_{-}^{(k)}}{F_{-}} \\
    =&-\mathcal{P}(\lambda_{k})^{*}, \label{sum-Gis-1}
\end{align}
where $F_{-}^{(k)}$ is the determinant derived from $F_{-}$ by replacing its $k$-th row with all entries equal to 1, i.e.,
\begin{equation*}
    F_{-}^{(k)}=\begin{vmatrix}
        \frac{\lambda_{1}-\lambda_{1}^{*}}{\lambda_{1}-\lambda_{1}^{*}} & \frac{\lambda_{1}-\lambda_{1}^{*}}{\lambda_{2}-\lambda_{1}^{*}}  & \cdots & \frac{\lambda_{1}-\lambda_{1}^{*}}{\lambda_{N}-\lambda_{1}^{*}} \\
        \frac{\lambda_{2}-\lambda_{2}^{*}}{\lambda_{1}-\lambda_{2}^{*}} & \frac{\lambda_{2}-\lambda_{2}^{*}}{\lambda_{2}-\lambda_{2}^{*}} & \cdots & \frac{\lambda_{2}-\lambda_{2}^{*}}{\lambda_{N}-\lambda_{2}^{*}} \\
        \vdots & \vdots & \ddots & \vdots  \\
        1 & 1  & \cdots & 1 \\
        \vdots & \vdots & \ddots & \vdots \\
        \frac{\lambda_{N}-\lambda_{N}^{*}}{\lambda_{1}-\lambda_{N}^{*}} & \frac{\lambda_{N}-\lambda_{N}^{*}}{\lambda_{2}-\lambda_{N}^{*}}  &\cdots & \frac{\lambda_{N}-\lambda_{N}^{*}}{\lambda_{N}-\lambda_{N}^{*}}
        \end{vmatrix}. 
\end{equation*}
Now we prove that 
\begin{equation}\label{Fr-F}
    F^{(k)}_{-}=-\frac{\mathcal{P}(\lambda_{k})^{*}}{\mathcal{P}_{\lambda}(\lambda_{k}^{*})2\ii b_{k}}F_{-}
\end{equation}
for $k=1,2,\cdots, N$, which implies that \eqref{sum-Gis-1} holds. To analyze $F^{(k)}_{-}$, 
subtract column 1 from each of the columns 2 through $n$, we obtain
\begin{equation*}
    \frac{1}{\lambda_{j}-\lambda_{i}^{*}}-\frac{1}{\lambda_{1}-\lambda_{i}^{*}}=
    \frac{\lambda_{1}-\lambda_{j}}{(\lambda_{j}-\lambda_{i}^{*})(\lambda_{1}-\lambda_{i}^{*})},
\end{equation*}
hence
\begin{align*}
    F^{(k)}_{-}=&(-1)^{k+1}\prod_{i\ne k}(\lambda_{i}-\lambda_{i}^{*})\prod_{j\ne 1}(\lambda_{1}-\lambda_{j})
    \prod_{i\ne k}\frac{1}{\lambda_{1}-\lambda_{i}^{*}}\mathrm{det}\left(
        \frac{1}{\lambda_{j}-\lambda_{i}^{*}}
    \right)_{i\ne k,j\ne 1}\\ =&(-1)^{k+1}\prod_{j\ne 1}(\lambda_{1}-\lambda_{j})
    \prod_{i\ne k}\frac{1}{\lambda_{1}-\lambda_{i}^{*}}F^{k1}_{-}. 
\end{align*}
In addition, for $F_{-}$, subtracting column 1 multiplied by $(\lambda_{1}-\lambda_{k}^{*})/(\lambda_{j}-\lambda_{k}^{*})$ 
from each of columns $j$ varying from 2 to $n$, we obtain 
\begin{equation*}
    \frac{1}{\lambda_{j}-\lambda_{i}^{*}}-\frac{1}{\lambda_{1}-\lambda_{i}^{*}}\frac{\lambda_{1}-\lambda_{k}^{*}}{\lambda_{j}-\lambda_{k}^{*}}=
    \frac{(\lambda_{1}-\lambda_{j})(\lambda_{i}^{*}-\lambda_{k}^{*})}{(\lambda_{j}-\lambda_{i}^{*})(\lambda_{1}-\lambda_{i}^{*})(\lambda_{j}-\lambda_{k}^{*})},
\end{equation*}
hence 
\begin{align*}
    F_{-}=&(-1)^{k+1}\prod_{j}(\lambda_{j}-\lambda_{j}^{*})\prod_{j\ne 1}(\lambda_{1}-\lambda_{j})
    \prod_{i\ne k}(\lambda_{i}^{*}-\lambda_{k}^{*})
    \\&\frac{1}{\prod_{i \ne k}(\lambda_{1}-\lambda_{i}^{*})\prod_{j}(\lambda_{j}-\lambda_{k}^{*})}
    \mathrm{det}\left(
        \frac{1}{\lambda_{j}-\lambda_{i}^{*}}
    \right)_{i\ne k,j\ne 1}\\
    =&(-1)^{k+1}2\ii b_{k}\prod_{j\ne 1}(\lambda_{1}-\lambda_{j})
    \prod_{i\ne k}(\lambda_{i}^{*}-\lambda_{k}^{*})\frac{1}{\prod_{i\ne k}(\lambda_{1}-\lambda_{i}^{*})\prod_{j}(\lambda_{j}-\lambda_{k}^{*})}F^{k1}_{-}.
\end{align*}
Then the equality \eqref{Fr-F} holds. Hence we obtain \eqref{ort-dis-3} for the case $i=j=1$. Other 
cases can be considered by similar method. 

By \eqref{cal-3}, \eqref{cal-P-der0}, and \eqref{cal-P-der1}, it follows that
\begin{equation}
        \omega(\mathbf{S}_{-1,\eta}(\eta),\mathbf{S}_{1}(\lambda))=0
\end{equation}
for $\lambda\ne\eta,\lambda\in\{\lambda_{k}:i=1,2,\cdots,N\},\eta\in\{\lambda_{k}^{*}:i=1,2,\cdots,N\}$ since 
\begin{equation}\label{asy-P1-der1}
    \mathbf{P}_{1,\lambda}(\lambda_{k})\sim{\rm e}^{2{\rm i}a_{k}x}{\rm e}^{-2b_{k}x}
    \left(\hat{\mathbf{P}}_{1,\lambda}^{\pm}(\lambda_{k})+2{\rm i}x\hat{\mathbf{P}}_{1}^{\pm}(\lambda_{k})\right). 
\end{equation}
To derive \eqref{cal-1} and \eqref{cal-2}, the asymptotic behavior of
\begin{equation*}
    \mathbf{P}_{-1}(\lambda_{k}^{*}),\quad \mathbf{P}_{-1,\lambda}(\lambda_{k}^{*})
\end{equation*}
as $x\to\infty$ is required.
Since 
\begin{equation*}
    \mathbf{P}_{-1}(\lambda)\sim{\rm e}^{-2{\rm i}\lambda x}\hat{\mathbf{P}}_{-1}^{\pm}(\lambda),\quad x\to\pm\infty,
\end{equation*}
it follows that
\begin{align*}
    \mathbf{P}_{-1}(\lambda_{k}^{*})&\sim {\rm e}^{-2{\rm i}a_{k} x}{\rm e}^{-2 b_{k} x}\left(
        \hat{\mathbf{P}}_{-1}^{\pm}(\lambda_{k}^{*})
    \right),\\
    \mathbf{P}_{-1,\lambda}(\lambda_{k}^{*})&\sim {\rm e}^{-2{\rm i}a_{k} x}{\rm e}^{-2 b_{k} x}\left(
        \hat{\mathbf{P}}_{-1,\lambda}^{\pm}(\lambda_{k}^{*})-2{\rm i}x\hat{\mathbf{P}}_{-1}^{\pm}(\lambda_{k}^{*})
    \right). 
\end{align*}
Since $\mathbf{P}_{-1}(\lambda_{k}^{*})\to 0$ as $x\to +\infty$ due to the exponential decay factor 
${\rm e}^{-2 b_{k} x}$, it remains to consider the limit $x\to -\infty$. Using
\begin{align*}
    \hat{\mathbf{P}}_{-1}^{-}(\lambda^{*})=\left(\hat{\mathbf{P}}_{1}^{-}(\lambda)\right)^{\dagger}, 
\end{align*}
and applying \eqref{cal-P-der0} and \eqref{cal-P-der1}, it can be concluded that the 
expressions in \eqref{cal-1}, \eqref{cal-2}, and \eqref{cal-3} vanish.

Now we turn to \eqref{ort-dis-1}. By \eqref{squ-eig-connect-squ-eig-mat-2},
\begin{equation*}
    \int_{\mathbb{R}}\mathbf{S}_{i}^{\dagger}(\lambda';x)\mathcal{J}\mathbf{S}_{j}(\lambda;x)\mathrm{d}x=
    \frac{1}{2(\lambda-\lambda')}\mathrm{Tr}\left.\left(
        \mathbf{P}_{i}^{\dagger}(\lambda';x)\mathbf{P}_{j}(\lambda;x)
    \right)\right|_{-\infty}^{+\infty}
\end{equation*}
for $\lambda'\in\mathbb{R}$. Consider the case $\lambda=\lambda_{k}$ and $j=1$. The right-hand side vanishes by \eqref{asy-P1-der0} and \eqref{cal-P-der0}. Other cases follow by similar arguments.
This completes the proof. 
\end{proof}
Now the spectrum of the operator $\mathcal{L}$ is considered with the aid of the operator $\mathcal{J}\mathcal{L}$. 
\subsection{The spectrum analysis of $\mathcal{L}$}
After establishing the orthogonality conditions for the squared eigenfunctions,
the spectrum of $\mathcal{L}$ can be analyzed using their completeness \cite{yang_nonlinear_2010,ling2024stability}.
Denote
\begin{align*}
    \mathsf{E}_{ess}^{+}=&\{ \mathbf{S}_{i}(\lambda;x):i=1,2, \ \lambda\in\sigma_{ess}(\mathcal{L}_{s}) \},\\
    \mathsf{E}_{point}^{+}=&\{\mathbf{S}_{1}(\lambda_{k};x),\mathbf{S}_{2}(\lambda_{k}^{*};x):k\notin\Gamma\}\cup\{\mathbf{S}_{2}(\lambda_{k};x),\mathbf{S}_{1}(\lambda_{k}^{*};x):k\in\Gamma\}, \\
    \hat{\mathsf{E}}_{point}^{+}=&\{\mathbf{S}_{1,\lambda}(\lambda_{k};x):k\notin \Gamma\}\cup\{\mathbf{S}_{2,\lambda}(\lambda_{k};x):k\in \Gamma\}.
\end{align*}
Consider the transformation
\begin{equation*}
    \mathcal{C}: L^{2}(\mathbb{R},\mathbb{C}^{4})\to \mathrm{X}:\quad 
    \mathcal{C}\mathbf{P}=\mathbf{P}+(\mathbf{\Sigma} \mathbf{P})^{*}
\end{equation*}
which commutes with the operator $\mathcal{L}$. 
Let $\mathcal{C}\mathsf{E}=\left\{\mathcal{C}\mathbf{f}:\mathbf{f}\in \mathsf{E}\right\}$ for a given set 
$\mathsf{E}$. 
The set consisting of the squared eigenfunctions forms a basis in $L^{2}$, 
and the following lemma holds.
\begin{lem}\label{L2-X-basic}
    The space $L^{2}(\mathbb{R},\mathbb{C}^{4})$ admits a decomposition:
    \begin{equation}\label{complete-eigen-1}
        L^{2}(\mathbb{R},\mathbb{C}^{4})=\mathbb{E}_{ess}+\mathbb{E}_{point}
    \end{equation}
    where the essential spectrum subspace is defined by \eqref{E-ess} as
    \begin{equation*}
        \mathbb{E}_{ess}=\mathrm{span}\left\{
            \int_{\mathbb{R}}\omega(\lambda)\mathbf{S}(\lambda;x)\mathrm{d}\lambda:\mathbf{S}(\lambda;x)\in
            \mathsf{E}_{ess},\omega(\lambda)\in L^{2}(\mathbb{R},\mathbb{C})
        \right\}
    \end{equation*}
    and the point spectrum subspace is given by \eqref{E-point} and \eqref{hat-E-point} as
    \begin{equation*}
        \mathbb{E}_{point}=\mathrm{span}\left\{
            \mathbf{S}:\mathbf{S}\in \mathsf{E}_{point}\cup \hat{\mathsf{E}}_{point}
        \right\}. 
    \end{equation*}
    Moreover, the space $\mathrm{X}$ admits the decomposition
    \begin{equation}\label{complete-eigen-2}
        \mathrm{X}=\mathbb{E}_{ess}^{\mathrm{X}} +\mathbb{E}_{point}^{\mathrm{X}}.
    \end{equation}
    The subspaces $\mathbb{E}_{ess}^{\mathrm{X}}$ and $\mathbb{E}_{point}^{\mathrm{X}}$ can be represented as
    \begin{align*}
        \mathbb{E}_{ess}^{\mathrm{X}}=&\mathrm{span}\left\{
            \int_{\mathbb{R}}\omega(\lambda)\mathbf{S}(\lambda;x)\mathrm{d}\lambda:\mathbf{S}(\lambda;x)\in
            \mathcal{C}\mathsf{E}_{ess}^{+}\cup\mathcal{C}\ii\mathsf{E}_{ess}^{+}, \omega(\lambda)\in L^{2}(\mathbb{R},\mathbb{R})
        \right\},\\
        \mathbb{E}_{point}^{\mathrm{X}}=&\mathrm{span}\left\{
            \mathbf{S}:\mathbf{S}\in \mathcal{C}\mathsf{E}_{point}^{+}\cup 
            \mathcal{C}\ii\mathsf{E}_{point}^{+}\cup \mathcal{C}\hat{\mathsf{E}}_{point}^{+}
            \cup \mathcal{C}\ii\hat{\mathsf{E}}_{point}^{+}
        \right\}. 
    \end{align*}
\end{lem}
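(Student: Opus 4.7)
The plan is to obtain the real-Hilbert-space decomposition \eqref{complete-eigen-2} from the complex one \eqref{complete-eigen-1} via the conjugation map $\mathcal{C}$, and to establish \eqref{complete-eigen-1} by invoking the classical completeness theorem for squared eigenfunctions of the two-component AKNS spectral problem from \cite{yang_nonlinear_2010,gerdjikov1981generating}, together with the linear dependence relations derived in Section~\ref{sec-$N$-soliton-CNLS}.

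First I would establish \eqref{complete-eigen-1}. For reflectionless potentials the Jost solutions of $\mathcal{L}_s$ factor explicitly through the Darboux matrix $\mathbf{D}^{[N]}$ and the standard squared-eigenfunction expansion of \cite{yang_nonlinear_2010} expresses any element of $L^2(\mathbb{R},\mathbb{C}^4)$ as a continuous integral over the real line of the squared eigenfunctions in $\mathsf{E}_{ess}$ plus a finite linear combination attached to the discrete spectrum $\{\lambda_k,\lambda_k^*\}$. At each discrete eigenvalue the generalized eigenspace of $\mathcal{J}\mathcal{L}$ is spanned by the squared eigenfunction and its first $\lambda$-derivative in both the $+i$ and $-i$ branches. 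The four raw squared eigenfunctions $\{\mathbf{S}_{\pm i}(\lambda_k)\}_{i=1,2}$, however, are not independent: the relations \eqref{inde-lambdai}--\eqref{inde-lambdai-con}, which follow from \eqref{Rel-Phi-lambda}--\eqref{Rel-Phi-lambdac}, reduce their effective span to dimension two. Hence only two representatives suffice at each $\lambda_k$ (and two at $\lambda_k^*$), and the switch between the ``$i=1$'' and ``$i=2$'' choices according to whether $k\in\Gamma$ or $k\notin\Gamma$ is dictated precisely by non-vanishing of $c_{ik}$ in those identities. Adjoining the $\lambda$-derivatives in $\hat{\mathsf{E}}_{point}$ captures the Jordan-block part of $\mathcal{J}\mathcal{L}$ at the point spectrum, producing $4N$ vectors in total, consistent with the zero-eigenvalue multiplicity claimed in Theorem~\ref{thm-spectrum-L-tildeL}(a).

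Second, for the passage to $\mathrm{X}$, the map $\mathcal{C}\mathbf{P}=\mathbf{P}+(\mathbf{\Sigma}\mathbf{P})^*$ is $\mathbb{R}$-linear and surjective onto $\mathrm{X}$, with kernel the subspace of $\mathbf{P}$ satisfying $\mathbf{P}=-(\mathbf{\Sigma}\mathbf{P})^*$. The symmetry \eqref{symmetric-squared-eigenfunctions} reads $(\mathbf{\Sigma}\mathbf{S}_i(\lambda))^*=-\mathbf{S}_{-i}(\lambda^*)$, so for $\lambda\in\mathbb{R}$ we have $\mathcal{C}\mathbf{S}_i(\lambda)=\mathbf{S}_i(\lambda)-\mathbf{S}_{-i}(\lambda)$ and, at a discrete eigenvalue, $\mathcal{C}\mathbf{S}_1(\lambda_k)=\mathbf{S}_1(\lambda_k)-\mathbf{S}_{-1}(\lambda_k^*)$ (and similarly for the other indices and for the $\lambda$-derivatives). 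Thus a single $\mathcal{C}$-image already encodes both a ``$+$'' eigenfunction and its symmetric ``$-$'' partner, which is why $\mathsf{E}_{point}^+$ (rather than the full $\mathsf{E}_{point}$) enters \eqref{complete-eigen-2}. The doubling by $\ii\mathsf{E}_{ess}^+$, $\ii\mathsf{E}_{point}^+$ and $\ii\hat{\mathsf{E}}_{point}^+$ compensates for the halving of the coefficient field when passing from complex to real scalars: since $\mathcal{C}(\ii\mathbf{P})=\ii\mathbf{P}-\ii(\mathbf{\Sigma}\mathbf{P})^*$ is $\mathbb{R}$-linearly independent from $\mathcal{C}\mathbf{P}$, the two together span the real subspace of $\mathrm{X}$ that $\mathrm{span}_{\mathbb{C}}\{\mathbf{P}\}$ covers in $L^2(\mathbb{R},\mathbb{C}^4)$. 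Applying $\mathcal{C}$ term-by-term to an expansion provided by \eqref{complete-eigen-1} and using its surjectivity onto $\mathrm{X}$ therefore yields \eqref{complete-eigen-2}.

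The main obstacle is the careful verification of the $L^2$ completeness at the discrete spectrum in the degenerate case $k\in\Gamma$ (i.e., $c_{1k}=0$). The usual inverse-scattering proof assumes generic polarization of the soliton, whereas here the two-component structure forces $\mathbf{S}_1(\lambda_k)\equiv 0$ and the natural basis of the eigenspace of $\mathcal{J}\mathcal{L}$ at $\lambda_k$ must be rechosen as $\mathbf{S}_2(\lambda_k)$ and $\mathbf{S}_{-1}(\lambda_k)$. One must check that this substitution is compatible with the residue structure of the Riemann--Hilbert problem associated with $\mathbf{D}^{[N]}$ at $\lambda_k^*$, and that the resulting counts indeed produce a basis of the full generalized eigenspace. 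Once this bookkeeping is carried out uniformly in $\Gamma$, the rest of the argument is essentially linear-algebraic under $\mathcal{C}$.
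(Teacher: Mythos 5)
Your proposal is correct and follows essentially the same route as the paper: the complex decomposition \eqref{complete-eigen-1} is taken from the completeness theory of squared eigenfunctions in \cite{yang_nonlinear_2010,ling2024stability}, and \eqref{complete-eigen-2} is then obtained by applying $\mathcal{C}$ to the expansion of an element of $\mathrm{X}$ (equivalently, using $\mathcal{C}\mathbf{f}=2\mathbf{f}$ on $\mathrm{X}$), splitting coefficients into real and imaginary parts, and invoking the symmetry $\mathcal{C}\mathbf{S}_{i}(\lambda)=-\mathcal{C}\mathbf{S}_{-i}(\lambda^{*})$ to reduce to the ``$+$'' sets. The only quibble is a bookkeeping slip: $\mathsf{E}_{point}$ alone contributes $4N$ eigenfunctions (matching the kernel multiplicity), and adjoining $\hat{\mathsf{E}}_{point}$ brings the point-spectrum generators to $6N$, not $4N$.
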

\begin{proof}
    The proof of \eqref{complete-eigen-1} can be found in \cite{yang_nonlinear_2010,ling2024stability}. Here, the focus is on proving \eqref{complete-eigen-2}. 
    Let $\mathbf{f}(x)\in \mathrm{X}$. 
    Since $\mathbf{f}\in  \mathrm{X}\subset L^{2}(\mathbb{R},\mathbb{C}^{4})$, there exist functions 
    $\mathbf{S}_{i}(\lambda;x)\in \mathsf{E}_{ess}$ with coefficients $\omega_{i}(\lambda)$ 
    and elements $\hat{\mathbf{S}}_{j}(x)\in \mathsf{E}_{point}\cup \hat{\mathsf{E}}_{point}$ 
    with coefficients $\hat{\omega}_{j}\in\mathbb{C}$ such that
    \begin{equation*}
        \mathbf{f}=\sum_{i}\int \omega_{i}(\lambda)\mathbf{S}_{i}(\lambda;x)\mathrm{d}\lambda+
        \sum_{j}\hat{\omega}_{j}\hat{\mathbf{S}}_{j}(x). 
    \end{equation*}
    Since $\mathcal{C}\mathbf{f}=2\mathbf{f}$, we obtain 
    \begin{equation*}
        \begin{split}
            \mathbf{f}=&\frac{1}{2}\left(
            \sum_{i}\int \mathrm{Re}(\omega_{i}(\lambda))\mathcal{C}\mathbf{S}_{i}(\lambda;x)\mathrm{d}\lambda+
            \int \mathrm{Im}(\omega_{i}(\lambda))\mathcal{C}\ii\mathbf{S}_{i}(\lambda;x)\mathrm{d}\lambda
            \right.\\&\left.+\sum_{j}\mathrm{Re}(\hat{\omega}_{j})\mathcal{C}\hat{\mathbf{S}}_{j}(x)
        +\sum_{j}\mathrm{Im}(\hat{\omega}_{j})\mathcal{C}\ii\hat{\mathbf{S}}_{j}(x)
        \right). 
        \end{split}
    \end{equation*}
    In view of the symmetry \eqref{symmetric-squared-eigenfunctions}, we have 
    \begin{equation}\label{sym-C-squared}
        \begin{split}
            &\mathcal{C}\mathbf{S}_{i}(\lambda)=-\mathcal{C}\mathbf{S}_{-i}(\lambda^{*}),\
        \mathcal{C}\ii\mathbf{S}_{i}(\lambda)=\mathcal{C}\ii\mathbf{S}_{-i}(\lambda^{*}), \\
        &\mathcal{C}\mathbf{S}_{i,\lambda}(\lambda)=-\mathcal{C}\mathbf{S}_{-i,\lambda}(\lambda^{*}),\
        \mathcal{C}\ii\mathbf{S}_{i,\lambda}(\lambda)=\mathcal{C}\ii\mathbf{S}_{-i,\lambda}(\lambda^{*}). 
        \end{split}
    \end{equation}
    Hence, it suffices to consider $\mathbf{S}_{i}(\lambda;x)\in \mathcal{C}\mathsf{E}_{ess}^{+}\cup\mathcal{C}\ii\mathsf{E}_{ess}^{+}$ and 
    $\hat{\mathbf{S}}_{j}(x)\in \mathcal{C}(\mathsf{E}_{point}^{+}\cup \hat{\mathsf{E}}_{point}^{+})
    \cup \mathcal{C}\ii(\mathsf{E}_{point}^{+}\cup \hat{\mathsf{E}}_{point}^{+})$. This completes the proof.
\end{proof}

The squared eigenfunctions constitute a complete set in the $L^{2}$ space. To analyze the spectrum of the operator
$\mathcal{L}$, it suffices to evaluate the quadratic form $(\mathcal{L}\cdot,\cdot)$ on the set of 
squared eigenfunctions.

Denote 
\begin{equation*}
    Q_{\pm}(\lambda;\Omega)=\Omega \mp 2^{2N}{\rm i}\mathcal{P}(\lambda)\hat{\mathcal{P}}(\lambda).
\end{equation*}
All solutions to the spectral problem
\begin{equation}\label{JL-spec-pro}
\mathcal{J}\mathcal{L} \mathbf{f} = \Omega \mathbf{f}, \quad \Omega \in \mathbb{C},
\end{equation}
can be expressed in terms of the squared eigenfunctions as follows:
\begin{lem}\label{JL-eigenvalue}
    The squared eigenfunctions $\mathbf{S}_{\pm i}(\lambda)$ satisfy the spectral problem associated with 
    $\mathcal{J}\mathcal{L}$ 
    \begin{equation}\label{JL-spectral-problem}
        \mathcal{J}\mathcal{L}\mathbf{S}_{\pm i}(\lambda)=\pm 2^{2N}{\rm i}\mathcal{P}(\lambda)\hat{\mathcal{P}}(\lambda)\mathbf{S}_{\pm i}(\lambda),\quad i=1,2.
    \end{equation}
    If the polynomial $Q_{+}(\lambda;\Omega)Q_{-}(-\lambda;\Omega)$ in $\lambda$ has no multiple roots, and 
    $\lambda_{k},\lambda_{k}^{*}$ for $k=1,2,\cdots, N$
    are not roots of $Q_{+}(\lambda;\Omega)$ and $Q_{-}(\lambda;\Omega)$(i.e. $\Omega\ne 0$), 
    then the solutions to the spectral problem \eqref{JL-spec-pro} can be obtained by
    \begin{equation*}
        \mathrm{span}\left\{\{\mathbf{S}_{i}(\lambda):Q_{+}(\lambda;\Omega)=0,i=1,2\}\cup 
        \{\mathbf{S}_{-i}(\lambda):Q_{-}(\lambda;\Omega)=0,i=1,2\}\right\}.
    \end{equation*}
    For other cases, the solutions to \eqref{JL-spec-pro} can be obtained as limits of the FMS.
    The essential spectrum of $\mathcal{J}\mathcal{L}$ is given by
    \begin{equation*}
        \sigma_{ess}(\mathcal{J}\mathcal{L})=(-\ii \infty, -2^{2N}\ii \min_{\lambda \in \mathbb{R}}|\mathcal{P}(\lambda)|^{2}]\cup 
        [2^{2N}\ii \min_{\lambda \in \mathbb{R}}|\mathcal{P}(\lambda)|^{2},\ii \infty)
    \end{equation*}
    with the $L^{\infty}$ solution basis given by $\mathsf{E}_{ess}$. The point spectrum is
    \begin{equation*}
        \sigma_{point}(\mathcal{J}\mathcal{L})=\{0\},
    \end{equation*}
    with the $L^{2}$ eigenfunctions given by $\mathsf{E}_{point}$ 
    and the $L^{2}$ generalized eigenfunctions given by $\hat{\mathsf{E}}_{point}$. 
\end{lem}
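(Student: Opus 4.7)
The plan is to interpret $\mathcal{J}\mathcal{L}$ as the linearization of the Lyapunov flow $\mathbf{q}_t = -\ii\,\delta\mathcal{I}/\delta\mathbf{q}$ at the $N$-soliton steady state and apply Theorem~\ref{off-diagonal-variation} to identify the squared eigenfunctions $\mathbf{S}_{\pm i}(\lambda)$ as eigenfunctions of $\mathcal{J}\mathcal{L}$. Since $\delta\mathcal{H}_n/\delta\mathbf{Q} = -2^{n-1}\ii\mathbf{L}_{n+1}^{\perp}$, the coefficients $\mu_n$ defined by \eqref{P-hatP} determine mixed-flow coefficients $\beta_n$ so that the constraint \eqref{ODE-Q} reduces to \eqref{ODE-Nsoliton}, confirming that $\mathbf{q}^{[N]}$ is indeed a steady state of the Lyapunov flow. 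Theorem~\ref{off-diagonal-variation} then yields that whenever $\mathbf{G}(\lambda;x,t)$ solves the stationary zero curvature equations \eqref{G-equation-gen-U}, the projection $\sigma_3\mathbf{G}^{\perp}$ lies in the kernel of $\partial_t + \mathcal{J}\mathcal{L}$. Taking $\mathbf{G} = \mathbf{P}_{\pm i}(\lambda)$ and computing $\mathbf{G}_t = [\mathbf{V}(\lambda),\mathbf{G}]$ at $t=0$ produces \eqref{JL-spectral-problem}.

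The eigenvalue factor $\pm 2^{2N}\ii\mathcal{P}(\lambda)\hat{\mathcal{P}}(\lambda)$ will be extracted by passing to $|x|\to\infty$. There $\mathbf{Q}\to 0$, so $\mathbf{V}(\lambda)$ reduces to its diagonal leading part; the normalization \eqref{P-hatP} identifies this leading piece as $\ii\cdot 2^{2N}\mathcal{P}(\lambda)\hat{\mathcal{P}}(\lambda)\sigma_3$. Because $\mathbf{P}_{\pm i}(\lambda)$ is rank-one with nonzero entries in the off-diagonal blocks of $\sigma_3$, the commutator $[\mathbf{V},\mathbf{G}]$ produces a scalar multiple of $\mathbf{G}$, whose sign is determined by whether $\mathbf{G}$ has its column drawn from $\mathbf{\Phi}_1$ (the $+$ case) or from $\mathbf{\Phi}_{i+1}$ (the $-$ case). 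Carefully verifying the conversion between $\mu_n$ and $\beta_n$ so that this eigenvalue comes out to exactly $\pm 2^{2N}\ii\mathcal{P}(\lambda)\hat{\mathcal{P}}(\lambda)$ is the main technical obstacle.

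To exhaust the solution space, I count dimensions: $\mathcal{J}\mathcal{L}$ is a $4\times 4$ matrix differential operator of order $2N$, so \eqref{JL-spec-pro} is an ODE system of total order $8N$. Under the no-multiple-roots hypothesis on $Q_+(\lambda;\Omega)Q_-(-\lambda;\Omega)$, the polynomials $Q_\pm(\lambda;\Omega)$ each of degree $2N$ in $\lambda$ contribute $2N$ distinct roots, and at each root the two choices $i=1,2$ give linearly independent squared eigenfunctions, yielding $8N$ solutions in total. Their asymptotic exponential rates $\mathrm{e}^{\pm 2\ii\lambda x}$ are pairwise distinct, forcing linear independence, so they span the full solution space. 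In the degenerate regimes (coalescent roots, or $\Omega=0$ where $\lambda_k,\lambda_k^*$ are the relevant roots of $\mathcal{P}\hat{\mathcal{P}}$), the missing independent solutions arise as $\lambda$-derivatives of $\mathbf{S}_{\pm i}$, obtained by differentiating \eqref{JL-spectral-problem} in $\lambda$.

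The spectral description then follows immediately. The essential spectrum is the image of $\sigma_{ess}(\mathcal{L}_s) = \mathbb{R}$ under $\lambda\mapsto \pm 2^{2N}\ii|\mathcal{P}(\lambda)|^2$, giving the two rays claimed, with $L^{\infty}$ basis $\mathsf{E}_{ess}$ by Lemma~\ref{lem-Lax-spectrum}. The point spectrum reduces to $\{0\}$ because $\sigma_{point}(\mathcal{L}_s) = \{\lambda_k,\lambda_k^*\}$ coincides with the zero set of $\mathcal{P}(\lambda)\hat{\mathcal{P}}(\lambda)$; the $L^{2}$ eigenfunctions are $\mathsf{E}_{point}$ (Schwartz class by Lemma~\ref{lem-Lax-spectrum}), and the generalized eigenfunctions are $\hat{\mathsf{E}}_{point}$, with the one-step Jordan structure controlled by $(\mathcal{P}\hat{\mathcal{P}})'(\lambda_k) = \mathcal{P}(\lambda_k)\hat{\mathcal{P}}'(\lambda_k)\neq 0$, since the spectral parameters $\{\lambda_k\}$ are pairwise distinct.
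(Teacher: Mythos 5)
Your overall strategy---identifying $\mathcal{J}\mathcal{L}$ with the linearization of the mixed flow built from the coefficients $\mu_{n}$, invoking Theorem~\ref{off-diagonal-variation}, and then counting solutions of the order-$2N$ ODE system---is the same as the paper's. However, there are three genuine gaps.

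First, the eigenvalue relation \eqref{JL-spectral-problem} must hold exactly for all $x$, and your plan to extract the factor $\pm 2^{2N}\ii\mathcal{P}(\lambda)\hat{\mathcal{P}}(\lambda)$ by letting $|x|\to\infty$ only matches leading asymptotics: two solutions of the linear matrix ODE $\mathbf{F}_{x}=[\mathbf{U},\mathbf{F}]$ that agree to leading order at one end can still differ by a nonzero solution with a slower growth rate, so asymptotic agreement of $[\mathbf{V},\mathbf{G}]$ with a multiple of $\mathbf{G}$ does not give the exact identity. The paper avoids this by introducing the auxiliary Lax pair \eqref{x-part-N}--\eqref{t-part-N} with time $t_{N}$ and observing that the Darboux-dressed FMS separates variables, $\mathbf{\Phi}_{N}^{[N]}(\lambda;x,t_{N})=\mathbf{\Phi}^{[N]}(\lambda;x,0)\,{\rm e}^{\ii 2^{2N}\mathcal{P}(\lambda)\hat{\mathcal{P}}(\lambda)t_{N}\sigma_{3}}$, so that $\partial_{t_{N}}\mathbf{P}_{\pm i}=\pm 2\ii\,2^{2N}\mathcal{P}\hat{\mathcal{P}}\,\mathbf{P}_{\pm i}$ exactly; combined with $\partial_{t_{N}}=2\mathcal{J}\mathcal{L}$ this gives \eqref{JL-spectral-problem} with no asymptotic matching needed. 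Second, your linear-independence argument via ``pairwise distinct exponential rates'' does not cover the pair $\mathbf{S}_{1}(\lambda),\mathbf{S}_{2}(\lambda)$ at the \emph{same} root, which share the rate ${\rm e}^{2\ii\lambda x}$; one must argue separately (as the paper does) that a vanishing combination would force $d_{1}\mathbf{\Phi}_{2}+d_{2}\mathbf{\Phi}_{3}=0$, contradicting the independence of the columns of the FMS.

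Third, and most seriously, the case $\Omega=0$ is where the real content of the lemma lies, and your treatment of it is incomplete. At $\lambda=\lambda_{k}$ the eight candidate functions $\mathbf{S}_{\pm i}(\lambda_{k}),\mathbf{S}_{\pm i}(\lambda_{k}^{*})$ are \emph{not} independent: the relations \eqref{Rel-Phi-lambda}--\eqref{Rel-Phi-lambdac} (equivalently \eqref{inde-lambdai}) cut the $8N$ candidates down to only $4N$ independent $L^{2}$ eigenfunctions, which is precisely why $\mathrm{Ker}(\mathcal{J}\mathcal{L})$ has dimension $4N$ and not $8N$. Your proposal does not mention these dependencies and would overcount the kernel. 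Likewise, the statement that the missing solutions ``arise as $\lambda$-derivatives'' needs the key refinement that only $2N$ of the $4N$ derivative functions lie in $L^{2}$: the paper proves that $\mathbf{S}_{i,\lambda}(\lambda_{k}^{*})$ and $\mathbf{S}_{-i,\lambda}(\lambda_{k})$ grow exponentially as $x\to+\infty$ (via the rank-one matrix $\mathbf{G}^{+}(\lambda_{k})$), and that second $\lambda$-derivatives are never in $L^{2}$, which is what fixes $\hat{\mathsf{E}}_{point}$ and terminates the Jordan chains. The nondegeneracy $\hat{\mathcal{P}}_{\lambda}(\lambda_{k})\neq 0$ that you cite is necessary but far from sufficient for this; without the asymptotic analysis of $\mathbf{P}_{i,\lambda}(\lambda_{k}^{*};x)$ the generalized eigenspace is not determined.
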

\begin{proof}
    The proof relies on properties of the linearized operator. We consider the associated Lax pair
    \begin{align}
        \mathbf{\Phi}_{N,x}(\lambda;x,t_{N})=&\mathbf{U}(\lambda,\mathbf{Q})\mathbf{\Phi}_{N}(\lambda;x,t_{N})\label{x-part-N}\\
        \mathbf{\Phi}_{N,t_{N}}(\lambda;x,t_{N})=&\sum_{n=0}^{2N}2^{n}\mu_{n}\mathbf{V}_{n}(\lambda,\mathbf{Q})\mathbf{\Phi}_{N}(\lambda;x,t_{N})\label{t-part-N}
    \end{align}
    with time variable $t_{N}$. Let $\mathbf{\Phi}_{N}$ be a fundamental matrix solution of the Lax pair 
    \eqref{x-part-N} and \eqref{t-part-N}. Then, the squared eigenfunction matrices defined in 
    \eqref{squared-eigenfunction-matrix} via $\mathbf{\Phi}_{N}$ satisfy 
    \begin{equation*}
        \mathbf{B}_{N,x}=[\mathbf{U},\mathbf{B}_{N}],\quad 
        \mathbf{B}_{N,t_{N}}=\sum_{n=0}^{2N}2^{n}\mu_{n}[\mathbf{V}_{n},\mathbf{B}_{N}].
    \end{equation*}
    By Theorem \ref{off-diagonal-variation}, we obtain
    \begin{equation}
        \mathbf{B}_{N,t_{N}}^{\perp}=-2\sum_{n=0}^{2N}2^{n}\mu_{n}
        \frac{\delta \mathbf{L}_{n+1}^{\perp}}{\delta \mathbf{Q}}(\sigma_{3}\mathbf{B}_{N}^{\perp}). 
    \end{equation}
    From the symmetry $\mathbf{Q}=-\mathbf{Q}^{\dagger}$ and \eqref{def-var-Q}, we have
    \begin{equation*}
        \begin{pmatrix}
            0 & -\frac{\delta \mathcal{H}_{n}}{\delta \mathbf{q}}^{*}(-\mathbf{q}^{*},\mathbf{q})^{T} \\
            \frac{\delta \mathcal{H}_{n}}{\delta \mathbf{q}}(\mathbf{q},-\mathbf{q}^{*}) & 0
        \end{pmatrix}=-2\frac{\delta \mathcal{H}_{n}}{\delta \mathbf{Q}}(\mathbf{Q})=2^{n}\ii \mathbf{L}_{n+1}^{\perp}(\mathbf{Q})
    \end{equation*} 
    and hence
    \begin{equation*}
        \begin{pmatrix}
            0 & -\frac{\delta^{2} \mathcal{H}_{n}}{\delta \mathbf{q}^{2}}^{*}(-\mathbf{h},\mathbf{g})^{T} \\
            \frac{\delta^{2} \mathcal{H}_{n}}{\delta \mathbf{q}^{2}}(\mathbf{g},-\mathbf{h}) & \mathbf{0}
        \end{pmatrix}=2^{n} \ii \frac{\delta \mathbf{L}_{n+1}^{\perp}}{\delta \mathbf{Q}}\begin{pmatrix}
            0 & -\mathbf{h}^{T} \\
            \mathbf{g} & \mathbf{0}
        \end{pmatrix}.
    \end{equation*}
    Let 
    \begin{equation*}
        \mathbf{B}_{N}=\begin{pmatrix}
            f_{N} & \mathbf{h}_{N}^{T} \\
            \mathbf{g}_{N} & -\mathbf{f}_{1,N}
        \end{pmatrix},
    \end{equation*} 
    it leads to
    \begin{equation*}
        -\sigma_{3}\mathbf{B}_{N,t_{N}}^{\perp}=2\ii\sigma_{3}\sum_{n=0}^{2N}\mu_{n}
        \begin{pmatrix}
            0 & \frac{\delta^{2} \mathcal{H}_{n}}{\delta \mathbf{q}^{2}}^{*}(\mathbf{h}_{N},-\mathbf{g}_{N})^{T} \\
            \frac{\delta^{2} \mathcal{H}_{n}}{\delta \mathbf{q}^{2}}(\mathbf{g}_{N},-\mathbf{h}_{N}) & \mathbf{0}
        \end{pmatrix}.
    \end{equation*}
    Hence, the squared eigenfunctions given in \eqref{squared-eigenfunction-matrix} by the FMS $\mathbf{\Phi}_{N}$ satisfy
    \begin{equation}\label{JL-spectral-problem-1}
        \begin{pmatrix}
            \mathbf{g}_{N} \\
            -\mathbf{h}_{N}
        \end{pmatrix}_{t_{N}}=2\mathcal{J}\mathcal{L}\begin{pmatrix}
            \mathbf{g}_{N} \\
            -\mathbf{h}_{N}
        \end{pmatrix}. 
    \end{equation}
    Now, applying the $N$-fold Darboux transformation to the solution
    \begin{equation*}
        \mathbf{\Phi}_{N}^{[0]}={\rm e}^{\ii(\lambda x+2^{2N}\mathcal{P}(\lambda)\hat{\mathcal{P}}(\lambda)t_{N})\sigma_{3}}
    \end{equation*}
    of the Lax pair \eqref{x-part-N} and \eqref{t-part-N} associated with the zero potential, 
    with spectral parameters $\lambda_{1},\lambda_{2},\cdots, \lambda_{N}$ and scattering parameters 
    $\mathbf{c}_{1}, \mathbf{c}_{2},\cdots, \mathbf{c}_{N}$, we obtain that the new FMS takes the form
    \begin{equation*}
        \mathbf{\Phi}_{N}^{[N]}(\lambda;x,t_{N})=\mathbf{\Phi}^{[N]}(\lambda;x,0)
        {\rm e}^{\ii 2^{2N}\mathcal{P}(\lambda)\hat{\mathcal{P}}(\lambda)t_{N}\sigma_{3}}. 
    \end{equation*}
    Hence, \eqref{JL-spectral-problem} follows directly from \eqref{JL-spectral-problem-1}. 

    For fixed $\Omega\ne 0$, we denote the roots of $Q_{+}(\lambda;\Omega)=0$ by $\hat{\lambda}_{k,+}$ and 
    the roots of $Q_{-}(\lambda;\Omega)=0$ by $\hat{\lambda}_{k,-}$ for $k=1,2,\cdots,2N$. If 
    $Q_{+}(\lambda;\Omega)Q_{-}(-\lambda;\Omega)$ has no multiple root, then $\{\hat{\lambda}_{k,+},
    -\hat{\lambda}_{k,-}:k=1,2,\cdots,2N\}$ are distinct. Since the asymptotic 
    behaviors of the squared eigenfunctions for different $\lambda$ are independent by \eqref{asy-square-eig-1}, 
    we find the following: $\mathbf{S}_{i}(\hat{\lambda}_{k,+})$ is linearly independent of $\mathbf{S}_{-j}(\hat{\lambda}_{k,-})$
    for $i,j=1,2$ and $k=1,2,\cdots,N$; $\mathbf{S}_{i}(\hat{\lambda}_{k,+})$ ($\mathbf{S}_{-i}(\hat{\lambda}_{k,-})$) is linearly independent of
    $\mathbf{S}_{j}(\hat{\lambda}_{l,+})$ ($\mathbf{S}_{-j}(\hat{\lambda}_{l,-})$) for $i=1,2$ and $k\ne l$. 
    
    Moreover, for two squared eigenfunctions corresponding to the same frequency (the case $k=l$), if
    \begin{equation*}
        d_{1}\mathbf{S}_{1}(\hat{\lambda}_{k,+})+d_{2}\mathbf{S}_{2}(\hat{\lambda}_{k,+})=0,
    \end{equation*}
    it follows that
    \begin{equation*}
        (d_{1}\mathbf{P}_{1}(\hat{\lambda}_{k,+})+d_{2}\mathbf{P}_{2}(\hat{\lambda}_{k,+}))^{\perp}=0
    \end{equation*}
    which in turn implies
    \begin{equation*}
        d_{1}\mathbf{\Phi}_{2}(\hat{\lambda}_{k,+})+d_{2}\mathbf{\Phi}_{3}(\hat{\lambda}_{k,+})=0,
    \end{equation*}
    by the definition of squared eigenfunctions. 
    Then we obtain $d_{1}=d_{2}=0$ since $\mathbf{\Phi}_{2}$ and $\mathbf{\Phi}_{3}$ are linearly independent. 
    Therefore, $\mathbf{S}_{1}(\hat{\lambda}_{k,+})$ is linearly independent of $\mathbf{S}_{2}(\hat{\lambda}_{k,+})$. 
    Similarly, $\mathbf{S}_{-1}(\hat{\lambda}_{k,-})$ is linearly independent of $\mathbf{S}_{-2}(\hat{\lambda}_{k,-})$. 
    Consequently, for the case $\Omega\ne0$ and $Q_{+}(\lambda;\Omega)Q_{-}(-\lambda;\Omega)$ having no multiple roots, we obtain 
    $8N$ linearly independent eigenfunctions of the operator $\mathcal{J}\mathcal{L}$ corresponding to the eigenvalue $\Omega$. 
    Moreover, if $\Omega\ne0$, then the set of roots of $Q_{+}(\lambda;\Omega)$ has no intersection with 
    the set of roots of $Q_{-}(\lambda;\Omega)$. Let $\mathcal{I}$ denote the identity map.
    Since $\mathcal{J}\mathcal{L}$ is of order $2N$, these eigenfunctions span the eigenspace 
    $\mathrm{Ker}(\Omega\mathcal{I}-\mathcal{J}\mathcal{L})$. 

    If $\Omega \ne 0$ and $Q_{+}(\lambda;\Omega)Q_{-}(-\lambda;\Omega)$ has multiple roots, we rewrite the spectral problem \eqref{JL-spec-pro} as a first-order ODE
    \begin{equation}\label{JL-spec-pro-1}
        \partial_{x}\begin{pmatrix}
            \mathbf{f} \\
            \mathbf{f}_{x} \\
            \mathbf{f}_{xx} \\
            \vdots \\
            \mathbf{f}_{(2N-1)}
        \end{pmatrix}=
        \begin{pmatrix}
            0 & 1 & 0 & \cdots & 0 \\
            0 & 0 & 1 & \cdots & 0 \\
            0 & 0 & 0 & \cdots & 0 \\
            \vdots &\vdots & \vdots & \ddots & \vdots \\
            \mathbf{A}_{2N,1}(\Omega;\mathbf{q}) & \mathbf{A}_{2N,2}(\Omega;\mathbf{q}) & \mathbf{A}_{2N,3}(\Omega;\mathbf{q})
             & \cdots & \mathbf{A}_{2N,2N}(\Omega;\mathbf{q})
        \end{pmatrix}\begin{pmatrix}
            \mathbf{f} \\
            \mathbf{f}_{x} \\
            \mathbf{f}_{xx} \\
            \vdots \\
            \mathbf{f}_{(2N-1)}
        \end{pmatrix}
    \end{equation}
    where the coefficients are determined by a differential operator of order $2N-1$
    \begin{equation*}
        \sum_{i=1}^{2N}\mathbf{A}_{2N,i}\partial_{x}^{i-1}=\mathcal{J}(\mathcal{L}-\partial_{x}^{2N}). 
    \end{equation*}
    Especially, 
    \begin{equation*}
        \mathrm{Tr}(\mathbf{A}_{2N,2N})=\mu_{2N-1}\ii (-1)^{N}\mathrm{Tr}(\mathrm{diag}(1,1,-1,-1))=0. 
    \end{equation*}
    If $\Omega\ne 0$ and $Q_{+}(\lambda;\Omega)Q_{-}(-\lambda;\Omega)$ has no multiple root, then the FMS of 
    \eqref{JL-spec-pro-1} can be obtained from \eqref{JL-spec-pro}, which we denote 
    by $\mathbf{F}_{N}(\Omega;x)$. 
    If $Q_{+}(\lambda;\Omega_{0})Q_{-}(-\lambda;\Omega_{0})$ has multiple roots, then the FMS can be given by
    \begin{equation*}
        \lim_{\Omega\to \Omega_{0}}\mathbf{F}_{N}(\Omega;x)\mathbf{F}_{N}(\Omega;0)^{-1},
    \end{equation*} 
    since, by Abel’s theorem,
    \begin{equation*}
        \mathrm{det}(\mathbf{F}_{N}(\Omega;x)\mathbf{F}_{N}(\Omega;0)^{-1})=
        \mathrm{det}(\mathbf{F}_{N}(\Omega;0)\mathbf{F}_{N}(\Omega;0)^{-1})=1. 
    \end{equation*}

    If $\Omega=0$, then $Q_{+}(\lambda;0)=Q_{-}(\lambda;0)=\mathcal{P}(\lambda)\hat{\mathcal{P}}(\lambda)$ 
    and the roots are $\lambda_{1},\lambda_{2},\cdots,\lambda_{N}$ and $\lambda_{1}^{*},\lambda_{2}^{*},\cdots,\lambda_{N}^{*}$. 
    In this case, we obtain $8N$ eigenfunctions
    \begin{equation*}
        \mathbf{S}_{\pm 1}(\lambda_{k}),\quad \mathbf{S}_{\pm 2}(\lambda_{k}),\quad 
        \mathbf{S}_{\pm 1}(\lambda_{k}^{*}),\quad \mathbf{S}_{\pm 2}(\lambda_{k}^{*}), 
    \end{equation*}
    for $k=1,2,\cdots N$. 
    Hence $\mathbf{S}_{\pm i}(\lambda_{k})$ for $i=1,2$ is linear independent of 
    $\mathbf{S}_{\pm j}(\lambda_{l}),\mathbf{S}_{\pm j}(\lambda_{r}^{*})$ for $j=1,2$, $k\ne l$ and $k,r=1,2,\cdots,N$
    by \eqref{asy-square-eig-1}. 
    By \eqref{inde-lambdai}, if $k\notin\Gamma$, there are only two independent eigenfunctions at $\lambda=\lambda_{k}$:
    \begin{equation*}
        \mathrm{span}\left\{\mathbf{S}_{1}(\lambda_{k}),\mathbf{S}_{-2}(\lambda_{k})\right\}=
        \mathrm{span}\left\{\mathbf{S}_{\pm 1}(\lambda_{k}),\mathbf{S}_{\pm 2}(\lambda_{k})\right\}. 
    \end{equation*}
    If  $k\in\Gamma$, then we take $\mathbf{S}_{2}(\lambda_{k}),\mathbf{S}_{-1}(\lambda_{k})$. 
    In this way, we obtain $4N$ linearly independent eigenfunctions $\mathsf{E}_{point}\subset \mathcal{S}(\mathbb{R},\mathbb{C}^{4})$
    for $\mathcal{J}\mathcal{L}$ at the eigenvalue $\Omega=0$. 
    The remaining $4N$ eigenfunctions can be constructed via
    \begin{equation*}
        \lim_{\Omega\to 0}\mathbf{F}_{N}(\Omega;x)\mathbf{F}_{N}(\Omega;0)^{-1}. 
    \end{equation*} 
    By an argument similar to that in Lemma \ref{lem-critical-point}, we obtain that 
    $\mathrm{Ker}(\mathcal{J}\mathcal{L})$ is spanned by $\mathsf{E}_{point}$. 
    Taking the $\lambda$-derivative of both sides of \eqref{JL-spectral-problem} yields the 
    generalized eigenfunctions 
    \begin{equation*}
        \mathbf{S}_{i,\lambda}(\lambda_{k}),\mathbf{S}_{i,\lambda}(\lambda_{k}^{*}),
        \mathbf{S}_{-i,\lambda}(\lambda_{k}),\mathbf{S}_{-i,\lambda}(\lambda_{k}^{*})
    \end{equation*}
    for $i=1,2$. 
    For $x\to \infty$, we have
\begin{equation*}
    \mathbf{P}_{i,\lambda}(\lambda_{k}^{*};x)=
   \left( \mathcal{P}_{\lambda}(\lambda_{k}^{*})\mathbf{e}_{1}\left(
        \mathcal{P}(\lambda_{k})\mathbf{e}_{i+1}-\sum_{s,r=1}^{N}c_{ir}^{*}G_{+}^{rs}(\lambda_{k})\begin{pmatrix}
    0\\ \mathbf{c}_{s}
\end{pmatrix}
    \right)^{T}+o(1)\right) {\rm e}^{2b_{k}x}. 
\end{equation*}
Since
\begin{equation*}
    \mathcal{P}(\lambda_{k})\mathbf{e}_{i}-\sum_{s,r=1}^{N}c_{ir}^{*}G_{+}^{rs}(\lambda_{k})\mathbf{c}_{s}=
    \mathcal{P}(\lambda_{k})\left(
        \mathbf{e}_{i}-\sum_{r=1}^{N}\frac{2\ii b_{r}}{\lambda_{k}-\lambda_{r}^{*}}\begin{pmatrix}
            G_{i1}^{[r]} \\ G_{i2}^{[r]}
        \end{pmatrix}
    \right),
\end{equation*}
we introduce the matrix function
\begin{equation*}
    \mathbf{G}^{+}(\lambda)=\mathbb{I}_{2}-\sum_{r=1}^{N}\frac{2\ii b_{r}}{\lambda-\lambda_{r}^{*}}\mathbf{G}_{r}^{T}=
    \mathbb{I}_{2}-\sum_{r=1}^{N}\frac{2\ii b_{r}}{\lambda-\lambda_{r}^{*}}\mathbf{d}_{r}\mathbf{c}_{r}^{\dagger},
\end{equation*}
which admits one eigenfunction $\mathbf{c}_{k}$ at $\lambda=\lambda_{k}$. Then
\begin{equation*}
    \mathbf{P}_{i,\lambda}(\lambda_{k}^{*};x)=
   \left( \mathcal{P}_{\lambda}(\lambda_{k}^{*})\mathcal{P}(\lambda_{k})\mathbf{e}_{1}\begin{pmatrix}
    0 \\ \mathbf{G}_{i}^{+}(\lambda_{k})
   \end{pmatrix}^{\dagger}
   +o(1)\right) {\rm e}^{2 b_{k}x}
\end{equation*}
for $\mathbf{G}^{+}(\lambda)=(\mathbf{G}^{+}_{1}(\lambda),\mathbf{G}^{+}_{2}(\lambda))$. Since
\begin{equation*}
    \mathbf{G}^{+}(\lambda_{k})\mathbf{c}_{k}=\mathbf{G}^{+}_{1}(\lambda_{k})c_{1k}+\mathbf{G}^{+}_{2}(\lambda_{k})c_{2k}=0
\end{equation*}
and $\mathbf{G}^{+}(\lambda_{k})$ has rank $1$, the vectors $\mathbf{G}^{+}_{1}(\lambda_{k})$ and $\mathbf{G}^{+}_{2}(\lambda_{k})$
are not equal to $(0,0)^{T}$ when $c_{1k}$ and $c_{2k}$ are nonzero.  
Hence the squared eigenfunctions $\mathbf{S}_{i,\lambda}(\lambda_{k}^{*})$ exhibit exponential growth as $x\to+\infty$.
Since $\mathbf{S}_{i,\lambda}(\lambda_{i}^{*})$ and $\mathbf{S}_{-i,\lambda}(\lambda_{i})$ also grow exponentially at infinity, there are only $2N$ eigenfunctions in $\hat{\mathsf{E}}_{point}$.  
If $c_{2k}=0$, then $\mathbf{G}^{+}_{1}(\lambda_{k})=0$. 
Since $\mathbf{G}^{+}(\lambda_{k})$ has rank $1$, it follows that $\mathbf{G}^{+}_{2}(\lambda_{k})\ne0$. 
Hence, $\mathbf{S}_{2,\lambda}(\lambda_{k}^{*})$ still has exponential growth at $x=+\infty$.
Thus there are still $2N$ eigenfunctions in $\hat{\mathsf{E}}_{point}$. If $c_{1k}=0$, a similar argument applies to $\mathbf{S}_{2,\lambda}(\lambda_{k}^{*})$.  
The second derivatives of the squared eigenfunctions on the point spectrum are not in $L^{2}$ by \eqref{asy-DT-matrix}, so no further generalized eigenfunctions exist.  
This completes the proof.
\end{proof}
By \eqref{JL-spectral-problem}, we obtain
\begin{align*}
    \mathcal{J}\mathcal{L}\mathbf{S}_{i,\lambda}(\lambda_{k})
    =& 2^{2N}{\rm i}\mathcal{P}(\lambda_{k})\hat{\mathcal{P}}_{\lambda}(\lambda_{k})\mathbf{S}_{i}(\lambda_{k}), \\
    \mathcal{J}\mathcal{L}\mathbf{S}_{-i,\lambda}(\lambda_{k}^{*})
=&-2^{2N}{\rm i}\mathcal{P}_{\lambda}(\lambda_{k}^{*})\hat{\mathcal{P}}(\lambda_{k}^{*})\mathbf{S}_{-i}(\lambda_{k}^{*}). 
\end{align*} 
By Lemma \ref{JL-eigenvalue}, we conclude that
\begin{align*}
    \int_{\mathbb{R}}\mathbf{S}_{-i,\lambda}(\lambda_{k}^{*})^{\dagger}\mathcal{L}\mathbf{S}_{j,\lambda}(\lambda_{k})\mathrm{d}x
    =\left(\int_{\mathbb{R}}\mathbf{S}_{j,\lambda}(\lambda_{k})^{\dagger}\mathcal{L}\mathbf{S}_{-i,\lambda}(\lambda_{k}^{*})\mathrm{d}x\right)^{*}
    =c_{ik}c_{jk}A_{k}
\end{align*}
where
\begin{equation*}
    A_{k}=2^{2N}\ii \hat{\mathcal{P}}_{\lambda}(\lambda_{k})^{3}\mathcal{P}(\lambda_{k})^{3}.
\end{equation*}
We can now conclude the proof of part (a) in Theorem \ref{thm-spectrum-L-tildeL}.
\begin{proof}[Proof of part (a) in Theorem \ref{thm-spectrum-L-tildeL}]
    The essential spectrum of $\mathcal{L}$ follows directly from Weyl’s essential spectrum theorem.
The kernel of $\mathcal{L}$ is spanned by the squared eigenfunctions in $\mathsf{E}_{\mathrm{point}}^{+}$, since $\mathcal{J}$ is invertible, or equivalently, by the derivatives of the scattering parameters:
    \begin{equation*}
        \mathrm{Ker}(\mathcal{L})=\mathrm{span}\left\{
            \mathcal{C}\mathbf{q}^{[N]}_{ij},
            \mathcal{C}\ii \mathbf{q}^{[N]}_{ij}: i=1,2 ,\ j=1,2,\cdots,N
        \right\}
    \end{equation*}
    where 
    \begin{equation*}
        \mathbf{q}^{[N]}_{ij}=\begin{pmatrix}
            \partial_{c_{ij}}\mathbf{q}^{[N]} \\
            (\partial_{c_{ij}}\mathbf{q}^{[N]})^{*}
        \end{pmatrix}.
    \end{equation*}
    Here $\mathbf{q}^{[N]}_{ij}$ arises from differentiating the ODE \eqref{ODE-Nsoliton} satisfied by the $N$-soliton solution, which yields
    $\mathcal{L}\mathbf{q}^{[N]}_{ij}=0$. 
    Thus, it suffices to analyze the point spectrum.

    Without loss of generality, we consider the case $\Gamma = \emptyset$ in \eqref{Gamma-set}.
    From \eqref{ort-ess}, the essential spectrum part $\mathbb{E}_{\mathrm{ess}}^{\mathrm{X}}$ does 
    not contribute to the negative direction of $\mathcal{L}$. As an example, 
    \begin{align*}
        &\left(\mathcal{L}\int_{\mathbb{R}} w_{i}(\lambda)\mathbf{S}_{i}(\lambda;x)\mathrm{d}\lambda,\int_{\mathbb{R}} w_{j}(\lambda')\mathbf{S}_{j}(\lambda';x)\mathrm{d}\lambda'\right)
        \\=&
        \left(-2^{2N}\ii|\mathcal{P}(\lambda)|^{2}\mathcal{J}\int_{\mathbb{R}} w_{i}(\lambda)\mathbf{S}_{i}(\lambda;x)\mathrm{d}\lambda,\int_{\mathbb{R}} w_{j}(\lambda')\mathbf{S}_{j}(\lambda';x)\mathrm{d}\lambda'\right)
        \\=&
        \mathrm{Re}\int_{\mathbb{R}^{2}}  -2^{2N}\ii|\mathcal{P}(\lambda)|^{2} w_{i}^{*}(\lambda) w_{j}(\lambda') \int_{\mathbb{R}}\mathbf{S}_{i}^{\dagger}(\lambda;x)\mathcal{J}\mathbf{S}_{j}(\lambda';x)\mathrm{d}x\mathrm{d}\lambda\mathrm{d}\lambda'
        \\=&
        2^{2N}\pi \delta_{ij}\int_{\mathbb{R}}|\mathcal{P}(\lambda)|^{6}|w_{i}(\lambda)|^{2}\mathrm{d}\lambda \geq 0. 
    \end{align*}
    Hence, it suffices to examine the quadratic form $(\mathcal{L}\cdot, \cdot)$ on the subspace $\mathbb{E}_{\mathrm{point}}^{\mathrm{X}}$.
    Define the negative cone by
    \begin{equation*}
        \mathcal{N}=\{\mathbf{f}\in \mathrm{X}:(\mathcal{L}\mathbf{f},\mathbf{f})<0\}.
    \end{equation*}
    The number of negative eigenvalues of $\mathcal{L}$ is then equal to $\dim(\mathcal{N})$, which is the dimension of the maximal subspace contained in $\mathcal{N} $\cite{kapitula_counting_2004,kapitula_spectral_2013}.
    
    Since $(\mathcal{L}\mathbf{f},\mathbf{g})=(0,\mathbf{g})=0$ whenever
    $\mathbf{f}\in\mathrm{span}\{\mathcal{C}\mathsf{E}_{point}^{+}\cup \mathcal{C}\ii\mathsf{E}_{point}^{+}\}$, 
    it suffices to consider the quadratic form on the space $\mathrm{span}\{\mathcal{C}\hat{\mathsf{E}}_{point}^{+}\cup \mathcal{C}\ii\hat{\mathsf{E}}_{point}^{+}\}$. 
    By the symmetry property \eqref{symmetric-squared-eigenfunctions} of the squared eigenfunctions, we have
    \begin{align*}
        \mathcal{C}\mathbf{S}_{1,\lambda}(\lambda_{k})=&\mathbf{S}_{1,\lambda}(\lambda_{k})-\mathbf{S}_{-1,\lambda}(\lambda_{k}^{*}), \\
        \mathcal{C}\ii\mathbf{S}_{1,\lambda}(\lambda_{k})=&\ii(\mathbf{S}_{1,\lambda}(\lambda_{k})+\mathbf{S}_{-1,\lambda}(\lambda_{k}^{*})).
    \end{align*}
    Hence the matrix representation of the quadratic form is block diagonal:
    \begin{equation}\label{X-negative-matrix}
        (\mathcal{L}\mathbf{f},\mathbf{g})=\mathrm{diag}(\mathbf{A}_{1},\mathbf{A}_{2},\cdots,\mathbf{A}_{N})
    \end{equation}
    where
\begin{equation*}
    \mathbf{f},\mathbf{g}\in \{
        \mathcal{C}\mathbf{S}_{1,\lambda}(\lambda_{k}), \mathcal{C}\ii\mathbf{S}_{1,\lambda}(\lambda_{k}):k=1,2,\cdots, N
    \}
\end{equation*}
    and
    \begin{equation*}
        \mathbf{A}_{k}=2\begin{pmatrix}
            -\mathrm{Re}c_{1k}^{2}A_{k} & \mathrm{Im}c_{1k}^{2}A_{k} \\
            \mathrm{Im}c_{1k}^{2}A_{k} & \mathrm{Re}c_{1k}^{2}A_{k}
        \end{pmatrix}. 
    \end{equation*}
    The matrix \eqref{X-negative-matrix} admits $N$ positive eigenvalues $|c_{1k}^{2}A_{k}|,k=1,2,\cdots,N$ and 
    $N$ negative eigenvalues $-|c_{1k}^{2}A_{k}|,k=1,2,\cdots,N$, respectively. 
\end{proof}
\begin{rem}
    Here we present the example of the $1$-soliton solution.
    Due to the Galilean transformation, we set $a_{1}=0$. Then 
    the corresponding operator is the Schrödinger operator
    \begin{equation*}
        \mathcal{L}(\mathbf{q}^{[1]})=-\partial_{xx}+
        4b_{1}^{2}-2|q^{[1]}|^{2}-2(q^{[1]})^{2}{\rm e}^{2\ii \theta_{1}}\mathbf{v}_{\alpha}^{\theta_{1}-\theta_{2}}\otimes 
        \mathbf{v}_{\alpha}^{\theta_{1}-\theta_{2}}{\cdot}^{*}
        -2|q^{[1]}|^{2}\mathbf{v}_{\alpha}^{\theta_{1}-\theta_{2}}\otimes 
        \mathbf{v}_{\alpha}^{\theta_{2}-\theta_{1}},
    \end{equation*}
    where $q^{[1]}$ is given in \eqref{NLS-1-soliton} with $\theta=a_{1}=0$. By Theorem \ref{thm-spectrum-L-tildeL}, 
$\mathcal{L}(\mathbf{q}^{[1]})$ admits one negative eigenvalue.
\end{rem}

After determining the number of negative eigenvalues of the operator $\mathcal{L}$, the nonlinear stability of the $N$-soliton 
can be established by exploiting the coercivity of $\mathcal{L}$ under suitable conditions in 
a neighborhood of the manifold consisting of $N$-soliton solutions. 

\subsection{The reduced Hamiltonian}
The negative direction of the operator $\mathcal{L}$ is associated with the matrix
\begin{equation}\label{matrix-H}
    \mathbf{H}= (H_{\sigma\tau})
\end{equation}
with entries given by the coefficients of the polynomial \eqref{P-hatP} and the conserved quantities:
\begin{equation}\label{mat-H-element}
 H_{\sigma\tau}=\partial_{\sigma \tau}\mathcal{I}-\sum_{n=0}^{2N}\partial_{\sigma \tau}(\mu_{n})\mathcal{H}_{n}
\end{equation}
for $\sigma,\tau\in\{a_{k},b_{k}:k=1,2,\cdots,N\}$. The following lemma holds:
\begin{lem}
    Let $\mathbf{H}$ be the matrix defined in \eqref{matrix-H} and \eqref{mat-H-element}. 
    If the spectral parameters are pairwise distinct, then $\mathbf{H}$ is nondegenerate. 
    Moreover, $\mathbf{H}/2^{2N+2}$ has $N$ positive eigenvalues $b_{k}|J_{k}|$ and $N$ negative eigenvalues $-b_{k}|J_{k}|$ for $k=1,2,\dots, N$, 
    where
    \begin{equation*}
        J_{k}=\left.\frac{\mathcal{P}(\lambda)\hat{\mathcal{P}}(\lambda)}{(\lambda-\lambda_{k})(\lambda-\lambda_{k}^{*})}\right|_{\lambda=\lambda_{k}}.
    \end{equation*}
    \end{lem}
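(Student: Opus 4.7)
The plan is to restrict $\mathcal{I}$, $\mathcal{H}_n$, and the coefficients $\mu_n$ to the $N$-soliton manifold, treating them as smooth functions of the spectral parameters $\{a_k,b_k\}_{k=1}^N$, and then use the critical-point condition together with the explicit trace formula \eqref{H-spectral} to collapse the Hessian into a simple expression. Since $\partial_\sigma\mathcal{H}_n=(\delta\mathcal{H}_n/\delta\mathbf{q},\partial_\sigma\mathbf{q}^{[N]})$, we obtain $\sum_n\mu_n\,\partial_\sigma\mathcal{H}_n=(\delta\mathcal{I}/\delta\mathbf{q}(\mathbf{q}^{[N]}),\partial_\sigma\mathbf{q}^{[N]})=0$ by Lemma~\ref{lem-critical-point}. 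Hence $\partial_\sigma\mathcal{I}=\sum_n\partial_\sigma(\mu_n)\,\mathcal{H}_n$, and differentiating once more yields the key identity
$$H_{\sigma\tau}=\sum_{n=0}^{2N}\partial_\sigma(\mu_n)\,\partial_\tau\mathcal{H}_n.$$

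Next, I would translate this into generating-function data. With $F(\lambda):=\mathcal{P}(\lambda)\hat{\mathcal{P}}(\lambda)=\sum_n 2^{n-2N}\mu_n\lambda^n$ from \eqref{P-hatP}, the derivatives $\partial_{a_l}\mathcal{H}_n=2^{n+1}\mathrm{Im}\,\lambda_l^n$ and $\partial_{b_l}\mathcal{H}_n=2^{n+1}\mathrm{Re}\,\lambda_l^n=2^{n+1}\mathrm{Im}(i\lambda_l^n)$ from \eqref{H-spectral}, combined with $\partial_{a_l}\lambda_l=1$ and $\partial_{b_l}\lambda_l=i$, convert the identity above into the compact form
$$H_{\sigma\tau}=2^{2N+1}\sum_{k=1}^N\mathrm{Im}\bigl[\partial_\sigma F(\lambda_k)\,\partial_\tau\lambda_k\bigr].$$
This compression of the Hessian into a finite sum over the roots $\lambda_k$ of $F$, involving only first derivatives of $F$, is the crux of the argument.

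The block-diagonal structure is then immediate. Writing $F=\prod_j P_j$ with $P_j(\lambda):=(\lambda-\lambda_j)(\lambda-\lambda_j^*)$, the derivative $\partial_{a_m}F$ or $\partial_{b_m}F$ acts only on the factor $P_m$, leaving every other $P_j$ untouched. Since $P_k(\lambda_k)=0$, this gives $\partial_\sigma F(\lambda_k)=0$ whenever $\sigma\in\{a_m,b_m\}$ with $m\ne k$, so $H_{\sigma\tau}=0$ unless $\sigma$ and $\tau$ share the same index $k$, and $\mathbf{H}$ is block-diagonal with $N$ blocks of size $2\times 2$. A direct computation yields $\partial_{a_k}F(\lambda_k)=-2ib_k J_k$ and $\partial_{b_k}F(\lambda_k)=2b_k J_k$, where $J_k=\prod_{j\ne k}P_j(\lambda_k)$ coincides with the quantity in the statement. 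Substituting, the $k$-th block of $\mathbf{H}/2^{2N+2}$ reads $b_k\begin{pmatrix}-\mathrm{Re}\,J_k & \mathrm{Im}\,J_k\\ \mathrm{Im}\,J_k & \mathrm{Re}\,J_k\end{pmatrix}$, which has trace zero and determinant $-b_k^2|J_k|^2$, hence eigenvalues $\pm b_k|J_k|$.

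Finally, since the $\lambda_k\in\mathbb{C}^+$ are pairwise distinct, $\lambda_k\ne\lambda_j$ and $\lambda_k\ne\lambda_j^*$ for all $j\ne k$, so $J_k\ne 0$ and $\mathbf{H}$ is nondegenerate. The principal obstacle is the reduction to the compact formula for $H_{\sigma\tau}$; once that identity is secured, the block-diagonal structure and the eigenvalue count follow by routine algebra.
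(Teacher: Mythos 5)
Your proposal is correct and follows essentially the same route as the paper: the critical-point identity collapses $H_{\sigma\tau}$ to $\sum_{n}\partial_{\sigma}(\mu_{n})\partial_{\tau}\mathcal{H}_{n}$, the trace formula \eqref{H-spectral} and the derivatives of $\mathcal{P}\hat{\mathcal{P}}$ then localize everything at $\lambda=\lambda_{k}$, giving exactly the paper's entries $-b_{k}\mathrm{Re}J_{k}$, $b_{k}\mathrm{Im}J_{k}$, $b_{k}\mathrm{Re}J_{k}$ and the eigenvalues $\pm b_{k}|J_{k}|$. The only difference is cosmetic: you order the coordinates as $N$ interleaved $2\times2$ blocks, while the paper groups all $a$'s and all $b$'s into a $2\times2$ block matrix of diagonal $N\times N$ blocks; the two are permutation-similar.
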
    
\begin{proof}
 Since 
\begin{align*}
    \partial_{\sigma\tau}\mathcal{I}=&\partial_{\sigma}\left(
        \sum_{n=0}^{2N}\partial_{\tau}(\mu_{n})\mathcal{H}_{n}+\left(\sum_{n=0}^{2N}\mu_{n}\frac{\delta \mathcal{H}_{n}}{\delta \mathbf{q}},\partial_{\tau}\mathbf{q}\right)
 \right)\\
 =&\partial_{\sigma}\left(
        \sum_{n=0}^{2N}\partial_{\tau}(\mu_{n})H_{n}
 \right)\\
 =&\sum_{n=0}^{2N}\partial_{\sigma\tau}(\mu_{n})\mathcal{H}_{n}+\sum_{n=0}^{2N}\partial_{\tau}(\mu_{n})\partial_{\sigma}(\mathcal{H}_{n}),
\end{align*}
the element in \eqref{mat-H-element} can be represented as
\begin{equation*}
 H_{\sigma\tau}=\sum_{n=0}^{2N}\partial_{\tau}(\mu_{n})\partial_{\sigma}(\mathcal{H}_{n}). 
\end{equation*}
Hence, we need to consider the derivative of \eqref{P-hatP}. 
Since
\begin{align*}
	\mathcal{P}_{a_{k}}(\lambda)={\rm i}\mathcal{P}_{b_{k}}(\lambda)=-\frac{\mathcal{P}(\lambda)}{\lambda-\lambda_{k}^{*}}, \\
	\hat{\mathcal{P}}_{a_{k}}(\lambda)=-{\rm i}\hat{\mathcal{P}}_{b_{k}}(\lambda)=-\frac{\hat{\mathcal{P}}(\lambda)}{\lambda-\lambda_{k}},
\end{align*}
we obtain 
\begin{align*}
		\partial_{a_{k}}(\mathcal{P}(\lambda)\hat{\mathcal{P}}(\lambda))=-\mathcal{P}(\lambda)\hat{\mathcal{P}}(\lambda)(\frac{1}{\lambda-\lambda_{k}^{*}}+\frac{1}{\lambda-\lambda_{k}}), \\
		\partial_{b_{k}}(\mathcal{P}(\lambda)\hat{\mathcal{P}}(\lambda))={\rm i}\mathcal{P}(\lambda)\hat{\mathcal{P}}(\lambda)(\frac{1}{\lambda-\lambda_{k}^{*}}-\frac{1}{\lambda-\lambda_{k}}),
	\end{align*}
hence 
\begin{align*}
	\partial_{a_{k}}(\mathcal{P}(\lambda)\hat{\mathcal{P}}(\lambda))=
	&\frac{-2(\lambda-a_{k})\mathcal{P}(\lambda)\hat{\mathcal{P}}(\lambda)}{(\lambda-\lambda_{k})(\lambda-\lambda_{k}^{*})}, \\
	\partial_{b_{k}}(\mathcal{P}(\lambda)\hat{\mathcal{P}}(\lambda))=
	&\frac{2b_{k}\mathcal{P}(\lambda)\hat{\mathcal{P}}(\lambda)}{(\lambda-\lambda_{k})(\lambda-\lambda_{k}^{*})}.
\end{align*}
In addition, we have 
\begin{align*}
	\partial_{a_{k}}(\mathcal{H}_{n})=2^{n+1}\mathrm{Im}(\lambda_{k}^{n}),\\
	\partial_{b_{k}}(\mathcal{H}_{n})=2^{n+1}\mathrm{Re}(\lambda_{k}^{n}). 
\end{align*}
It leads to
\begin{align*}
	H_{a_{k}a_{l}}=&\sum_{n=0}^{2N}\partial_{a_{k}}(\mu_{n})2^{n+1}\mathrm{Im}(\lambda_{l}^{n})
	=\mathrm{Im}(2^{2N+1}\partial_{a_{k}}(\mathcal{P}\hat{\mathcal{P}})(\lambda_{l}))
	=-2^{2N+2}b_{k}\delta_{kl}\mathrm{Re}(J_{k}),
\end{align*}
and 
\begin{align*}
	H_{b_{k}b_{l}}=&\sum_{n=0}^{2N}\partial_{b_{k}}(\mu_{n})2^{n+1}\mathrm{Re}(\lambda_{l}^{n})
	=\mathrm{Re}(2^{2N+1}\partial_{b_{k}}(\mathcal{P}\hat{\mathcal{P}})(\lambda_{l}))
	=2^{2N+2}b_{k}\delta_{kl}\mathrm{Re}(J_{k}).
\end{align*}
Moreover, 
\begin{align*}
	H_{a_{k}b_{l}}=&\sum_{n=0}^{2N}\partial_{a_{k}}(\mu_{n})2^{n+1}\mathrm{Re}(\lambda_{l}^{n})
	=\mathrm{Re}(2^{2N+1}\partial_{a_{k}}(\mathcal{P}\hat{\mathcal{P}})(\lambda_{l}))
	=2^{2N+2}b_{k}\delta_{kl}\mathrm{Im}(J_{k}).
\end{align*}
Denote 
\begin{equation*}
    \mathbf{J}=\mathrm{diag}(b_{1}J_{1},b_{2}J_{2},\cdots,b_{N}J_{N}), 
\end{equation*}
then the matrix $\mathbf{H}$ admits the representation
\begin{equation*}
    \mathbf{H}=2^{2N+2}\begin{pmatrix}
 -\mathrm{Re}(\mathbf{J}) & \mathrm{Im}(\mathbf{J}) \\
        \mathrm{Im}(\mathbf{J}) & \mathrm{Re}(\mathbf{J})
    \end{pmatrix}. 
\end{equation*}
The characteristic polynomial of $\mathbf{H}$ becomes 
\begin{align*}
    \det(\eta \mathbb{I}_{2N}-\mathbf{H})
 =2^{2N(2N+2)}\prod_{k=1}^{N}\left(\lambda^{2}-\mathrm{Re}(b_{k}J_{k})^{2}-\mathrm{Im}(b_{k}J_{k})^{2}\right)
\end{align*}
and the roots are $\pm b_{k}|J_{k}|$ for $k=1,2,\cdots, N$. This completes the proof.
\end{proof}
Denote by $\mathrm{n}(\cdot)$, $\mathrm{z}(\cdot)$, and $\mathrm{p}(\cdot)$ the numbers of negative, 
zero, and positive eigenvalues of $\cdot$, respectively, where $\cdot$ denotes a matrix or an operator.
The number of positive eigenvalues for $\mathbf{H}$ is given by
\begin{equation}\label{pos-H}
    \mathrm{p}(\mathbf{H})=N. 
\end{equation}
Define the auxiliary quantities
\begin{align}\label{con-aux}
    \mathcal{Q}_{a_{k}}=\sum_{n=0}^{2N} (\partial_{a_{k}}\mu_{n}) \mathcal{H}_{n}, \quad 
    \mathcal{Q}_{b_{k}}=\sum_{n=0}^{2N} (\partial_{b_{k}}\mu_{n}) \mathcal{H}_{n}
\end{align}
which are independent of time. Let $\mathcal{P}$ be the projection of $\mathrm{X}$ onto 
\begin{equation*}
    \mathrm{X}_{1}=\mathrm{span}\left\{
        \frac{\delta \mathcal{Q}_{a_{k}}}{\delta \mathbf{q}},\frac{\delta \mathcal{Q}_{b_{k}}}{\delta \mathbf{q}}:
 k=1,2,\cdots, N
 \right\}^{\perp},
\end{equation*}
then we have the following lemma 
\begin{lem}[\cite{grillakis_stability_1990,ling2024stability}]\label{spectrum-reduce}
    For $\sigma, \tau \in \{a_{1},b_{1},a_{2},b_{2},\dots,a_{N},b_{N}\}$,
    the element \eqref{mat-H-element} can be expressed as 
    \begin{equation}
 H_{\sigma \tau}=-(\mathcal{L}\partial_{\sigma}\mathbf{q},\partial_{\tau}\mathbf{q}).
    \end{equation}
    The orthogonal complement of $\mathrm{X}_{1}$ is isomorphic to
    \begin{equation*}
        \mathrm{Y}_{1}=\mathrm{span}\left\{
            \partial_{a_{k}}\mathbf{q},\partial_{b_{k}}\mathbf{q}:k=1,2,\cdots,N
 \right\}
    \end{equation*}
 and the operator $\mathcal{L}:\mathrm{Y}_{1}\to\mathrm{X}_{1}^{\perp}$ is invertible, with
    \begin{equation}
        \mathcal{L}\partial_{\sigma}\mathbf{q}=-\frac{\delta \mathcal{Q}_{\sigma}}{\delta \mathbf{q}}. 
    \end{equation}
    Moreover, for $\mathbf{f} \in \mathrm{X}_{1}$ and $\mathbf{g} \in \mathrm{Y}_{1}$,
    the inner product satisfies
    \begin{equation}
 (\mathbf{f},\mathcal{L}\mathbf{g})=0.
    \end{equation}
    The following identities hold:
    \begin{align*}
        \mathrm{n}(\mathcal{L}\mathcal{P})=&\mathrm{n}(\mathcal{L})-\mathrm{p}(\mathbf{H}), \\
        \mathrm{z}(\mathcal{L}\mathcal{P})=&\mathrm{z}(\mathcal{L}).
    \end{align*}
\end{lem}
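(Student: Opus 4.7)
The plan is to reduce the infinite-dimensional spectral count for $\mathcal{L}$ to a finite-dimensional Sylvester argument on the spectral-parameter derivative space $\mathrm{Y}_{1}$, following the standard Grillakis--Shatah--Strauss reduction. First I would differentiate the Euler--Lagrange identity $\delta\mathcal{I}/\delta\mathbf{q}(\mathbf{q}^{[N]})=\sum_{n} \mu_{n}\,\delta\mathcal{H}_{n}/\delta\mathbf{q}=0$ from Lemma~\ref{lem-critical-point} with respect to a spectral parameter $\sigma\in\{a_{k},b_{k}\}$. The chain rule applied to the definition \eqref{op-L} of $\mathcal{L}$ and the definition \eqref{con-aux} of $\mathcal{Q}_{\sigma}$ gives
\begin{equation*}
0=\sum_{n=0}^{2N}(\partial_{\sigma}\mu_{n})\frac{\delta\mathcal{H}_{n}}{\delta\mathbf{q}}+\sum_{n=0}^{2N}\mu_{n}\frac{\delta^{2}\mathcal{H}_{n}}{\delta\mathbf{q}^{2}}\partial_{\sigma}\mathbf{q}=\frac{\delta\mathcal{Q}_{\sigma}}{\delta\mathbf{q}}+\mathcal{L}\,\partial_{\sigma}\mathbf{q},
\end{equation*}
which proves the identity $\mathcal{L}\partial_{\sigma}\mathbf{q}=-\delta\mathcal{Q}_{\sigma}/\delta\mathbf{q}$. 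Pairing with $\partial_{\tau}\mathbf{q}$ and using $(\delta\mathcal{H}_{n}/\delta\mathbf{q},\partial_{\tau}\mathbf{q})=\partial_{\tau}\mathcal{H}_{n}$ yields
\begin{equation*}
-(\mathcal{L}\partial_{\sigma}\mathbf{q},\partial_{\tau}\mathbf{q})=\sum_{n=0}^{2N}(\partial_{\sigma}\mu_{n})\,\partial_{\tau}\mathcal{H}_{n}=H_{\tau\sigma}=H_{\sigma\tau},
\end{equation*}
where the last equality uses the symmetry of the Hessian proved in the preceding lemma, identifying $-\mathbf{H}$ as the Gram matrix of $(\mathcal{L}\cdot,\cdot)$ on $\mathrm{Y}_{1}$ in the basis $\{\partial_{\sigma}\mathbf{q}\}$.

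Next I would settle the orthogonality and the isomorphism. The orthogonality $(\mathbf{f},\mathcal{L}\mathbf{g})=0$ for $\mathbf{f}\in\mathrm{X}_{1}$ and $\mathbf{g}=\sum_{\sigma} c_{\sigma}\partial_{\sigma}\mathbf{q}\in\mathrm{Y}_{1}$ is immediate, since $\mathcal{L}\mathbf{g}=-\sum_{\sigma} c_{\sigma}\delta\mathcal{Q}_{\sigma}/\delta\mathbf{q}$ lies in the very subspace whose orthogonal complement defines $\mathrm{X}_{1}$. For the bijectivity of $\mathcal{L}:\mathrm{Y}_{1}\to\mathrm{X}_{1}^{\perp}$, nondegeneracy of $\mathbf{H}$ forces both $\{\partial_{\sigma}\mathbf{q}\}$ and $\{\delta\mathcal{Q}_{\sigma}/\delta\mathbf{q}\}$ to be linearly independent families of $2N$ vectors, so $\mathrm{Y}_{1}$ and $\mathrm{X}_{1}^{\perp}$ have equal dimension. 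Injectivity of $\mathcal{L}|_{\mathrm{Y}_{1}}$ then follows because $\mathcal{L}(\sum_{\sigma} c_{\sigma}\partial_{\sigma}\mathbf{q})=0$ implies, after pairing with $\partial_{\tau}\mathbf{q}$, that $\sum_{\sigma} c_{\sigma}H_{\sigma\tau}=0$ for every $\tau$, and hence $\mathbf{c}=0$ by nondegeneracy of $\mathbf{H}$.

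For the final spectral identities I would use the direct-sum decomposition $\mathrm{X}=\mathrm{X}_{1}\oplus\mathrm{Y}_{1}$, which holds because any $\mathbf{g}=\sum_{\sigma} c_{\sigma}\partial_{\sigma}\mathbf{q}\in\mathrm{X}_{1}\cap\mathrm{Y}_{1}$ satisfies $0=(\mathbf{g},\delta\mathcal{Q}_{\tau}/\delta\mathbf{q})=\sum_{\sigma} c_{\sigma}H_{\sigma\tau}$ for every $\tau$, forcing $\mathbf{c}=0$ by the same nondegeneracy argument. Writing any $\mathbf{u}\in\mathrm{X}$ as $\mathbf{u}_{1}+\mathbf{g}_{1}$ with $\mathbf{u}_{1}\in\mathrm{X}_{1}$ and $\mathbf{g}_{1}=\sum_{\sigma} c_{\sigma}\partial_{\sigma}\mathbf{q}$, the cross term $(\mathcal{L}\mathbf{u}_{1},\mathbf{g}_{1})=(\mathbf{u}_{1},\mathcal{L}\mathbf{g}_{1})$ vanishes by the orthogonality, so the quadratic form decomposes as
\begin{equation*}
(\mathcal{L}\mathbf{u},\mathbf{u})=(\mathcal{L}\mathbf{u}_{1},\mathbf{u}_{1})-\mathbf{c}^{T}\mathbf{H}\mathbf{c}.
\end{equation*}
Sylvester's law of inertia applied to this orthogonal sum then yields $\mathrm{n}(\mathcal{L})=\mathrm{n}(\mathcal{L}\mathcal{P})+\mathrm{p}(\mathbf{H})$ and $\mathrm{z}(\mathcal{L})=\mathrm{z}(\mathcal{L}\mathcal{P})$, where the first identity rearranges to the claim and the second uses nondegeneracy of $\mathbf{H}$. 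The main subtlety will be justifying Sylvester's law in this unbounded operator setting, i.e.\ that $\mathrm{n}(\mathcal{L}\mathcal{P})$ and $\mathrm{n}(\mathcal{L})$ are genuinely finite integers; this is guaranteed by part~(a) of Theorem~\ref{thm-spectrum-L-tildeL} together with the fact that the essential spectrum of $\mathcal{L}$ is strictly positive, so no spectral mass escapes under the finite-codimension reduction to $\mathrm{X}_{1}$.
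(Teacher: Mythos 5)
Your proof is correct, and it is the standard Grillakis--Shatah--Strauss reduction: the identity $\mathcal{L}\partial_{\sigma}\mathbf{q}=-\delta\mathcal{Q}_{\sigma}/\delta\mathbf{q}$ obtained by differentiating the Euler--Lagrange equation in the spectral parameters, the resulting identification of $-\mathbf{H}$ as the Gram matrix of $(\mathcal{L}\cdot,\cdot)$ on $\mathrm{Y}_{1}$, and the $\mathcal{L}$-orthogonal splitting $\mathrm{X}=\mathrm{X}_{1}\oplus\mathrm{Y}_{1}$ feeding into an inertia count. The paper does not supply its own proof of this lemma --- it is stated with citations to Grillakis (1990) and Ling--Pelinovsky (2024) --- and your argument is precisely the one contained in those references, with the finiteness of the indices correctly justified via Theorem~\ref{thm-spectrum-L-tildeL}(a) and the strictly positive essential spectrum.
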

An immediate corollary of Lemma \ref{spectrum-reduce} is that the operator $\mathcal{L}$ is coercive in
\begin{equation*}
    \mathcal{R}'(\mathbf{q})=\mathrm{Ker}(\mathcal{L})^{\perp}\cap \mathrm{X}_{1}
\end{equation*}
with respect to the $L^{2}$ norm, as follows from \eqref{pos-H} and part (a) of Theorem \ref{thm-spectrum-L-tildeL}.
Since the functions in $\mathrm{X}_{1}$ are the variation of the conserved 
quantities \eqref{con-aux}, we define
\begin{equation*}
    \mathcal{R}(\mathbf{q})=\mathrm{Ker}(\mathcal{L})^{\perp}\cap \mathrm{span}\left\{\mathbf{z}:
        \mathcal{Q}_{\sigma}(\mathbf{q})=\mathcal{Q}_{\sigma}(\mathbf{q}+\mathbf{z}),\sigma=a_{1},b_{1},
 a_{2},b_{2},\cdots,a_{N},b_{N}
 \right\}
\end{equation*}
which connects $\mathcal{R}'(\mathbf{q})$ to the spectral parameters.
Moreover, in $\mathcal{R}'(\mathbf{q})$, the operator $\mathcal{L}$ possesses $H^{N}$-coercivity rather than $L^{2}$-coercivity.
\begin{lem}\label{core-R}
    Let $\mathbf{z}\in H^{N}(\mathbb{R},\mathrm{X})\cap \mathcal{R}'(\mathbf{q})$. 
    Then the operator $\mathcal{L}$ is $H^{N}$-coercive in the sense that
        \begin{equation}\label{corecivity-R'}
     (\mathcal{L} \mathbf{z},\mathbf{z})\geq C \|\mathbf{z}\|_{H^{N}}^{2}
        \end{equation}
    for some positive constant $C$.  
    
    Moreover, if $\mathbf{z}\in H^{N}(\mathbb{R},\mathrm{X})\cap \mathcal{R}(\mathbf{q})$ 
    and $\|\mathbf{z}\|_{H^{N}}$ is sufficiently small, then
        \begin{equation}\label{corecivity-R}
     (\mathcal{L} \mathbf{z},\mathbf{z})\geq C_{1} \|\mathbf{z}\|_{H^{N}}^{2}-C_{2}\|\mathbf{z}\|_{H^{N}}^{3}
        \end{equation}
    for some positive constants $C_{1}$ and $C_{2}$.
    \end{lem}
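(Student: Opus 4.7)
The proof consists of two parts. For \eqref{corecivity-R'}, the strategy is to combine an $L^{2}$ spectral bound with a G\aa rding-type ellipticity estimate. By part (a) of Theorem \ref{thm-spectrum-L-tildeL}, the operator $\mathcal{L}$ has essential spectrum contained in $[2^{2N}\min_{\lambda\in \mathbb{R}}|\mathcal{P}(\lambda)|^{2},\infty)$, which is bounded away from zero, together with a finite point spectrum consisting of $N$ negative eigenvalues and the zero eigenvalue of multiplicity $4N$. Since $\mathrm{p}(\mathbf{H})=N$, Lemma \ref{spectrum-reduce} gives $\mathrm{n}(\mathcal{L}\mathcal{P})=0$ and $\mathrm{z}(\mathcal{L}\mathcal{P})=\mathrm{z}(\mathcal{L})$. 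Restricting further to $\mathcal{R}'(\mathbf{q})=\mathrm{Ker}(\mathcal{L})^{\perp}\cap \mathrm{X}_{1}$ removes the remaining zero modes, so the spectrum of $\mathcal{L}$ there is contained in $[\delta,\infty)$ for some $\delta>0$, yielding the $L^{2}$ bound $(\mathcal{L}\mathbf{z},\mathbf{z})\geq \delta \|\mathbf{z}\|_{L^{2}}^{2}$. On the other hand, since $\mu_{2N}=1$ from \eqref{P-hatP} and the highest-order contribution of each $\delta \mathcal{H}_{n}/\delta \mathbf{q}$ is $(\ii\partial_{x})^{n}\mathbf{q}$ (with $\mathcal{L}_{2}$ of strictly lower order because the remainder $R_{n}$ is at least cubic), the principal symbol of $\mathcal{L}$ on $\mathrm{X}$ is $|\xi|^{2N}\mathbb{I}$. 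G\aa rding's inequality then provides constants $c_{0},C_{0}>0$ with $(\mathcal{L}\mathbf{z},\mathbf{z})+C_{0}\|\mathbf{z}\|_{L^{2}}^{2}\geq c_{0}\|\mathbf{z}\|_{H^{N}}^{2}$ for every $\mathbf{z}\in H^{N}$. A convex combination of these two bounds delivers \eqref{corecivity-R'}.

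For \eqref{corecivity-R}, the plan is to show that elements of $\mathcal{R}(\mathbf{q})$ differ from elements of $\mathcal{R}'(\mathbf{q})$ by a quadratically small correction and then invoke \eqref{corecivity-R'}. By Lemma \ref{spectrum-reduce}, $\mathcal{L}\partial_{\sigma}\mathbf{q}=-\delta \mathcal{Q}_{\sigma}/\delta \mathbf{q}$, so $\mathrm{X}_{1}^{\perp}=\mathrm{span}\{\delta \mathcal{Q}_{\sigma}/\delta \mathbf{q}\}\subset \mathrm{range}(\mathcal{L})=\mathrm{Ker}(\mathcal{L})^{\perp}$, and we obtain the direct sum $\mathrm{Ker}(\mathcal{L})^{\perp}=\mathcal{R}'(\mathbf{q})\oplus \mathrm{X}_{1}^{\perp}$. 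For $\mathbf{z}\in \mathcal{R}(\mathbf{q})$, decompose $\mathbf{z}=\mathbf{z}_{1}+\mathbf{z}_{2}$ accordingly, with $\mathbf{z}_{1}\in \mathcal{R}'(\mathbf{q})$ and $\mathbf{z}_{2}\in \mathrm{X}_{1}^{\perp}$. Taylor expanding the smooth functionals $\mathcal{Q}_{\sigma}$ on $H^{N}$ and using $\mathbf{z}_{1}\perp \delta \mathcal{Q}_{\sigma}/\delta \mathbf{q}$ gives $(\delta \mathcal{Q}_{\sigma}/\delta \mathbf{q},\mathbf{z}_{2})=\mathcal{O}(\|\mathbf{z}\|_{H^{N}}^{2})$ for every $\sigma$, and inverting the nondegenerate Gram matrix of $\{\delta \mathcal{Q}_{\sigma}/\delta \mathbf{q}\}_{\sigma}$ yields $\|\mathbf{z}_{2}\|_{H^{N}}\leq K\|\mathbf{z}\|_{H^{N}}^{2}$. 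Expanding
\begin{equation*}
(\mathcal{L}\mathbf{z},\mathbf{z})=(\mathcal{L}\mathbf{z}_{1},\mathbf{z}_{1})+2(\mathcal{L}\mathbf{z}_{1},\mathbf{z}_{2})+(\mathcal{L}\mathbf{z}_{2},\mathbf{z}_{2}),
\end{equation*}
applying \eqref{corecivity-R'} to the first term, bounding the remaining two via the continuity $\mathcal{L}:H^{N}\to H^{-N}$, and using $\|\mathbf{z}_{1}\|_{H^{N}}\leq \|\mathbf{z}\|_{H^{N}}+\|\mathbf{z}_{2}\|_{H^{N}}$ together with the quadratic bound on $\mathbf{z}_{2}$, produces \eqref{corecivity-R} once $\|\mathbf{z}\|_{H^{N}}$ is small enough to absorb the $\mathcal{O}(\|\mathbf{z}\|_{H^{N}}^{4})$ corrections into the $\mathcal{O}(\|\mathbf{z}\|_{H^{N}}^{3})$ term.

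The main obstacle I anticipate is justifying the $C^{2}$-smoothness of each $\mathcal{Q}_{\sigma}$ on $H^{N}$ together with a uniform bound on its Hessian as a map $H^{N}\to H^{-N}$ in a neighborhood of $\mathbf{q}$, since this is precisely what controls the remainder in the Taylor expansion of $\mathcal{Q}_{\sigma}(\mathbf{q}+\mathbf{z})-\mathcal{Q}_{\sigma}(\mathbf{q})$ by $\mathcal{O}(\|\mathbf{z}\|_{H^{N}}^{2})$. This reduces to inspecting the conserved densities $\omega_{n+1}$ in \eqref{conserved-quantities} for $n\leq 2N$: each is a polynomial in $\mathbf{q}$ and its derivatives up to order roughly $n$, so the Hessian involves at most $N$ derivatives per factor on the highest order term with $n=2N$, and Sobolev embedding together with integration by parts yields the desired boundedness. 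The calculations are routine in character, but must be executed carefully to ensure the cubic remainder in \eqref{corecivity-R} is uniformly controlled in the relevant neighborhood.
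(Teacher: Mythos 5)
Your proposal is correct. The second half is essentially the paper's own argument: the paper likewise decomposes $\mathbf{z}$ along $\mathcal{R}'(\mathbf{q})$ plus the complementary finite-dimensional directions (see \eqref{decom-z}), Taylor-expands the constraints $\mathcal{Q}_{\sigma}(\mathbf{q}+\mathbf{z})=\mathcal{Q}_{\sigma}(\mathbf{q})$, and inverts a nondegenerate Gram matrix to conclude the correction is $\mathcal{O}(\|\mathbf{z}\|_{H^{N}}^{2})$ (equations \eqref{coeff-lin-equ-1}--\eqref{coeff-lin-equ-2}). Your version is marginally leaner: since $\mathrm{X}_{1}^{\perp}=\mathrm{span}\{\delta\mathcal{Q}_{\sigma}/\delta\mathbf{q}\}=\mathcal{L}(\mathrm{Y}_{1})\subset\mathrm{Ker}(\mathcal{L})^{\perp}$, the kernel component of $\mathbf{z}\in\mathcal{R}(\mathbf{q})$ vanishes from the outset, whereas the paper carries coefficients $\alpha_{ij}$ and eliminates them through \eqref{coeff-lin-equ-2}. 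The only genuine divergence is the passage from $L^{2}$- to $H^{N}$-coercivity in \eqref{corecivity-R'}. The paper argues by contradiction: a normalized sequence with $(\mathcal{L}\mathbf{z}_{n},\mathbf{z}_{n})\to 0$ has $\|\mathbf{z}_{n}\|_{L^{2}}\to 0$, hence all intermediate seminorms vanish by interpolation, hence $\|\partial_{x}^{N}\mathbf{z}_{n}\|_{L^{2}}\to 0$ by the expansion \eqref{Lznzn}, contradicting the normalization. You instead package the same expansion and the same interpolation estimates on the lower-order terms $(\partial_{x}^{i}\mathbf{z},\mathbf{f}_{ij}\partial_{x}^{j}\mathbf{z})$ into a G\aa rding inequality $(\mathcal{L}\mathbf{z},\mathbf{z})+C_{0}\|\mathbf{z}\|_{L^{2}}^{2}\geq c_{0}\|\mathbf{z}\|_{H^{N}}^{2}$, valid on all of $H^{N}$ because the principal symbol is $\mu_{2N}\xi^{2N}=\xi^{2N}$ (with $\mathcal{L}_{2}$ of lower order), and then take a convex combination with the $L^{2}$ bound on $\mathcal{R}'(\mathbf{q})$. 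The two routes rest on identical estimates; yours is direct and quantitative, trading the sequence argument for an explicit constant, which is a mild gain in transparency. Your closing concern about the $C^{2}$-smoothness of $\mathcal{Q}_{\sigma}$ on $H^{N}$ is well placed but is equally implicit in the paper's expansion of $\mathcal{Q}_{\sigma}(\mathbf{q}+\mathbf{z})$; it follows, as you indicate, from the fact that after integration by parts each density $\mathbf{q}^{\dagger}\omega_{n+1}$ with $n\leq 2N$ is a polynomial in $\mathbf{q}$ and at most $N$ of its derivatives.
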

    
\begin{proof}
 Now we prove the inequality \eqref{corecivity-R'} firstly and $\mathbf{z}\in H^{N}(\mathbb{R},\mathrm{X})\cap \mathcal{R}'(\mathbf{q})$. 
 By \eqref{pos-H}, part (a) of Theorem \ref{thm-spectrum-L-tildeL} and Lemma \ref{spectrum-reduce}, the operator have no zero eigenvalue and 
 negative eigenvalue in $H^{N}(\mathbb{R},\mathrm{X})\cap \mathcal{R}'(\mathbf{q})$, the $L^{2}$-corecivity 
 hold. 

 We first prove inequality \eqref{corecivity-R'} for $\mathbf{z}\in H^{N}(\mathbb{R},\mathrm{X})\cap \mathcal{R}'(\mathbf{q})$.
By \eqref{pos-H}, part (a) of Theorem \ref{thm-spectrum-L-tildeL}, and Lemma \ref{spectrum-reduce}, 
the operator has no zero or negative eigenvalues in $H^{N}(\mathbb{R},\mathrm{X})\cap \mathcal{R}'(\mathbf{q})$.
Hence the $L^{2}$-corecivity holds
    \begin{equation}\label{corecivity-L2}
 (\mathcal{L} \mathbf{z},\mathbf{z})\geq C \|\mathbf{z}\|_{L^{2}}^{2}. 
    \end{equation}
    Suppose that $\mathcal{L}$ is not $H^{N}$-coercive.
    Since $\mathcal{L}$ is positive by \eqref{corecivity-L2}, there exists a bounded sequence $\{\mathbf{z}_{n}\}_{n=1}^{\infty}\subset H^{N}$ such that 
    $(\mathcal{L}\mathbf{z}_{n},\mathbf{z}_{n})\to 0$ as $n\to +\infty$. 
    Without loss of generality, assume
    $\|\mathbf{z}_{n}\|_{H^{N}}=1$ for all $n$. Then the $L^{2},\dot{H}^{1},\cdots, \dot{H}^{N-1}$ norms for $\mathbf{z}_{n}$ 
 tend to zero by $L^{2}$-corecivity \eqref{corecivity-L2} and the induction argument to the inequality
    \begin{equation*}
        \|\partial_{x}^{j} \mathbf{z}_{n}\|_{L^{2}}=(-1)^{j}(\partial_{x}^{2j}\mathbf{z}_{n},\mathbf{z}_{n})\leq 
        \|\partial_{x}^{2j}\mathbf{z}_{n}\|_{L^{2}}^{1/2}\|\mathbf{z}_{n}\|_{L^{2}}^{1/2}\leq 
        \|\mathbf{z}_{n}\|_{L^{2}}^{1/2}
    \end{equation*}
 if $j\leq N/2$ and the inequality 
    \begin{equation*}
        \|\partial_{x}^{j} \mathbf{z}_{n}\|_{L^{2}}=(-1)^{N-j}(\partial_{x}^{N}\mathbf{z}_{n},\partial_{x}^{2j-N}\mathbf{z}_{n})\leq 
        \|\partial_{x}^{N}\mathbf{z}_{n}\|_{L^{2}}^{1/2}\|\partial_{x}^{2j-N}\mathbf{z}_{n}\|_{L^{2}}^{1/2}\leq 
        \|\partial_{x}^{2j-N}\mathbf{z}_{n}\|_{L^{2}}^{1/2}
    \end{equation*}
 if $N/2<j\leq N$. Since the $H^{N}$ norm for $\mathbf{z}_{n}$ is $1$, it follows that the limit for the $\dot{H}^{N}$ norm is $1$
    \begin{equation}\label{lim-zn}
        \lim_{n\to \infty}\|\partial_{x}^{N} \mathbf{z}_{n}\|_{L^{2}}=1. 
    \end{equation}
 In addition, we can rewrite
    \begin{equation}\label{Lznzn}
 (\mathcal{L}\mathbf{z}_{n},\mathbf{z}_{n})=\|\partial_{x}^{N} \mathbf{z}_{n}\|_{L^{2}}+
        \sum_{i=0}^{N}\sum_{j=0}^{i-1}(\partial_{x}^{i}\mathbf{z}_{n}, \mathbf{f}_{ij} \partial_{x}^{j}\mathbf{z}_{n})
    \end{equation}
 where $\mathbf{f}_{ij}$ is the polynomial with respect to potential $\mathbf{q}$ and its derivatives and 
 hence $\mathbf{f}_{ij}\in \mathcal{S}(\mathbb{R},\mathrm{X})$. 
 Then the $\dot{H}^{N}$ norm for $\mathbf{z}_{n}$
 vanishes as $n\to \infty$, which contradicts \eqref{lim-zn} since 
    \begin{equation*}
        \|\partial_{x}^{N} \mathbf{z}_{n}\|_{L^{2}}\leq (\mathcal{L}\mathbf{z}_{n},\mathbf{z}_{n})+
        \sum_{0\leq j < i\leq N}\|\mathbf{f}_{ij}\|_{L^{\infty}}
        \|\partial_{x}^{i}\mathbf{z}_{n}\|_{L^{2}}^{1/2}\|\partial_{x}^{j}\mathbf{z}_{n}\|_{L^{2}}^{1/2} \to 0
    \end{equation*}
 by \eqref{Lznzn}. 
 
 Now we turn to proving \eqref{corecivity-R} and assume that 
 $\mathbf{z}\in H^{N}(\mathbb{R},\mathrm{X})\cap \mathcal{R}(\mathbf{q})$ with sufficiently small $\|\mathbf{z}\|_{H^{N}}$. 
 First, we decompose $\mathbf{z}$ along $\mathcal{R}'(\mathbf{q})\oplus (\mathrm{Ker}(\mathcal{L})+ \mathrm{X}_{1}^{\perp})$:
\begin{equation}\label{decom-z}
    \mathbf{z}=\mathbf{z}_{1}
    +\sum_{i=1}^{2}\sum_{j=1}^{N}\alpha_{ij} \partial_{c_{ij}}\mathbf{q}
    +\sum_{\sigma\in\{a_{i},b_{i}:1\leq i\leq N\}}\beta_{\sigma}\frac{\delta \mathcal{Q}_{\sigma}}{\delta \mathbf{q}}.
\end{equation}
Expanding $\mathcal{Q}_{\sigma}$ around $\mathbf{q}$ with a perturbation $\mathbf{z}$, we have
\begin{equation*}
    \mathcal{Q}_{\sigma}(\mathbf{q}+\mathbf{z})=\mathcal{Q}_{\sigma}(\mathbf{q})+
    \left(\frac{\delta \mathcal{Q}_{\sigma}}{\delta \mathbf{q}},\mathbf{z}\right)
    +\mathcal{O}(\|\mathbf{z}\|_{H^{N}}^{2}),
\end{equation*}
which yields $2N$ equations:
\begin{equation}\label{coeff-lin-equ-1}
    \sum_{i=1}^{2}\sum_{j=1}^{N}\alpha_{ij}\left(
        \frac{\delta \mathcal{Q}_{\tau}}{\delta \mathbf{q}},\partial_{c_{ij}}\mathbf{q}
    \right)
    +\sum_{\sigma}\beta_{\sigma}
    \left(
        \frac{\delta \mathcal{Q}_{\tau}}{\delta \mathbf{q}},\frac{\delta \mathcal{Q}_{\sigma}}{\delta \mathbf{q}}
    \right)
    =\mathcal{O}(\|\mathbf{z}\|_{H^{N}}^{2}),
    \quad \tau\in \{a_{i},b_{i}:1\leq i\leq N\},
\end{equation}
since $\mathbf{z}\in \mathcal{R}(\mathbf{q})$.
In addition, since $\mathcal{R}(\mathbf{q})$ and 
$\mathcal{R}'(\mathbf{q})$ are orthogonal to $\mathrm{Ker}(\mathcal{L})$, we obtain $2N$ equations:
\begin{equation}\label{coeff-lin-equ-2}
    \sum_{i=1}^{2}\sum_{j=1}^{N}\alpha_{ij}\left(
        \partial_{c_{kl}}\mathbf{q},\partial_{c_{ij}}\mathbf{q}
    \right)
    +\sum_{\sigma}\beta_{\sigma}
    \left(
        \partial_{c_{kl}}\mathbf{q},\frac{\delta \mathcal{Q}_{\tau}}{\delta \mathbf{q}}
    \right)
    =0,\quad k=1,2,\ l=1,2,\dots,N.
\end{equation}
Solving \eqref{coeff-lin-equ-1} and \eqref{coeff-lin-equ-2} with respect to $\alpha_{\sigma}$ and $\beta_{ij}$, 
the coefficients are given by
\begin{equation}\label{ord-coeff}
    \alpha_{ij}=\mathcal{O}(\|\mathbf{z}\|_{H^{N}}^{2}),\quad 
    \beta_{\sigma}=\mathcal{O}(\|\mathbf{z}\|_{H^{N}}^{2}),
\end{equation}
since the coefficient matrix is the Gram matrix in $\mathcal{R}'(\mathbf{q})$ and is of order $\mathcal{O}(1)$. 
Combining \eqref{decom-z} and \eqref{ord-coeff}, we complete the proof of \eqref{corecivity-R}.
\end{proof}

\subsection{The proof for nonlinear stability for CNLS equations}
In this subsection, we complete the proof of nonlinear stability for the CNLS equations. 
It remains to introduce a modulation argument to connect the kernel of $\mathcal{L}$ with the perturbation 
of the $N$-soliton solutions. For clarity of notation, we define the neighborhood of the $N$-soliton solutions by
\begin{equation*}
    B_{\delta}=\{\mathbf{f}:\|\mathbf{f}-\mathbf{q}^{[N]}(x,t;c_{ij})\|_{H^{N}}\leq \delta\},
\end{equation*}
where $\delta>0$ is sufficiently small. 
\begin{lem}\label{lem-modulation}
    For the $N$-soliton solution $\mathbf{q}^{[N]}(x,t;c_{ij})$, if $\mathbf{u}(x)\in B_{\delta}$,   
    then there exist parameters
    \begin{equation*}
        \tilde{c}_{ij} \in \mathbb{C},
    \end{equation*}
    such that the perturbation $\mathbf{w}(x)=\mathbf{u}(x)-\mathbf{q}^{[N]}(x,t;\tilde{c}_{ij})$
    satisfies
    \begin{equation}\label{orth-ker}
        \mathbf{w}(x)\in \mathrm{Ker}(\mathcal{L})^{\perp}. 
    \end{equation}
    Moreover, if $\mathbf{u}(x,t)\in B_{\delta}$ is an $H^{N}$ solution of the CNLS equations, 
    then there exist time-dependent parameters $c_{ij}(t)$ such that the perturbation 
    $\mathbf{w}(x,t)=\mathbf{u}(x,t)-\mathbf{q}^{[N]}(x,t;c_{ij}(t))$ 
    lies in $\mathrm{Ker}(\mathcal{L})^{\perp}$, and the time derivatives of $c_{ij}(t)$ satisfy
    \begin{equation}\label{der-con}
        \sum_{i,j} |\partial_{t}c_{ij}(t)|\leq C \|\mathbf{w}(t)\|_{L^{2}},
    \end{equation}
    for some positive constant $C$. 
\end{lem}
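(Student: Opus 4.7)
My plan is to construct the modulation parameters via the Implicit Function Theorem (IFT), then differentiate the resulting orthogonality conditions in time to control $\dot{c}_{ij}$. First, I would set up a map $\Phi:B_{\delta}\times \mathbb{C}^{2N}\to \mathbb{R}^{4N}$ whose real components are
\begin{equation*}
\Phi_{ij}^{\mathrm{re}}(\mathbf{u},\tilde{\mathbf c})=\bigl(\mathbf{u}-\mathbf{q}^{[N]}(\cdot;\tilde{\mathbf{c}}),\mathcal{C}\partial_{c_{ij}}\mathbf{q}^{[N]}(\cdot;\tilde{\mathbf{c}})\bigr),\quad
\Phi_{ij}^{\mathrm{im}}(\mathbf{u},\tilde{\mathbf c})=\bigl(\mathbf{u}-\mathbf{q}^{[N]}(\cdot;\tilde{\mathbf{c}}),\mathcal{C}\ii\partial_{c_{ij}}\mathbf{q}^{[N]}(\cdot;\tilde{\mathbf{c}})\bigr),
\end{equation*}
for $i=1,2$ and $j=1,\dots,N$. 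At $\mathbf{u}=\mathbf{q}^{[N]}(\cdot;c)$, $\tilde{\mathbf{c}}=c$ the map vanishes, and its Jacobian with respect to $\tilde{\mathbf{c}}$ is (minus) the Gram matrix of the basis $\{\mathcal{C}\partial_{c_{ij}}\mathbf{q}^{[N]},\mathcal{C}\ii\partial_{c_{ij}}\mathbf{q}^{[N]}\}$ of $\mathrm{Ker}(\mathcal{L})$ identified in Theorem \ref{thm-spectrum-L-tildeL}(a). This matrix is invertible because these $4N$ vectors form a basis of the kernel, equivalently, because the map $\tilde{\mathbf{c}}\mapsto \mathbf{q}^{[N]}(\cdot;\tilde{\mathbf{c}})$ is a local $C^{1}$ immersion whenever every column of $\mathbf{c}$ is nonzero. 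The IFT then produces a unique $C^{1}$ map $\mathbf{u}\mapsto \tilde{\mathbf{c}}(\mathbf{u})$ on $B_{\delta}$ (after possibly shrinking $\delta$) satisfying $\Phi(\mathbf{u},\tilde{\mathbf{c}}(\mathbf{u}))=0$, which is precisely \eqref{orth-ker}.

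For the time-dependent statement, I set $c(t):=\tilde{\mathbf{c}}(\mathbf{u}(\cdot,t))$. Since the CNLS flow is $C^{1}$ from $H^{N}$ to $H^{N}$ in $t$ (the right-hand side $-\ii\delta\mathcal{H}_{2}/\delta\mathbf{q}$ is continuous as a map $H^{N}\to H^{N-2}$, and on the soliton tube we have a uniform $H^{N}$ bound), the composition $c(t)$ is $C^{1}$. To bound $\dot c_{ij}(t)$, I differentiate the four families of orthogonality conditions. Writing the flow as $\partial_{t}\mathbf{u}=-\ii\,\delta\mathcal{H}_{2}/\delta\mathbf{q}(\mathbf{u})$ and using that $\mathbf{q}^{[N]}(\cdot,t;c)$ at \emph{fixed} $c$ is itself a solution, I get
\begin{equation*}
\partial_{t}\mathbf{w}=-\ii\!\left(\tfrac{\delta \mathcal{H}_{2}}{\delta\mathbf{q}}(\mathbf{u})-\tfrac{\delta \mathcal{H}_{2}}{\delta\mathbf{q}}(\mathbf{q}^{[N]})\right)-\sum_{k,l}\dot c_{kl}\,\partial_{c_{kl}}\mathbf{q}^{[N]}.
\end{equation*}
Substituting this into the derivative of $(\mathbf{w},\mathcal{C}\partial_{c_{ij}}\mathbf{q}^{[N]})=0$ (and its $\ii$-analogue) yields a linear system
\begin{equation*}
\bigl(\mathbf{G}+R(\mathbf{w})\bigr)\,\dot{\mathbf{c}}=\mathbf{b}(\mathbf{w}),
\end{equation*}
where $\mathbf{G}$ is the invertible Gram matrix above, $R(\mathbf{w})=\mathcal{O}(\|\mathbf{w}\|_{L^{2}})$ arises from pairing $\mathbf{w}$ against $\partial_{c_{kl}}\partial_{c_{ij}}\mathbf{q}^{[N]}$, and $\mathbf{b}(\mathbf{w})$ collects pairings of the nonlinear difference of variational derivatives against the Schwartz test functions $\mathcal{C}\partial_{c_{ij}}\mathbf{q}^{[N]}$.

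The main technical point is showing $\mathbf{b}(\mathbf{w})=\mathcal{O}(\|\mathbf{w}\|_{L^{2}})$, since $\delta\mathcal{H}_{2}/\delta\mathbf{q}$ contains second-order derivatives of $\mathbf{u}$. The trick is that the pairing is against fixed exponentially localized Schwartz functions $\partial_{c_{ij}}\mathbf{q}^{[N]}\in\mathcal{S}(\mathbb{R};\mathbb{C}^{2})$: integrating by parts moves all derivatives onto these test functions, and the cubic terms in $\delta\mathcal{H}_{2}/\delta\mathbf{q}(\mathbf{u})-\delta\mathcal{H}_{2}/\delta\mathbf{q}(\mathbf{q}^{[N]})$ are of the schematic form (Schwartz coefficient)$\cdot \mathbf{w}+\mathcal{O}(\|\mathbf{w}\|_{L^{\infty}}\|\mathbf{w}\|_{L^{2}})$, giving a bound $C\|\mathbf{w}\|_{L^{2}}$ after Sobolev embedding and using $\|\mathbf{w}\|_{H^{N}}\le \delta$. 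For $\delta$ small enough, $\mathbf{G}+R(\mathbf{w})$ is invertible with uniformly bounded inverse, and inverting the linear system yields $\sum_{i,j}|\dot c_{ij}(t)|\le C\|\mathbf{w}(t)\|_{L^{2}}$, which is \eqref{der-con}. The hardest step is this careful integration-by-parts bookkeeping — in particular, verifying that no uncontrolled $H^{N}$-norm of $\mathbf{w}$ appears, which is precisely where the Schwartz-class decay of $\partial_{c_{ij}}\mathbf{q}^{[N]}$ (inherited from the Darboux construction in Section \ref{sec-$N$-soliton-CNLS}) is essential.
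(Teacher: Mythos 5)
Your proposal is correct and follows essentially the same route as the paper: the implicit function theorem applied to the orthogonality conditions against the kernel basis $\{\mathcal{C}\partial_{c_{ij}}\mathbf{q}^{[N]},\mathcal{C}\ii\partial_{c_{ij}}\mathbf{q}^{[N]}\}$ (with the invertible Gram matrix as Jacobian), followed by differentiating those conditions in time, using that both $\mathbf{u}$ and $\mathbf{q}^{[N]}$ solve CNLS, and integrating by parts to move all spatial derivatives onto the Schwartz-class functions $\partial_{c_{kl}}\mathbf{q}^{[N]}$ so that the right-hand side of the resulting linear system for $\dot{c}_{ij}$ is $\mathcal{O}(\|\mathbf{w}\|_{L^{2}})$. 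The paper's proof performs exactly this bookkeeping, writing the nonlinear difference as $|\mathbf{u}|^{2}\mathbf{w}+\mathbf{q}^{[N]}(\mathbf{u}^{\dagger}\mathbf{w}+\mathbf{w}^{\dagger}\mathbf{q}^{[N]})$ and inverting the perturbed Gram matrix, so no further comparison is needed.
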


\begin{proof}
    The equation \eqref{orth-ker} follows immediately from the implicit function theorem. 
    Then, for any $t$ with $\mathbf{u}(x,t)\in B_{\delta}$, there exist $c_{ij}(t)$ such that 
    $\mathbf{w}(x,t)\in\mathrm{Ker}(\mathcal{L})^{\perp}$. 
    Differentiating
    \begin{equation*}
        \left(\mathbf{w}(x,t),\partial_{c_{kl}}\mathbf{q}^{[N]}\right)=0
    \end{equation*}
    with respect to $t$, we obtain 
    \begin{equation}\label{der-cij-1}
        \left(\partial_{t} \mathbf{w}-\sum_{i,j}\partial_{t}c_{ij}\,\partial_{c_{ij}}\mathbf{q}^{[N]},\partial_{c_{kl}}\mathbf{q}^{[N]}\right)
        +\left(\mathbf{w},\partial_{t}\partial_{c_{kl}}\mathbf{q}^{[N]}\right)
        +\left(\mathbf{w},\sum_{i,j}\partial_{t}c_{ij}\,\partial_{c_{ij}}\partial_{c_{kl}}\mathbf{q}^{[N]}\right)=0.
    \end{equation}
    Since $\mathbf{u}(x,t)$ and $\mathbf{q}^{[N]}(x,t)$ are both solutions of the CNLS equations \eqref{CNLS}, 
    the time derivative of the perturbation $\mathbf{w}$ can be expressed in terms of spatial derivatives of 
    $\mathbf{u}$ and $\mathbf{q}^{[N]}$:
    \begin{align*}
        \left| \left(\partial_{t}\mathbf{w},\partial_{c_{kl}}\mathbf{q}^{[N]}\right) \right|
        \leq &\ \left| \left(
            \partial_{x}^{2}\mathbf{w},\partial_{c_{kl}}\mathbf{q}^{[N]}
        \right) \right|+
        \left| \left(
            |\mathbf{u}|^{2} \mathbf{w}
            +\mathbf{q}^{[N]}\big(\mathbf{u}^{\dagger}\mathbf{w}+\mathbf{w}^{\dagger}\mathbf{q}^{[N]}\big),
            \partial_{c_{kl}}\mathbf{q}^{[N]}
        \right) \right| \\
        \leq &\ \big(\|\partial_{x}^{2}\partial_{c_{kl}}\mathbf{q}^{[N]}\|_{L^{2}}
        +\|\partial_{c_{kl}}\mathbf{q}^{[N]}\|_{L^{\infty}}\|\mathbf{u}\|_{H^{1}}^{2}
        +\|\partial_{c_{kl}}\mathbf{q}^{[N]}\|_{L^{\infty}}\|\mathbf{q}^{[N]}\|_{H^{1}}^{2}\big)
        \|\mathbf{w}\|_{L^{2}}.
    \end{align*}
    Moreover, we have
    \begin{equation*}
        \left|\left(\mathbf{w},\partial_{t}\partial_{c_{kl}}\mathbf{q}^{[N]}\right)\right|
        \leq C\|\partial_{c_{kl}}\mathbf{q}^{[N]}\|_{H^{2}}
        \|\mathbf{w}\|_{L^{2}}. 
    \end{equation*}
    Hence, the time derivatives of the scattering parameters satisfy the linear system
    \begin{equation*}
        \sum_{i,j}\partial_{t}c_{ij}
        \big((\partial_{c_{ij}}\mathbf{q}^{[N]},\partial_{c_{kl}}\mathbf{q}^{[N]})
        +(\mathbf{w},\partial_{c_{kl}}\mathbf{q}^{[N]})\big)
        =\mathcal{O}(\|\mathbf{w}\|_{L^{2}}),
    \end{equation*}
    by \eqref{der-cij-1}. The coefficient matrix of this system is non-degenerate since the scattering 
    parameters $\mathbf{c}(t)$ remain close to $\mathbf{c}(0)$. 
    This completes the proof of \eqref{der-con}.
\end{proof}

To prove the nonlinear stability of $N$-soliton solutions for the CNLS equations, we proceed by contradiction. 
\begin{proof}[Proof of stability for $N$-soliton solutions in Theorem \ref{thm-stability}]
Let $\mathbf{u}_{n}(\cdot,t) \in H^{N}(\mathbb{R},\mathbb{C}^{2})$ for any time $t$ and any natural number $n$, 
with initial condition $\mathbf{u}_{n}(\cdot,0)$.  
Assume that there exists $\epsilon_{0}$ such that  
\begin{equation*}
    \|\mathbf{u}_{n}(\cdot,0)-\mathbf{q}^{[N]}(\cdot,0;\mathbf{\Lambda},\mathbf{c}(0))\|_{H^{N}}\leq \frac{1}{n}
\end{equation*}
but  
\begin{equation*}
    \|\mathbf{u}_{n}(\cdot,t_{n})-\mathbf{q}^{[N]}(\cdot,0;\mathbf{\Lambda},\mathbf{c}(t_{n}))\|_{H^{N}}=
    \epsilon_{0}
\end{equation*}
for some sequence $\{t_{n}\}$ and a $C^{1}$ function $\mathbf{c}(t)$.  
Since the Lyapunov functional is continuous, it follows that  
\begin{equation*}
    |\mathcal{I}(\mathbf{u}_{n}(x,t_{n}))-\mathcal{I}(\mathbf{q}^{[N]})|
    =|\mathcal{I}(\mathbf{u}_{n}(x,0))-\mathcal{I}(\mathbf{q}^{[N]})|
    \leq \frac{C}{n}
\end{equation*}
for some constant $C$.  

In addition, there exists a sequence $\mathbf{v}_{n}(x)$ such that  
\begin{equation*}
    \|\mathbf{v}_{n}(\cdot)-\mathbf{u}_{n}(\cdot,t_{n})\|_{H^{N}} \to 0 
\end{equation*}
as $n \to \infty$, with $\mathcal{Q}_{\sigma}(\mathbf{v}_{n})=\mathcal{Q}_{\sigma}(\mathbf{q}^{[N]})$.  
By Lemma \ref{lem-modulation}, for sufficiently large $n$ (so that $\mathbf{u}_{n} \in B_{\delta}$),  
there exists $\mathbf{c}(t)$ such that the perturbation  
\begin{equation*}
    \mathbf{z}_{n}=\mathbf{v}_{n}-\mathbf{q}^{[N]}(x,t_{n};\mathbf{\Lambda},\mathbf{c}(t_{n}))
\end{equation*}
satisfies  
\begin{equation*}
    \mathbf{z}_{n}\in \mathcal{R}(\mathbf{q}^{[N]}). 
\end{equation*}

Hence,  
\begin{equation*}
    |\mathcal{I}(\mathbf{v}_{n})-\mathcal{I}(\mathbf{q}^{[N]})|
    \leq C\|\mathbf{v}_{n}-\mathbf{q}^{[N]}\|_{H^{N}}
    \leq C\big(\|\mathbf{v}_{n}-\mathbf{u}_{n}(\cdot,t_{n})\|_{H^{N}}
              +\|\mathbf{u}_{n}(\cdot,t_{n})-\mathbf{q}^{[N]}\|_{H^{N}}\big)
    \to 0,
\end{equation*}
which contradicts  
\begin{align*}
    |\mathcal{I}(\mathbf{v}_{n})-\mathcal{I}(\mathbf{q}^{[N]})|
    &\geq (\mathcal{L}\mathbf{z}_{n},\mathbf{z}_{n})-C\|\mathbf{z}_{n}\|_{H^{N}}^{3}\\
    &\geq C_{1}\|\mathbf{z}_{n}\|_{H^{N}}^{2}-C\|\mathbf{z}_{n}\|_{H^{N}}^{3}\\
    &\geq C_{1}\epsilon_{0}^{2}
           -2C_{1}\epsilon_{0}\|\mathbf{v}_{n}-\mathbf{u}_{n}(\cdot,t_{n})\|_{H^{N}}
           +C_{1}\|\mathbf{v}_{n}-\mathbf{u}_{n}(\cdot,t_{n})\|_{H^{N}}^{2}
           -C\|\mathbf{z}_{n}\|_{H^{N}}^{3}
\end{align*}
for sufficiently large $n$ and small $\|\mathbf{z}_{n}\|_{H^{N}}$.  
The estimate for the derivative of the scattering parameters is given by \eqref{der-con}.
\end{proof}

\section{Spectral analysis and nonlinear stability of CmKdV solitons}\label{sec-stability-CmKdV}
For the CmKdV equations, the nonlinear stability analysis differs from that for the CNLS equations. 
Since the CmKdV equations are real-valued (all coefficients are real) and the potentials are real-valued functions, 
additional symmetries appear in the Lax pair. In addition to the symmetries in \eqref{sym-U-V}, 
the reality of the potential $\mathbf{Q}$ implies that the pair $(\mathbf{U},\mathbf{V})$ also satisfies
\begin{equation}\label{sym-U-V-more}
    \mathbf{U}(\lambda)=\mathbf{U}^{*}(-\lambda^{*}),\quad 
    \mathbf{V}(\lambda)=\mathbf{V}^{*}(-\lambda^{*}). 
\end{equation}
Consequently, the point spectrum of the Lax operator $\mathcal{L}_{s}$ is symmetric with respect to 
both the real and imaginary axes. 
Further analysis can be found in \cite{ling2023stability}. 

In the following, we work in the space $L^{2}(\mathbb{R},\mathbb{R}^{2})$ rather than $\mathrm{X}$, 
since the potential in the CmKdV equations is real-valued. For clarity of notation, we add a superscript to 
the FMS, the Darboux matrix, and the potentials. 

\subsection{Darboux transformation for the CmKdV equations}\label{DT-mkdv}
There are two cases for the Darboux transformation:  
we can either add a pair of spectral parameters $\lambda_{k},-\lambda_{k}^{*}$ for $\lambda_{k}\in \mathbb{C}^{+}$,  
or a single spectral parameter $\lambda_{k}\in \mathbb{C}^{+}\cap \ii \mathbb{R}$.  
This is due to the fact that the point spectrum of the Lax operator for the CmKdV equations is symmetric 
with respect to both the real and imaginary axes.  
By symmetry \eqref{sym-U-V-more}, if $\phi(\lambda)$ is an eigenfunction of the Lax operator with eigenvalue $\lambda$, 
then $\phi(\lambda)^{*}$ is an eigenfunction with eigenvalue $-\lambda^{*}$. 

Let $N_{1},N_{2}$ be positive integers with $N_{1}+N_{2}=N$, and set $\tilde{N}=N+N_{1}$.  
The Darboux transformation can then be stated as follows:

\begin{prop}\label{DT-N-cmkdv}
 For the Lax pair \eqref{lax-U}--\eqref{lax-V} with $\mathbf{V}=\mathbf{V}_{\mathrm{CmKdV}}$, 
 let $\lambda_{k}\in\mathbb{C}^{++}=\{\lambda\in\mathbb{C}^{+}:\mathrm{Re}\lambda>0\}$ for $k=1,2,\dots,N_{1}$,  
 and $\lambda_{k}\in \mathbb{C}^{+}\cap \ii \mathbb{R}$ for $k=N_{1}+1,\dots,N$,  
with all $\lambda_{k}$ distinct.  
Let $|\mathbf{y}_{k}\rangle$ denote the eigenfunction of the Lax operator with eigenvalue $\lambda_{k}$.  
Then the $N$-fold Darboux transformation is given by
    \begin{equation*}
        \tilde{\mathbf{D}}_{r}^{[N]}(\lambda;x,t)
        =\mathbb{I}_{3}-\sum_{k=1}^{\tilde{N}}
        \frac{\lambda_{k}-\lambda_{k}^{*}}{\lambda-\lambda_{k}^{*}}
        |\mathbf{x}_{k}\rangle \langle \mathbf{y}_{k}|,
    \end{equation*}
 with $\lambda_{k}=-\lambda_{k-N}^{*}$ and $|\mathbf{y}_{k}\rangle=|\mathbf{y}_{k-N}\rangle^{*}$ for 
    $k=N+1,N+2,\dots,\tilde{N}$. 

 The new fundamental solution is
    \begin{equation*}
        \tilde{\mathbf{\Phi}}^{[N]}_{r}(\lambda;x,t)
        =\tilde{\mathbf{D}}_{r}^{[N]}(\lambda;x,t)\tilde{\mathbf{\Phi}}^{[0]}(\lambda;x,t),
    \end{equation*}
 which satisfies the Lax pair \eqref{lax-U}--\eqref{lax-V} with the corresponding potential 
    \begin{equation*}
        \tilde{\mathbf{Q}}^{[N]}=\tilde{\mathbf{Q}}^{[0]}
        +2\ii\sigma_{3}\sum_{k=1}^{\tilde{N}}
        (\lambda_{k}-\lambda_{k}^{*})
        (|\mathbf{x}_k\rangle \langle \mathbf{y}_k|)^{\perp}, 
    \end{equation*}
 where $|\mathbf{x}_k\rangle$ and $|\mathbf{y}_k\rangle$ are three-component vectors,  
 $\langle \mathbf{x}_k|=(|\mathbf{x}_k\rangle)^{\dag}$, and 
 $\langle \mathbf{y}_k|=(|\mathbf{y}_k\rangle)^{\dag}$.  
 They are related by 
        \begin{equation*}
 (|\mathbf{y}_1\rangle, |\mathbf{y}_2\rangle,\dots, |\mathbf{y}_{N+N_{1}}\rangle)
 =(|\mathbf{x}_1\rangle, |\mathbf{x}_2\rangle,\dots, |\mathbf{x}_{N+N_{1}}\rangle)\mathbf{M}
        \end{equation*}
        where 
        \begin{equation*}
            \mathbf{M}=\left(
        \frac{\lambda_{k}-\lambda_{k}^{*}}{\lambda_{l}-\lambda_{k}^{*}}\langle \mathbf{y}_k|\mathbf{y}_l\rangle
 \right)_{1\leq k,l\leq N+N_{1}}.
        \end{equation*}
\end{prop}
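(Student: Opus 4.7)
The plan is to leverage Proposition~\ref{DT-N-CNLS}, which already supplies the Darboux matrix and associated B\"acklund transformation for the shared spatial Lax operator~\eqref{lax-U}. Since the spatial part of the CmKdV Lax pair is identical to that of the CNLS system, the algebraic structure of $\tilde{\mathbf{D}}_r^{[N]}$, the relation between $|\mathbf{x}_k\rangle$ and $|\mathbf{y}_k\rangle$ via the matrix $\mathbf{M}$, and the formula for $\tilde{\mathbf{Q}}^{[N]}$ all carry over without modification. The only substantive task is to check that the prescribed assignment of spectral and scattering data preserves the CmKdV reduction $\mathbf{r}=-\mathbf{q}$, equivalently the symmetry $\mathbf{U}(\lambda)=\mathbf{U}^{*}(-\lambda^{*})$.

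The first step is to verify the symmetry
\begin{equation*}
\tilde{\mathbf{D}}_r^{[N]}(\lambda;x,t)=\tilde{\mathbf{D}}_r^{[N]}(-\lambda^{*};x,t)^{*}
\end{equation*}
at the level of the Darboux matrix. Split the index set $\{1,\dots,\tilde{N}\}$ into three blocks: the complex spectral parameters $\lambda_k\in\mathbb{C}^{++}$ for $k=1,\dots,N_1$; the purely imaginary ones $\lambda_k\in\mathbb{C}^{+}\cap\ii\mathbb{R}$ for $k=N_1+1,\dots,N$; and their mirror images $\lambda_k=-\lambda_{k-N}^{*}$ for $k=N+1,\dots,\tilde{N}$. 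For each pair $(\lambda_k,\lambda_{k+N})$ in the first block, the pairing $|\mathbf{y}_{k+N}\rangle=|\mathbf{y}_k\rangle^{*}$ ensures that the rank-one contributions to $\tilde{\mathbf{D}}_r^{[N]}$ from index $k+N$ coincide with the complex conjugate of those from index $k$ under $\lambda\mapsto-\lambda^{*}$. For the purely imaginary block, $-\lambda_k^{*}=\lambda_k$, so each index is its own mirror; the requirement $\mathbf{c}_k\in\mathbb{R}^{2}\setminus\{(0,0)\}$ then forces $|\mathbf{y}_k\rangle=|\mathbf{y}_k\rangle^{*}$, and the corresponding rank-one term is self-conjugate.

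The second step is to transfer this symmetry from $|\mathbf{y}_k\rangle$ to $|\mathbf{x}_k\rangle$. This is where the main obstacle lies, because $|\mathbf{x}_k\rangle$ is defined implicitly through $\mathbf{M}$. The key observation is that under the prescribed pairing, the matrix $\mathbf{M}$ satisfies a conjugation symmetry: the block of entries coming from pairs $(k,l)$ with both indices in $\{N+1,\dots,\tilde{N}\}$ is the complex conjugate (with appropriate index relabeling) of the block from $\{1,\dots,N_1\}$, while the purely imaginary block is real. Consequently, the solution $|\mathbf{x}_k\rangle$ of the defining linear system inherits the identification $|\mathbf{x}_{k+N}\rangle=|\mathbf{x}_k\rangle^{*}$ on the complex pairs, and $|\mathbf{x}_k\rangle\in\mathbb{R}^{3}$ up to a common phase on the imaginary block. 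This symmetry then propagates to $\tilde{\mathbf{D}}_r^{[N]}$.

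With the Darboux matrix symmetry established, the final step is routine: applying the CNLS B\"acklund formula~\eqref{Ba} (with $\mathbf{a}=\sigma_3$) and taking the $\mathcal{T}^{\perp}$ projection yields
\begin{equation*}
\tilde{\mathbf{Q}}^{[N]}=\tilde{\mathbf{Q}}^{[0]}+2\ii\sigma_{3}\sum_{k=1}^{\tilde{N}}(\lambda_k-\lambda_k^{*})\bigl(|\mathbf{x}_k\rangle\langle\mathbf{y}_k|\bigr)^{\perp},
\end{equation*}
and the conjugation symmetry forces the off-diagonal block $\tilde{\mathbf{q}}^{[N]}$ to be real (while $\tilde{\mathbf{r}}^{[N]}=-\tilde{\mathbf{q}}^{[N]}$), which is the CmKdV reduction. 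That $\tilde{\mathbf{\Phi}}_r^{[N]}$ satisfies the CmKdV Lax pair then follows from the compatibility of the Darboux transformation with the evolution matrix $\mathbf{V}_{\mathrm{CmKdV}}$, whose verification reduces to showing that the residues of $\tilde{\mathbf{D}}_r^{[N]}\mathbf{V}_{\mathrm{CmKdV}}^{[0]}(\tilde{\mathbf{D}}_r^{[N]})^{-1}$ at $\lambda=\lambda_k^{*}$ vanish, exactly as in the CNLS computation.
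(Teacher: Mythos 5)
Your proposal is correct and follows essentially the route the paper intends: the paper states Proposition~\ref{DT-N-cmkdv} without proof, relying on Proposition~\ref{DT-N-CNLS} together with the remark that the symmetry \eqref{sym-U-V-more} pairs the eigenvalue $\lambda$ with $-\lambda^{*}$ via complex conjugation of eigenfunctions, which is exactly the reduction argument you carry out in detail. Your only (harmless) imprecision is the phrase ``up to a common phase'' for the purely imaginary block --- since $\mathbf{M}$ restricted there is real and the $|\mathbf{y}_{k}\rangle$ are real, the $|\mathbf{x}_{k}\rangle$ are genuinely real, with no phase freedom.
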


For the zero potential, the FMS is 
\begin{equation*}
    \tilde{\mathbf{\Phi}}^{[0]}={\rm e}^{\ii \lambda(x+4\lambda^{2}t)\sigma_{3}},
\end{equation*}
and the vectors $|\mathbf{y}_k\rangle$ are given by 
\begin{equation*}
 |\mathbf{y}_k\rangle=\tilde{\mathbf{\Phi}}^{[0]}(\lambda_{k};x,t)c^{[k]}
 ={\rm e}^{{\rm i}\lambda_{k}(x+4\lambda_{k}^{2}t)\sigma_{3}}\begin{pmatrix}
        1 \\ \mathbf{c}_{k}
    \end{pmatrix},\quad k=1,2,\cdots, N,
\end{equation*}
in Proposition \ref{DT-N-cmkdv}, where 
\begin{equation*}
    \mathbf{c}_{k}=(c_{1k},c_{2k})^{T}\in\mathbb{C}^{2}\setminus \{(0,0)\}
\end{equation*}
for $k=1,2,\dots,N_{1}$, and 
\begin{equation*}
    \mathbf{c}_{k}=(c_{1k},c_{2k})^{T}\in\mathbb{R}^{2}\setminus \{(0,0)\}
\end{equation*}
for $k=N_{1}+1,N_{1}+2,\cdots,N$.  
To remove the singularity of the point spectrum with respect to the spectral parameter $\lambda$, 
we consider the Darboux transformation
\begin{equation*}
    \tilde{\mathbf{D}}^{[N]}(\lambda;x,t)=\mathcal{P}(\lambda)\tilde{\mathbf{D}}_{r}^{[N]}(\lambda;x,t),
\end{equation*}
where 
\begin{equation}
    \mathcal{P}(\lambda)=\prod_{k=1}^{N_{1}}(\lambda-\lambda_{k}^{*})(\lambda+\lambda_{k})
    \prod_{k=N_{1}+1}^{N}(\lambda-\lambda_{k}),
\end{equation}
by abuse of notation.  

\subsection{Variational characterization and squared eigenfunctions}\label{squ-eig-cmkdv}

Unlike the CNLS equations, due to the symmetry \eqref{sym-U-V-more}, the function $a(\lambda)$ satisfies
\begin{equation*}
 a(\lambda)=a(-\lambda^{*})^{*},
\end{equation*}
hence all momentum-type conserved quantities vanish, i.e., 
\begin{equation*}
    \mathcal{H}_{2n-1}=0,\quad n\geq 1. 
\end{equation*}
We define 
\begin{equation}\label{conserved-quantities-cmkdv}
    \tilde{\mathcal{H}}_{n}(\mathbf{q})=\mathcal{H}_{2n}(\mathbf{q},\mathbf{q}),
\end{equation}
where $\mathcal{H}_{2n}=\mathcal{H}_{2n}(\mathbf{q},\mathbf{q}^{*})$ is given in 
\eqref{conserved-quantities}. 
Here, in contrast to \eqref{conserved-quantities} where $\mathbf{q}$ is complex-valued, 
we restrict to the real case, so that $\mathbf{q}^{*}=\mathbf{q}$, and the functional reduces to 
\(\mathcal{H}_{2n}(\mathbf{q},\mathbf{q})\).
The $N$-soliton solutions of the CmKdV equations satisfy a semi-linear ODE of order $\tilde{N}$ with real coefficients, which represents the critical points of a Lyapunov functional similar to Lemma \ref{lem-critical-point}, with 
spectral parameters $\lambda_{k},-\lambda_{k}^{*}$ for $k=1,2,\cdots, N_{1}$ and $\lambda_{k}$ for $k=N_{1}+1,N_{1}+2,\cdots,N$.  
The Lyapunov functional is given by 
\begin{equation}\label{Lya-cmkdv}
    \tilde{\mathcal{I}}=\sum_{n=0}^{\tilde{N}}\tilde{\mu}_{n}\tilde{\mathcal{H}}_{n},
\end{equation}
where the coefficients $\tilde{\mu}_{n}$ are determined by  
\begin{equation*}
    \mathcal{P}(\lambda)\hat{\mathcal{P}}(\lambda)=\sum_{n=0}^{2\tilde{N}}2^{2n-2\tilde{N}}\tilde{\mu}_{n}\lambda^{2n},
\end{equation*}
which is a polynomial of order $\tilde{N}$ in $\lambda^{2}$, and $\hat{\mathcal{P}}(\lambda)=\mathcal{P}^{*}(\lambda^*)$.  
The conserved quantities can be expressed in terms of the spectral parameters using \eqref{H-spectral}:
\begin{equation}\label{H-spectral-cmkdv}
    \tilde{\mathcal{H}}_{n}=\frac{2^{2n+1}}{2n+1}\left(
        2\sum_{k=1}^{N_{1}}\mathrm{Im} \lambda_{k}^{2n+1}+\sum_{k=N_{1}+1}^{N}\mathrm{Im}\lambda_{k}^{2n+1}
 \right),
\end{equation}
and the variation is given by 
\begin{equation}
		\frac{\delta \tilde{\mathcal{H}}_{n}}{\delta\mathbf{q}}=
		-2^{2n}\mathrm{i}\left(
			2\sum_{k=1}^{N_{1}}\left(\lambda_{k}^{2n}\frac{\delta \lambda_{k}}{\delta\mathbf{q}}-
			(\lambda_{k}^{*})^{2n}\frac{\delta \lambda_{k}^{*}}{\delta\mathbf{q}}\right) 
			+\sum_{k=N_{1}+1}^{N}\lambda_{k}^{2n}\left(
			\frac{\delta \lambda_{k}}{\delta\mathbf{q}}-\frac{\delta \lambda_{k}^{*}}{\delta\mathbf{q}}
			\right)
		\right).
\end{equation}
The following lemma holds:
\begin{lem}
    The Lyapunov functional $\tilde{\mathcal{I}}(\mathbf{q})$ in \eqref{Lya-cmkdv} is time-independent.  
    The $[N_{1},N_{2}]$-soliton solutions are the critical points of $\tilde{\mathcal{I}}(\mathbf{q})$, i.e., each $[N_{1},N_{2}]$-soliton solution satisfies
       \begin{equation}\label{ODE-Nsoliton-cmkdv}
           \frac{\delta \tilde{\mathcal{I}}}{\delta \mathbf{q}}(\mathbf{q})=0,
       \end{equation}
    which is a semi-linear $\tilde{N}$-th order ODE.  
    All solutions of \eqref{ODE-Nsoliton-cmkdv} with the boundary condition $\mathbf{q}\to 0$ as $|x|\to\infty$ are $N$-soliton solutions.  
\end{lem}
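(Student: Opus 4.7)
The plan is to mirror the structure of Lemma \ref{lem-critical-point} while adapting to the real-valued, symmetric setting of the CmKdV equations. First I would establish time-independence of $\tilde{\mathcal{I}}$: each $\tilde{\mathcal{H}}_n = \mathcal{H}_{2n}$ is a member of the AKNS tower of conserved quantities in involution with the Hamiltonian (see Section \ref{var-Nsoliton}), so any linear combination is preserved along the flow. Second, for the critical-point property I would combine the spectral representation \eqref{H-spectral-cmkdv} with the polynomial identity
\[
\sum_{n=0}^{\tilde{N}}\tilde{\mu}_n\, 2^{2n}\lambda^{2n}=2^{2\tilde{N}}\mathcal{P}(\lambda)\hat{\mathcal{P}}(\lambda).
\]
Substituting into $\delta\tilde{\mathcal{I}}/\delta\mathbf{q}$ collects the coefficient of each variation $\delta\lambda_k/\delta\mathbf{q}$ (respectively $\delta\lambda_k^*/\delta\mathbf{q}$) as a multiple of $\mathcal{P}(\lambda_k)\hat{\mathcal{P}}(\lambda_k)$ (respectively $\mathcal{P}(\lambda_k^*)\hat{\mathcal{P}}(\lambda_k^*)$). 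A direct inspection using the symmetry $\lambda_k\mapsto -\lambda_k^*$ built into Proposition \ref{DT-N-cmkdv} shows that, for breather parameters ($k\leq N_1$), $\lambda_k$ is a root of $\hat{\mathcal{P}}$ and $\lambda_k^*$ a root of $\mathcal{P}$, while for single-soliton parameters ($k>N_1$), $\lambda_k$ is a root of $\mathcal{P}$ and $\lambda_k^*=-\lambda_k$ a root of $\hat{\mathcal{P}}$. Every such coefficient therefore vanishes and the soliton is a critical point of $\tilde{\mathcal{I}}$.

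The third and most delicate step is the converse: every solution of \eqref{ODE-Nsoliton-cmkdv} with $\mathbf{q}\to 0$ as $|x|\to\infty$ is an $(N_1,N_2)$-soliton. Linearizing at $\mathbf{q}=0$ produces a constant-coefficient system whose characteristic polynomial, under the substitution $\eta=\ii\partial_x/2$, is exactly $\mathcal{P}(\eta)\hat{\mathcal{P}}(\eta)$. This polynomial has $2\tilde{N}$ distinct complex roots, grouped as $\{\pm\lambda_k,\pm\lambda_k^*\}$ for breathers and $\{\pm\lambda_k\}$ for single solitons. Because $\mathbf{q}$ takes values in $\mathbb{R}^2$, the real system admits $4\tilde{N}$ fundamental modes, exactly $2\tilde{N}$ of which decay at each spatial infinity. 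Since the remainder $R(\mathbf{q})$ is cubic in $\mathbf{q}$ and its derivatives, the nonlinear terms do not alter the leading exponential asymptotics, so by a standard stable-manifold argument the set of solutions decaying at both $\pm\infty$ is a smooth manifold of real dimension at most $2\tilde{N}$. On the other hand, the Darboux parameterization produces a soliton family with real moduli counted as $4N_1+2N_2=2\tilde{N}$, so the two manifolds coincide by a dimension match, forcing every decaying solution to lie in the soliton family.

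The main obstacle will be making the dimension count for the decaying manifold fully rigorous in this real-valued case. Unlike the CNLS setting of Lemma \ref{lem-critical-point}, where the modes $\mathrm{e}^{-2\ii\lambda_k x}$ and $\mathrm{e}^{-2\ii\lambda_k^* x}$ behave as independent complex exponentials, the reality symmetry \eqref{sym-U-V-more} pairs $\lambda_k$ with $-\lambda_k^*$, so the $\mathbb{R}^2$-valued modes intertwine and one must verify that the Darboux map from the scattering matrix $\mathbf{c}=(\mathbf{c}_1,\dots,\mathbf{c}_N)$ to the soliton profile is an injective immersion of the expected dimension, respecting the reality constraints $\mathbf{c}_k\in\mathbb{R}^2$ for $k>N_1$ and the pairing $\lambda_{k+N}=-\lambda_k^*$, $\mathbf{c}_{k+N}=\mathbf{c}_k^*$ for $k\leq N_1$. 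I would handle this by analyzing the asymptotic form of $\tilde{\mathbf{D}}^{[N]}(\lambda;x)$ as $x\to\pm\infty$ in parallel with the CNLS analysis carried out in Section \ref{sec-$N$-soliton-CNLS}; the Cauchy-type determinants $F_\pm, F_\pm^{rs}$ translate directly into the Jacobian of the scattering parameterization, and their non-vanishing gives the required non-degeneracy.
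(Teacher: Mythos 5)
Your proposal is correct and follows essentially the same route as the paper: the paper simply states that the proof is analogous to Lemma~\ref{lem-critical-point}, and your argument is precisely that adaptation — conservation of each $\tilde{\mathcal{H}}_n=\mathcal{H}_{2n}$ for time-independence, vanishing of $\mathcal{P}(\lambda)\hat{\mathcal{P}}(\lambda)$ at the spectral parameters for the critical-point property, and the matching dimension count ($4N_1+2N_2=2\tilde{N}$ decaying modes versus $2\tilde{N}$ real scattering moduli) for the converse. Your added care about the reality pairing $\lambda_{k+N}=-\lambda_k^*$ and the injectivity of the Darboux parameterization is a reasonable refinement of details the paper leaves implicit, but it does not change the method.
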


\begin{proof}
The proof is similar to that of Lemma \eqref{lem-critical-point}, so we omit it.
\end{proof}

In particular, in this case, the second variation of $\tilde{\mathcal{I}}$ is given by
\begin{equation}\label{op-L-cmkdv}
    \tilde{\mathcal{L}}=\sum_{n=0}^{\tilde{N}}\tilde{\mu}_{n}\frac{\delta^{2}\tilde{\mathcal{H}}_{n}}{\delta \mathbf{q}^{2}}. 
\end{equation}
Since the potential is real-valued, it is unnecessary to consider $\tilde{\mathcal{L}}$ in the space $\mathrm{X}$
(unlike the operator $\mathcal{L}$ in \eqref{op-L}, the operator $\tilde{\mathcal{L}}$ does not contain the term $\cdot^{*}$). 
In this case, the operator $\mathcal{L}$ in \eqref{matrix-L} with spectral parameters 
$\lambda_{k}, -\lambda_{k}^{*}$ for $k=1,2,\ldots,N_{1}$ and $\lambda_{k}$ for $k=N_{1}+1,N_{1}+2,\ldots,N$ admits the representation
\begin{equation*}
    \mathcal{L}=\begin{pmatrix}
        \mathcal{L}_{1} & \mathcal{L}_{2} \\
        \mathcal{L}_{2} & \mathcal{L}_{1}
    \end{pmatrix},
\end{equation*}
for real-valued $\mathbf{q}$, and we have
\begin{equation*}
    \tilde{\mathcal{L}}=\mathcal{L}_{1}+\mathcal{L}_{2},
\end{equation*}
by \eqref{conserved-quantities-cmkdv} and the definitions of $\mathcal{L}$ and $\tilde{\mathcal{L}}$ in \eqref{op-L-cmkdv}.

We also consider the squared eigenfunctions at $t=0$. 
By a slight abuse of notation, we still denote 
\begin{equation}\label{eigenfunction-t0-cmkdv}
    \mathbf{P}_{\pm i}(\lambda;x)=p_{\pm i}(\tilde{\mathbf{\Phi}}^{[N]}|_{t=0}),\quad 
    \mathbf{S}_{\pm i}(\lambda;x)=s_{\pm i}(\tilde{\mathbf{\Phi}}^{[N]}|_{t=0}). 
\end{equation}
When $t=0$, the FMS of the CmKdV Lax pair coincides with that of the CNLS Lax pair,
\begin{equation*}
    \tilde{\mathbf{\Phi}}^{[0]}|_{t=0}=\mathbf{\Phi}^{[0]}|_{t=0}, 
\end{equation*}
and therefore the squared eigenfunctions and squared eigenfunction matrices can be obtained from \eqref{eigenfunction-t0} with 
spectral parameters $\lambda_{k}, -\lambda_{k}^{*}$ for $k=1,2,\ldots,N_{1}$ and $\lambda_{k}$ for $k=N_{1}+1,N_{1}+2,\ldots,N$. 
The squared eigenfunctions and squared eigenfunction matrices in \eqref{eigenfunction-t0-cmkdv} can thus be expressed using 
\eqref{eigenfunction-t0} as
\begin{equation}\label{eigenfunction-t0-cmkdv-1}
    \mathbf{P}_{\pm i}(\lambda;x)=p_{\pm i}(\mathbf{\Phi}^{[\tilde{N}]}|_{t=0}),\quad 
    \mathbf{S}_{\pm i}(\lambda;x)=s_{\pm i}(\mathbf{\Phi}^{[\tilde{N}]}|_{t=0}),
\end{equation}
where the FMS on the right-hand side is given by \eqref{FMS-Nsoliton}
with spectral parameters $\lambda_{k}, -\lambda_{k}^{*}$ for $k=1,2,\ldots,N_{1}$ and $\lambda_{k}$ for $k=N_{1}+1,N_{1}+2,\ldots,N$. 

Under the symmetry \eqref{sym-U-V-more}, 
\begin{equation*}
    \tilde{\mathbf{\Phi}}^{[N]}(\lambda;x,t)=(\tilde{\mathbf{\Phi}}^{[N]})^{*}(-\lambda^{*};x,t),
\end{equation*}
and hence 
\begin{equation}\label{symmetric-squared-eigenfunctions-more}
    \mathbf{P}_{\pm i}(\lambda;x)=\mathbf{P}_{\pm i}^{*}(-\lambda^{*};x),\quad 
    \mathbf{S}_{\pm i}(\lambda;x)=\mathbf{S}_{\pm i}^{*}(-\lambda^{*};x),
\end{equation}
which implies that the squared eigenfunctions at $-\lambda^{*}$ are obtained by taking the complex conjugate of those at $\lambda$. 

For the CmKdV equations, consider the operator 
\begin{equation}
    \mathcal{C}': L^{2}(\mathbb{R},\mathbb{C}^{4})\to L^{2}(\mathbb{R},\mathbb{C}^{2}):\quad 
    \begin{pmatrix}
        \mathbf{g}\\
        \mathbf{h}
    \end{pmatrix}\mapsto \mathbf{g}+\mathbf{h}.
\end{equation}
The squared eigenfunctions are given by 
\begin{equation*}
    \tilde{\mathbf{S}}_{\pm i}(\lambda)=\mathcal{C}' \mathbf{S}_{\pm i}(\lambda).
\end{equation*}
By part (a) of Theorem \ref{thm-spectrum-L-tildeL}, the operator $\mathcal{L}$ has $2\tilde{N}$ negative eigenvalues. 
The number of negative eigenvalues of the operator $\tilde{\mathcal{L}}$ can also be obtained using the Krein signature. 
For the essential spectrum, define 
\begin{align}
    \mathsf{E}'_{ess}=&\{ \tilde{\mathbf{S}}_{\pm i}(\lambda;x): i=1,2, \ \lambda\in\sigma_{ess}(\mathcal{L}_{s}) \}.\label{E-ess-cmkdv}
\end{align}
For the point spectrum, define
\begin{equation}\label{E-point-cmkdv}
		\begin{split}
			\mathsf{E}_{point}'=&\{\tilde{\mathbf{S}}_{1}(\lambda_{k};x),\tilde{\mathbf{S}}_{-2}(\lambda_{k};x),\tilde{\mathbf{S}}_{2}(\lambda_{k}^{*};x),
			\tilde{\mathbf{S}}_{-1}(\lambda_{k}^{*};x):k\notin\Gamma\}\cup
			\\
			&\{\tilde{\mathbf{S}}_{2}(\lambda_{k};x),\tilde{\mathbf{S}}_{-1}(\lambda_{k};x),\tilde{\mathbf{S}}_{1}(\lambda_{k}^{*};x),
			\tilde{\mathbf{S}}_{-2}(\lambda_{k}^{*};x):k\in\Gamma\}, \\
		\end{split}
	\end{equation}
	and
	\begin{equation}\label{hat-E-point-cmkdv} 
		\begin{split}
			\hat{\mathsf{E}}_{point}'=\{\tilde{\mathbf{S}}_{1,\lambda}(\lambda_{k};x),\tilde{\mathbf{S}}_{-1,\lambda}(\lambda_{k}^{*};x): k\notin\Gamma\}\cup\{\tilde{\mathbf{S}}_{2,\lambda}(\lambda_{k};x),\tilde{\mathbf{S}}_{-2,\lambda}(\lambda_{k}^{*};x): k\in\Gamma\}.
		\end{split}
\end{equation}

The key point in the proof of nonlinear stability is to determine the negative directions and the kernel of the operator $\tilde{\mathcal{L}}$.  
The completeness of the squared eigenfunctions and their orthogonality will be used in this section with slight modifications for the case of the CmKdV equations.  

Based on Theorem \ref{thm-orthogonal} in Section \ref{sec-stability-CNLS}, the orthogonality relations for the squared eigenfunctions have been established.  
The orthogonality in the sets \eqref{E-ess-cmkdv}, \eqref{E-point-cmkdv} and \eqref{hat-E-point-cmkdv} follows from that of \eqref{eigenfunction-t0-cmkdv-1} on the Lax spectrum.  
By the completeness of the squared eigenfunctions we determine the number of negative eigenvalues of $\tilde{\mathcal{L}}$.  

Combining the kernel and the negative directions of $\tilde{\mathcal{L}}$ yields the nonlinear stability of the $N$-soliton solutions by the standard arguments in Section \ref{sec-stability-CNLS}.  
The positive Krein signature of the matrix $\mathbf{H}$ in \eqref{matrix-H} with spectral parameters $\lambda_{k}, -\lambda_{k}^{*}$ for $k=1,2,\dots, N_{1}$ and $\lambda_{k}$ for $k=N_{1}+1, N_{1}+2, \dots, N$ equals the number of negative eigenvalues of $\tilde{\mathcal{L}}$.  
The remaining proof is analogous to that in Section \ref{sec-stability-CNLS} and is omitted.

\subsection{The spectrum of $\tilde{\mathcal{L}}$}
For the essential spectrum part we define the set $\mathsf{E}_{ess}'^{+}$, analogous to $\mathsf{E}_{ess}^{+}$, as
\begin{equation*}
    \mathsf{E}_{ess}'^{+}=\{ \mathrm{Re}\tilde{\mathbf{S}}_{i}(\lambda;x),\mathrm{Im}\tilde{\mathbf{S}}_{i}(\lambda;x):i=1,2, \ \lambda\in\sigma_{ess}(\mathcal{L}_{s}) \}. 
\end{equation*}
For the point spectrum part there are slight differences:
\begin{equation*}
    \begin{split}
        &\mathsf{E}_{point}'^{+}=
        \\&\{\mathrm{Re}\tilde{\mathbf{S}}_{1}(\lambda_{k};x),\mathrm{Im}\tilde{\mathbf{S}}_{1}(\lambda_{k};x), 
    \mathrm{Re}\tilde{\mathbf{S}}_{2}(\lambda_{k}^{*};x),\mathrm{Im}\tilde{\mathbf{S}}_{2}(\lambda_{k}^{*};x):k=1,2,\dots,N_{1},k\notin\Gamma\}
 \\ &\cup\{\tilde{\mathbf{S}}_{1}(\lambda_{k};x), \tilde{\mathbf{S}}_{2}(\lambda_{k}^{*};x):k=N_{1}+1,\dots,N,k\notin\Gamma\}
 \\ &\cup\{\mathrm{Re}\tilde{\mathbf{S}}_{2}(\lambda_{k};x),\mathrm{Im}\tilde{\mathbf{S}}_{2}(\lambda_{k};x),
    \mathrm{Re}\tilde{\mathbf{S}}_{1}(\lambda_{k}^{*};x),\mathrm{Im}\tilde{\mathbf{S}}_{1}(\lambda_{k}^{*};x):k=1,2,\dots,N_{1},k\in\Gamma\}
 \\ &\cup \{\tilde{\mathbf{S}}_{2}(\lambda_{k};x), \tilde{\mathbf{S}}_{1}(\lambda_{k}^{*};x):k=N_{1}+1,\dots,N,k\in\Gamma\}.
    \end{split}
\end{equation*}
Similarly,
\begin{equation*}
    \begin{split}
        \hat{\mathsf{E}}_{point}'^{+}=&\{\mathrm{Re}\tilde{\mathbf{S}}_{1,\lambda}(\lambda_{k};x),\mathrm{Im}\tilde{\mathbf{S}}_{1,\lambda}(\lambda_{k};x):k=1,2,\dots,N_{1},k\notin \Gamma\}
 \\ &\cup\{\tilde{\mathbf{S}}_{1,\lambda}(\lambda_{k};x):k=N_{1}+1,\dots,N,k\notin \Gamma\}
 \\ &\cup\{\mathrm{Re}\tilde{\mathbf{S}}_{2,\lambda}(\lambda_{k};x),\mathrm{Im}\tilde{\mathbf{S}}_{2,\lambda}(\lambda_{k};x):k=1,2,\dots,N_{1},k\in \Gamma\}
 \\ &\cup\{\tilde{\mathbf{S}}_{2,\lambda}(\lambda_{k};x):k=N_{1}+1,\dots,N,k\in \Gamma\}. 
    \end{split}
\end{equation*}

For $k=N_{1}+1,\dots,N$ the squared eigenfunctions $\tilde{\mathbf{S}}_{i}(\lambda_{k}),\tilde{\mathbf{S}}_{i}(\lambda_{k}^{*}),
\tilde{\mathbf{S}}_{i,\lambda}(\lambda_{k}),\tilde{\mathbf{S}}_{i,\lambda}(\lambda_{k}^{*})$ are real by the symmetry 
\eqref{symmetric-squared-eigenfunctions-more} and the property $\lambda_{k}\in \mathbb{C}^{+}\cap\ii \mathbb{R}$.  
The completeness of squared eigenfunctions for $N$-soliton solutions of the CmKdV equations is as follows:

\begin{lem}\label{L2-basic-cmkdv}
The space $L^{2}(\mathbb{R},\mathbb{C}^{2})$ decomposes as
    \begin{equation}\label{complete-eigen-cmkdv}
 L^{2}(\mathbb{R},\mathbb{C}^{2})=\mathbb{E}_{ess}'+\mathbb{E}_{point}',
    \end{equation}
where the essential spectrum part is  
    \begin{equation*}
        \mathbb{E}_{ess}'=\mathrm{span}\left\{
            \int_{\mathbb{R}}\omega(\lambda)\tilde{\mathbf{S}}(\lambda;x)\,\mathrm{d}\lambda:\tilde{\mathbf{S}}(\lambda;x)\in
            \mathsf{E}_{ess}',\ \omega(\lambda)\in L^{2}(\mathbb{R},\mathbb{C})
 \right\}
    \end{equation*}
and the point spectrum part is  
    \begin{equation*}
        \mathbb{E}_{point}'=\mathrm{span}\left\{
            \tilde{\mathbf{S}}:\tilde{\mathbf{S}}\in \mathsf{E}_{point}'\cup \hat{\mathsf{E}}_{point}'
 \right\}. 
    \end{equation*}
Moreover, the space $L^2(\mathbb{R},\mathbb{R}^{2})$ decomposes as  
    \begin{equation}\label{complete-eigen-2-cmkdv}
 L^2(\mathbb{R},\mathbb{R}^{2})=\mathbb{E}_{ess}^{\mathrm{X'}} +\mathbb{E}_{point}^{\mathrm{X'}},
    \end{equation}
where  
    \begin{align*}
        \mathbb{E}_{ess}^{\mathrm{X'}}=&\mathrm{span}\left\{
            \int_{\mathbb{R}}\omega(\lambda)\tilde{\mathbf{S}}(\lambda;x)\,\mathrm{d}\lambda:\tilde{\mathbf{S}}(\lambda;x)\in
            \mathsf{E}_{ess}'^{+},\ \omega(\lambda)\in L^{2}(\mathbb{R},\mathbb{R})
 \right\},\\
        \mathbb{E}_{point}^{\mathrm{X'}}=&\mathrm{span}\left\{
            \tilde{\mathbf{S}}:\tilde{\mathbf{S}}\in \mathsf{E}_{point}'^{+}\cup \hat{\mathsf{E}}_{point}'^{+}
 \right\}. 
    \end{align*}
\end{lem}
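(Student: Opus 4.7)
The plan is to derive both decompositions directly from Lemma \ref{L2-X-basic} by means of the identification \eqref{eigenfunction-t0-cmkdv-1}: the CmKdV squared eigenfunctions and squared eigenfunction matrices at $t=0$ are obtained by applying the projection $\mathcal{C}'$ to the CNLS squared eigenfunctions generated from the FMS $\mathbf{\Phi}^{[\tilde{N}]}$ built out of the enlarged spectral data $\{\lambda_{k},-\lambda_{k}^{*}\}_{k=1}^{N_{1}}\cup\{\lambda_{k}\}_{k=N_{1}+1}^{N}$. Since $\mathcal{C}':L^{2}(\mathbb{R},\mathbb{C}^{4})\to L^{2}(\mathbb{R},\mathbb{C}^{2})$ is surjective (any $\mathbf{f}\in L^{2}(\mathbb{R},\mathbb{C}^{2})$ lifts to $(\mathbf{f},\mathbf{0})^{T}\in L^{2}(\mathbb{R},\mathbb{C}^{4})$), Lemma \ref{L2-X-basic} applied to the enlarged CNLS spectral data gives an expansion whose image under $\mathcal{C}'$ is an expansion of $\mathbf{f}$ in CmKdV squared eigenfunctions.

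For the first decomposition \eqref{complete-eigen-cmkdv}, the only remaining task is to remove the redundancy between $\lambda_{k}$ and $-\lambda_{k}^{*}$ (for $k=1,\dots,N_{1}$) and between $\lambda_{k}^{*}$ and $-\lambda_{k}$ in the enlarged CNLS basis. This is accomplished by the symmetry \eqref{symmetric-squared-eigenfunctions-more}: each squared eigenfunction at $-\lambda^{*}$ is the complex conjugate of the one at $\lambda$, so retaining a basis at $\lambda_{k}\in\mathbb{C}^{++}$ and $\lambda_{k}^{*}$ suffices. For purely imaginary $\lambda_{k}$ ($k=N_{1}+1,\dots,N$) the identity $-\lambda_{k}^{*}=\lambda_{k}$ causes no redundancy. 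This exactly produces the sets $\mathsf{E}_{point}'$ and $\hat{\mathsf{E}}_{point}'$; the essential-spectrum part $\mathsf{E}_{ess}'$ is obtained analogously by integration over $\sigma_{ess}(\mathcal{L}_{s})=\mathbb{R}$, where the symmetry pairs $\lambda$ with $-\lambda$.

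For the real decomposition \eqref{complete-eigen-2-cmkdv}, I would start from the complex expansion from \eqref{complete-eigen-cmkdv} for a real-valued $\mathbf{f}\in L^{2}(\mathbb{R},\mathbb{R}^{2})$, average with its complex conjugate, and invoke \eqref{symmetric-squared-eigenfunctions-more} to deduce that coefficients at paired spectral parameters $\lambda_{k}$ and $-\lambda_{k}^{*}$ are complex conjugates of each other. Repacking the pair $\{\tilde{\mathbf{S}}(\lambda_{k}),\tilde{\mathbf{S}}(-\lambda_{k}^{*})\}=\{\tilde{\mathbf{S}}(\lambda_{k}),\tilde{\mathbf{S}}^{*}(\lambda_{k})\}$ into real linear combinations yields the $\mathrm{Re}$ and $\mathrm{Im}$ parts comprising $\mathsf{E}_{point}'^{+}$ and $\hat{\mathsf{E}}_{point}'^{+}$, while purely imaginary spectral parameters produce already-real squared eigenfunctions (no splitting, matching the definition). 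The essential-spectrum integrands at $\lambda$ and $-\lambda$ combine in the same way to give the real-valued elements of $\mathsf{E}_{ess}'^{+}$.

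The main obstacle I anticipate is the bookkeeping at the degenerate set $\Gamma$, where some components of $\mathbf{c}_{k}$ vanish and the choice of which $\tilde{\mathbf{S}}_{\pm i}$ to retain must be switched, as already encoded in the definitions \eqref{E-point-cmkdv}--\eqref{hat-E-point-cmkdv}. One must verify that this selection is compatible with both the CNLS-to-CmKdV reduction and the real/imaginary splitting at purely imaginary $\lambda_{k}$. In addition, a dimension count must confirm that the generalized eigenfunctions in $\hat{\mathsf{E}}_{point}'^{+}$ exhaust the remaining directions of $\mathrm{Ker}(\tilde{\mathcal{L}})^{\perp}\cap\mathbb{E}_{point}^{\mathrm{X'}}$, which reduces to the analogue of Lemma \ref{JL-eigenvalue} specialized to the enlarged spectral data and then projected by $\mathcal{C}'$.
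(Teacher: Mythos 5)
Your proposal follows essentially the same route as the paper: apply the projection $\mathcal{C}'$ to the CNLS completeness statement of Lemma~\ref{L2-X-basic} for the enlarged spectral data $\{\lambda_{k},-\lambda_{k}^{*}\}_{k\le N_{1}}\cup\{\lambda_{k}\}_{k>N_{1}}$, note that $\mathcal{C}'$ is onto (your lift $(\mathbf{f},\mathbf{0})^{T}$ works just as well as the paper's $(\mathbf{f}^{T},\mathbf{f}^{T})^{T}$), and then prune the redundant basis elements at the mirrored spectral parameters; the real decomposition is obtained by the Re/Im repacking, exactly as in the paper's use of $\mathcal{C}'\mathcal{C}(\mathbf{S}^{*})=2\,\mathrm{Re}\,\mathcal{C}'\mathbf{S}$.

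One step of your justification is incomplete, however. For the \emph{complex} decomposition \eqref{complete-eigen-cmkdv} you discard the basis elements at $-\lambda_{k}^{*}$ solely because, by \eqref{symmetric-squared-eigenfunctions-more}, they are the complex conjugates of those at $\lambda_{k}$. But $\mathbb{E}_{point}'$ is a span over $\mathbb{C}$, and the complex conjugate of a function is in general not a $\mathbb{C}$-linear combination of that function; a conjugate-linear relation does not by itself reduce a complex span. What actually closes this step is the \emph{other} symmetry \eqref{symmetric-squared-eigenfunctions} pushed through $\mathcal{C}'$ via the identity $\mathcal{C}'\Sigma\mathbf{S}=\mathcal{C}'\mathbf{S}$: combining the two symmetries yields the $\mathbb{C}$-linear relations
\begin{equation*}
\tilde{\mathbf{S}}_{\pm i}(-\lambda_{k}^{*})=-\tilde{\mathbf{S}}_{\mp i}(\lambda_{k}^{*}),\qquad
\tilde{\mathbf{S}}_{\pm i}(-\lambda_{k})=-\tilde{\mathbf{S}}_{\mp i}(\lambda_{k}),
\end{equation*}
which express the mirrored functions, with a sign and an index flip $i\mapsto -i$, in terms of functions already retained in $\mathsf{E}_{point}'$ (and similarly for the $\lambda$-derivatives and the essential-spectrum integrands at $\pm\lambda$). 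Your real-decomposition argument is unaffected, since there conjugation is legitimately absorbed into real linear combinations; your remarks on the $\Gamma$ bookkeeping and the dimension count are inherited from the CNLS case and need no new argument.
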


\begin{proof}
The decomposition \eqref{complete-eigen-cmkdv} follows immediately by taking $\mathcal{C}'$ on both 
sides of \eqref{complete-eigen-1} in Lemma \ref{L2-X-basic}. The basis can be reduced to 
$\mathsf{E}_{ess}'$, $\mathsf{E}_{point}'$, and $\hat{\mathsf{E}}_{point}'$ by the symmetries 
\eqref{symmetric-squared-eigenfunctions} and \eqref{symmetric-squared-eigenfunctions-more}:
\begin{equation}\label{symm-less}
    \begin{split}
        &\tilde{\mathbf{S}}_{\pm i}(-\lambda_{k}^{*})=-\tilde{\mathbf{S}}_{\mp i}(\lambda_{k}^{*}),\quad
    \tilde{\mathbf{S}}_{\pm i}(-\lambda_{k})=-\tilde{\mathbf{S}}_{\mp i}(\lambda_{k}),\quad \\
    &\tilde{\mathbf{S}}_{\pm i,\lambda}(-\lambda_{k}^{*})=-\tilde{\mathbf{S}}_{\mp i,\lambda}(\lambda_{k}^{*}),\quad
    \tilde{\mathbf{S}}_{\pm i,\lambda}(-\lambda_{k})=-\tilde{\mathbf{S}}_{\mp i,\lambda}(\lambda_{k})
    \end{split}
\end{equation}
since  
\begin{equation*}
    \mathcal{C}'\Sigma \mathbf{S}=\mathcal{C}'\mathbf{S}
\end{equation*}
for any function $\mathbf{S}$.  

It is obvious that $\mathcal{C}'(\mathbb{E}_{ess}+\mathbb{E}_{point})\subset L^{2}(\mathbb{R},\mathbb{C}^{2})$.  
Conversely, $L^{2}(\mathbb{R},\mathbb{C}^{2})\subset \mathcal{C}'(\mathbb{E}_{ess}+\mathbb{E}_{point})$,  
since for any $\mathbf{f}\in L^{2}(\mathbb{R},\mathbb{C}^{2})$, the vector $(\mathbf{f}^{T},\mathbf{f}^{T})^{T}$  
lies in $\mathbb{E}_{ess}+\mathbb{E}_{point}$, hence $\mathbf{f}\in \mathcal{C}'(\mathbb{E}_{ess}+\mathbb{E}_{point})$.  

For the decomposition \eqref{complete-eigen-2-cmkdv}, we also take $\mathcal{C}'$ on both sides of \eqref{complete-eigen-2}.  
Using the symmetry \eqref{symm-less} and the identity  
\begin{equation}\label{connect-C-Re}
    \mathcal{C}'\mathcal{C}(\mathbf{S}^{*})=\mathcal{C}'\mathcal{C}\mathbf{S}=2\mathrm{Re}\,\mathcal{C}'\mathbf{S},
\end{equation}
the basis reduces to $\mathsf{E}_{ess}'^{+}$, $\mathsf{E}_{point}'^{+}$, and $\hat{\mathsf{E}}_{point}'^{+}$.  
This completes the proof.
\end{proof}

Now we consider the operator $\tilde{\mathcal{L}}$.  
It remains to obtain the negative Krein symbol for $\tilde{\mathcal{L}}$.  
Since  
\begin{equation}\label{connect-L-hat-L}
 (\tilde{\mathcal{L}}\mathcal{C}'\mathbf{S},\mathcal{C}'\mathbf{S}')=
 (\mathcal{L}\mathbf{S},\mathbf{S}')+(\mathcal{L}\mathbf{S},\Sigma\mathbf{S}')
\end{equation}
for $\mathbf{S},\mathbf{S}'\in L^{2}(\mathbb{R},\mathbb{C}^{4})$, and  
\begin{equation}\label{sym-sigma-squared}
    \Sigma\mathbf{S}_{i}(\lambda)=-\mathbf{S}_{-i}(\lambda^{*})^{*}=-\mathbf{S}_{-i}(-\lambda)
\end{equation}
for $\mathbf{S}_{i}$ being the squared eigenfunctions in  
\eqref{symmetric-squared-eigenfunctions} and \eqref{symmetric-squared-eigenfunctions-more},  
the quadratic form $(\tilde{\mathcal{L}}\cdot,\cdot)$ along the decomposition \eqref{complete-eigen-cmkdv}  
can be obtained from $(\mathcal{L}\cdot,\cdot)$ in space $\mathrm{X}$,  
since the Lax spectrum is symmetric with respect to both the real and imaginary axes.  

Now we can complete the proof of part (b) in Theorem \ref{thm-spectrum-L-tildeL}. 
\begin{proof}[Proof of (b) in Theorem \ref{thm-spectrum-L-tildeL}]
 The proof is similar to the proof of (a) in Theorem \ref{thm-spectrum-L-tildeL}. 
 The kernel can be represented by the squared 
 eigenfunctions in $\mathsf{E}_{point}'^{+}$ or the derivative of scattering parameters
    \begin{equation*}
        \begin{split}
            \mathrm{Ker}(\mathcal{L})=&\mathrm{span}\left\{
            \partial_{\mathrm{Re}c_{ik}}\mathbf{q}^{[N]},
            \partial_{\mathrm{Im}c_{ik}}\mathbf{q}^{[N]}: i=1,2 ,\ k=1,2,\cdots,N_{1}
 \right\}\\ &\cup
 \left\{
            \partial_{c_{ik}}\mathbf{q}^{[N]}: i=1,2 ,\ k=N_{1}+1,N_{1}+2,\cdots,N
 \right\}. 
        \end{split}
    \end{equation*}
 Now we show that the number of negative 
 eigenvalues of the quadratic form 
    $(\tilde{\mathcal{L}}\mathbf{f},\mathbf{g})$ in space $\mathrm{span}\{\hat{\mathsf{E}}_{point}'^{+}\}$ is 
    $N_{1}+\lfloor (N_{2}+1)/2 \rfloor $. Without loss of generality, we consider the case $\Gamma=\emptyset$. 
 The function in $\hat{\mathsf{E}}_{point}'^{+}$ have representation
    \begin{equation*}
        \mathrm{Re}\tilde{\mathbf{S}}_{1,\lambda}(\lambda_{k})=\frac{1}{2}\mathcal{C}'\mathcal{C}\mathbf{S}_{1,\lambda}(\lambda_{k}),\quad 
        \mathrm{Im}\tilde{\mathbf{S}}_{1,\lambda}(\lambda_{k})=-\frac{1}{2}\mathcal{C}'\mathcal{C}\ii\mathbf{S}_{1,\lambda}(\lambda_{k})
    \end{equation*}
 by \eqref{connect-C-Re}. Moreover, since $\mathcal{C}\mathbf{\Sigma}=\mathbf{\Sigma}\mathcal{C}$, we have 
    \begin{equation*}
        \begin{split}
            \mathbf{\Sigma}\mathcal{C}\mathbf{S}_{1,\lambda}(\lambda_{k})
 =&\mathcal{C}\mathbf{\Sigma}\mathbf{S}_{1,\lambda}(\lambda_{k})\\
 =&-\mathcal{C}\mathbf{S}_{-1,\lambda}(-\lambda_{k})\\
 =&\mathcal{C}\mathbf{S}_{1,\lambda}(-\lambda_{k}^{*})
        \end{split}
    \end{equation*}
 and 
    \begin{equation*}
        \mathbf{\Sigma}\mathcal{C}\ii\mathbf{S}_{1,\lambda}(\lambda_{k})=-\mathcal{C}\ii\mathbf{S}_{1,\lambda}(-\lambda_{k}^{*})
    \end{equation*}
 by \eqref{sym-sigma-squared} and \eqref{sym-C-squared}. Then 
    \begin{equation*}
        \begin{split}
 (\tilde{\mathcal{L}}\mathrm{Re}\tilde{\mathbf{S}}_{1,\lambda}(\lambda_{k}),\mathrm{Re}\tilde{\mathbf{S}}_{1,\lambda}(\lambda_{k}))
 =&\frac{1}{4}(\tilde{\mathcal{L}}\mathcal{C}'\mathcal{C}\mathbf{S}_{1,\lambda}(\lambda_{k}),\mathcal{C}'\mathcal{C}\mathbf{S}_{1,\lambda}(\lambda_{k}))\\
 =&\frac{1}{4}\left(
 (\mathcal{L}\mathcal{C}\mathbf{S}_{1,\lambda}(\lambda_{k}),\mathcal{C}\mathbf{S}_{1,\lambda}(\lambda_{k}))+
 (\mathcal{L}\mathcal{C}\mathbf{S}_{1,\lambda}(\lambda_{k}),\mathbf{\Sigma}\mathcal{C}\mathbf{S}_{1,\lambda}(\lambda_{k}))
 \right)\\
 =&\frac{1}{4}\left(
 -2\mathrm{Re}A_{k}+
 (\mathcal{L}\mathcal{C}\mathbf{S}_{1,\lambda}(\lambda_{k}),\mathcal{C}\mathbf{S}_{1,\lambda}(-\lambda_{k}^{*}))
 \right)\\
 =&-\frac{1}{2}\mathrm{Re}c_{1k}^2A_{k}
        \end{split}
    \end{equation*}
 for $k=1,2,\cdots,N_{1}$. Similarly, for $k=1,2,\cdots,N_{1}$, we obtain 
    \begin{equation*}
        \begin{split}
 (\tilde{\mathcal{L}}\mathrm{Re}\tilde{\mathbf{S}}_{1,\lambda}(\lambda_{k}),\mathrm{Im}\tilde{\mathbf{S}}_{1,\lambda}(\lambda_{k}))
 =&-\frac{1}{4}(\tilde{\mathcal{L}}\mathcal{C}'\mathcal{C}\mathbf{S}_{1,\lambda}(\lambda_{k}),\mathcal{C}'\mathcal{C}\ii\mathbf{S}_{1,\lambda}(\lambda_{k}))\\
 =&-\frac{1}{4}\left(
 (\mathcal{L}\mathcal{C}\mathbf{S}_{1,\lambda}(\lambda_{k}),\mathcal{C}\ii\mathbf{S}_{1,\lambda}(\lambda_{k}))+
 (\mathcal{L}\mathcal{C}\mathbf{S}_{1,\lambda}(\lambda_{k}),\mathbf{\Sigma}\mathcal{C}\ii\mathbf{S}_{1,\lambda}(\lambda_{k}))
 \right)\\
 =&-\frac{1}{4}\left(
                2\mathrm{Im}A_{k}-
 (\mathcal{L}\mathcal{C}\mathbf{S}_{1,\lambda}(\lambda_{k}),\mathcal{C}\ii\mathbf{S}_{1,\lambda}(-\lambda_{k}^{*}))
 \right)\\
 =&-\frac{1}{2}\mathrm{Im}c_{1k}^2A_{k}
        \end{split}
    \end{equation*}
 and 
    \begin{equation*}
 (\tilde{\mathcal{L}}\mathrm{Im}\tilde{\mathbf{S}}_{1,\lambda}(\lambda_{k}),\mathrm{Im}\tilde{\mathbf{S}}_{1,\lambda}(\lambda_{k}))=\frac{1}{2}\mathrm{Re}c_{1k}^2A_{k}.
    \end{equation*}
 For $k=N_{1}+1,N_{1}+2,\cdots,N$, since $\tilde{\mathbf{S}}_{1,\lambda}(\lambda_{k})$ is real-valued, we have 
    \begin{equation*}
        \begin{split}
 (\tilde{\mathcal{L}}\tilde{\mathbf{S}}_{1,\lambda}(\lambda_{k}),\tilde{\mathbf{S}}_{1,\lambda}(\lambda_{k}))
 =&\frac{1}{4}\left(
 -2\mathrm{Re}A_{k}+
 (\mathcal{L}\mathcal{C}\mathbf{S}_{1,\lambda}(\lambda_{k}),\mathcal{C}\mathbf{S}_{1,\lambda}(-\lambda_{k}^{*}))
 \right)\\
 =&\frac{1}{4}\left(
 -2\mathrm{Re}A_{k}+
 (\mathcal{L}\mathcal{C}\mathbf{S}_{1,\lambda}(\lambda_{k}),\mathcal{C}\mathbf{S}_{1,\lambda}(\lambda_{k}))
 \right)\\
 =&-\mathrm{Re}c_{1k}^2A_{k}\\
 =&-c_{1k}^2A_{k}
        \end{split}
    \end{equation*}
 since $A_{k}$ is real number. 
 Hence, the quadratic form 
    \begin{equation}\label{X-negative-matrix-cmkdv}
 (\mathcal{L}\mathbf{f},\mathbf{g})=\mathrm{diag}(\tilde{\mathbf{A}}_{1},\tilde{\mathbf{A}}_{2},\cdots,\tilde{\mathbf{A}}_{N_{1}},
 -c_{1,N_{1}+1}^{2}A_{N_{1}+1},-c_{1,N_{1}+2}^{2}A_{N_{1}+2},\cdots,-c_{1N}^{2}A_{N_{1}+1})
    \end{equation}
 where 
    \begin{equation*}
        \tilde{\mathbf{A}}_{k}=\frac{1}{2}\begin{pmatrix}
 -\mathrm{Re}c_{1k}^{2}A_{k} & -\mathrm{Im}c_{1k}^{2}A_{k} \\
 -\mathrm{Im}c_{1k}^{2}A_{k} & \mathrm{Re}c_{1k}^{2}A_{k}
        \end{pmatrix}. 
    \end{equation*}
 Since $c_{1k}\in\mathbb{R}$ and 
    \begin{equation*}
        \begin{split}
 A_{k}=&2^{2N}\ii(\mathcal{P}_{\lambda}(\lambda_{k})\mathcal{P}(\lambda_{k}))^{3}\\
 =&2^{2N}\ii\left(\left.(\mathcal{P}(\lambda)\mathcal{P}(\lambda))_{\lambda}\right|_{\lambda=\lambda_{k}}\right)^{3}\\
 =&2^{2N+3}b_{k}^{3}\prod_{n=1}^{N_{1}}((-b_{k}^{2}-a_{n}^{2}+b_{n}^{2})^{2}+4a_{n}^{2}b_{n}^{2})^{3}
            \prod_{n=N_{1}+1,n\ne k}^{N}(b_{n}^{2}-b_{k}^{2})^{3}
        \end{split}
    \end{equation*}
 for $k=N_{1}+1,N_{1}+2,\cdots,N$, we reindex $b_{k}$ be $b_{(N_{1}+1)}>b_{(N_{1}+2)}>\cdots>b_{(N)}$, 
 then the number $-A_{(N_{1}+1)},-A_{(N_{1}+3)},\cdots$ are negative and 
    $-A_{(N_{1}+2)},-A_{(N_{1}+4)},\cdots$ are positive. 
 Hence the matrix \eqref{X-negative-matrix-cmkdv} admit $N_{1}+\lfloor (N_{2}+1)/2 \rfloor $ negative eigenvalues 
 and $N-\lfloor (N_{2}+1)/2 \rfloor $ positive eigenvalues.
\end{proof}

\subsection{The reduced Hamiltonian}
Define  
\begin{equation}
    \tilde{\mathbf{H}}=(\tilde{H}_{\sigma \tau}),
\end{equation}
where each element is given by  
\begin{equation*}
    \tilde{H}_{\sigma \tau}=\partial_{\sigma \tau}\tilde{\mathcal{I}}-\sum_{n=0}^{\tilde{N}}
    \partial_{\sigma \tau}(\tilde{\mu}_{n})\tilde{H}_{n},
\end{equation*}
for $\sigma,\tau\in\{a_{k},b_{k}:k=1,2,\ldots,N_{1}\}\cup \{b_{k}:k=N_{1}+1,\ldots,N\}$.  
The matrix $\tilde{\mathbf{H}}$ can be characterized by the following lemma.  

\begin{lem}
 The matrix $\tilde{\mathbf{H}}$ is nondegenerate when the spectral parameters are distinct.  
 Moreover, $\tilde{\mathbf{H}}$ admits $N_{1}+\lfloor (N_{2}+1)/2 \rfloor $ positive eigenvalues  
 and $N-\lfloor (N_{2}+1)/2 \rfloor $ negative eigenvalues. 
\end{lem}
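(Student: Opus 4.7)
The plan is to adapt the argument for the analogous matrix $\mathbf{H}$ in the CNLS case. First, by the same Hamiltonian manipulation (differentiating $\tilde{\mathcal{I}}$ twice and using that the $(N_{1},N_{2})$-soliton satisfies $\delta\tilde{\mathcal{I}}/\delta\mathbf{q}=0$), I would establish
\begin{equation*}
    \tilde{H}_{\sigma\tau}=\sum_{n=0}^{\tilde{N}}\partial_{\tau}(\tilde{\mu}_{n})\,\partial_{\sigma}(\tilde{\mathcal{H}}_{n}).
\end{equation*}
Combining this with \eqref{H-spectral-cmkdv} and the expansion of $\mathcal{P}\hat{\mathcal{P}}$ as a polynomial in $\lambda^{2}$, each entry reduces to a numerical prefactor times the real or imaginary part of $\partial_{\sigma}(\mathcal{P}\hat{\mathcal{P}})$ evaluated at the spectral parameter attached to $\tau$. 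A small care point is that the breather indices $k\le N_{1}$ (which pair $\lambda_{k}$ with $-\lambda_{k}^{*}$) yield the prefactor $2^{2n+2}$, while the purely imaginary indices $k>N_{1}$ yield $2^{2n+1}$.

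Next I would use the factorization $\mathcal{P}\hat{\mathcal{P}}=\prod_{j}f_{j}(\lambda)$, where $f_{j}$ is the per-index factor vanishing at $\lambda=\lambda_{j}$ (namely $(\lambda^{2}-\lambda_{j}^{2})(\lambda^{2}-(\lambda_{j}^{*})^{2})$ for $j\le N_{1}$ and $\lambda^{2}+b_{j}^{2}$ for $j>N_{1}$). Since differentiation with respect to a parameter of index $k$ touches only $f_{k}$, for $k\ne l$ the surviving product contains $f_{l}(\lambda_{l})=0$, so all cross entries vanish. Hence $\tilde{\mathbf{H}}$ is block diagonal: $N_{1}$ blocks of size $2\times 2$ indexed by $(a_{k},b_{k})$ for $k\le N_{1}$, and $N_{2}$ blocks of size $1\times 1$ indexed by $b_{k}$ for $k>N_{1}$.

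Each block is then evaluated by a logarithmic-derivative computation at $\lambda=\lambda_{k}$. Writing $R_{k}=\prod_{j\ne k}f_{j}(\lambda_{k})$, for $k\le N_{1}$ the block takes the form (up to a positive constant $c_{k}$ proportional to $a_{k}b_{k}$)
\begin{equation*}
    \tilde{\mathbf{H}}_{k}=c_{k}
    \begin{pmatrix}
        -\mathrm{Re}(\lambda_{k}R_{k}) & \mathrm{Im}(\lambda_{k}R_{k})\\
        \mathrm{Im}(\lambda_{k}R_{k}) & \mathrm{Re}(\lambda_{k}R_{k})
    \end{pmatrix},
\end{equation*}
which is traceless with determinant $-c_{k}^{2}|\lambda_{k}R_{k}|^{2}<0$, contributing exactly one positive and one negative eigenvalue. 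For $k>N_{1}$, the entry is a positive multiple of $\prod_{j\le N_{1}}|b_{k}^{2}+\lambda_{j}^{2}|^{2}\prod_{j>N_{1},\ j\ne k}(b_{j}^{2}-b_{k}^{2})$; after reindexing so that $b_{N_{1}+1}>b_{N_{1}+2}>\cdots>b_{N}$, this has sign $(-1)^{N-k}$. Counting even values of $N-k$ over $\{N_{1}+1,\dots,N\}$ yields $\lfloor(N_{2}+1)/2\rfloor$ positive and $\lfloor N_{2}/2\rfloor$ negative entries. Summing the two contributions gives $N_{1}+\lfloor(N_{2}+1)/2\rfloor$ positive and $N-\lfloor(N_{2}+1)/2\rfloor$ negative eigenvalues, and nondegeneracy follows from $a_{k},b_{k}>0$ together with $\lambda_{k}R_{k}\ne 0$ under the distinctness assumption.

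The main technical obstacle is the sign analysis for the $1\times 1$ blocks: one must confirm that the breather factor $\prod_{j\le N_{1}}|b_{k}^{2}+\lambda_{j}^{2}|^{2}$ is strictly positive (which uses $a_{j}>0$ to exclude cancellation), so that the sign of each entry reduces purely to the ordering of the $b_{j}$'s for $j>N_{1}$. A secondary care point is the handling of the simple zero of $f_{k}$ at $\lambda=\lambda_{k}$ in the logarithmic derivative: the pole in $\partial_{\sigma}\log f_{k}$ and the vanishing prefactor $f_{k}(\lambda_{k})$ must be cancelled before evaluation, but this is otherwise routine residue computation.
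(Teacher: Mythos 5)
Your proposal is correct and follows essentially the same route as the paper: reduce $\tilde{H}_{\sigma\tau}$ to $\sum_{n}\partial_{\tau}(\tilde{\mu}_{n})\partial_{\sigma}(\tilde{\mathcal{H}}_{n})$, exploit the product structure of $\mathcal{P}\hat{\mathcal{P}}$ to kill the cross entries, and then read off the inertia from the $2\times 2$ traceless breather blocks (one positive and one negative eigenvalue each) together with the sign pattern $(-1)^{N-k}$ of the $1\times 1$ entries after ordering the $b_{k}$ for $k>N_{1}$. The details you flag (the factor-of-two discrepancy between the two index ranges and the strict positivity of $\prod_{j\le N_{1}}|b_{k}^{2}+\lambda_{j}^{2}|^{2}$) are handled identically in the paper's computation.
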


\begin{proof}
 By \eqref{H-spectral-cmkdv}, we have  
 \begin{equation*}
    \begin{split}
        \partial_{a_{k}}\tilde{H}_{n} &= 2^{2n+2}\mathrm{Im}\,\lambda_{k}^{2n}, \\
        \partial_{b_{k}}\tilde{H}_{n} &= 2^{2n+2}\mathrm{Re}\,\lambda_{k}^{2n},
    \end{split}
 \end{equation*}
 for $k=1,2,\ldots,N_{1}$, and  
 \begin{equation*}
    \partial_{b_{k}}\tilde{H}_{n} = 2^{2n+1}(\ii b_{k})^{2n}
 \end{equation*}
 for $k=N_{1}+1,\ldots,N$.  
 Taking the derivative of $\mathcal{P}(\lambda)\hat{\mathcal{P}}(\lambda)$ yields  
 \begin{equation*}
    \begin{split}
        \partial_{a_{k}}(\mathcal{P}\hat{\mathcal{P}})(\lambda) &=
        -4a_{k}(\lambda^{2}-a_{k}^{2}-b_{k}^{2})\frac{\mathcal{P}(\lambda)\hat{\mathcal{P}}(\lambda)}{(\lambda^{2}-\lambda_{k}^{2})(\lambda^{2}-(\lambda_{k}^{*})^{2})}, \\
        \partial_{b_{k}}(\mathcal{P}\hat{\mathcal{P}})(\lambda) &=
        4b_{k}(\lambda^{2}+a_{k}^{2}+b_{k}^{2})\frac{\mathcal{P}(\lambda)\hat{\mathcal{P}}(\lambda)}{(\lambda^{2}-\lambda_{k}^{2})(\lambda^{2}-(\lambda_{k}^{*})^{2})},
    \end{split}
 \end{equation*}
 for $k=1,2,\ldots,N_{1}$, since  
 \begin{equation*}
    (\lambda^{2}-\lambda_{k}^{2})(\lambda^{2}-(\lambda_{k}^{*})^{2}) = (\lambda^{2}-a_{k}^{2}+b_{k}^{2})^{2} + 4a_{k}^{2}b_{k}^{2}.
 \end{equation*}
 Moreover,  
 \begin{equation*}
    \partial_{b_{k}}(\mathcal{P}\hat{\mathcal{P}})(\lambda) =
    2b_{k}\frac{\mathcal{P}(\lambda)\hat{\mathcal{P}}(\lambda)}{(\lambda^{2}-\lambda_{k}^{2})}
 \end{equation*}
 for $k=N_{1}+1,\ldots,N$. 
 Denote 
\begin{equation*}
    J_{k}=\left.\frac{\mathcal{P}(\lambda)\hat{\mathcal{P}}(\lambda)}{(\lambda^{2}-\lambda_{k}^{2})(\lambda^{2}-(\lambda_{k}^{*})^{2})}\right|_{\lambda=\lambda_{k}}
\end{equation*}
for $k=1,2,\ldots,N_{1}$ and  
\begin{equation*}
    J_{k}=\left.\frac{\mathcal{P}(\lambda)\hat{\mathcal{P}}(\lambda)}{(\lambda^{2}-\lambda_{k}^{2})}\right|_{\lambda=\lambda_{k}}
\end{equation*}
for $k=N_{1}+1,\ldots,N$. Hence the matrix 
\begin{equation*}
    \tilde{\mathbf{H}}=2^{2N+2}\begin{pmatrix}
        -\mathrm{Re}\mathbf{J}^{[N_{1}]} & \mathrm{Im} \mathbf{J}^{[N_{1}]} & \mathbf{0} \\
        \mathrm{Im} \mathbf{J}^{[N_{1}]} & \mathrm{Re}\mathbf{J}^{[N_{1}]} & \mathbf{0} \\
        \mathbf{0} & \mathbf{0} & \tilde{\mathbf{J}}^{[N_{2}]}
    \end{pmatrix},
\end{equation*}
where 
\begin{equation*}
    \mathbf{J}^{[N_{1}]}=\mathrm{diag}\left(
        8a_{1}b_{1}(a_{1}^{2}+b_{1}^{2})J_{1},\,8a_{2}b_{2}(a_{2}^{2}+b_{2}^{2})J_{2},\,\ldots,\, 8a_{N_{1}}b_{N_{1}}(a_{N_{1}}^{2}+b_{N_{1}}^{2})J_{N_{1}}
    \right),
\end{equation*}
and 
\begin{equation*}
    \tilde{\mathbf{J}}^{[N_{2}]}=\mathrm{diag}\left(
        b_{N_{1}+1}J_{N_{1}+1},\,b_{N_{1}+2}J_{N_{1}+2},\,\ldots,\,b_{N}J_{N}
    \right).
\end{equation*}
Since 
\begin{equation}
    J_{k}=\prod_{n=1}^{N_{1}}((-b_{k}^{2}-a_{n}^{2}+b_{n}^{2})^{2}+4a_{n}^{2}b_{n}^{2})
            \prod_{n=N_{1}+1,n\ne k}^{N}(b_{n}^{2}-b_{k}^{2})
\end{equation}
for $k=N_{1}+1,N_{1}+2,\ldots,N$, the matrix $\tilde{\mathbf{J}}^{[N_{2}]}$ 
has $\lfloor (N_{2}+1)/2 \rfloor$ positive eigenvalues and 
$N_{2}-\lfloor (N_{2}+1)/2 \rfloor$ negative ones. 
This concludes the proof.
\end{proof}

The remaining proof for stability results on $(N_{1},N_{2})$-soliton solutions is standard, which is 
similar to Section \ref{sec-stability-CNLS}. 
Denote $\Sigma_{sp}=\{a_{k},b_{k}:k=1,2,\cdots,N_{1}\}\cup\{b_{k}:k=N_{1}+1,N_{1}+2,\cdots,N\}$. 
Applying Lemma \ref{spectrum-reduce} to the matrix $\tilde{\mathbf{H}}$, 
we obtain
\begin{equation*}
    \mathrm{n}(\tilde{\mathcal{L}}\tilde{\mathcal{P}})=\mathrm{n}(\tilde{\mathcal{L}})-\mathrm{p}(\tilde{\mathbf{H}})=0
\end{equation*}
where $\tilde{\mathcal{P}}$ is the projection of $L^{2}(\mathbb{R},\mathbb{R}^{2})$ to the space 
\begin{equation*}
    \tilde{\mathrm{X}}_{1}=\mathrm{span}\left\{
        \frac{\delta \tilde{\mathcal{Q}}_{\sigma}}{\delta \mathbf{q}}:\sigma\in \Sigma_{sp}
 \right\}
\end{equation*}
with 
\begin{equation*}
    \tilde{\mathcal{Q}}_{\sigma}=\sum_{n=0}^{\tilde{N}}(\partial_{\sigma}\tilde{\mu}_{n})\tilde{\mathcal{H}}_{n}
\end{equation*}
for $\sigma\in\Sigma_{sp}$. 
Denote 
\begin{equation*}
    \tilde{\mathcal{R}}(\mathbf{q})=\mathrm{Ker}(\tilde{\mathcal{L}})^{\perp}\cap 
    \mathrm{span} \{\mathbf{z}:\tilde{\mathcal{Q}}_{\sigma}(\mathbf{q}+\mathbf{z})=\tilde{\mathcal{Q}}_{\sigma}(\mathbf{q}):\sigma\in \Sigma_{sp}\},
\end{equation*}
by Lemma \ref{core-R}, we obtain 
\begin{equation*}
 (\tilde{\mathcal{L}}\mathbf{z},\mathbf{z})\geq C_{1}\|\mathbf{z}\|_{H^{\tilde{N}}}^{2}-C_{2}\|\mathbf{z}\|_{H^{\tilde{N}}}^{3}
\end{equation*}
if $\|\mathbf{z}\|_{H^{\tilde{N}}}$ is small and $\mathbf{z}\in H^{\tilde{N}}\cap \tilde{\mathcal{R}}(\mathbf{q}^{[N_{1},N_{2}]})$. 
\begin{proof}[Proof of stability for $(N_{1},N_{2})$-soliton in Theorem \ref{thm-stability}]
 For small perturbation $\mathbf{z}$ in $\tilde{\mathcal{R}}(\mathbf{q}^{[N_{1},N_{2}]})$, we obtain the
 corecivity for operator $\tilde{\mathcal{L}}$. In addition, applying Lemma \ref{lem-modulation} to $(N_{1},N_{2})$-soliton 
 solutions with scattering parameters 
    \begin{equation*}
        \mathbf{c}_{i}\in \mathbb{C}^{2}\backslash\{(0,0)\},\, i=1,2,\cdots,N_{1},\quad 
        \mathbf{c}_{i}\in \mathbb{R}^{2}\backslash\{(0,0)\},\, i=N_{1}+1,N_{1}+2,\cdots,N,
    \end{equation*}
 we can find $\tilde{\mathbf{c}_{i}}$ such that the perturbation 
    \begin{equation*}
        \mathbf{u}(x,t)-\mathbf{q}^{[N_{1},N_{2}]}(x,t;\tilde{c}_{i})\in \mathrm{Ker}(\tilde{\mathcal{L}})^{\perp}
    \end{equation*}
 and $|\partial_{t}\tilde{c}_{ij}|$ can be controlled by the norm of perturbation, as in \eqref{der-con}. 
 Then the stability results can be obtained by contradiction, 
similar to the argument for $N$-soliton solutions with a different index $\tilde{N}$, 
and the details are omitted. 
\end{proof}

\subsection*{Acknowledgements}

Liming Ling is supported by the National Natural Science Foundation of China (No. 12471236), 
Guangzhou Science and Technology Plan (No. 2024A04J6245) 
and Guangdong Natural Science Foundation grant (No. 2025A1515011868).
The authors would also like to thank Shengxiong Yang for carefully verifying the calculations and for
his valuable contributions.

\subsection*{Data availability statement}
Data sharing is not applicable to this article as no datasets were generated or
analysed during the current study.

\subsection*{Conflict of interest}
On behalf of all authors, the corresponding author states that there is no conflict of
interest.

\bibliography{Ref-CNLS-stability}

\end{document}